\newtheorem{theorem}{Theorem}[section]
\newtheorem{lemma}[theorem]{Lemma}
\newtheorem{proposition}[theorem]{Proposition}
\newtheorem{corollary}[theorem]{Corollary}
\newtheorem{claim}{Claim}[section]
\theoremstyle{remark}
\newtheorem{remark}{Remark}[section]
\numberwithin{equation}{section}
\newcommand{\R}{\mathbb{R}}
\newcommand{\C}{\mathbb{C}}
\newcommand{\N}{\mathbb{N}}
\newcommand{\Z}{\mathbb{Z}}
\newcommand{\T}{\mathbb{T}}
\newcommand{\la}{\langle}
\newcommand{\ra}{\rangle}
\newcommand{\pd}{\partial}
\newcommand{\eps}{\varepsilon}
\newcommand{\wP}{\widetilde{P}}
\newcommand{\wR}{\widetilde{R}}
\newcommand{\wS}{\widetilde{S}}
\newcommand{\cC}{\mathcal{C}}
\newcommand{\wC}{\widetilde{\mathcal{C}}}
\newcommand{\mF}{\mathcal{F}}
\newcommand{\mL}{\mathcal{L}}
\newcommand{\mM}{\mathcal{M}}
\newcommand{\bS}{\bar{S}}
\newcommand{\bM}{\mathbb{M}}
\newcommand{\cN}{\mathcal{N}}
\newcommand{\fN}{\mathfrak{N}}
\newcommand{\ta}{\tilde{a}}
\newcommand{\tc}{\tilde{c}}
\newcommand{\tg}{\tilde{g}}
\newcommand{\tv}{\tilde{v}}
\newcommand{\tx}{\tilde{x}}
\newcommand{\bb}{\mathbf{b}}
\newcommand{\bd}{\mathbf{d}}
\newcommand{\tpsi}{\tilde{\psi}}
\DeclareMathOperator{\sech}{sech}
\DeclareMathOperator{\supp}{supp}
\DeclareMathOperator{\diag}{diag}
\DeclareMathOperator{\spann}{span}
\DeclareMathOperator{\erf}{erf}
\begin{document}
\title[Stability of line solitons for KP-II]
{Stability of line solitons for\\ the KP-II equation in $\R^2$}
\author{Tetsu Mizumachi}
\address{Faculty of Mathematics, Kyushu University,
Fukuoka 819-0395 Japan}
\email{mizumati@math.kyushu-u.ac.jp}
\thanks{This research is supported by Grant-in-Aid for Scientific
Research (No. 21540220).}
\date{}
\subjclass[2000]{Primary 35B35, 37K40;\\Secondary 35Q35}
\keywords{KP-II, line soliton, stability}

\begin{abstract}
We prove nonlinear stability of line soliton solutions of the KP-II
equation with respect to transverse perturbations
that are exponentially localized as $x\to\infty$. We find that the 
amplitude of the line soliton converges to that of the
line soliton at initial time whereas jumps of the local phase shift of the
crest propagate in a finite speed toward $y=\pm\infty$.
The local amplitude and the phase shift of the crest of the line solitons are
described by a system of 1D wave equations with diffraction terms.
\end{abstract}
\maketitle

\tableofcontents

\section{Introduction}
\label{sec:intro}
The KP-II equation
\begin{equation}\label{eq:KPII}
\partial_x(\pd_tu+\pd_x^3u+3\pd_x(u^2))+3\partial_y^2u=0
\quad\text{for $t>0$ and $(x,y)\in \R^2$,}
\end{equation}
is a generalization to two spatial dimensions of the KdV equation
\begin{equation}
  \label{eq:KdV}
\pd_tu+\pd_x^3u+3\pd_x(u^2)=0\,,
\end{equation}
and has been derived as a model in the study of the transverse stability
of solitary wave solutions to the KdV equation
with respect to two dimensional perturbation when the surface tension
is weak or absent. 
See \cite{KP} for the derivation of \eqref{eq:KPII}.
Note that every solution of the KdV equation \eqref{eq:KdV}
is a planar solution of the KP-II equation \eqref{eq:KPII}.
\par
The global well-posedness of \eqref{eq:KPII} in $H^s(\R^2)$ ($s\ge0$)
on the background of line solitons has been studied by Molinet,
Saut and Tzvetkov \cite{MST} 
whose  proof is based on the work of Bourgain \cite{Bourgain}.
For the other contributions on the Cauchy problem of the KP-II equation,
see e.g. \cite{GPS,Hadac,HHK,IM, Tak,TT,Tz,Ukai} and the references therein.
\par
Let $$\varphi_c(x)= c\sech^2\Big(\sqrt{\frac{c}{2}}\,x\Big)\,.$$
It is well known that the $1$-soliton solution $\varphi_c(x-2ct)$ of
the KdV equation \eqref{eq:KdV}  is orbitally stable  because $\varphi_c$
is a minimizer of the Hamiltonian of \eqref{eq:KdV}
restricted on the manifold 
$\{u\in H^1(\R)\mid \|u\|_{L^2}=\|\varphi_c\|_{L^2}\}$.  See
\cite{Benjamin,Bona} and \cite{Ca-Li,GSS,We} for stability of
solitary wave solutions of Hamiltonian systems.
We remark that the KP-II equation does not fit into those
standard argument because the 
first two terms of the Hamiltonian of the KP-II equation
$$\int\big(u^2_x(t,x,y)-3(\pd_x^{-1}\pd_yu(t,x,y))^2-2u^3(t,x,y)\big)\,dxdy\,,$$
have the opposite sign.  Recently, Mizumachi and Tzvetkov (\cite{MT})
have proved that $\varphi_c(x-2ct)$ is orbitally stable as a solution
of the KP-II equation in $L^2(\R_x\times \T_y)$.
They used the B\"acklund transformation to prove that
$L^2(\R_x\times\T_y)$-stability follows from the $L^2$-stability of
the $0$-solution, which is an immediate consequence of the
conservation law of the $L^2(\R_x\times \T_y)$-norm.

\par
Unlike the perturbations which are periodic in the transverse directions,
the perturbations in $L^2(\R^2)$ does not allow phase shifts of
line solitons that are uniform in the transverse direction.
This is because the difference of any translated line solitons and itself has
infinite $L^2(\R^2)$-mass whereas the well-posedness result 
\cite{MST} tells us the perturbation to the line soliton stay in
$L^2(\R^2)$ for all the time. 
In order to analyze modulation of line solitons, we express 
solutions around the line soliton as
\begin{equation}
  \label{eq:ansatz}
u(t,x,y)=\varphi_{c(t,y)}(z)-\psi_{c(t,y)}(x-x(t,y)+4t)+v(t,x-x(t,y),y)\,,  
\end{equation}
where $c(t,y)$ and $x(t,y)$ are the local amplitude and the local phase
shift of the modulating line soliton,
$v$ is a remainder part which is expected to behave like an
oscillating tail and $\psi_{c(t,y)}$ is an auxiliary function so that
\begin{equation}
  \label{eq:balance}
\int_\R v(t,x,y)\,dx=\int_\R v(0,x,y)\,dx\quad\text{for any $t>0$,}
\end{equation}
Eq.~\eqref{eq:balance} means that if the line soliton is locally amplified,
then small waves are emitted from the rear of the line soliton.
By introducing the auxiliary function $\psi_{c(t,y)}$, we have
 $v(t)\in L^2(\R^2)$ for every $t\ge0$ and we are able to show that
the $L^2$-norm of $v$ is almost conserved. 
We find that local modulations of the amplitude and phase shift can be
described by a system of $1$-dimensional wave equations with
diffraction (viscous damping) terms, that a modulating line soliton
converges to a line soliton with the same height as the original
soliton on any compact subset of $\R^2$
(Theorem~\ref{thm:exp-stability}) and that \lq\lq jumps\rq\rq of the phase
shift of the modulating line soliton propagate toward $y=\pm\infty$
along the crest of line solitons, which makes the set of all line soliton
solutions unstable (Theorem~\ref{thm:instability}).  
\par
Using geometric optics, Pedersen (\cite{Ped}) heuristically explained
that the amplitude and the orientation of the crest are described by a
system of the Burgers equation. Since both the KP-II equation and the
Boussinesq equation are long wave models for the $3$D shallow water
waves, it is natural to expect the same phenomena for KP-II.  We find
that the first order asymptotics of $\pd_yx(t,y)$ and $c(t,y)$ around
$y=\pm (8c_0)^{1/2}t+O(\sqrt{t})$ are given by self-similar solutions of
the Burgers equations as $t\to\infty$ (Theorem~\ref{thm:Burgers}).
\par

Now let us introduce our results.
The first result is the stability of line soliton solutions for exponentially
localized perturbations.
\begin{theorem}
  \label{thm:exp-stability}
Let $c_0>0$ and $a\in (0,\sqrt{c_0/2})$. Then there exist positive constants
$\eps_0$ and $C$ satisfying the following:
if $u(0,x)=\varphi_{c_0}(x-x_0)+v_0(x)$ and 
$\eps:=\|e^{ax}v_0\|_{L^2(\R^2)}+\|e^{ax}v_0\|_{L^1_yL^2_x}+\|v_0\|_{L^2(\R^2)}
<\eps_0$, then there exist $C^1$-functions $c(t,y)$ and $x(t,y)$ such that
for $t\ge0$,
\begin{align}
\label{OS}
& \|u(t,x,y)-\varphi_{c(t,y)}(x-x(t,y))\|_{L^2(\R^2)}\le C\eps\,,\\
\label{phase-sup}
&  \sup_{y\in \R}(|c(t,y)-c_0|+|x_y(t,y)|)\le C\eps (1+t)^{-1/2}\,,\\
\label{phase2}
&  \|x_t(t,\cdot)-2c(t,\cdot)\|_{L^2}\le C\eps(1+t)^{-3/4}\,,\\
\label{AS}
& \left\|e^{ax}(u(t,x+x(t,y),y)-\varphi_{c(t,y)}(x))\right\|_{L^2}
\le C\eps(1+t)^{-3/4}\,.
\end{align}
\end{theorem}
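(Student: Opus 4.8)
The plan is to combine a modulational decomposition with semigroup estimates for the linearized KP-II flow in an exponentially weighted space. Writing $z=x-x(t,y)$ and substituting the ansatz \eqref{eq:ansatz} into \eqref{eq:KPII}, I would first derive the coupled system: an evolution equation for $v$ of the schematic form $\pd_tv=\pd_z\mL_{c_0}v-3\pd_z^{-1}\pd_y^2v+(\text{terms linear in }c_t,\,x_t-2c,\,c_y,\,x_y)+\cN$, where $\mL_c=-\pd_z^2+2c-6\varphi_c$ and $\cN$ collects the quadratic and higher terms, together with evolution equations for the modulation functions. The parameters $c(t,y)$, $x(t,y)$ are fixed by imposing, for each $y$, two secular conditions on $v(t,\cdot,y)$ against the generalized kernel $\spann\{\varphi_{c_0}',\,\pd_c\varphi_{c_0}\}$ of $\pd_z\mL_{c_0}$ (the translational and scaling modes, which form a Jordan block at the eigenvalue $0$ and, thanks to $a<\sqrt{c_0/2}$, remain in the weighted space $L^2(e^{2az}\,dz)$); the implicit function theorem yields $C^1$ functions $c,x$ for $\eps$ small, and differentiating the constraints produces the modulation equations. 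The auxiliary function $\psi_{c(t,y)}$, supported behind the soliton, is chosen precisely so that the zeroth $x$-moment of $v$ is conserved, i.e. \eqref{eq:balance} holds, which both keeps $v(t)\in L^2(\R^2)$ and removes a non-integrable low-frequency obstruction.

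The analytic heart is the semigroup generated by $\pd_z\mL_{c_0}-3\pd_z^{-1}\pd_y^2$ on $L^2(e^{2az}\,dz\,dy)$. Conjugating by $e^{az}$ replaces $\pd_z$ by $\pd_z-a$ and, after Fourier transform in $y$ with dual variable $\eta$, reduces matters to a one-parameter family of one-dimensional operators $A^a_\eta$. A direct symbol computation gives $\operatorname{Re}\bigl(-(ik-a)^3+2c_0(ik-a)+3\eta^2(ik-a)^{-1}\bigr)\le -a(2c_0-a^2)<0$ uniformly in $k,\eta$, so the essential spectrum sits in a fixed left half-plane. The finitely many eigenvalues that bifurcate from the double eigenvalue $0$ at $\eta=0$ must be tracked by a Lyapunov--Schmidt expansion of the resolvent near $\eta=0$; I expect $\lambda_\pm(\eta)=\pm i\sqrt{8c_0}\,\eta+O(\eta^2)$ with negative real part for $0<|\eta|\ll1$ (the $O(\eta^2)$ diffraction term is dissipative), and these are exactly the modes promoted to the dynamical variables $c-c_0$ and $x_y$. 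Splitting off the corresponding spectral subspace, the complementary semigroup decays like $e^{-\beta t}$ in $L^2(e^{2az})$ for some $\beta>0$, while the slow modes yield, after diagonalizing the principal part of the modulation system, a pair of damped one-dimensional wave equations with speeds $\pm\sqrt{8c_0}$ and $\pd_y^2$ dissipation.

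From here the decay estimates follow by a continuity argument. The $L^1_y$ component of $\eps$ controls the effective data of the modulation system, so heat-type bounds ($\|e^{t\pd_y^2}f\|_{L^\infty_y}\lesssim t^{-1/2}\|f\|_{L^1_y}$, $\|\pd_ye^{t\pd_y^2}f\|_{L^2_y}\lesssim t^{-3/4}\|f\|_{L^1_y}$) produce \eqref{phase-sup} and \eqref{phase2}, using that $x_t-2c$ is of the size of a $y$-derivative of the slow variables. Feeding these polynomially decaying source terms into Duhamel's formula against the exponentially decaying complementary semigroup yields $\|e^{az}v(t)\|_{L^2}\lesssim\eps(1+t)^{-3/4}$, which with $\|e^{az}(\varphi_{c(t,y)}-\varphi_{c_0})\|_{L^2}\lesssim|c-c_0|$ gives \eqref{AS}; the unweighted bound in \eqref{OS} instead comes from an almost-conservation law for $\int v^2$, using \eqref{eq:balance}. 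All the quantities are closed simultaneously via $M(T)=\sup_{t\le T}\bigl[(1+t)^{1/2}\|(c-c_0,x_y)\|_{L^\infty_y}+(1+t)^{3/4}\|x_t-2c\|_{L^2_y}+(1+t)^{3/4}\|e^{az}v\|_{L^2}+\|v\|_{L^2}\bigr]$, showing $M(T)\le C\eps+CM(T)^2$.

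The main obstacle is the linear analysis near $\eta=0$: obtaining resolvent and semigroup bounds for $A^a_\eta$ uniform down to $\eta=0$, isolating the bifurcating slow eigenvalues cleanly from the exponentially decaying part, and verifying that they drift into the left half-plane. A secondary difficulty is the quasilinear coupling -- the modulation equations contain $v$ and its derivatives while the $v$-equation contains $c_t$ and $x_{ty}$ -- so the weighted and unweighted decay rates must be bootstrapped together; in particular one must check that the nonlocal term $\pd_z^{-1}\pd_y^2v$ does not spoil the weighted estimates, which is where the special $y$-antiderivative structure of the transverse terms and the $L^1_yL^2_x$ norm in the hypothesis are used.
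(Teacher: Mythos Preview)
Your overall architecture---decomposition with auxiliary $\psi$, damped-wave modulation equations with speeds $\pm\sqrt{8c_0}$ and $\pd_y^2$ dissipation, heat-kernel rates for $(c-c_0,x_y)$, exponential weighted semigroup for the remainder, almost-conservation of $\|v\|_{L^2}$, and the bootstrap $M(T)\le C\eps+CM(T)^2$---matches the paper's. But two of your choices differ from what actually closes the argument. The secular conditions you propose (orthogonality of $v(t,\cdot,y)$ to $\varphi_{c_0}',\pd_c\varphi_{c_0}$ pointwise in $y$) do \emph{not} give a spectral projection for $\mL$: for $\eta\ne0$ the resonant eigenfunctions of $\mL(\eta)$ are genuinely $\eta$-dependent---the paper computes them explicitly via the linearized Miura transformation as $g(x,\pm\eta)$ with $\lambda(\pm\eta)=4i\eta\sqrt{1+i\eta}$ for $c_0=2$---so your $\eta=0$ conditions leave the small-$\eta$ resonances (with $\Re\lambda(\eta)=-2\eta^2+O(\eta^4)$) inside the ``complement'', and the semigroup there is not $e^{-\beta t}$. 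The paper instead imposes $\int_\R(\mF_yv)(t,z,\eta)\overline{g_k^*(z,\eta,c(t,y))}\,dz=0$ only for $|\eta|\le\eta_0$, producing a genuine spectral projection $P_0(\eta_0)$ commuting with $e^{t\mL}$; the exponential decay of $e^{t\mL}(I-P_0(\eta_0))$ is then obtained by Miura-conjugating to the \emph{free} KP-II flow (Proposition~\ref{prop:semigroup-p2}). Your Lyapunov--Schmidt expansion locates the bifurcating eigenvalues correctly but does not by itself deliver uniform resolvent or semigroup bounds; the Miura conjugation is the device that supplies them.

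The more serious gap is the claim that ``feeding the source terms into Duhamel against the exponentially decaying semigroup'' closes the $X$-estimate on $v$. The obstruction is the quadratic term $\pd_z(v^2)$: to use the smoothing $\|e^{t\mL}P_2\pd_zf\|_X\lesssim t^{-3/4}e^{-bt}\|e^{az}f\|_{L^1_zL^2_y}$ you need $\|e^{az}v^2\|_{L^1_zL^2_y}$, and with only $\|v\|_{L^2}$ and $\|v\|_X$ in hand this is unavailable without a $y$-frequency cutoff (the paper gets $\|e^{az}P_{\le M}(v^2)\|_{L^1_zL^2_y}\lesssim\sqrt M\,\|v\|_{L^2}\|v\|_X$, the $\sqrt M$ coming from Bernstein in $y$). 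The paper therefore splits $v$ at $|\eta|=2M$: low $y$-frequencies go through Duhamel as you describe (Section~\ref{sec:lowbound}), but high $y$-frequencies require a separate virial identity in $X$ (Lemma~\ref{lem:virial}), in which the potential $6\varphi_c$ is dominated by $3(\pd_zv)^2+3(\pd_z^{-1}\pd_yv)^2$ precisely because $|\eta|\ge M$ is large. This virial step is what avoids the derivative loss; without it the bound $\|v(t)\|_X\lesssim\eps\la t\ra^{-3/4}$ cannot be closed, and you should add it to your plan.
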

\begin{remark}
The KP-II equation has no localized solitary waves (see \cite{dBM}).
On the other hand, the KP-I equation has stable localized solitary waves
(see \cite{dBS,LiuW}) and line solitons of the KP-I equation are unstable
(\cite{RT1,RT2,Z}).
\end{remark}

The KP-II equation \eqref{eq:KPII} is invariant under a change of variables
\begin{equation}
  \label{eq:rot}
x\mapsto x+ky-3k^2t+\gamma\quad\text{and}\quad y\mapsto y-6kt
\quad\text{for any $k$, $\gamma\in\R$,}  
\end{equation}
and has a $3$-parameter family of
line soliton solutions
$$
\mathcal{A}=
\{\varphi_c(x+ky-(2c+3k^2)t+\gamma)\mid c>0\,,\, k\,,\gamma\in\R\}\,.$$
The set of all $1$-soliton solutions of KdV or line soliton solutions of
KP-II under the $y$-periodic boundary conditions
are known to be stable in $L^2(\R^2)$ (see \cite{MV},\cite{MT}).
However the set $\mathcal{A}$  is not large enough to be stable
for the flow generated by KP-II in $L^2(\R^2)$.
\begin{theorem}
  \label{thm:instability}
Let $c_0>0$. There exists a positive constant $C$ such that for any $\eps>0$,
there exists a solution of \eqref{eq:KPII} such that
$\|u(0,x,y)-\varphi_{c_0}(x)\|_{L^2}<\eps$ and
$$\liminf_{t\to\infty}t^{-1/4}\|u(t,x,y)-\varphi_{c_0}(x)\|_{L^2(\R^2)}\ge C\eps\,.$$
\end{theorem}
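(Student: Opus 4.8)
The plan is to derive the instability from the transverse spreading of the local phase shift $x(t,y)$ supplied by Theorem~\ref{thm:exp-stability}. Although the local amplitude relaxes, $\sup_y|c(t,y)-c_0|\to0$, the ``mass'' of the initial amplitude perturbation is not destroyed: it is converted into a jump of $x(t,\cdot)$ that propagates along the crest toward $y=\pm\infty$, and this pushes $u(t,\cdot)$ away from $\varphi_{c_0}$ at rate $\gtrsim t^{1/4}$ in $L^2(\R^2)$. Concretely I would (i) choose a small, exponentially localized $v_0$ carrying a nonzero amplitude ``charge''; (ii) transport that charge to the fronts $y\approx\pm\sqrt{8c_0}\,t$ by means of Theorem~\ref{thm:Burgers}; (iii) deduce $\|x(t,\cdot)\|_{L^2_y}\gtrsim\eps\,t^{1/4}$; and (iv) transfer this to the stated lower bound for $\|u(t,\cdot)-\varphi_{c_0}\|_{L^2}$ using \eqref{OS}.

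For (i), take $v_0=(\eps/N)\,\eta(y)\zeta(x)$ with $\eta\ge0$ smooth and compactly supported, $\int_\R\eta>0$, with $\zeta$ smooth and exponentially decaying (so that $e^{ax}\zeta\in L^2_x$) and having a nonzero component along the amplitude (dilation) mode of the linearization of \eqref{eq:KPII} about $\varphi_{c_0}$, and with $N$ the normalizing constant making $\|e^{ax}v_0\|_{L^2}+\|e^{ax}v_0\|_{L^1_yL^2_x}+\|v_0\|_{L^2}=\eps<\eps_0$. From the construction of the modulation parameters, $c(0,y)-c_0$ equals a fixed nonzero multiple of $\eps\,\eta(y)/N$ plus $O(\eps^2)$; since $u(0,\cdot)=\varphi_{c_0}(\cdot-x_0)$ outside a compact $y$-set we have $x(0,\pm\infty)=x_0$, so $\int_\R x_y(0,y)\,dy=0$, and therefore the amplitude charge $M:=\int_\R(c(0,y)-c_0)\,dy$ satisfies $|M|\sim\eps$ for $\eps$ small.

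Applying Theorem~\ref{thm:exp-stability} (giving $c,x\in C^1$ and \eqref{OS}--\eqref{AS}) and Theorem~\ref{thm:Burgers}, near the fronts one has $x_y(t,y)=t^{-1/2}U_\pm(\xi_\pm)+o(t^{-1/2})$ in the variables $\xi_\pm=(y\mp\sqrt{8c_0}\,t)/\sqrt t$, where $U_\pm$ are the source-type solutions of the two Burgers equations with masses $\int_\R U_\pm=\kappa_\pm M$, $\kappa_\pm\neq0$ (the Riemann-invariant decomposition of the hyperbolic part of the modulation system is nondegenerate and $\int_\R x_y(0,\cdot)=0$). Integrating in $y$, $x(t,\cdot)$ changes by $\kappa_\pm M+o(1)$ across each front, the transition occurring over a $y$-window of width $\asymp\sqrt t$ on which $U_\pm$ is essentially of one sign; hence for $t$ large there is a set $E_t$ with $|E_t|\gtrsim\sqrt t$ on which $|x(t,y)|\ge c_1|M|\gtrsim\eps$, while \eqref{phase-sup} keeps $|c(t,y)-c_0|\lesssim\eps(1+t)^{-1/2}$ and $|x(t,y)|$ bounded. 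Using the elementary bound $\|\varphi_{c_0}(\cdot-s)-\varphi_{c_0}\|_{L^2_x}^2\ge c_2\min(s^2,1)$ together with $\sup_y|c(t,y)-c_0|\to0$, one gets $\|\varphi_{c(t,y)}(\cdot-x(t,y))-\varphi_{c_0}\|_{L^2_x}\ge\tfrac12 c_2^{1/2}c_1|M|$ for $y\in E_t$ and $t$ large, so integrating over $E_t$ and using \eqref{OS},
\[
\|u(t,x,y)-\varphi_{c_0}(x)\|_{L^2(\R^2)}\ \ge\ c_3\,|M|\,|E_t|^{1/2}-C\eps\ \gtrsim\ \eps\,t^{1/4}\,,
\]
whence $\liminf_{t\to\infty}t^{-1/4}\|u(t)-\varphi_{c_0}\|_{L^2(\R^2)}\ge C\eps$. (In fact one expects $x(t,y)$ to stay $\asymp\eps$ on the whole interval $|y|\lesssim\sqrt{8c_0}\,t$, which would yield the sharper rate $t^{1/2}$; but the front windows already suffice for the stated bound.)

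The hard part is the transport step: one must show that $v_0$ really produces a nonvanishing propagating phase jump — that the charge $M$ is neither destroyed by the relaxation $c(t,\cdot)\to c_0$ nor annihilated under the diagonalization of the modulation system — and that the self-similar description of Theorem~\ref{thm:Burgers} is quantitatively valid, with controlled remainder, on the $O(\sqrt t)$-windows used above. This rests on the derivation and long-time analysis of the modulation equations carried out for Theorems~\ref{thm:exp-stability} and~\ref{thm:Burgers}; the remaining points — expressing $c(0,\cdot)$ as a functional of $v_0$, and controlling the remainder in \eqref{AS} on $E_t$ — are routine given those results.
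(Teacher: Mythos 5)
Your reduction is the same as the paper's: show that $\|\tx(t,\cdot)\|_{L^2_y}\gtrsim\eps\,t^{1/4}$ by tracking the phase jump out to the two fronts $y\approx\pm\sqrt{8c_0}\,t$, then transfer the lower bound to $\|u(t,\cdot)-\varphi_{c_0}(\cdot-2c_0t)\|_{L^2(\R^2)}$ via \eqref{OS} and the fact (from \cite{MST} well-posedness) that $\varphi_{c_0}(\cdot-2c_0t)$ is the only member of $\mathcal{A}$ at finite $L^2$-distance from $u(t,\cdot)$. Your jump-window calculation (integrate $x_y$ over $O(\sqrt t)$-windows near the fronts, control the error by Cauchy--Schwarz, get a jump $\sim\eps$, conclude $|x(t,\cdot)|\gtrsim\eps$ on a set of measure $\sim\sqrt t$) is also essentially identical to the paper's.

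Where you diverge is in how the front profile of $x_y$ is obtained. You call on Theorem~\ref{thm:Burgers}, i.e.\ the nonlinear self-similar Burgers asymptotics $\|x_y(t,\cdot)-u_B^+(t,y+4t)+u_B^-(t,y-4t)\|_{L^2(|y\pm4t|\le R\sqrt t)}=o(t^{-1/4})$, as a black box. This is logically legitimate: the proof of Theorem~\ref{thm:Burgers} in Section~\ref{sec:thm3} depends only on Theorem~\ref{thm:exp-stability} and the modulation equations, not on Theorem~\ref{thm:instability}, so there is no circularity, and the $o(t^{-1/4})$ error beats the main term cleanly. The paper's own Section~\ref{sec:thm2}, however, does not use Theorem~\ref{thm:Burgers} at all. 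It instead uses the \emph{linear} heat-kernel representation of $e^{tA_{0,1}}$ from Lemma~\ref{lem:mu=1/8} and Corollary~\ref{cor:mu=1/8} (the $\mu=1/8$ normalization), together with direct estimates $IV_1$--$IV_3$ of the nonlinear corrections, arriving at the explicit bound
$\|x_y(t)-\tfrac\eps4(H_{2t}(\cdot+4t)-H_{2t}(\cdot-4t))\|_{L^2}\lesssim\eps(\eps+e^{-aL})\la t\ra^{-1/4}+\eps\la t\ra^{-1/2}$.
This is sharper in its dependence on $\eps$ (the leading error is quadratic, not merely $o(1)$), is self-contained within Sections~\ref{sec:preliminaries}--\ref{sec:thm1}, and is technically much lighter than the rescaling/compactness machinery of Section~\ref{sec:thm3}. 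Your route buys elegance (the Burgers picture is conceptually transparent, and it makes the ``charge conservation'' $\int(c(0,\cdot)-c_0)\to m_\pm$ visible) at the cost of a strictly harder auxiliary theorem; the paper's route buys a shorter dependency chain and quantitative error control.

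Two smaller points. First, your choice $v_0=(\eps/N)\eta(y)\zeta(x)$ forces you to run the initial modulation decomposition (Lemmas~\ref{lem:decomp}, \ref{lem:decomp2}) and verify that it actually produces $\int(c(0,y)-c_0)\,dy\sim\eps$; you assert this but it is not automatic (it hinges on $f_{11}\ne0$ in the proof of Lemma~\ref{lem:decomp}). The paper sidesteps this entirely by taking $u(0,x,y)=\varphi_{c_*(y)}(x)-\psi_{c_*(y),L}(x)$, which is already in decomposed form with $v(0)=0$, $\tx(0)=0$, and $\tc(0,y)=\eps\check h(y)$ known explicitly; you may wish to do the same. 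Second, your claim $\int_\R x_y(0,y)\,dy=0$ from $x(0,\pm\infty)=x_0$ is fine but again becomes trivial with the paper's choice of data.
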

\begin{remark}
If $(c,\gamma)\ne (c_0,0)$, then
$\|u(t,x,y)-\varphi_c(x-\gamma)\|_{L^2(\R^2)}=\infty$ thanks to
the well-posedness result (\cite{MST}). Thus the
\textit{orbital instability}
$$\liminf_{t\to\infty}t^{-1/4}\inf_{v\in\mathcal{A}}
\|u(t,\cdot)-v\|_{L^2(\R^2)}\ge C\eps$$
follows immediately from Theorem~\ref{thm:instability}.  
\end{remark}

Orbital instability is a consequence of finite speed propagations of
local phase shifts along the crest of the modulating line soliton.
We find that $c(t,y)$ and $\pd_yx(t,y)$ behave like a self-similar solution of
the Burgers equation around $y=\pm \sqrt{8c_0}t$.
\begin{theorem}\label{thm:Burgers}
Let $c_0=2$ and let $v_0$ and $\eps$ be the same
as in Theorem~\ref{thm:exp-stability}. 
Then for any $R>0$,
$$\left\|\begin{pmatrix} c(t,\cdot)\\ x_y(t,\cdot) \end{pmatrix}
-\begin{pmatrix}  2 & 2 \\ 1 & -1\end{pmatrix}
\begin{pmatrix}u_B^+(t,y+4t) \\ u_B^-(t,y-4t)\end{pmatrix}
\right\|_{L^2(|y\pm4t|\le R\sqrt{t})}=o(t^{-1/4})$$
as $t\to\infty$, where $u_B^\pm$ are self similar solutions of
the Burgers equation 
$$\pd_tu=2\pd_y^2u\pm 4\pd_y(u^2)$$
 such that 
$$u_B^\pm(t,y)=\frac{\pm m_\pm H_{2t}(y)}
{2\left(1+m_\pm \int_0^{y}H_{2t}(y_1)\,dy_1\right)}\,,
\quad H_t(y)=(4\pi t)^{-1/2}e^{-y^2/4t}\,,$$
and that $m_\pm$ are constants satisfying
$$\int_\R u_B^\pm(t,y)\,dy=\frac14\int_\R c(0,y)\,dy+O(\eps^2)\,.$$
\end{theorem}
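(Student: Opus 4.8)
The plan is to deduce Theorem~\ref{thm:Burgers} from an asymptotic‑stability statement for self‑similar solutions of a scalar viscous Burgers equation, applied to the two Riemann invariants of the modulation system. I would start from the system of modulation equations for $c(t,y)$ and $x(t,y)$ obtained while proving Theorem~\ref{thm:exp-stability}: by \eqref{phase2} the position is slaved through $x_t=2c+O(\eps(1+t)^{-3/4})$, so the genuine modulational degrees of freedom are the excess amplitude $c-c_0$ and the crest slope $x_y$, and when $c_0=2$ the linearized evolution of $(c-c_0,x_y)$ is a constant‑coefficient $2\times2$ hyperbolic system with characteristic speeds $\pm\sqrt{8c_0}=\pm4$ carrying a diffraction (viscous) regularization $2\pd_y^2$. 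Diagonalizing, I set
\[
u_+=\tfrac14(c-c_0)+\tfrac12\,x_y\,,\qquad u_-=\tfrac14(c-c_0)-\tfrac12\,x_y\,,
\]
so that $c-c_0=2u_++2u_-$ and $x_y=u_+-u_-$, which is exactly the linear change of variables in the statement; computing the quadratic part of the modulation system shows that, in the moving frames $\eta_\pm=y\pm4t$, the functions $\tilde u_\pm(t,\eta):=u_\pm(t,\eta\mp4t)$ obey
\[
\pd_t\tilde u_\pm=2\pd_\eta^2\tilde u_\pm\pm4\pd_\eta(\tilde u_\pm^2)+\tilde N_\pm\,,
\]
where $\tilde N_\pm$ gathers the quadratic cross‑interaction proportional to $u_+u_-$, the cubic and higher terms in $(u_+,u_-)$, and every term coupling to the radiation $v$ in \eqref{eq:ansatz}; the unforced equations are precisely the Burgers equations of the statement.

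Next I would fix the masses. The Cole--Hopf substitution $\tilde u_\pm=\pm\tfrac12\pd_\eta\log w_\pm$ turns the unforced equations into the heat equation $\pd_t w_\pm=2\pd_\eta^2 w_\pm$, whose kernel is $H_{2t}$, and the functions $u_B^\pm$ in the statement are exactly $\pm\tfrac12\pd_\eta\log\!\bigl(1+m_\pm\int_0^\eta H_{2t}\bigr)$, i.e. the heat evolutions of Heaviside‑type data, with total integral a monotone function of $m_\pm$. The value of $m_\pm$ is then pinned down by the almost‑conservation law behind \eqref{eq:balance}: the quantity $\int_\R u_\pm(t,y)\,dy=\int_\R\bigl(\tfrac14(c(t,y)-c_0)\pm\tfrac12 x_y(t,y)\bigr)\,dy$ is conserved up to $\int_0^\infty\!\!\int_\R|\tilde N_\pm|\,d\eta\,dt=O(\eps^2)$, which yields the normalization of $m_\pm$ stated in the theorem.

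It then remains to prove $\|\tilde u_\pm(t,\cdot)-u_B^\pm(t,\cdot)\|_{L^2(|\eta|\le R\sqrt t)}=o(t^{-1/4})$, which after undoing the moving‑frame change is the assertion of the theorem (note the two windows $|y\pm4t|\le R\sqrt t$ are disjoint for large $t$, and on each one the ``wrong'' self‑similar profile is exponentially small). Writing Duhamel's formula for $\tilde u_\pm$ relative to $e^{2t\pd_\eta^2}$, the contribution of the initial data converges to the Gaussian of the correct mass faster than the self‑similar scale on $|\eta|\le R\sqrt t$, the iterated Burgers nonlinearity supplies exactly the first nonlinear correction upgrading that Gaussian to $u_B^\pm$, and the forcing $\tilde N_\pm$ is shown to contribute $o(t^{-1/4})$ there; equivalently one may pass to self‑similar variables $\zeta=\eta/\sqrt t$, $\tau=\log t$, where $u_B^\pm$ is the unique equilibrium of the rescaled Burgers flow with the prescribed mass, the mean‑zero part of the linearization has a spectral gap, and the rescaled forcing decays exponentially in $\tau$. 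The sharp decay rates of Theorem~\ref{thm:exp-stability} — $\sup_y(|c-c_0|+|x_y|)\lesssim\eps(1+t)^{-1/2}$, $\|x_t-2c\|_{L^2}\lesssim\eps(1+t)^{-3/4}$, and the weighted $L^2$ bound on $v$ — localized to the two moving windows, are what furnish the time‑integrability of $\tilde N_\pm$ with a gain.

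The main obstacle will be controlling the forcing $\tilde N_\pm$ \emph{with a gain relative to the self‑similar rate $t^{-1/4}$}, not merely a comparable bound, and two pieces are delicate. The first is the feedback of the radiation $v$ onto $(c-c_0,x_y)$: one must show that the corresponding terms of $\tilde N_\pm$, restricted to $|\eta|\le R\sqrt t$ and run through the heat Duhamel integral, are $o(t^{-1/4})$, using the $(1+t)^{-3/4}$ weighted decay of $v$ together with the separation between $v$'s relevant part and the two windows. The second is the quadratic cross‑term $u_+u_-$, which is only $o(t^{-1/2})$ pointwise: since the hyperbolic part of the modulation system transports the two humps apart at relative speed $8$, after an $O(1)$ time $u_+$ and $u_-$ have exponentially separated supports, so $u_+u_-$ decays far faster than a diagonal quadratic term; making this separation quantitative during the initial epoch, when the humps still overlap near $y=0$, is the crux of the argument.
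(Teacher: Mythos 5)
Your diagonalization $u_\pm=\tfrac14(c-c_0)\pm\tfrac12 x_y$, the identification of the two decoupled Burgers equations with viscosity $2$ and flux $\pm4u^2$, and the normalization of $m_\pm$ by the almost-conserved mass all match the structure of the paper's proof (there the matrix $\tfrac{1}{4i}\Pi_*(0)=\bigl(\begin{smallmatrix}2&2\\1&-1\end{smallmatrix}\bigr)$ plays exactly this role, and $m_\pm$ is fixed by $\tfrac12\log\frac{2\sqrt2\pm m_\pm}{2\sqrt2\mp m_\pm}=c_\pm$ with $(c_+,c_-)$ read off from a time-corrected zero-mode of the modulation system). The genuine divergence is in the convergence mechanism, and there is a gap there.

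The paper proceeds by Karch's parabolic rescaling $\bd_\lambda(t,y)=\lambda\bd(\lambda^2t,\lambda y)$, extracts a weak-$*$ limit by Arzel\`a--Ascoli, shows the limit satisfies the \emph{decoupled} Burgers system via a virial-type monotonicity estimate with weight $\zeta_\pm(y\mp8\lambda t)=1\pm\tanh\alpha(y\mp8\lambda t)$ (this is what kills the cross-term, producing the crucial $O(\lambda^{-1})$ bound for $\int\!\!\int\zeta_\pm' d_{\mp,\lambda}^2$), and then identifies the limit with $u_B^\pm$ by a contraction in the scale-invariant norm $\|u\|_W=\sup_t t^{1/4}\|u(t)\|_{L^2}$. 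You instead propose a direct spectral-gap argument in self-similar variables combined with a Duhamel expansion. That route is plausible for the \emph{unforced} Burgers equation, but you have not explained how to handle the forcing with the required gain, and your key claim about the cross-term is not correct as stated: you assert that after $O(1)$ time $u_+$ and $u_-$ have ``exponentially separated supports,'' but the genuine solutions of the modulation system (unlike the self-similar profiles) are only polynomially localized --- Theorem~\ref{thm:exp-stability} gives only $\sup_y|c-c_0|+|x_y|\lesssim\eps(1+t)^{-1/2}$ and $\|x_y\|_{L^2}\lesssim\eps(1+t)^{-1/4}$ with no exponential confinement of the two wave packets. The overlap $u_+u_-$ therefore decays only algebraically pointwise, and the heuristic of non-overlapping Gaussians does not transfer; one needs the weighted virial identity (or an equivalent) to show that, \emph{after rescaling}, the interaction integral tends to zero. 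Without that quantitative step your argument does not close. The remaining parts of the proposal (Cole--Hopf identification of $u_B^\pm$, mass normalization, handling of the radiation feedback via the weighted $L^2$ bound on $v$) are sound and align with the paper's mechanism, but the cross-term separation is the crux, and your proposed resolution of it is not adequate.
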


Now we recall known results on stability of planar traveling wave
solutions. Stability of planar traveling waves in $L^2(\R^n)$
($n\ge 2$) were studied for reaction
diffusion equations by Xin (\cite{Xin}), Levermore and Xin
(\cite{Le-Xin}) and Kapitula (\cite{Kapitula}).
Stability of kink
solutions of Hamiltonian systems has been studied for $3$-dimensional
$\phi^4$-model by Cuccagna (\cite{Cu}).

The difficulty of those problems is that the spectrum of the linearized
operator $\mL$ around planar traveling waves has continuous
spectrum converging to $0$ whereas in the case where $n=1$, we see that
$0$ is an isolated eigenvalue
of the linearized operator around the traveling wave solution
and all the rest of the spectrum is in the left half plane and
away from the imaginary axis.
When $n\ge2$, the paper \cite{Xin} tells us that the semigroup generated by
the linearized operator decays to zero like $t^{-(n-1)/4}$.
This corresponds to the relation between our results
and the asymptotic stability result for the KdV equation by Pego and Weinstein
(\cite{PW}) where the spectrum of the linearized
operator in $L^2(\R;e^{2ax}dx)$ consists of the isolated eigenvalue $0$ and
$\sigma_c$ satisfying
$\sigma_c\subset \{\lambda\in\C\mid \Re \lambda <-b\}$
for some $b>0$. By measuring the size of perturbations with an
exponentially  weighted norm biased in the direction of motion, 
one obtains that exponential decay of the oscillating tail of the
solution for both KdV and KP-II and that leads to exponential
stability of the KdV $1$-soliton. However, thanks to the tranverse direction,
the linearized operator around a
line soliton of the KP-II equation has two branches of continuous
spectrum all the way up to $0$ in $L^2(\R^2;e^{ax}dxdy)$ with $a>0$.
We remark that those resonant modes are exponentially growing
as $x\to-\infty$ (see Lemma~\ref{lem:kpresonances}) and that the
corresponding continuous spectrum does not show up when we 
consider $L^2(\R^2)$-linear stability of the line soliton.
We refer the readers  e.g. \cite{APS,Haragus} for linear stability
of solitary waves and cnoidal waves to transverse perturbations.
\par
Since the transverse direction is $1$-dimensional,
the rate of decay of $\|\pd_y^kc(t,\cdot)\|_{L^2}$ and
$\|\pd_y^{k+1}x(t,\cdot)\|_{L^2}$ is at most $t^{-(2k+1)/4}$
and the nonlinearity of the modulation equations is quadratic,
it was fortunate that they have the similar
structure as the Burgers equations. Indeed, there are $1$D-heat equations
with quadratic nonlinearity whose solutions may not exist global in time
(\cite{Fujita}).
\par

Our plan of the present paper is as follows.  In
Section~\ref{sec:onmkp2}, we obtain explicit formula of resonant modes
of $\mL$ and $\mL^*$, where $\mL$ is a linearized operator of the
KP-II equation around the line soliton $\varphi(x-4t)$ by using the
linearized Miura transformations.  As is well known, the Miura
transformations connect line solitons and the null solution of the
KP-II equation with kink solutions of the modified KP-II equation
and all the slowly decaying eigenmodes of the linearized equation
$\pd_tu=\mL u$ can be found by investigating the kernel and the
cokernel of the linearized Miura transformation. We find two branches
of (resonant) eigenmodes $\{g(x,\pm\eta)e^{iy\eta}\}$ of $\mL$ such
that $g(x,\eta)\in L^2(\R;e^{2ax}dx)$ for a $a>0$ and
$\eta\in(-\eta_*,\eta_*)$, where $\eta_*$ is a positive number
depending on $a$.  In Section~\ref{sec:semigroup}, we prove that
solutions which are orthogonal to resonant modes of $\mL^*$ decay
exponentially in $L^2(\R^2;e^{2ax}dxdy)$ like solutions of the
linearized KP-II equation around the null solution by using a
bijection composed of the linearized Miura transformations by using an
idea of \cite{MP}.  In Section~\ref{sec:preliminaries}, we collect
linear estimates of 1D damped wave equations which shall be used to
analyze modulation equations of line solitons.  In
Section~\ref{sec:decomp}, we fix the decomposition \eqref{eq:ansatz}
by imposing that $v(t)$ is orthogonal to secular resonant modes of
$\mL^*$.  In Section~\ref{sec:modulation}, we derive modulation
equations on $c(t,y)$ and $x(t,y)$ from the non-secular conditions
introduced in Section~\ref{sec:decomp}.  Since the KP equations are
anisotropic in $x$ and $y$, the resonant eigenfunctions cannot be
written in the form $\{g(x)e^{iy\eta}\mid\eta\in\R\}$ as in the case
for reaction diffusion equations (\cite{Kapitula,Xin}) or the $\phi^4$
model (\cite{Cu}).
Moreover, the resonant eigenfunctions grow like
$g(x,\eta)\sim e^{\eta^2|x|/2}$ as $x\to-\infty$. For this reason, we
work on exponentially weighted space $X$ and impose the non-secular
conditions only for small $\eta$.  To rewrite
modulation equations of $c(t,y)$ and $x(t,y)$ in a PDE form, we
compute the time derivative of the non-secular condition, take the inverse
Fourier transform of the resulting equation. 
Although the modulation equations are non-local due to the $\eta$-dependence
of the resonant modes $g(x,\eta)$, the dominant part of the
modulation equations are damped wave equations.
Indeed, the modulation equations for the line soliton
$\varphi_{c_0}(x-2c_0t)$ with $c_0=2$ are approximately
\begin{equation}
\label{eq:approxeq}
\begin{pmatrix}  b_t \\ \tx_t  \end{pmatrix}
\simeq \begin{pmatrix}  3\pd_y^2 & 8\pd_y^2 \\ 2-\mu_3\pd_y^2 & \pd_y^2
\end{pmatrix}
\begin{pmatrix} b \\ \tx \end{pmatrix}
+\begin{pmatrix} 6(bx_y)_y \\ 3(\tx_y)^2-\frac{1}{4}b^2\end{pmatrix}\,,
\end{equation}
where $\mu_3=1/2+\pi^2/24$ and
$b(t,y)=4/3\{(c(t,y)/2)^{3/2}-1\}$ (see \eqref{eq:bdef}
for the precise definition) and $\tx(t,y)=x(t,y)-4t$.
We remark that $\pd_tx(t,y)\simeq 2c(t,y)$ and 
$b(t,y)\simeq c(t,y)-2\to 0$ as $t\to\infty$.
If we translate \eqref{eq:approxeq} into a system of $b(t,y)$ and $\pd_yx(t,y)$
and diagonalize the resulting equation, then we obtain a coupled 
Burgers equations.

In Section~\ref{sec:apriori}, we obtain $\mF^{-1}L^\infty$-$L^2$ decay
estimates on $b(t,y)$ and $\pd_yx(t,y)$ presuming a decay estimate on
$v(t)$ in $X:=L^2(\R^2;e^{2ax}dxdy)$ and the $L^2$-bound of $v(t)$.
 In Section~\ref{sec:L2norm}, we prove the $L^2$-estimate of $v$ assuming
the decay estimate on $v(t)$ in $X$.  In Section~\ref{sec:lowbound},
we estimate the low-frequency part of $v$
in $L^2(\R^2;e^{2ax}dxdy)$ by using the semigroup estimates obtained
in Section~\ref{sec:semigroup}. We estimate the high frequency part
separately in Section~\ref{sec:virial} by using the virial type
estimate to avoid the derivative loss. We remark that the potential
term produced by the linearization around the line soliton is
negligible to obtain time-global virial type estimates for the high
frequency part. In Sections~\ref{sec:thm1} and \ref{sec:thm2}, we
prove Theorems~\ref{thm:exp-stability} and \ref{thm:instability}.  In
Section~\ref{sec:thm3}, we prove Theorem~\ref{thm:Burgers} by using a
rescaling argument by Karch (\cite{Karch}).
\par
Using the inverse scattering method, Villarroel and Ablowitz (\cite{VA})
studied the Cauchy problem and stability of line solitons of the KP-II
equation. However, it is not clear from their
result that how modulations to line solitons evolve because they
did not explain in which sense line solitons are stable.
Moreover, our method does not rely on integrability of the equation
except for the linear estimate and can possibly be applied to
bidirectional models such as the Benney-Luke equation (\cite{BL,PQ}).
Our result is a first step toward $L^2(\R^2)$-stability 
of planar solitary waves for non-integrable equations
such as the generalized KP equations
(see e.g. Martel-Merle \cite{MM} for $H^1$-asymptotic stability of
1D solitary waves for gKdV).
\par
Finally, let us introduce several notations. 
For Banach spaces $V$ and $W$, let $B(V,W)$ be the space of all
linear continuous operators from $V$ to $W$ and let
$\|T\|_{B(V,W)}=\sup_{\|x\|_V=1}\|Tu\|_W$ for $T\in B(V,W)$.
We abbreviate $B(V,V)$ as $B(V)$.
For $f\in \mathcal{S}(\R^n)$ and $m\in \mathcal{S}'(\R^n)$, let 
\begin{gather*}
(\mathcal{F}f)(\xi)=\hat{f}(\xi)
=(2\pi)^{-n/2}\int_{\R^n}f(x)e^{-ix\xi}\,dx\,,\\
(\mathcal{F}^{-1}f)(x)=\check{f}(x)=\hat{f}(-x)\,,\quad
(m(D_x)f)(x)=(2\pi)^{-n/2}(\check{m}*f)(x)\,.
\end{gather*}
We use $a\lesssim b$ and $a=O(b)$ to mean that there exists a
positive constant such that $a\le Cb$. 
Various constants will be simply denoted
by $C$ and $C_i$ ($i\in\mathbb{N}$) in the course of the
calculations. We denote $\la x\ra=\sqrt{1+x^2}$ for $x\in\R$.

\begin{center}
\textbf{Acknowledgments}
\end{center}
The author would like to express his gratitude to Professor Yuji Kodama
for fruitful discussion based on \cite{KaKo}
and to R.~L.~Pego who pointed out \cite{Ped}.
A part of this work was carried out when the author stayed at
Carnegie Mellon University and University of Cergy Pontoise.
The author would like to express Professor Robert L.~ Pego and 
Nikolay Tzvetkov for their hospitality.
\bigskip

\section{The Miura transformation and resonant modes
of the linearized operator}
\label{sec:onmkp2}
In this section, we will find resonant eigenmodes of the linearized operator
around line solitons and prove exponential stability of
non-resonant modes in an exponentially weighted space
by using the linearized Miura transformations.
\par
For $p\in[1,\infty]$ and $k\in\N$, let $L^p_a(\R)=\{v\mid e^{ax}v\in L^p(\R)\}$
and $H^k_a(\R)=\{v\mid  e^{ax}v\in H^k(\R)\}$ whose norms are given by
$$\|v\|_{L^p_a(\R)}=\|e^{ax}v\|_{L^p(\R)}\,,\quad
\|v\|_{H^k_a(\R)}=\left(\sum_{j=0}^k\|\pd_x^jv\|_{L^2_a(\R)}^2\right)^{1/2}\,.$$
For any $a>0$,  we define the anti-derivative operator
$\pd_x^{-1}$ on $L^2_{\pm a}(\R)$ by
\begin{gather*}
(\pd_x^{-1}u)(x)=-\int_x^\infty u(x_1)\,dx_1 \quad\text{for $u\in L^2_a(\R)$,}
\\
(\pd_x^{-1}u)(x)=\int^x_{-\infty} u(x_1)\,dx_1 \quad\text{for $u\in L^2_{-a}(\R)$.}
\end{gather*}
The operator $\pd_x^{-1}$ is bounded on $L^2_a(\R)$. Indeed,
it follows from Young's inequality that
$\|\pd_x^{-1}\|_{B(L^2_a(\R))}=\|\pd_x^{-1}\|_{B(L^2_{-a})}=1/a$ for $a>0$.
\par
We interpret \eqref{eq:KPII} in the ``integrated'' form
\begin{equation}\label{KPII_integrated}
\pd_tu+\pd_x^3u+3\pd_x^{-1}\pd_y^2u+3\pd_x(u^2)=0\,,\quad
\end{equation}
where $\pd_x^{-1}\pd_y^2u(x,y)=- \int_x^\infty \pd_y^2u(x_1,y)\,dx_1$
in the sense of distribution, that is,
\begin{equation}
  \label{eq:anti-x-yy}
\la \pd_x^{-1}\pd_y^2u\,,\,\psi\ra =
-\int_{\R^2}\left(\int_x^\infty u(x_1,y)\,dx_1\right)\pd_y^2\psi(x,y)\,dxdy
\quad\text{for $\psi\in C_0^\infty(\R^2)$.}
\end{equation}
If $u$ is smooth and $u$, $\pd_yu\in X:=L^2(\R^2;e^{2ax}dxdy)$, then
$$\la \pd_x^{-1}\pd_y^2u\,,\,\psi\ra =
\int_{\R^2} \left(\int_x^\infty \pd_yu(x_1,y)\,dx_1\right)
\pd_y\psi(x,y)\,dxdy\,.$$
Eq.~\eqref{eq:anti-x-yy} follows from the standard definition
$\pd_x^{-1}\pd_yu=\mF^{-1}(\frac{\eta}{\xi}\hat{u}(\xi,\eta))$
when a solution is exponentially localized in the $x$-direction.
Indeed, we have $(\pd_x^{-1}u)(x,y)=(\mF^{-1}(i\xi)^{-1}\hat{u})(x,y)
=-\int_x^\infty u(x_1,y)\,dx_1$
for $u\in \{f\in C_0^\infty (\R^2)\mid \int_\R f(x,y)\,dx=0\;{}^\forall
y\in\R\}=:\mathcal{B}$. Since $\mathcal{B}$ is a dense subset of
$X\cap L^2(\R^2)$, we have \eqref{eq:anti-x-yy} for $u\in X\cap L^2(\R^2)$
by taking a limit.

We remark that \eqref{KPII_integrated} has a solution in the class
$$u(t,x,y)-\varphi_c(x-2ct)\in L^\infty_{loc}([0,\infty);X)
\cap C(\R; L^2(\R^2))$$
for initial data $u(0)\in X\cap L^2(\R^2)$ (see Appendix~\ref{ap:LWP}).

\subsection{Resonant modes}
\label{subsec:resonance}
Let $\varphi=\varphi_2$, $u(t,x,y)=\varphi(x-4t)+U(t,x-4t,y)$.
Linearizing \eqref{KPII_integrated} around $U=0$, we have
\begin{equation}
  \label{eq:LKP}
\pd_tU=\mL U,\quad 
\mL U=-\pd_x^3U+4\pd_xU-3\pd_x^{-1}\pd_y^2U-6\pd_x(\varphi U).
\end{equation}
Let
$\mL(\eta)u:=-\pd_x^3u+4\pd_xu+3\eta^2\pd_x^{-1}u-6\pd_x(\varphi u)$
be the operator on $L^2_a(\R)$ with its domain $D(\mL(\eta))=H^3_a(\R)$.
Since the potential of $\mL$ does not depend on $y$,
we have $\mL(u(x)e^{\pm i\eta y})=e^{\pm i\eta y} \mL(\eta)u(x)$.
We will look for resonant modes $\{g(x,\eta)e^{i\eta y}\}$
such that $g(\cdot,\eta)\in L^2_a(\R)$ is a solution of $\mL(\eta)u=\lambda u$.
\begin{lemma}
\label{lem:kpresonances}
Let $\eta\in\R\setminus\{0\}$, $\beta(\eta)=\sqrt{1+i\eta}$,
$\lambda(\eta)=4i\eta\beta(\eta)$  and
$$g(x,\eta)=\frac{-i}{2\eta\beta(\eta)}
\pd_x^2(e^{-\beta(\eta)x}\sech x),\quad
g^*(x,\eta)=\pd_x(e^{\beta(-\eta)x}\sech x).$$
Then
\begin{gather}
\label{eq:lem-kp1}
\mL(\eta)g(x,\pm\eta)=\lambda(\pm\eta)g(x,\pm\eta)\,,\\
\label{eq:lem-kp2}
\mL(\eta)^*g^*(x,\pm\eta)=\lambda(\mp\eta)g^*(x,\pm\eta)\,,\\
\label{eq:lem-kp3}
\int_\R g(x,\eta)\overline{g^*(x,\eta)}\,dx=1,\quad
\int_\R g(x,\eta)\overline{g^*(x,-\eta)}\,dx=0\,.  
\end{gather}
\end{lemma}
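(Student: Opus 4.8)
The plan is to verify the three displayed identities by direct computation, using the Hamiltonian factorization $\mL(\eta)=\pd_x\circ S(\eta)$ with $S(\eta)u=-u''+4u+3\eta^2\pd_x^{-2}u-6\varphi u$, together with the first-order differential relation satisfied by $e^{\mp\beta(\pm\eta)x}\sech x$. Since $\eta$ enters $\mL(\eta)$ only through $\eta^2$, one has $\mL(\eta)=\mL(-\eta)$ and $\mL(\eta)^*=\mL(-\eta)^*$; hence, once \eqref{eq:lem-kp1} and \eqref{eq:lem-kp2} are established for the mode indexed by $+\eta$, those indexed by $-\eta$, and likewise the first relation in \eqref{eq:lem-kp3}, follow by replacing $\eta$ with $-\eta$. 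So it suffices to treat $g(\cdot,\eta)$ and $g^*(\cdot,\eta)$.

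For \eqref{eq:lem-kp1} I would first record that $w(x):=e^{-\beta(\eta)x}\sech x$ satisfies $w'=-(\beta(\eta)+\tanh x)w$, so that $\pd_x^kw=P_k(\tanh x)w$ for explicit polynomials $P_k$ obtained from $P_0=1$ and the recursion $P_{k+1}=(1-s^2)\pd_sP_k-(\beta(\eta)+s)P_k$, using $(\tanh x)'=1-\tanh^2x$ and $\varphi=2(1-\tanh^2x)$. Since $w$ and its derivatives vanish at $+\infty$, repeated application of the $L^2_a$-antiderivative gives $\pd_x^{-2}g(\cdot,\eta)=\tfrac{-i}{2\eta\beta(\eta)}w$, and therefore $\mL(\eta)g=\lambda(\eta)g$ is equivalent, after one integration in $x$ (the integration constant vanishing by decay at $+\infty$), to the fourth-order identity
$$-w^{(4)}+(4-6\varphi)w''+3\eta^2w=\lambda(\eta)\,w'.$$
Substituting $w^{(4)}=P_4w$, $w''=P_2w$, $w'=P_1w$ and $\varphi=2(1-s^2)$ turns this into a polynomial identity in $s=\tanh x$ of degree four in which, after using $\beta(\eta)^2=1+i\eta$ and $\lambda(\eta)=4i\eta\beta(\eta)$, both sides reduce to $-4i\eta\beta(\eta)(\beta(\eta)+s)$; checking this is routine. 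Identity \eqref{eq:lem-kp2} is proved the same way: with $w^*(x):=e^{\beta(-\eta)x}\sech x$ one has $(w^*)'=(\beta(-\eta)-\tanh x)w^*$, and since $\mL(\eta)^*=-S(\eta)\circ\pd_x$ while $g^*(\cdot,\eta)=(w^*)'$ and its derivatives decay at $-\infty$, the $L^2_{-a}$-antiderivative gives $\pd_x^{-1}g^*(\cdot,\eta)=w^*$; the eigenvalue equation again collapses to a degree-four polynomial identity in $\tanh x$, now using $\beta(-\eta)^2=1-i\eta$ and $\lambda(-\eta)=-4i\eta\beta(-\eta)$.

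For \eqref{eq:lem-kp3} the key remark is that $w(x)\,\overline{w^*(x)}=\sech^2x$, because $\overline{\beta(-\eta)}=\beta(\eta)$ for real $\eta$. Writing $g(\cdot,\eta)=\tfrac{-i}{2\eta\beta(\eta)}w''$ and $\overline{g^*(\cdot,\eta)}=\pd_xW$ with $W:=\overline{w^*}=e^{\beta(\eta)x}\sech x$, and using the second-order equations $w''=-2\beta(\eta)w'-(i\eta+\varphi)w$ and $W''=2\beta(\eta)W'-(i\eta+\varphi)W$, one integrates by parts (all boundary terms vanish since $w,w'$ decay at $+\infty$ and $W,W'$ decay at $-\infty$) and reduces $\int_\R w''W'\,dx$ to the elementary integrals $\int_\R\sech^2=2$, $\int_\R\sech^4=4/3$, $\int_\R\sech^2\tanh=\int_\R\sech^4\tanh=0$, obtaining $\int_\R w''W'\,dx=2i\eta\beta(\eta)$, hence $\int_\R g(\cdot,\eta)\overline{g^*(\cdot,\eta)}\,dx=1$. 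The second relation in \eqref{eq:lem-kp3} then needs no further calculation: by \eqref{eq:lem-kp1}, \eqref{eq:lem-kp2} and the duality $\int(\mL(\eta)u)\bar v\,dx=\int u\,\overline{\mL(\eta)^*v}\,dx$ one gets $(\lambda(\eta)-\overline{\lambda(\eta)})\int_\R g(\cdot,\eta)\overline{g^*(\cdot,-\eta)}\,dx=0$, and $\Im\lambda(\eta)=4\eta\,\Re\beta(\eta)\neq0$ for $\eta\neq0$ since $\Re\beta(\eta)\geq1$, so the pairing vanishes.

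None of the individual computations is deep; the points to watch are the choice of branch of $\pd_x^{-1}$ — dictated by whether the relevant function decays at $+\infty$ (for $g$) or at $-\infty$ (for $g^*$), a wrong choice adding spurious polynomial-in-$x$ terms to the reduced ODEs — and keeping the signs straight in $\mL(\eta)^*=-S(\eta)\pd_x$; so the main ``obstacle'' is really careful bookkeeping rather than a conceptual difficulty. A more conceptual route, which also explains where the formulas for $g$ and $g^*$ come from, is to produce them by applying the linearized Miura (B\"acklund) transformation to the explicit plane-wave resonant modes of the linearized KP-II operator about the null solution; I would then use the verification above as a direct check of that derivation.
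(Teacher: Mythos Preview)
Your proposal is correct but proceeds by a genuinely different route from the paper. You verify \eqref{eq:lem-kp1}--\eqref{eq:lem-kp3} by direct computation: reducing the eigenvalue equations to polynomial identities in $s=\tanh x$ via the first-order relation $w'=-(\beta+s)w$, and evaluating the pairing integrals using $w\,\overline{w^*}=\sech^2x$. The paper instead derives everything from the linearized Miura transformation: it first establishes the kernel/range structure of the operators $\mM_\pm(\eta)$ (Lemmas~\ref{lem:RangeM-} and \ref{lem:RangeM+}), then uses the intertwining relations $\mL(\eta)\mM_+(\eta)=\mM_+(\eta)\mL_M(\eta)$ and $\mL_0(\eta)\mM_-(\eta)=\mM_-(\eta)\mL_M(\eta)$ together with $\mM_-(\eta)g_M=0$ and $\mM_+(\eta)g_M=-2i\eta g$ to transport the eigenvalue relation from the mKP side to the KP side, reading off $\lambda(\eta)$ from the asymptotics of $g_M$ at $+\infty$; the orthogonality \eqref{eq:lem-kp3} is then obtained by pulling back the elementary computation $\int g_M\overline{g_M^*}=1$ through $\mM_+(\eta)$ and its adjoint. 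Your argument is more elementary and self-contained for this lemma alone, and your eigenvalue argument for the second relation in \eqref{eq:lem-kp3} is cleaner than a direct calculation. The paper's route, however, is not gratuitous: the Miura machinery (in particular Lemmas~\ref{lem:RangeM-}--\ref{lem:Mpmlarge}) is exactly what is used again in Section~\ref{sec:semigroup} to prove the exponential decay of $e^{t\mL}$ on the non-resonant subspace, so in the context of the full paper that overhead is unavoidable. You correctly anticipate this in your final paragraph.
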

Lemma~\ref{lem:kpresonances} will be proved in Subsection~\ref{subsec:Miura}.
\par

To  resolve the singularity of $g(x,\eta)$ and the degeneracy of $g_*(x,\eta)$
at $\eta=0$, we decompose resonant modes and adjoint resonant modes into
their real parts and imaginary parts. Let
\begin{gather*}
g_1(x,\eta)=g(x,\eta)+g(x,-\eta)\,,\quad
g_2(x,\eta)=i\eta\{g(x,\eta)-g(x,-\eta)\}\,,\\
g_1^*(x,\eta)=\frac12\{g^*(x,\eta)+g^*(x,-\eta)\}\,,\quad
g_2^*(x,\eta)=\frac{i}{2\eta}\{g^*(x,\eta)-g^*(x,-\eta)\}\,.
\end{gather*}
Then we have the following.
\begin{lemma}
  \label{cl:L*g}
\begin{equation*}
  \int_\R g_j(x,\eta)\overline{g_k^*(x,\eta)}\,dx=\delta_{jk}
\quad\text{for $j$, $k=1$, $2$.}
\end{equation*}
\begin{gather*}
  \mL(\eta)g_1(x,\eta)
=\Re\lambda(\eta)g_1(x,\eta)+\frac{\Im\lambda(\eta)}{\eta}g_2(x,\eta)\,,\\
\mL(\eta)g_2(x,\eta)
=-\eta\Im\lambda(\eta)g_1(x,\eta)+\Re\lambda(\eta)g_2(x,\eta)\,,\\
\mL(\eta)^*g_1^*(x,\eta)
=\Re\lambda(\eta)g_1^*(x,\eta)-\eta\Im\lambda(\eta)g_2^*(x,\eta)\,,\\
\mL(\eta)^*g_2^*(x,\eta)
=\frac{\Im\lambda(\eta)}{\eta}g_1^*(x,\eta)+\Re\lambda(\eta)g_2^*(x,\eta)\,.
  \end{gather*}
\end{lemma}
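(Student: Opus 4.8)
The plan is to obtain every statement in Lemma~\ref{cl:L*g} as elementary linear algebra from Lemma~\ref{lem:kpresonances}, the only analytic input being the reflection symmetry of the spectral data. First I would record that $\lambda(-\eta)=\overline{\lambda(\eta)}$ for $\eta\in\R$: with the principal branch $\beta(-\eta)=\sqrt{1-i\eta}=\overline{\beta(\eta)}$, so $\lambda(-\eta)=4i(-\eta)\beta(-\eta)=\overline{4i\eta\beta(\eta)}=\overline{\lambda(\eta)}$, whence $\Re\lambda(-\eta)=\Re\lambda(\eta)$ and $\Im\lambda(-\eta)=-\Im\lambda(\eta)$. I would also invert the definitions: from $g_1=g(\cdot,\eta)+g(\cdot,-\eta)$ and $g_2=i\eta\{g(\cdot,\eta)-g(\cdot,-\eta)\}$ one gets $g(\cdot,\pm\eta)=\tfrac12 g_1\mp\tfrac{i}{2\eta}g_2$, and similarly $g^*(\cdot,\pm\eta)=g_1^*\mp i\eta g_2^*$. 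Note that the apparent singularity of $g$ and the degeneracy of $g^*$ at $\eta=0$ are cancelled by the prefactors $i\eta$ and $i/(2\eta)$, so $g_1,g_2,g_1^*,g_2^*$ extend continuously across $\eta=0$ and it suffices to argue for $\eta\neq0$.

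Next I would prove the four operator identities by applying $\mL(\eta)$ (resp.\ $\mL(\eta)^*$) to these combinations. By \eqref{eq:lem-kp1}, $\mL(\eta)g_1=\lambda(\eta)g(\cdot,\eta)+\lambda(-\eta)g(\cdot,-\eta)$; substituting $g(\cdot,\pm\eta)=\tfrac12 g_1\mp\tfrac{i}{2\eta}g_2$ and using $\lambda(\eta)+\lambda(-\eta)=2\Re\lambda(\eta)$, $\lambda(\eta)-\lambda(-\eta)=2i\Im\lambda(\eta)$, the right-hand side collapses to $\Re\lambda(\eta)g_1+\tfrac{\Im\lambda(\eta)}{\eta}g_2$. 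The computation of $\mL(\eta)g_2$ is the same up to the extra factor $i\eta$ and an overall sign, giving $-\eta\Im\lambda(\eta)g_1+\Re\lambda(\eta)g_2$. The two adjoint identities follow identically from \eqref{eq:lem-kp2}, the one point to keep in mind being that \eqref{eq:lem-kp2} attaches the eigenvalue $\lambda(-\eta)$ — not $\lambda(\eta)$ — to $g^*(\cdot,\eta)$, which is precisely what transposes the sign pattern so that the $\Im\lambda(\eta)/\eta$ and $-\eta\Im\lambda(\eta)$ entries swap rows relative to the $\mL(\eta)$ case.

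For the biorthogonality I would first note that applying \eqref{eq:lem-kp3} with $-\eta$ in place of $\eta$ is legitimate — the formulas for $g$ and $g^*$ are literally the $\eta\mapsto-\eta$ substitutions, so Lemma~\ref{lem:kpresonances} applies verbatim — yielding the companion relations $\int_\R g(x,-\eta)\overline{g^*(x,-\eta)}\,dx=1$ and $\int_\R g(x,-\eta)\overline{g^*(x,\eta)}\,dx=0$. Then each $\int_\R g_j\overline{g_k^*}\,dx$ expands into a fixed linear combination of the four ``pure'' pairings of $g(\cdot,\pm\eta)$ against $g^*(\cdot,\pm\eta)$; using $\overline{i/(2\eta)}=-i/(2\eta)$ for real $\eta$, the off-diagonal combinations cancel and the diagonal ones add to $1$, giving $\delta_{jk}$. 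I do not expect a genuine obstacle anywhere — the lemma is pure bookkeeping — so the only points deserving explicit care are the branch convention forcing $\beta(-\eta)=\overline{\beta(\eta)}$ (hence $\lambda(-\eta)=\overline{\lambda(\eta)}$) and the continuous extension across $\eta=0$ mentioned above.
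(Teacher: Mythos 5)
Your proposal is correct and takes essentially the same approach as the paper: the paper's proof is a one-sentence remark that the lemma follows from Lemma~\ref{lem:kpresonances} via the reflection symmetries $\overline{\lambda(\eta)}=\lambda(-\eta)$, $\overline{g(x,\eta)}=g(x,-\eta)$, $\overline{g^*(x,\eta)}=g^*(x,-\eta)$, and you have simply carried out the linear-algebra bookkeeping that the paper leaves implicit. All four operator identities and the biorthogonality check out, and your point about $g^*(\cdot,\eta)$ carrying the eigenvalue $\lambda(-\eta)$ being what transposes the sign pattern in the adjoint identities is exactly the right thing to flag.
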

\begin{proof}
Lemma~\ref{cl:L*g} follows immediately from Lemma~\ref{lem:kpresonances}
since $\overline{\lambda(\eta)}=\lambda(-\eta)$,\newline
$\overline{g(x,\eta)}=g(x,-\eta)$  and $\overline{g^*(x,\eta)}=g^*(x,-\eta)$.
\end{proof}
We remark that $\mL(0)$ coincides with the linearized operator
of the KdV equation around the $1$-soliton $\varphi(x-4t)$
and that $g_k(x,0)\in \ker_g(\mL(0))$, $g_k^*(x,0)\in \ker_g(\mL(0)^*)$ 
for $k=1$ and $2$.
\begin{claim}
  \label{cl:gk-approx}
Let $a\in(0,2)$ and $\nu(\eta)=\Re\beta(\eta)-1$.
Let $\eta_0$ be a positive number  such that $\nu_0:=\nu(\eta_0)<a$.
Then for $\eta\in[-\eta_0,\eta_0]$,
\begin{align*}
& g_1(x,\eta)=\frac14\varphi'+\frac{x}{4}\varphi'+\frac{1}{2}\varphi
+O(\eta^2)\,,\quad
g_2(x,\eta)=-\frac{1}{2}\varphi'+O(\eta^2) 
\quad\text{in $L^2_a(\R)$,}\\
& g_1^*(x,\eta)=\frac12\varphi+O(\eta^2)\,,\quad
g_2^*(x,\eta)=\int_{-\infty}^x\pd_c\varphi dx+O(\eta^2)
\quad\text{in $L^2_{-a}(\R)$,}
\end{align*}
where $\pd_c\varphi=\pd_c\varphi_c|_{c=2}$.
\end{claim}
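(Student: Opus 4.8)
The plan is to Taylor-expand $g_j(\cdot,\eta)$ and $g_j^*(\cdot,\eta)$ in $\eta$ around $\eta=0$ inside the weighted spaces $L^2_a(\R)$ and $L^2_{-a}(\R)$ and to read off the first terms. The starting observation is that $\beta(\eta)=\sqrt{1+i\eta}$ is real-analytic near $\eta=0$ with $\beta(0)=1$, $\beta'(0)=i/2$, and that $(\Re\beta(\eta))^2=\tfrac12\big(1+\sqrt{1+\eta^2}\,\big)$ is nondecreasing in $|\eta|$; hence $\nu(\eta)=\Re\beta(\eta)-1\ge0$ and $\sup_{|\eta|\le\eta_0}\nu(\eta)=\nu_0<a$ for the $\eta_0$ fixed in the statement.

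First I would prove a uniform analyticity lemma: for every $j\ge0$, $\beta\mapsto\pd_x^j(e^{-\beta x}\sech x)$ is holomorphic from the strip $\{\beta\in\C:a-1<\Re\beta<a+1\}$ into $L^2_a(\R)$, and $\beta\mapsto\pd_x^j(e^{\beta x}\sech x)$ is holomorphic from the same strip into $L^2_{-a}(\R)$. This is elementary: $e^{ax}\pd_x^j(e^{-\beta x}\sech x)$ is a finite combination of $e^{(a-\beta)x}$ times polynomials in $x$ times $\sech x$ and its derivatives, all decaying like $e^{-x}$ as $x\to+\infty$ and like $e^{x}$ as $x\to-\infty$, so the strip condition forces square integrability uniformly on compact subsets of $\beta$, where term-by-term differentiation in $\beta$ is then justified. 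Since $a<2$ and $\nu_0<a$, the arc $\{\beta(\eta):|\eta|\le\eta_0\}$, along which $\Re\beta\in[1,1+\nu_0]$, lies in the interior of the strip. Composing with the analytic map $\eta\mapsto\beta(\eta)$ then shows that $G(x,\eta):=\eta\,g(x,\eta)=\tfrac{-i}{2\beta(\eta)}\pd_x^2(e^{-\beta(\eta)x}\sech x)$ is real-analytic from $[-\eta_0,\eta_0]$ into $L^2_a(\R)$ and that $g^*(x,\eta)=\pd_x(e^{\beta(-\eta)x}\sech x)$ is real-analytic from $[-\eta_0,\eta_0]$ into $L^2_{-a}(\R)$; each therefore has a convergent Taylor expansion with remainder bounded uniformly in $|\eta|\le\eta_0$.

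Next I would insert these into the definitions of $g_1,g_2,g_1^*,g_2^*$ (recall $g,g^*$ from Lemma~\ref{lem:kpresonances}) and use parity in $\eta$. From $g(x,\pm\eta)=\pm\eta^{-1}G(x,\pm\eta)$ we get $g_1(x,\eta)=\eta^{-1}\big(G(x,\eta)-G(x,-\eta)\big)$ and $g_2(x,\eta)=i\big(G(x,\eta)+G(x,-\eta)\big)$; since the even part of the analytic map $\eta\mapsto G(x,\eta)$ is $G(x,0)+O(\eta^2)$ and its odd part is $\eta(\pd_\eta G)(x,0)+O(\eta^3)$ in $L^2_a(\R)$, this gives $g_1(x,\eta)=2(\pd_\eta G)(x,0)+O(\eta^2)$ and $g_2(x,\eta)=2iG(x,0)+O(\eta^2)$ in $L^2_a(\R)$. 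The same parity argument applied directly to $g^*$ gives $g_1^*(x,\eta)=g^*(x,0)+O(\eta^2)$ and $g_2^*(x,\eta)=i(\pd_\eta g^*)(x,0)+O(\eta^2)$ in $L^2_{-a}(\R)$. This is the step that removes the $\eta^{-1}$ singularity of $g$ and the $\eta$-degeneracy of $g^*$: the offending powers cancel under (anti)symmetrization, and in particular $g_1,g_2$ (resp.\ $g_1^*,g_2^*$) extend to real-analytic maps from $[-\eta_0,\eta_0]$ into $L^2_a(\R)$ (resp.\ $L^2_{-a}(\R)$).

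Finally I would compute the four leading coefficients by differentiating at $\eta=0$, using $\beta'(0)=i/2$, $\pd_\beta(e^{\mp\beta x}\sech x)=\mp xe^{\mp\beta x}\sech x$, and $\tfrac{d}{d\eta}\beta(-\eta)\big|_{\eta=0}=-i/2$, and reducing to elementary identities for $\varphi=\varphi_2=2\sech^2x$. Writing $f(x)=e^{-x}\sech x=2/(e^{2x}+1)$, one checks $e^x\sech x=2-f$, $f'=-\tfrac12\varphi$, $f''=-\tfrac12\varphi'$, and $2-f=1+\tanh x$. Then $2iG(x,0)=\pd_x^2f=-\tfrac12\varphi'$; $2(\pd_\eta G)(x,0)=-\tfrac12\pd_x^2\big((1+x)f\big)=-f'-\tfrac{1+x}{2}f''=\tfrac12\varphi+\tfrac14\varphi'+\tfrac x4\varphi'$; $g^*(x,0)=\pd_x(2-f)=-f'=\tfrac12\varphi$; and $i(\pd_\eta g^*)(x,0)=\tfrac12\pd_x\big(x(2-f)\big)=\tfrac12(1+\tanh x)+\tfrac x4\varphi$, which equals $\int_{-\infty}^x\pd_c\varphi\,dx'$ after an integration by parts (recall $\pd_c\varphi=\tfrac12\varphi+\tfrac x4\varphi'$ at $c=2$ and $\int_{-\infty}^x\varphi=2(1+\tanh x)$). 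Substituting these into the expansions above yields the claim. I expect the only genuinely non-routine point to be the uniform analyticity in the weighted spaces together with the cancellation of the singular/degenerate powers of $\eta$; the remaining coefficient identities are elementary and I would carry them out in full.
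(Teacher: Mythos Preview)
Your proposal is correct and follows essentially the same approach as the paper: exploit evenness of $g_j,g_j^*$ in $\eta$ to Taylor-expand to order $O(\eta^2)$ in the weighted spaces and read off the constant term. The paper streamlines the derivative computation by parametrizing via $s=1+i\eta$ (so $\beta(\eta)=\sqrt{s}$) and differentiating $\pd_x^2\bigl(s^{-1/2}e^{-\sqrt{s}x}\sech x\bigr)$ at $s=1$, but the content is identical to your chain-rule computation with $\beta'(0)=i/2$.
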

\begin{proof}
Since $g_1(x,\eta)$ and $g_2(x,\eta)$ are even in $\eta$,
  \begin{align*}
g_1(x,\eta)=&\frac{1}{2i\eta}\pd_x^2\left\{
\left(\frac{e^{-\beta(\eta)x}}{\beta(\eta)}-\frac{e^{-\beta(-\eta)x}}{\beta(-\eta)}
\right)\sech x\right\}
\\=& \pd_s\left(\frac{e^{-\sqrt{s}x}}{\sqrt{s}}\sech x\right)_{xx}\bigg|_{s=1}
+O(\eta^2)
\\=& \frac{x+1}4\varphi'+\frac{1}{2}\varphi+O(\eta^2)\,,
  \end{align*}
and 
$$g_2(x,\eta)=(e^{-x}\sech x)_{xx}+O(\eta^2)=-\frac12\varphi'+O(\eta^2).$$
We can compute $g_1^*(x,\eta)$ and $g_2^*(x,\eta)$ in the same way.
\end{proof}

\subsection{Linearized Miura transformation}
\label{subsec:Miura}

Now we recall the Miura transformation of the KP-II equation. Let
$$M^c_{\pm}(v)=\pm \partial_x v+\partial_x^{-1}\partial_{y}v-v^2+\frac{c}{2}\,.$$
The transformations $M^c_{\pm}$ relate the KP-II equation to the mKP-II
equation (mKP-II) which reads
\begin{equation}\label{eq:MKP}
\pd_tv+\pd_x^3v+3\pd_x^{-1}\pd_y^2v-6v^2\pd_xv+6\pd_xv\pd_x^{-1}\pd_yv=0.
\end{equation}
Formally, if $v(t,x,y)$ is a solution of \eqref{eq:MKP} and $c>0$,
then $M^c_{\pm}(v)(t,x-3ct,y)$ 
are solutions of the KP-II equation \eqref{eq:KPII}.
A line soliton solution $\varphi_c(x-2ct)$ of the KP-II equation
is related to a kink solution $Q_c(x+ct)$ of \eqref{eq:MKP}, where
$Q_c(x)=\sqrt{\frac{c}{2}}\tanh\Big(\sqrt{\frac{c}{2}}\,x\Big)$.
Indeed, we have
\begin{equation}\label{algebra}
M_{+}^c(Q_c)=\varphi_c,\quad M_{-}^c(Q_c)=0\,.
\end{equation}
\par
From now on, let $c=2$, $Q=Q_2$ and $M_\pm=M_\pm^2$.
Let $v(t,x,y)=Q(x+2t)+V(t,x+2t,y)$ and linearize \eqref{eq:MKP} around
$V=0$. Then
\begin{gather}
  \label{eq:LMKP}
\pd_tV=\mL_MV,\\
\begin{split}
\mL_MV:=&-\pd_x^3V-2\pd_xV-3\pd_x^{-1}\pd_y^2V
+6\pd_x(Q^2 V)-6Q'\pd_x^{-1}\pd_yV 
\\=&-\pd_x^3V+4\pd_xV-3\pd_x^{-1}\pd_y^2V
-6\pd_x(Q'V)-6Q'\pd_x^{-1}\pd_yV\,. 
\end{split}
\notag
\end{gather}
In the last line, we use $Q'=1-Q^2$.
Let $X_M$ be the Banach space equipped with the norm
$\|v\|_{X_M}:=(\|v\|_X^2+\|\pd_xv\|_X^2+\|\pd_x^{-1}\pd_yv\|_X^2)^{1/2}$.
Thanks to the smoothing effect of $\mL_0$ in $X$
(see Lemma~\ref{lem:free-semigroup} in Section~\ref{sec:semigroup}),
the initial value problem 
$$\pd_tv=\mL_Mv\,,\quad v(0)=v_0$$
has a unique solution in the class $C([0,\infty);X_M)$.
\par

Solutions of \eqref{eq:LKP} are related to those of \eqref{eq:LMKP}
by the linearized Miura transformation
\begin{equation}
  \label{eq:LMiura+}
u=\nabla M_+(Q)v=\pd_xv+\pd_x^{-1}\pd_yv-2Qv.
\end{equation} 
Another linearized Miura transformation
\begin{equation}
  \label{eq:LMiura-}
u=\nabla M_-(Q)v=-\pd_xv+\pd_x^{-1}\pd_yv-2Qv
\end{equation}
maps solutions of \eqref{eq:LMKP} to those of the linearized KP-II
around $0$
\begin{equation}
  \label{eq:LKP0}
\pd_tu+\pd_x^3u-4\pd_xu+3\pd_x^{-1}\pd_y^2u=0\,.
\end{equation}

\begin{lemma}
\label{lem:relation}
  Suppose that $v$ is a solution to \eqref{eq:LMKP}.
Then $u=\nabla M_+(Q)v$ satisfies \eqref{eq:LKP}
and  $u=\nabla M_-(Q)v$ satisfies \eqref{eq:LKP0}.
\end{lemma}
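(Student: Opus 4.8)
The plan is to reduce Lemma~\ref{lem:relation} to a pair of $t$-independent operator intertwining identities. Since the linearized Miura maps $\nabla M_\pm(Q)=\pm\pd_x+\pd_x^{-1}\pd_y-2Q$ of \eqref{eq:LMiura+}--\eqref{eq:LMiura-} involve only the fixed kink profile $Q=Q_2$ and no time, any solution $v$ of $\pd_tv=\mL_Mv$ satisfies $\pd_t(\nabla M_\pm(Q)v)=\nabla M_\pm(Q)\mL_Mv$, so it suffices to prove
\[
\nabla M_+(Q)\,\mL_M=\mL\,\nabla M_+(Q),\qquad
\nabla M_-(Q)\,\mL_M=\mL_0\,\nabla M_-(Q),
\]
where $\mL$ is the operator in \eqref{eq:LKP} and $\mL_0u:=-\pd_x^3u+4\pd_xu-3\pd_x^{-1}\pd_y^2u$ is the one appearing in \eqref{eq:LKP0}. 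Conceptually these are the linearizations at $v=Q$ of the (known) nonlinear fact that $v\mapsto M^c_\pm(v)(t,\cdot-3ct,\cdot)$ carries solutions of \eqref{eq:MKP} to solutions of the KP-II equation: writing $v=Q+\eps V+O(\eps^2)$ and using \eqref{algebra}, the $O(\eps)$ part of that fact for $M^c_+$ says that $\nabla M_+(Q)V$ solves the linearization of KP-II about the line soliton, i.e.\ \eqref{eq:LKP}, while for $M^c_-$ the leading ($\eps$-independent) term vanishes because $M_-^c(Q_c)=0$, so the $O(\eps)$ part says that $\nabla M_-(Q)V$ solves the linearization of KP-II about $0$, i.e.\ \eqref{eq:LKP0}. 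One must keep track of the Galilean shift $x\mapsto x-3ct$ built into the Miura map together with the change from the mKP profile frame $x+2t$ to the KP profile frame $x-4t$; it is this translation that produces the first order terms $+4\pd_x$ common to $\mL$ and $\mL_0$, while the potential term $-6\pd_x(\varphi U)$ in $\mL$ comes from linearizing the quadratic nonlinearity of KP-II about $\varphi$.

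For a self-contained argument I would instead verify the two operator identities directly. Substituting $\mL_MV=-\pd_x^3V+4\pd_xV-3\pd_x^{-1}\pd_y^2V-6\pd_x(Q'V)-6Q'\pd_x^{-1}\pd_yV$, expanding $\nabla M_\pm(Q)\mL_MV$ and comparing with $\mL\,\nabla M_+(Q)V$ and $\mL_0\,\nabla M_-(Q)V$, the computation should close using only the profile relations $Q'=1-Q^2$ (whence $Q''=-2QQ'$, $Q'''=-2(Q')^2-2QQ''$) and $\varphi=2Q'$, the commutations $[\pd_x,\pd_x^{-1}]=0$ and $[\pd_x^{-1},\pd_y]=0$, and the anti-derivative rule $\pd_x^{-1}(Q''f)=Q'f-\pd_x^{-1}(Q'\pd_xf)$, which is legitimate for $f$ in the weighted classes on which $\pd_x^{-1}$ was fixed at the start of Section~\ref{sec:onmkp2} (recall $Q'=\sech^2 x$ decays exponentially). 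It is convenient to run the computation on smooth, exponentially localized functions and then extend by continuity to $X_M$.

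I expect the main obstacle to be organizing the nonlocal terms so that no ambiguity of $\pd_x^{-1}$ (an ``integration constant'') survives. Applying $\nabla M_\pm(Q)$ to the nonlocal pieces of $\mL_M$ generates terms such as $\pd_x^{-1}\pd_y(Q'\pd_x^{-1}\pd_yV)$ and $Q'\pd_x^{-1}\pd_y^2V$, to be matched on the other side against expressions like $-6\pd_x(\varphi\,\pd_x^{-1}\pd_yV)$ and terms quadratic in $Q$ times $V$; these reconcile only after invoking $\varphi=2Q'$ and several anti-derivative manipulations, and the frame-translation contributions must be tracked consistently with the $-3ct$ shift. The unique admissible choice of $\pd_x^{-1}$ on the exponentially weighted spaces, together with the distributional meaning of $\pd_x^{-1}\pd_y^2$ fixed via \eqref{eq:anti-x-yy}, is exactly what removes this ambiguity, so that once the identities are checked on test functions and passed to the limit in $X_M$, Lemma~\ref{lem:relation} follows.
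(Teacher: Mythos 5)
Your reduction to the time-independent intertwining identities $\nabla M_+(Q)\mL_M=\mL\,\nabla M_+(Q)$ and $\nabla M_-(Q)\mL_M=\mL_0\,\nabla M_-(Q)$, followed by applying $\nabla M_\pm(Q)$ to $\pd_tv-\mL_Mv=0$, is exactly the argument the paper gives (the paper states these identities as \eqref{eq:LML0}, asserted ``by a straightforward computation,'' and deduces the lemma from them just as you do). Your additional remarks on the conceptual origin of the identities and on the role of $Q'=1-Q^2$, $\varphi=2Q'$ and the fixed branch of $\pd_x^{-1}$ are consistent with, and slightly more explicit than, the paper's terse verification.
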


\begin{proof}[Proof of Lemma~\ref{lem:relation}]
By a straightforward computation, we find that
\begin{equation}
  \label{eq:LML0}
\mL\nabla M_+(Q)=\nabla M_+(Q)\mL_M\,,\quad
\mL_0\nabla M_-(Q)=\nabla M_-(Q)\mL_M\,.
\end{equation}
Let $u_\pm=\nabla M_\pm(Q)v$. Then it follows from \eqref{eq:LML0} that
\begin{align*}
\pd_tu_+-\mL u_+=& \nabla M_+(Q)(\pd_tv-\mL_Mv)\,,\\
\pd_tu_--\mL_0u_-=& \nabla M_-(Q)(\pd_tv-\mL_Mv)\,.
\end{align*}
Therefore $u_+$ and $u_-$ are solutions of \eqref{eq:LKP}
and \eqref{eq:LKP0}, respectively, if $v$ is a solution to \eqref{eq:LMKP}.
Thus we complete the proof.
\end{proof}
\begin{lemma}
\label{lem:relation2}
Let $a>0$ and $v(t)\in C([0,\infty);X_M)$ be a solution to \eqref{eq:LMKP}.
\begin{enumerate}
\item
Suppose that $u(t)\in C([0,\infty);X)$ is a solution to
\eqref{eq:LKP} satisfying $u(0)=\allowbreak\nabla M_+(Q)v(0)$. Then
$u(t)=\nabla M_+(Q)v(t)$ holds for every $t\ge0$.
\item
Suppose that $u(t)\in C([0,\infty);X)$ is a solution to
\eqref{eq:LKP0} satisfying $u(0)=\allowbreak\nabla M_-(Q)v(0)$. Then
$u(t)=\nabla M_-(Q)v(t)$ holds for every $t\ge0$.
\end{enumerate}
\end{lemma}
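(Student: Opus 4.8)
The plan is to reduce both assertions to uniqueness of solutions of the linearized equations in $C([0,\infty);X)$, combined with Lemma~\ref{lem:relation}, which manufactures a second solution carrying the same initial datum.

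For part~(1), I would set $w(t):=\nabla M_+(Q)v(t)$ and first check that $w\in C([0,\infty);X)$. This is routine: from \eqref{eq:LMiura+}, $\nabla M_+(Q)v=\pd_xv+\pd_x^{-1}\pd_yv-2Qv$, and since $\|Q\|_{L^\infty}=1$ we get $\|\nabla M_+(Q)v\|_X\le\|\pd_xv\|_X+\|\pd_x^{-1}\pd_yv\|_X+2\|v\|_X\le 2\|v\|_{X_M}$, so $\nabla M_+(Q)\in B(X_M,X)$ and continuity of $v$ in $X_M$ transfers to continuity of $w$ in $X$. By Lemma~\ref{lem:relation}, $w$ solves \eqref{eq:LKP}, and by hypothesis $w(0)=\nabla M_+(Q)v(0)=u(0)$. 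Hence $u-w\in C([0,\infty);X)$ is a solution of \eqref{eq:LKP} with vanishing initial data, and it remains to conclude $u\equiv w$ from uniqueness.

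For the uniqueness input I would use that $\mL=\mL_0-6\pd_x(\varphi\,\cdot)$, i.e. $\mL$ is a first-order (hence, relative to $\pd_x^3$, lower-order) perturbation of the constant-coefficient operator $\mL_0$ with coefficients decaying like $\varphi$ and $\varphi'$. Writing the Duhamel formula $u(t)=e^{t\mL_0}u(0)-6\int_0^t e^{(t-s)\mL_0}\pd_x(\varphi u(s))\,ds$ and invoking the smoothing estimate for $e^{t\mL_0}$ on $X$ from Section~\ref{sec:semigroup} (Lemma~\ref{lem:free-semigroup}) to absorb the $\pd_x$ in the perturbation, a standard contraction argument on a short time interval $[0,T]$ yields a unique solution in $C([0,T];X)$; iterating covers $[0,\infty)$. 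Applying this with zero data gives $u-w\equiv0$, i.e. $u(t)=\nabla M_+(Q)v(t)$ for all $t\ge0$. Part~(2) is proved the same way, now with $w(t):=\nabla M_-(Q)v(t)$, the second identity of Lemma~\ref{lem:relation}, and uniqueness for \eqref{eq:LKP0}; the latter is in fact easier, since $\mL_0$ has constant coefficients and uniqueness in $C([0,\infty);X)$ follows directly from the semigroup $e^{t\mL_0}$. The only real obstacle is the uniqueness theory for \eqref{eq:LKP}: the potential term contains $\varphi\,\pd_xU$, so one genuinely needs the dispersive/parabolic smoothing of $e^{t\mL_0}$ in the weighted space $X$ rather than a naive Gronwall estimate; the mapping property $\nabla M_\pm(Q)\in B(X_M,X)$ and all the algebraic identities are immediate.
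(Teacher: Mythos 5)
Your proof is correct and follows the same route as the paper: define $w(t)=\nabla M_\pm(Q)v(t)$, note it lies in $C([0,\infty);X)$ by the boundedness of $\nabla M_\pm(Q)\in B(X_M,X)$, use Lemma~\ref{lem:relation} to see it solves the relevant linearized equation, and conclude by uniqueness. The only difference is that the paper simply asserts uniqueness of solutions of \eqref{eq:LKP} in $C([0,\infty);X)$ without comment, whereas you supply the (correct, and worth having) Duhamel-plus-smoothing contraction argument based on Lemma~\ref{lem:free-semigroup}, which is implicit in the paper's remark that $e^{t\mL}$ is a $C^0$-semigroup on $X$ generated by a relatively small perturbation of $\mL_0$.
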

\begin{proof}
Let $\tilde{u}(t)=\nabla M_+(Q)v(t)$.
Then $\tilde{u}(t)\in C([0,\infty);X)$.
Moreover, Lemma~\ref{lem:relation} implies that $\tilde{u}(t)$
is a solution of \eqref{eq:LKP}. Since $u(0)=\tilde{u}(0)$ and
both $u(t)$ and $\tilde{u}(t)$ are solutions of \eqref{eq:LKP}
in the class $ C([0,\infty);X)$, we have $u(t)=\tilde{u}(t)$.
Thus we prove (i).
We can prove (ii) in exactly the same way. This completes the proof of
Lemma~\ref{lem:relation2}.
\end{proof}

Let 
$$\mL_M(\eta)v:=-\pd_x^3v+4\pd_xv+3\eta^2\pd_x^{-1}v-6\pd_x(Q' v)
-6i\eta Q'\pd_x^{-1}v\,.$$
Then $\mL_M(v(x)e^{i\eta y})=e^{i\eta y} \mL_M(\eta)v(x)$ and $\mL_M(\eta)$
has the following resonant modes.
\begin{lemma}
\label{lem:mkpresonance}
Let $\eta\in\R$ and
$$
g_M(x,\eta)=\frac{-1}{2\beta(\eta)}\pd_x(e^{-\beta(\eta)x}\sech x)\,,\quad
g_M^*(x,\eta)=e^{\beta(-\eta)x}\sech x\,.$$
Then
\begin{gather}
\label{eq:lem-mkp1}
\mL_M(\eta)g_M(x,\eta)=\lambda(\eta)g_M(x,\eta)\,,\\
\label{eq:lem-mkp2}
\mL_M(\eta)^*g_M^*(x,\eta)=\lambda(-\eta)g_M^*(x,\eta)\,,\\
\label{eq:lem-mkp3}
\int_\R g_M(x,\eta)\overline{g_M^*(x,\eta)}\,dx=1\,.
\end{gather}
\end{lemma}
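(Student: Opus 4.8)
The plan is to verify \eqref{eq:lem-mkp1}--\eqref{eq:lem-mkp3} by direct computation, the one simplification that makes this painless being that the building blocks $w(x,\eta):=e^{-\beta(\eta)x}\sech x$ and $g_M^*(x,\eta)=e^{\beta(-\eta)x}\sech x$ each satisfy a first-order linear ODE whose coefficients are polynomials in $\tanh x$. Throughout I will use $Q'=\sech^2 x=1-\tanh^2 x$, the relation $\beta(\eta)^2=1+i\eta$ (equivalently $i\eta=\beta^2-1$, $3\eta^2=-3(\beta^2-1)^2$, $\lambda(\eta)=4\beta(\beta^2-1)$), and the fact that $\pd_x^{-1}(\pd_xw)=w$ because $w(x,\eta)\to0$ as $x\to+\infty$ (note $\Re\beta(\eta)>0$). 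Here $\pd_x^{-1}$ is the operator $-\int_x^\infty$; the identities hold for every $\eta\in\R$, with the understanding that $g_M(\cdot,\eta)$ actually lies in $L^2_a$ only for $|\eta|$ small enough that $\Re\beta(\eta)-1<a$.

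For \eqref{eq:lem-mkp1}: since $g_M=-\tfrac{1}{2\beta}\pd_xw$ and $\mL_M(\eta)$ is linear, it suffices to show $\mL_M(\eta)(\pd_xw)=\lambda(\eta)\,\pd_xw$. Differentiating the product gives the ODE $\pd_xw=-(\beta+\tanh x)\,w$, and iterating it expresses $\pd_x^2w$, $\pd_x^3w$, $\pd_x^4w$ and $\pd_x(Q'\pd_xw)$ all in the form (polynomial in $\tanh x$)$\times\,w$; moreover $\pd_x^{-1}(\pd_xw)=w$ and $Q'w$ is already of this form. Substituting into
$$\mL_M(\eta)(\pd_xw)=-\pd_x^4w+4\pd_x^2w+3\eta^2w-6\pd_x(Q'\pd_xw)-6i\eta\,Q'w$$
reduces the claimed eigenvalue identity to a polynomial identity in $T:=\tanh x$ with coefficients in $\beta,\eta$; after replacing $i\eta$ by $\beta^2-1$ this identity holds termwise (the degree-four terms cancel between $-\pd_x^4w$ and $-6\pd_x(Q'\pd_xw)$, and one checks the lower-order coefficients similarly; the limiting case $\eta=0$, where $g_M(x,0)=\tfrac12\sech^2x\in\ker\mL_M(0)$, is a useful consistency check).

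For \eqref{eq:lem-mkp2} I would first compute the formal adjoint of $\mL_M(\eta)$ with respect to $\la f,g\ra=\int_\R f\bar g\,dx$ by integration by parts, using that the adjoint of $\pd_x^{-1}$ on $L^2_a$ is $g\mapsto-\int_{-\infty}^x g$ on $L^2_{-a}$; this gives
$$\mL_M(\eta)^*g=\pd_x^3g-4\pd_xg+6Q'\pd_xg-3\eta^2\int_{-\infty}^x g-6i\eta\int_{-\infty}^x(Q'g)\,.$$
Since $g_M^*$ solves the first-order ODE $\pd_xg_M^*=(\beta(-\eta)-\tanh x)\,g_M^*$ and decays as $x\to-\infty$, the quantities $\int_{-\infty}^x g_M^*$ and $\int_{-\infty}^x(Q'g_M^*)$ are again (polynomial in $\tanh x$)$\times\,g_M^*$, so exactly the same reduction turns \eqref{eq:lem-mkp2} into a polynomial identity in $\tanh x$, now using $\beta(-\eta)^2=1-i\eta$. (Alternatively, once \eqref{eq:lem-mkp1} is established one already knows abstractly that $\mL_M(\eta)^*$ has the eigenvalue $\overline{\lambda(\eta)}=\lambda(-\eta)$, so only the explicit identification of the eigenfunction, which the ODE makes routine, is at stake.) Finally, \eqref{eq:lem-mkp3} is elementary: writing $g_M\overline{g_M^*}=-\tfrac{1}{2\beta}\pd_x(e^{-\beta x}\sech x)\cdot e^{\beta x}\sech x$ (here $\overline{\beta(-\eta)}=\beta(\eta)$), a single integration by parts --- whose boundary term is $\sech^2x$, vanishing at $\pm\infty$ --- leaves $\tfrac{1}{2\beta}\int_\R\sech^2x\,(\beta-\tanh x)\,dx=\tfrac{1}{2\beta}(2\beta-0)=1$.

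The only genuine obstacle is bookkeeping: the polynomial identities in $\tanh x$ underlying \eqref{eq:lem-mkp1} and \eqref{eq:lem-mkp2} are of degree four with complex coefficients, several terms must be collected before anything cancels, and one has to keep straight the conventions for $\pd_x^{-1}$ on $L^2_{\pm a}$ and for the adjoint pairing. Conceptually there is nothing delicate once the first-order ODEs for $w$ and $g_M^*$ are in hand.
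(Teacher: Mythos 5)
Your proof is correct, but it takes a genuinely different route from the paper. You verify \eqref{eq:lem-mkp1}--\eqref{eq:lem-mkp2} by brute force: the first-order ODEs $\pd_xw=-(\beta(\eta)+\tanh x)w$ for $w=e^{-\beta(\eta)x}\sech x$ and $\pd_xg_M^*=(\beta(-\eta)-\tanh x)g_M^*$ reduce everything to polynomial identities in $\tanh x$, and you also compute the formal adjoint $\mL_M(\eta)^*$ explicitly and evaluate the pairing \eqref{eq:lem-mkp3} by a single integration by parts. The paper instead proves Lemmas~\ref{lem:kpresonances} and \ref{lem:mkpresonance} simultaneously via the linearized Miura intertwining relations $\mL(\eta)\mM_+(\eta)=\mM_+(\eta)\mL_M(\eta)$ and $\mL_0(\eta)\mM_-(\eta)=\mM_-(\eta)\mL_M(\eta)$ (\eqref{eq:LM+}, \eqref{eq:LM-}) together with the short identities $\mM_+(\eta)g_M=-2i\eta g$, $\mM_-(\eta)g_M=0$ of \eqref{eq:LMgM}: from $\mM_-(\eta)\mL_M(\eta)g_M=0$ and $\ker\mM_-(\eta)=\spann\{g_M\}$ (Lemma~\ref{lem:RangeM-}) one concludes abstractly that $g_M$ is an eigenfunction, and the eigenvalue $\lambda(\eta)=4i\eta\beta(\eta)$ is read off from the $x\to+\infty$ asymptotics $g_M\sim e^{-(1+\beta(\eta))x}$ rather than from a termwise cancellation; then \eqref{eq:lem-mkp2} is deduced from $\mM_+(\eta)^*g^*=2i\eta g_M^*$ and the adjoint intertwining \eqref{eq:LM+*}, without ever writing $\mL_M(\eta)^*$ out. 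The structural route is shorter, avoids the degree-four bookkeeping you flag, and delivers the companion resonance formulas for $\mL(\eta)$ in the same stroke; your route is more elementary and self-contained and does not presuppose Lemma~\ref{lem:RangeM-}. Your adjoint formula and your integration by parts for \eqref{eq:lem-mkp3} are both correct; the paper's version of the latter is the equivalent $\frac{1}{2\beta(\eta)}\int_\R(\beta(\eta)+\tanh x)\sech^2x\,dx=1$.
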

The eigenvalue problem $\mL u=\lambda u$ is related to the eigenvalue problem
$\mL_Mv=\lambda v$ via \eqref{eq:LML0}.
Before we prove Lemmas~\ref{lem:kpresonances} and \ref{lem:mkpresonance},
we will investigate the kernel and the cokernel of 
bounded operators $\mM_\pm(\eta):H^1_a(\R)\to L^2_a(\R)$ defined by
\begin{equation*}
\mM_\pm(\eta)g(x):=\pm g'(x)-i\eta\int_x^\infty g(t)\,dt-2Q(x)g(x) \,.
\end{equation*}
\begin{lemma}
\label{lem:RangeM-}
Let $a\in(0,2)$ and $\eta_0$ be a positive number satisfying $a>\nu_0$.
Then $\ker(\mM_-(\eta))=\spann\{g_M(\cdot,\eta)\}$ and
$\mathrm{Range}(\mM_-(\eta))=L^2_a(\R)$. Moreover, 
for any $\eta\in[-\eta_0,\eta_0]$
and $f\in L^2_a(\R)$, there exists a unique solution $v\in H^1_a(\R)$ of
\begin{equation}
\label{eq:m-}
\mM_-(\eta)v=f\,,
\end{equation}
that satisfies $\int_\R v(x)\overline{g_M^*(x,\eta)}\,dx=0$.
Moreover,
\begin{equation*}
\|v\|_{H^1_a(\R)}+|\eta|\|\pd_x^{-1}v\|_{L^2_a(\R)}
\le \frac{C}{a-\nu(\eta_0)}\|f\|_{L^2_a(\R)}\,,  
\end{equation*}
where $C$ is a constant depending only on $a$.
\end{lemma}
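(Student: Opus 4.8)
The plan is to solve $\mM_-(\eta)v=f$ essentially in closed form: reduce the integro-differential equation to a second-order ODE with an explicit fundamental system, read off the kernel, and invert by variation of constants in $L^2_a(\R)$. For $v\in H^1_a(\R)$ put $w=\pd_x^{-1}v$; then $w'=v$, $w\in L^2_a(\R)$, $w(+\infty)=0$ and $-i\eta\int_x^\infty v\,dt=i\eta w$, so $\mM_-(\eta)v=f$ is equivalent to
\[
w''+2Qw'-i\eta w=-f,\qquad w,\,w'\in L^2_a(\R),\quad w(+\infty)=0.
\]
Using $Q'=1-Q^2$ and $\beta(\eta)^2=1+i\eta$, a direct computation shows $w_1:=e^{-\beta(\eta)x}\sech x$ and $w_2:=e^{\beta(\eta)x}\sech x$ solve the homogeneous equation, with Wronskian $W=w_1w_2'-w_1'w_2=2\beta(\eta)\sech^2x$. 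As $x\to-\infty$, $|w_1|\sim e^{-\nu(\eta)x}$ and $|w_2|\sim e^{(2+\nu(\eta))x}$, while as $x\to+\infty$, $|w_1|\sim e^{-(2+\nu(\eta))x}$ and $|w_2|\sim e^{\nu(\eta)x}$; hence $w_1\in L^2_a(\R)$ precisely because $a>\nu_0\ge\nu(\eta)$, whereas $w_2\notin L^2_a(\R)$ since its growth at $+\infty$ cannot be cancelled. As $g_M(\cdot,\eta)=-\frac1{2\beta(\eta)}w_1'$ and any $L^2_a$-solution of the homogeneous ODE with $w(+\infty)=0$ must be a multiple of $w_1$, this gives $\ker(\mM_-(\eta))=\spann\{g_M(\cdot,\eta)\}$.

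Next I would construct a particular solution by variation of constants tailored to the weight: with $g=-f$ set $w_p=u_1w_1+u_2w_2$, $u_1(x)=-\int_0^x(w_2g/W)\,dt$, $u_2(x)=-\int_x^{\infty}(w_1g/W)\,dt$, so that $v_p:=w_p'=u_1w_1'+u_2w_2'$; the endpoint $+\infty$ in $u_2$ kills the growth of $w_2$ there, and changing the endpoint of $u_1$ only adds a multiple of $w_1$. The crucial point is the weighted estimate. Writing $f=e^{-a\cdot}h$ with $h\in L^2(\R)$ and using $w_{1,2}/W=\frac1{2\beta(\eta)}e^{\mp\beta(\eta)\cdot}\cosh(\cdot)$, each of $e^{ax}u_1w_1'$ and $e^{ax}u_2w_2'$ is an integral operator applied to $h$ whose kernel — once the factor $\cosh t/\cosh x$, the exponentials $e^{\mp\beta(\eta)(t-x)}$ and the weight are combined — is bounded, on the relevant range of $t$ (between $0$ and $x$, or $t>x$), by $Ce^{-\delta|x-t|}$ with $\delta\ge\min\{2-a,\,a-\nu_0\}>0$. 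Schur's test then yields $\|v_p\|_{L^2_a(\R)}\le\frac{C}{a-\nu_0}\|f\|_{L^2_a(\R)}$ with $C=C(a)$; that $v_p\in H^1_a(\R)$ follows from $v_p'=-f-2Qv_p+i\eta w_p$ together with $\|\pd_x^{-1}\|_{B(L^2_a(\R))}=1/a$, and one checks from the kernel bounds that $w_p(+\infty)=0$. Hence $\mathrm{Range}(\mM_-(\eta))=L^2_a(\R)$.

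Finally, the solution set of $\mM_-(\eta)v=f$ is $v_p+\spann\{g_M(\cdot,\eta)\}$, and since $\int_\R g_M(x,\eta)\overline{g_M^*(x,\eta)}\,dx=\frac1{2\beta(\eta)}\int_\R(\beta(\eta)+Q)\sech^2x\,dx=1\ne0$, the element $v=v_p-\bigl(\int_\R v_p\,\overline{g_M^*(\cdot,\eta)}\,dx\bigr)g_M(\cdot,\eta)$ is the unique solution with $\int_\R v\,\overline{g_M^*(\cdot,\eta)}\,dx=0$; tracking the norms of $g_M(\cdot,\eta)$ and $g_M^*(\cdot,\eta)$ (finite and uniformly controlled for $|\eta|\le\eta_0$, again because $a>\nu_0$) shows the bound survives the orthogonalization with a constant depending only on $a$. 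I expect the main obstacle to be exactly the weighted estimate of the variation-of-constants integrals: naive pointwise bounds only place $v_p$ in $L^\infty_a(\R)$, not $L^2_a(\R)$, so one genuinely has to recognize the convolution structure after inserting the weight and keep track of how the kernel decay rate degenerates (to $a-\nu_0$) as $a\downarrow\nu_0$; a secondary chore is making every constant uniform in $\eta\in[-\eta_0,\eta_0]$ through the $\eta$-dependence of $\beta(\eta)$ and $\nu(\eta)$.
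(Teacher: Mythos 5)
Your proof is correct and, modulo the cosmetic substitution $w=\pd_x^{-1}v$, it follows the paper's own strategy exactly: the same explicit fundamental system (your $w_{1,2}=e^{\mp\beta(\eta)x}\sech x$ are the antiderivatives of the paper's $\tg_{1,2}$), variation of constants with one endpoint placed at $+\infty$ to tame the growing mode $w_2$, a weighted Young/Schur bound, and orthogonalization against $g_M^*$ via $\int_\R g_M\,\overline{g_M^*}\,dx=1$.

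One point where your write-up is actually more careful than the paper's is the choice $u_1(x)=-\int_0^x(w_2g/W)\,dt$, i.e.\ integrating the $w_1$-coefficient from a fixed base point. This is the only workable choice. The paper's formula $T_1(f)=-\tg_1(x)\int_x^\infty k_1'(t)f(t)\,dt$, with $k_1'(t)=(2\beta(\eta))^{-1}e^{\beta(\eta)t}\cosh t$, has $|k_1'(t)f(t)|\sim e^{(2+\nu(\eta)-a)t}\,|e^{at}f(t)|$ as $t\to+\infty$; since $a<2$ this integral need not converge for general $f\in L^2_a(\R)$, and the displayed bound $|\tg_1(x)k_1'(t)|\lesssim e^{\nu_0(t-x)}$ for $t\ge x$ is not right (the correct exponent is $(2+\nu(\eta))(t-x)$, which grows). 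Integrating from $-\infty$ fails for the same reason at the other end, so a fixed interior endpoint is forced. With that choice, the observation you rely on — that when $t$ lies between $0$ and $x$ (both of one sign) the factor $\cosh t/\cosh x$ is $O(e^{-|x-t|})$ — is precisely what makes the weighted kernels decay at rate $\ge\min\{2-a,\,a-\nu_0\}$ on the $u_1w_1'$ part and at rate $\ge a+\nu(\eta)$ on the $u_2w_2'$ part, yielding the lemma's bound $C(a)/(a-\nu_0)$. The remaining steps (uniqueness of the orthogonalized solution, uniformity in $\eta\in[-\eta_0,\eta_0]$, and passing to $\|v\|_{H^1_a}+|\eta|\|\pd_x^{-1}v\|_{L^2_a}$ via the equation $v'=-f-2Qv+i\eta\pd_x^{-1}v$) are as in the paper and your sketch is sound.
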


\begin{lemma}
\label{lem:RangeM+}
Let $a\in(0,2)$ and $\eta_0$ be a positive number satisfying $a>\nu_0$.
If $\eta\in [-\eta_0,\eta_0]$, then
$\ker(\mM_+(\eta)=\{0\}$ and $\mathrm{Range}(\mM_+(\eta))=
{}^\perp\spann\{g^*(x,-\eta)\}$. Moreover, for any $f\in L^2_a(\R)$ satisfying
$\int_\R f(x)\overline{g^*(x,-\eta)}\,dx=0$,
there exists a unique solution $v\in H^1_a(\R)$ of
\begin{equation}
  \label{eq:m+v=f}
\mM_+(\eta)v=f\,,
\end{equation}
satisfying
\begin{equation}
  \label{eq:est-m+v=f}
\|v\|_{H^1_a(\R)}+|\eta|\|\pd_x^{-1}v\|_{L^2_a}\le C\|f\|_{L^2_a}\,,  
\end{equation}
where $C$ is a constant depending only on $a$.
If $f$ satisfies $\int_\R f(x)\overline{g^*(x,\eta)}\,dx=0$
in addition, then $\int_\R v(x)\overline{g^*_M(x,\eta)}\,dx=0$.
\end{lemma}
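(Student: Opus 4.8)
Lemma~\ref{lem:RangeM+} is the mirror image of Lemma~\ref{lem:RangeM-}, with $+\infty$ and $-\infty$ exchanged: $\mM_+(\eta)$ turns out to be injective with a range of codimension one rather than surjective with a one-dimensional kernel. The plan is to recycle the ODE analysis of Lemma~\ref{lem:RangeM-}. Since $Q=Q_2=\tanh$, writing $w=\pd_x^{-1}v$ (so that $v=w'$, and $w(+\infty)=0$ holds automatically for $v\in L^2_a(\R)$) turns $\mM_+(\eta)v=f$ into
\[
w''-2(\tanh x)\,w'-i\eta w=f,\qquad w(+\infty)=0 .
\]
First I would record, just as in the proof of Lemma~\ref{lem:RangeM-} (a contraction argument perturbing the explicit $\eta=0$ pair $\{1,\int_0^x\cosh^2 s\,ds\}$), a fundamental system $\{\phi_1(\cdot,\eta),\phi_2(\cdot,\eta)\}$ of the homogeneous equation, depending continuously on $\eta\in[-\eta_0,\eta_0]$ with uniform bounds, with $\phi_1\sim e^{(1-\beta(\eta))x}$ and $\phi_2\sim e^{(1+\beta(\eta))x}$ as $x\to+\infty$ and with the two modes of $w'$ behaving like $e^{(-1\pm\beta(\eta))x}$ as $x\to-\infty$; the Wronskian equals a nonzero constant times $\cosh^2x$. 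The relevant feature is that $\phi_1'$ decays at $+\infty$ only at the rate $\nu(\eta)=\Re\beta(\eta)-1\le\nu_0<a$, while $\phi_2'$ grows at $+\infty$ like $e^{(1+\Re\beta(\eta))x}$ and the $e^{(-1-\beta(\eta))x}$-mode grows at $-\infty$ like $e^{(1+\Re\beta(\eta))|x|}$, both faster than $e^{a|x|}$ since $a<2<1+\Re\beta(\eta)$. Injectivity follows at once: if $\mM_+(\eta)v=0$ with $v\in H^1_a(\R)$ then $w=\pd_x^{-1}v$ solves the homogeneous ODE with $w(+\infty)=0$, which forces $w\in\spann\{\phi_1(\cdot,\eta)\}$ and hence $v=w'\in\spann\{\phi_1'(\cdot,\eta)\}$; but $\phi_1'(\cdot,\eta)$ decays at $+\infty$ only at the rate $\nu(\eta)<a$, so it lies in $L^2_a(\R)$ only if it is $0$, and $\ker\mM_+(\eta)=\{0\}$.

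For the range and the estimate, I would solve the ODE by variation of parameters, $v=w'=\phi_1'\!\int_x^\infty\!\tfrac{\phi_2 f}{W}ds-\phi_2'\!\int_x^\infty\!\tfrac{\phi_1 f}{W}ds$ with $W$ the Wronskian: choosing both lower limits at $+\infty$ kills the fast-growing $\phi_2'$-part there, and no homogeneous solution can be added since $\phi_1'$ is inadmissible at $+\infty$ and $\phi_2'$ at $-\infty$; after multiplying by $e^{ax}$ the two surviving integrals become convolutions against integrable kernels (here the gaps $a-\nu(\eta)>0$ and $2-a>0$ enter), so Young's inequality gives $\|e^{ax}v\|_{L^2(\R_+)}\lesssim\|e^{ax}f\|_{L^2}$, and likewise near $-\infty$. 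The sole obstruction is at $-\infty$, where this $v$ must also be free of the $e^{(-1-\beta(\eta))x}$-mode; that is one scalar condition on $f$, linear and continuous in $f\in L^2_a(\R)$, and I would identify it with
\[
\int_\R f(x)\,\overline{g^*(x,-\eta)}\,dx=0
\]
by verifying directly that $\int_\R(\mM_+(\eta)v)\,\overline{g^*(\cdot,-\eta)}\,dx=0$ for all $v\in H^1_a(\R)$: integrating by parts and using $g^*(x,-\eta)=\pd_x\big(e^{\beta(\eta)x}\sech x\big)$, $\pd_x^{-1}g^*(\cdot,-\eta)=e^{\beta(\eta)x}\sech x=g_M^*(\cdot,-\eta)$, together with the elementary fact that $e^{\beta(\eta)x}\sech x$ solves $w''+2(\tanh x)w'-i\eta w=0$, the integral collapses to $0$; all boundary terms vanish because $e^{ax}v\in H^1(\R)\subset C_0(\R)$ while $g^*(\cdot,-\eta)$ and $g_M^*(\cdot,-\eta)$ grow at $+\infty$ only at the rate $\nu(\eta)<a$ and decay exponentially at $-\infty$. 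This yields $\mathrm{Range}(\mM_+(\eta))={}^\perp\spann\{g^*(\cdot,-\eta)\}$; for the quantitative bound one uses in addition $\|\pd_x^{-1}v\|_{L^2_a}=\|w\|_{L^2_a}\le\tfrac1a\|v\|_{L^2_a}$ and, from the ODE, $\|v'\|_{L^2_a}\le\|f\|_{L^2_a}+2\|\tanh x\|_\infty\|v\|_{L^2_a}+\eta_0\|w\|_{L^2_a}$, which together give $\|v\|_{H^1_a}+|\eta|\|\pd_x^{-1}v\|_{L^2_a}\le C(a)\|f\|_{L^2_a}$ uniformly in $\eta\in[-\eta_0,\eta_0]$; uniqueness is the injectivity already proved.

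For the additional orthogonality, a direct computation of the (formal) adjoint gives
\[
\mM_+(\eta)^*g^*(\cdot,\eta)=2i\eta\,g_M^*(\cdot,\eta),
\]
using $g^*(x,\eta)=\pd_x\big(e^{\beta(-\eta)x}\sech x\big)$ and that $e^{\beta(-\eta)x}\sech x$ solves $w''+2(\tanh x)w'+i\eta w=0$ (similarly $\mM_+(\eta)^*g^*(\cdot,-\eta)=0$, as it must since that vector spans the cokernel). Hence, for $\eta\ne0$ and $v$ the solution just constructed, if also $\int_\R f\,\overline{g^*(\cdot,\eta)}\,dx=0$ then
\begin{align*}
\int_\R v\,\overline{g_M^*(\cdot,\eta)}\,dx
&=\frac{1}{-2i\eta}\int_\R v\,\overline{\mM_+(\eta)^*g^*(\cdot,\eta)}\,dx
=\frac{1}{-2i\eta}\int_\R (\mM_+(\eta)v)\,\overline{g^*(\cdot,\eta)}\,dx\\
&=\frac{1}{-2i\eta}\int_\R f\,\overline{g^*(\cdot,\eta)}\,dx=0 ,
\end{align*}
the integrations by parts being legitimate by the same growth/decay bookkeeping as above.

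I expect the range/estimate step to be the main obstacle: choosing the limits of integration so that the variation-of-parameters solution lands in $H^1_a(\R)$ precisely under the single condition $f\perp g^*(\cdot,-\eta)$, and — especially — obtaining the constant $C(a)$ uniformly in $\eta$. The construction degenerates as $\eta\to0$ (the exponent $1-\beta(\eta)$ of $\phi_1$ tends to $0$, so the decay of $\phi_1'$ at $+\infty$ is lost in the limit), and one must check that the positive gap $a-\nu(\eta)$ — which is exactly where the hypothesis $\nu_0<a$ is used — keeps all the relevant convolution kernels in $L^1$ with norms bounded uniformly in $\eta\in[-\eta_0,\eta_0]$.
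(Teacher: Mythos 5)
Your plan is correct and follows essentially the same route as the paper's proof: both reduce $\mM_+(\eta)v=f$ to a second--order ODE with explicit $\cosh$/$e^{\beta x}$ asymptotics, read off $\ker\mM_+(\eta)=\{0\}$ and $\ker\mM_+(\eta)^*=\spann\{g^*(\cdot,-\eta)\}$ from the growth rates, build $v$ by variation of constants with both integrals thrown to $+\infty$ once $f\perp g^*(\cdot,-\eta)$, estimate by Young's inequality using the gap $a-\nu(\eta)>0$, and obtain the last orthogonality from $\mM_+(\eta)^*g^*(\cdot,\eta)=2i\eta\,g_M^*(\cdot,\eta)$ (the paper cites its later identity \eqref{eq:M*gM*}, you derive it inline). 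One small correction: with $w=\pd_x^{-1}v=-\int_x^\infty v$ the reduced equation is $w''-2(\tanh x)w'+i\eta w=f$, not $-i\eta w$ --- a harmless typo here since only $\Re\beta(\pm\eta)=1+\nu(\eta)$ enters the decay estimates, but the roots then match the paper's $1\pm\beta(-\eta)$ rather than $1\pm\beta(\eta)$.
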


\begin{proof}[Proof of Lemma~\ref{lem:RangeM-}]
Suppose $v\in \ker(\mM_-(\eta))$. Then $v\in H^1_a(\R)$ and
\begin{equation}
  \label{eq:kermM-}
  -v''+i\eta v -2(Qv)'=0\,.
\end{equation}
Eq.~\eqref{eq:kermM-} has a fundamental system
$\{\tg_1\,,\, \tg_2\}$, where
$$\tg_1(x)=\left(e^{-\beta(\eta)x}\sech x\right)_x\,,\quad 
\tg_2(x)=\left(e^{\beta(\eta)x}\sech x\right)_x\,.$$
Since $1\le\Re \beta(\nu)\le \Re \beta(\eta_0)$ and
\begin{equation}
  \label{eq:tgasymp}
\tg_1(x)\sim e^{-(\beta(\eta)\pm1)x}\quad\text{and}\quad
\tg_2(x)\sim e^{(\beta(\eta)\mp1)x}\quad\text{as $x\to\pm\infty$,}  
\end{equation}
it follows that
$\tg_1\in H^1_a(\R)$ and $\tg_2\not\in H^1_a(\R)$ and that
$v(x)=\alpha\tg_1(x)$ for an $\alpha$.
Thus we prove $\ker(\mM_-(\eta))=\spann\{g_M(\cdot,\eta)\}$.
\par

Suppose $v\in H^1_a(\R)$ is a solution of \eqref{eq:m-}.
Then  $v$ satisfies an ODE
\begin{equation}
  \label{eq:ode2}
-v''+i\eta v -2(Qv)'=f'\,.
\end{equation}
By the variation of the constants formula,
\begin{align*}
v(x)= & \tg_1(x)\int^x\frac{\tg_2(t)f'(t)}{W(t)}dt
-\tg_2(x)\int^x\frac{\tg_1(t)f'(t)}{W(t)}dt
\\=& \tg_1(x)\int^x k_1'(t)f(t)\,dt+\tg_2(x)\int^x k_2'(t)f(t)\,dt\,,  
\end{align*}
where $W(t)=\tg_1(t)\tg_2'(t)-\tg_1'(t)\tg_2(t)=-2i\eta\beta(\eta)\sech^2t$,
\begin{align*}
k_1(t)=& -\frac{\tg_2(t)}{W(t)}=
\frac{e^{\beta(\eta)t}(\beta(\eta)\cosh t-\sinh t)}
{2i\eta\beta(\eta)}\,,\\
k_2(t)=& \frac{\tg_1(t)}{W(t)}=
\frac{e^{-\beta(\eta)t}(\beta(\eta)\cosh t+\sinh t)}
{2i\eta\beta(\eta)}\,,
\end{align*}
$k_1'(t)=(2\beta(\eta))^{-1}e^{\beta(\eta)t}\cosh t$ and
$k_2'(t)=-(2\beta(\eta))^{-1}e^{-\beta(\eta)t}\cosh t$.
Now let
\begin{gather}
\label{eq:u1}
v(x)=\alpha \tg_1(x)+T_1(f)+T_2(f)\,,\\
T_1(f)=-\tg_1(x)\int_x^\infty k_1'(t)f(t)\,dt\,,\quad
T_2(f)= -\tg_2(x)\int_x^\infty k_2'(t)f(t)\,dt\,, \notag 
\end{gather}
where $\alpha$ is a constant to be chosen later.
Since $\sech x\cosh t \le e^{t-x}$ for $t\in [x,\infty)$ and
$\nu(\eta)\le \nu_0$ for $\eta\in[-\eta_0,\eta_0]$, 
$$|\tg_1(x) k_1'(t)|\lesssim e^{\nu_0(t-x)}\quad \text{if $t\ge x$.}$$
Using Young's inequality and the above, we have
\begin{align*}
\|T_1(f)\|_{L^2_a(\R)}\lesssim & 
\left\| \int^\infty_x e^{\nu(\eta)(t-x)}|f(t)|\,dt\right\|_{L^2_a(\R)}
\\ \lesssim &  \|e^{-(a-\nu_0)t}\|_{L^1(0,\infty)} \|f\|_{L^2_a(\R)}
\le \frac{C_0}{a-\nu_0}\|f\|_{L^2_a(\R)},
\end{align*}
where $C_0$ is a constant independent of $\eta_0$ and $f\in L^2_a(\R)$.
Using the fact that $0\le \cosh t\sech x\le e^{t-x}$ if $x \le t$
and that $\nu(\eta)\ge0$, we have
\begin{align*}
\|T_2(f)\|_{L^2_a(\R)}\lesssim &
\left\| \int_x^\infty e^{\nu(\eta)(x-t)}|f(t)|\,dt \right\|_{L^2_a(\R)}
\\ \lesssim & \|e^{-(a+\nu(\eta))t}\|_{L^1(0,\infty)} \|f\|_{L^2_a(\R)}
\le C_1\|f\|_{L^2_a(\R)},
\end{align*}
where $C_1$ is a constant independent of $\eta_0$ and $f\in L^2_a(\R)$.
Since
\begin{align*}
\int_\R \tg_1(x)\overline{g_M^*(x,\eta)}\,dx=&
-\int_\R \sech^2x(\beta(\eta)-\tanh x)\,dx= -2\beta(\eta)\ne0\,,
\end{align*}  
there exists a unique $\alpha$ such that $\int v(x)\overline{g^*(x,\eta)}\,dx=0$.
Since $L^2_a(\R)\ni f\mapsto T_1(f)$, $T_2(f)\in L^2_a(\R)$ are continuous,
$\alpha=\alpha(f)$ is also continuous in $f$.
Thus we prove that there exists a constant $C_2$  such that 
\begin{equation}
  \label{eq:solm-uf}
\|v\|_{L^2_a(\R)}\le C_2\|f\|_{L^2_a(\R)}  
\end{equation}
for every  $\eta\in[-\eta_0,\eta_0]\setminus\{0\}$ and $f\in L^2_a(\R)$.
\par
Differentiating \eqref{eq:u1} with respect to $x$,
we have
$$v'(x)=\alpha \tg_1'(x) -f(x)
-\tg_1'(x)\int_x^\infty k_1'(t)f(t)\,dt-\tg_2'(x)\int_x^\infty k_2'(t)f(t)\,dt\,.$$
We can prove 
\begin{equation}
  \label{eq:solm-uf'}
\|v'(x)\|_{L^2_a(\R)} \le \frac{C_3}{a-\nu_0} \|f\|_{L^2_a(\R)}\,,
\end{equation}
in the same way as \eqref{eq:solm-uf},
where  $C_3$ is a positive constant independent of $\eta_0$ and 
$f\in L^2_a(\R)$.
Combining \eqref{eq:solm-uf} and \eqref{eq:solm-uf'} with \eqref{eq:m-},
we have
\begin{align*}
 |\eta| \|\pd_x^{-1}v\|_{L^2_a(\R)}\le & \|v'\|_{L^2_a(\R)}+2\|Qv\|_{L^2_a(\R)}+\|f\|_{L^2_a(\R)}
\le \frac{C_4}{a-\nu_0} \|f\|_{L^2_a(\R)}\,,
\end{align*}
where $C_4$ is a positive constant independent of
$\eta_0$ and  $f\in L^2_a(\R)$.
Thus we complete the proof.
\end{proof}

\begin{proof}[Proof of Lemma~\ref{lem:RangeM+}]
First, we will show that $\ker\left(\mM_+(\eta)^*\right)
=\spann\{\tg_2(x)\}$.
Since $\mM_-(\eta)$ is formally an adjoint of $\mM_+(\eta)$,
we easily see that $h\in \ker(\mM_+^*(\eta))\subset  L^2_{-a}(\R)$
is a solution of \eqref{eq:ode2}
and that $h(x)=\alpha\tg_2(x)=\alpha g^*(x,-\eta)$
for an $\alpha\in \C$.
Since $\ker\left(\mM_+(\eta)^*\right)=\spann\{\tg_2(x)\}$,
we have $\mathrm{Range}(\mM_+(\eta))\subset {}^\perp\spann\{g^*(x,-\eta)\}$.
\par

Next we will show that $\ker(\left(\mM_+(\eta)\right)=\{0\}$.
Suppose $\mM_+(\eta)h=0$. Then
\begin{equation}
\label{eq:kerM+}
 h''-2(Qh)'+i\eta h=0\,.
\end{equation}
Eq.~\eqref{eq:kerM+} has a fundamental system
$\{h_1(x),h_2(x)\}$, where
$$h_1(x)=e^{\beta(-\eta)x}\cosh x\,,\quad h_2(x)=e^{-\beta(-\eta)x}\cosh x\,.$$
Since
\begin{equation}
  \label{eq:thasymp}
h_1(x)\sim e^{(\beta(-\eta)\pm1)x}\,,\quad h_2(x)\sim e^{(-\beta(-\eta)\pm1)x}
\quad\text{as $x\to\pm\infty$,}  
\end{equation}
it is clear that $h\in H^1_a(\R)$ if and only if $h=0$.
Thus we prove $\ker\left(\mM_+(\eta)\right)=\{0\}$.
\par

Secondly, we will show that
$\mathrm{Range}\left(\mM_+(\eta)\right)={}^\perp\spann\{g^*(x,-\eta)\}$.
Suppose that $v\in H^1_a(\R)$ is a solution of \eqref{eq:m+v=f}.
Then
\begin{equation}
  \label{eq:m+v=f2}
v''-2(Qvg)'+i\eta v=f'\,.  
\end{equation}
By the variation of constants formula,
we can find the following solution of \eqref{eq:m+v=f2}.
\begin{gather}
  \label{eq:solm+v}
v(x)=T_3(f)+T_4(f),\\
\notag
T_3(f):=\frac{e^{\beta(-\eta) x}\cosh x}{2\beta(-\eta)}\int_x^\infty
\left(e^{-\beta(-\eta) t}\sech t\right)_t f(t)\,dt,\\
 \label{eq:T41}
T_4(f):=\frac{e^{-\beta(-\eta) x}\cosh x}{2\beta(-\eta)}
\int^x_{-\infty}\left(e^{\beta(-\eta)t}\sech t\right)_tf(t)\,dt\,.
\end{gather}
Since $\cosh x\sech t \le e^{|x-t|}$, we have
\begin{align*}
  \|T_3(f)\|_{L^2_a(\R)}\lesssim &
\left\|\int_x^\infty e^{\nu(-\eta)(x-t)}|f(t)|dt\right\|_{L^2_a(\R)}
\\ \lesssim & \|e^{-(a+\nu(-\eta))t}\|_{L^1(0,\infty)}\|f\|_{L^2_a(\R)}
\le C_1\|f\|_{L^2_a(\R)}\,,
\end{align*}
where $C_1$ is a constant depending only on $a$.
If $\int_\R f(x)\overline{g^*(x,-\eta)}\,dx=0$,
then $T_4(f)$ can be rewritten as
\begin{equation}
  \label{eq:T42}
T_4(f)=-\frac{e^{-\beta(-\eta) x}\cosh x}{2\beta(-\eta)}
\int_x^\infty \left(e^{\beta(-\eta)t}\sech t\right)_tf(t)\,dt\,.
\end{equation}
Using \eqref{eq:T41} for $x\ge0$ and \eqref{eq:T42} for $x\le 0$
and the fact that $\cosh x\sech t\le 2e^{-|x-t|}$ for $t$ satisfying
$|t|\ge |x|$, we have
\begin{align*}
\|T_4(f)\|_{L^2_a(\R)} \lesssim &
\left\| \int_x^\infty e^{(a-\nu(-\eta))(x-t)}e^{at}|f(t)|\,dt\right\|_{L^1(0,\infty)}
\\ &  +\left\|
\int^x_{-\infty} e^{(a-\nu(-\eta)-2)(x-t)}e^{at}|f(t)|\,dt\right\|_{L^1(-\infty,0)}
\\ \le & C_2\|f\|_{L^2_a(\R)}\,,
\end{align*}
where $C_2$ is a constant depending only on $a$.
Thus we prove that \eqref{eq:m+v=f} has a unique solution $v\in L^2_a(\R)$.
We can prove \eqref{eq:est-m+v=f} in the same way as Lemma~\ref{lem:RangeM-}.
\par
Suppose $f$ satisfies $\int_\R f(x)\overline{g^*(x,\pm\eta)}\,dx=0$,
then it follows from  \eqref{eq:M*gM*} and \eqref{eq:m+v=f} that
\begin{align*}
  2i\eta\int_\R v(x)\overline{g_M^*(x,\eta)}\,dx=& 
-\int_\R \mM_+(\eta)v(x)\overline{g^*(x,\eta)}\,dx
\\= -\int_\R f(x)\overline{g^*(x,\eta)}=0\,.
\end{align*}
Thus we have $\int_\R v(x)\overline{g^*_M(x,\eta)}\,dx=0$
for $\eta\in[-\eta_0,\eta_0]\setminus\{0\}$.
This completes the proof of Lemma~\ref{lem:RangeM+}.
\end{proof}

Now we are in position to prove Lemmas~\ref{lem:kpresonances} and \ref{lem:mkpresonance}.
\begin{proof}[Proof of Lemmas~\ref{lem:kpresonances} and \ref{lem:mkpresonance}]
First, we will show that $\nabla M_+(Q)g_M(x,\eta)e^{i\eta y}$
are the resonant eigenmodes of $\mL$ and that
$\nabla M_-(Q)(g_M(x,\eta)e^{i\eta y})=0$ by using \eqref{eq:LML0}, 
\par
Let $\mL_0(\eta)u:=-\pd_x^3u+4\pd_xu+3\eta^2\pd_x^{-1}u$ be the operator
on $L^2_a(\R)$ with its domain $D(\mL_0(\eta))=H^3_a(\R)$.
By the definition of $\mL(\eta)$ and $\mM_\pm(\eta)$, we have
$\mL_0(u(x)e^{\pm i\eta y})=e^{\pm i\eta y} \mL_0(\eta)u(x)$ and
$$\nabla M_\pm(Q)(g(x)e^{i\eta y})=(\mM_\pm(\eta)g)(x)e^{i\eta y}\,.$$
In view of \eqref{eq:LML0},
\begin{gather}
  \label{eq:LM+}
\mL(\eta)\mM_+(\eta)=\mM_+(\eta)\mL_M(\eta)\,,
\\ \label{eq:LM-}
\mL_0(\eta)\mM_-(\eta)=\mM_-(\eta)\mL_M(\eta)\,.
\end{gather}
By a simple computation, we find
\begin{equation}
  \label{eq:LMgM}
\mM_+(\eta)g_M(x,\eta)=-2i\eta g(x,\eta)\,,\quad
\mM_-(\eta)g_M(x,\eta)=0\,.  
\end{equation}
Combining \eqref{eq:LM-} and \eqref{eq:LMgM}, we have
$$\mM_-(\eta)\mL_M(\eta)g_M(x,\eta)=\mL_0(\eta)\mM_-(\eta)g_M(x,\eta)=0\,.$$
Since $g_M(x,\eta)\in H^4_a(\R)$ for an $a\in (\nu(\eta),2)$,
we have $\mL_M(\eta)g_M(x,\eta)\in H^1_a(\R)$ and
$\mL_M(\eta)g_M(x,\eta)\in \ker\mM_-(\eta)$.
Lemma~\ref{lem:RangeM-} implies that there exists
a $\lambda(\eta)\in \C$ such that
$\mL_M(\eta)g_M(x,\eta)=\lambda(\eta) g_M(x,\eta)$.
Since $g_M(x,\eta)\sim e^{-(1+\beta(\eta))x}$ as $x\to\infty$,
we see that
\begin{align*}
  \lambda(\eta)=&(1+\beta(\eta))^3-4(1+\beta(\eta))
-\frac{3\eta^2}{1+\beta(\eta)}=4i\eta\beta(\eta)\,.
\end{align*}
Thus we prove \eqref{eq:lem-mkp1}.
It follows from \eqref{eq:lem-mkp1}, \eqref{eq:LM+} and \eqref{eq:LMgM}
that
\begin{align*}
\mL(\eta)g(x,\eta)=
& \frac{i}{2\eta}\mM_+(\eta)\mL_M(\eta)g_M(x,\eta)
\\=& \frac{i\lambda(\eta)}{2\eta}\mM_+(\eta)g_M(x,\eta)=
\lambda(\eta)g(x,\eta)\,,
\end{align*}
and $\mL(\eta)g(x,-\eta)=\overline{\mL(\eta)g(x,\eta)}=
\lambda(-\eta)g(x,-\eta)$.
\par
Using the fact that $\pd_x\mL(\eta)^*=-\mL(-\eta)\pd_x$ (formally) and
$\varphi$ is even, we can easily confirm \eqref{eq:lem-kp2}.
Since $g_M(x,\eta)$ is a solution of \eqref{eq:kermM-} and
$$g^*(x,\eta)=-2\beta(-\eta)g_M(-x,-\eta)\,,\quad Q(-x)=-Q(x)\,,$$
we have $\pd_x^2g^*(x,\eta)+2\pd_x(Q(x)g^*(x,\eta))+i\eta g^*(x,\eta)=0$.
Combining the above with $g^*(x,\eta)=\pd_xg_M^*(x,\eta)$, we have
\begin{equation}
  \label{eq:M*gM*}
  \begin{split}
\mM_+(\eta)^*g^*(x,\eta)=&
-\pd_xg^*(x,\eta)+i\eta\int^x_{-\infty} g^*(t,\eta)\,dt-2Q(x)g^*(x,\eta)
\\=&2i\eta g_M^*(x,\eta)\,.    
  \end{split}
\end{equation}
\begin{align*}
\end{align*}
By \eqref{eq:LM+},
\begin{equation}
  \label{eq:LM+*}
M_+(\eta)^*\mL(\eta)^*=\mL_M(\eta)^*\mM_+(\eta)^*\,.  
\end{equation}
Eq.~\eqref{eq:lem-mkp2} follows immediately from \eqref{eq:lem-kp2},
\eqref{eq:M*gM*} and \eqref{eq:LM+*}.
\par
Next, we will prove \eqref{eq:lem-mkp3}.
By integration by parts, 
 \begin{align*}
\int g_M(x,\eta)\overline{g_M^*(x,\eta)}\,dx=&
\frac{1}{2\beta(\eta)}\int (\beta(\eta)+\tanh x)\sech^2xdx=1\,.
  \end{align*}

Finally, we will prove \eqref{eq:lem-kp3}.
By \eqref{eq:LMgM} and \eqref{eq:M*gM*},
  \begin{align*}
2i\eta \int_\R g(x,\eta)\overline{g^*(x,\eta)}\,dx=&
-\int_\R \mM_+(\eta)g_M(x,\eta)\overline{g^*(x,\eta)}\,dx
\\=& -\int_\R g_M(x,\eta)\overline{\mM_+^*g^*(x,\eta)}\,dx
\\=& 2i\eta\int_\R g_M(x,\eta)\overline{g_M^*(x,\eta)}\,dx=2i\eta\,.
  \end{align*}
Thus we prove \eqref{eq:lem-kp3} for $\eta\ne0$.
\end{proof}

If $\eta$ is large, the operators $\mM_\pm(\eta): H^1_a(\R)\to L^2_a(\R)$
have bounded inverse.
\begin{lemma}
\label{lem:Mpmlarge}
Suppose $a\in(0,2)$, $\eta>0$ and $\nu(\eta)>a$.
\begin{enumerate}
\item \label{Mpml1}
For every $f\in L^2_a(\R)$, there exists a unique solution $v_+$ of
\eqref{eq:m-} satisfying
\begin{equation}
  \label{eq:Mpml1}
\|v_+\|_{H^1_a(\R)}+|\eta|\|\pd_x^{-1}v_+\|_{L^2_a(\R)} \le 
\frac{C}{\nu(\eta)-a}\|f\|_{L^2_a(\R)}\,,  
\end{equation}
where $C$ is a constant depending only on $a$.
\item \label{Mpml2}
For every $f\in L^2_a(\R)$, there exists a unique solution $v_-$ of
\eqref{eq:m+v=f} satisfying
\begin{equation}
  \label{eq:Mpml2}
\|v_-\|_{H^1_a(\R)}+|\eta|\|\pd_x^{-1}v_-\|_{L^2_a(\R)} \le 
\frac{C}{\nu(\eta)-a}\|f\|_{L^2_a(\R)}\,,  
\end{equation}
where $C$ is a constant depending only on $a$.
\end{enumerate}
\end{lemma}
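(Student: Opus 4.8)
The plan is to re-run the variation-of-constants constructions from the proofs of Lemmas~\ref{lem:RangeM-} and~\ref{lem:RangeM+}, but now in the regime $\nu(\eta)>a$, where the members of each fundamental system that were used before change roles. For the first assertion: by the asymptotics \eqref{eq:tgasymp}, when $\nu(\eta)>a$ the solution $\tg_1$ of \eqref{eq:kermM-} grows like $e^{\nu(\eta)|x|}$ as $x\to-\infty$, so $\tg_1\notin L^2_a(\R)$, while $\tg_2\notin L^2_a(\R)$ because it grows as $x\to+\infty$; hence $\ker(\mM_-(\eta))=\{0\}$. To solve \eqref{eq:m-} I would take, with $k_1'(t)=(2\beta(\eta))^{-1}e^{\beta(\eta)t}\cosh t$ and $k_2'(t)=-(2\beta(\eta))^{-1}e^{-\beta(\eta)t}\cosh t$ as in the proof of Lemma~\ref{lem:RangeM-},
$$v_+(x)=\tg_1(x)\int_{-\infty}^{x}k_1'(t)f(t)\,dt-\tg_2(x)\int_{x}^{\infty}k_2'(t)f(t)\,dt;$$
that is, $\tg_1$ now grows at $-\infty$ and so must be paired with the antiderivative based at $-\infty$ rather than at $+\infty$ as in the proof of Lemma~\ref{lem:RangeM-} where $\tg_1\in L^2_a(\R)$, whereas $\tg_2$, which grows at $+\infty$, is again paired with the antiderivative based at $+\infty$. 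Each integral converges for fixed $x$ because $|k_1'(t)|\lesssim e^{\nu(\eta)t}$ as $t\to-\infty$ and $|k_2'(t)|\lesssim e^{-\nu(\eta)t}$ as $t\to+\infty$, both square-integrable against $e^{-2at}$ on the respective half-lines since $\nu(\eta)-a>0$. One then checks, exactly as in that proof, that $\mM_-(\eta)v_+-f$ is constant; since it also lies in $L^2_a(\R)$ it vanishes, so $v_+$ solves \eqref{eq:m-}, and uniqueness follows from $\ker(\mM_-(\eta))=\{0\}$.

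The estimate \eqref{eq:Mpml1} is the core of the argument. Differentiating the factors gives $|\tg_1(x)|\lesssim(|\beta(\eta)|+1)e^{-(1+\nu(\eta))x}\sech x$ and $|k_1'(t)|\lesssim|\beta(\eta)|^{-1}e^{(1+\nu(\eta))t}\cosh t$, so $|\tg_1(x)k_1'(t)|\lesssim e^{(1+\nu(\eta))(t-x)}\sech x\cosh t$, the prefactor $(|\beta(\eta)|+1)/|\beta(\eta)|$ being bounded because $|\beta(\eta)|\ge1$. Splitting $\{t\le x\}$ into $\{0\le t\le x\}$, $\{t\le x\le 0\}$ and $\{t\le 0\le x\}$ and using $\sech x\cosh t\le1$ in the first region and $\sech x\cosh t\lesssim e^{|x-t|}$ in the others, one obtains in every case $|\tg_1(x)k_1'(t)|\lesssim e^{-\nu(\eta)(x-t)}$ for $t\le x$; symmetrically $|\tg_2(x)k_2'(t)|\lesssim e^{-\nu(\eta)(t-x)}$ for $t\ge x$. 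Distributing the weight, $e^{a(x-t)}|\tg_1(x)k_1'(t)|\lesssim e^{-(\nu(\eta)-a)(x-t)}$ on $\{x\ge t\}$, and Minkowski's integral inequality converts the resulting convolution against $e^{-(\nu(\eta)-a)\cdot}\mathbf 1_{[0,\infty)}$, whose $L^1$-norm is $(\nu(\eta)-a)^{-1}$, into $\|v_+\|_{L^2_a(\R)}\le C(\nu(\eta)-a)^{-1}\|f\|_{L^2_a(\R)}$ with $C=C(a)$. Differentiating $v_+$ reproduces the same two integral terms with $\tg_j$ replaced by $\tg_j'$, plus a boundary term $-f(x)$, so $\|v_+'\|_{L^2_a(\R)}$ obeys the same bound; finally, reading $i\eta\pd_x^{-1}v_+=f+v_+'+2Qv_+$ off \eqref{eq:m-} controls $|\eta|\,\|\pd_x^{-1}v_+\|_{L^2_a(\R)}$ the same way, which gives \eqref{eq:Mpml1}.

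The second assertion is entirely parallel, now with the fundamental system $h_1(x)=e^{\beta(-\eta)x}\cosh x$, $h_2(x)=e^{-\beta(-\eta)x}\cosh x$ of \eqref{eq:kerM+}. By \eqref{eq:thasymp}, when $\nu(\eta)>a$ neither $h_1$ nor $h_2$ is in $L^2_a(\R)$, so $\ker(\mM_+(\eta))=\{0\}$; and since the formal adjoint $\mM_+(\eta)^*$ is $\mM_-(\eta)$, whose only solutions $\tg_1,\tg_2$ now fail to lie in $L^2_{-a}(\R)$, there is no solvability obstruction and no orthogonality condition on $f$ is needed. I would reuse $v_-=T_3(f)+T_4(f)$ from the proof of Lemma~\ref{lem:RangeM+}, but keep $T_4$ in the form \eqref{eq:T41} based at $-\infty$ (the rewriting \eqref{eq:T42} used the orthogonality and is unavailable here). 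The same case split, with $\cosh x\sech t\le1$ or $\cosh x\sech t\lesssim e^{|x-t|}$ in the appropriate sign regions, gives $e^{ax}|h_2(x)(e^{\beta(-\eta)t}\sech t)_t|\lesssim e^{-(\nu(\eta)-a)(x-t)}$ for $t\le x$, while $T_3(f)$ is bounded exactly as in Lemma~\ref{lem:RangeM+} by $(a+\nu(-\eta))^{-1}\|f\|_{L^2_a(\R)}\le(\nu(\eta)-a)^{-1}\|f\|_{L^2_a(\R)}$. Summing and treating $v_-'$ and $\pd_x^{-1}v_-$ as before yields \eqref{eq:Mpml2}.

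The main obstacle is the case analysis of the second paragraph: one must verify that the products $\tg_1(x)k_1'(t)$, $\tg_2(x)k_2'(t)$ and their $\mM_+(\eta)$ analogues, which contain the unbounded $\cosh/\sech$ factors, collapse---after the weight $e^{ax}$ is distributed and the plane is split according to the signs of $x$ and $t$---to the clean convolution kernel $e^{-(\nu(\eta)-a)|x-t|}$ with no loss in the exponent, so that the $L^1$-norm of that kernel produces exactly the claimed blow-up rate as $\nu(\eta)\downarrow a$.
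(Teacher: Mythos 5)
Your proposal takes essentially the same route as the paper's own proof: trivial kernel via \eqref{eq:tgasymp}/\eqref{eq:thasymp}, the variation-of-constants solution with each fundamental solution paired with the antiderivative based at the end where it grows, the kernel bounds $|\tg_1(x)k_1'(t)|,\,|\tg_2(x)k_2'(t)|\lesssim e^{-\nu(\eta)|x-t|}$ obtained by the same sign case analysis on $(x,t)$, and Young/Minkowski to extract the $(\nu(\eta)-a)^{-1}$ rate, with $T_3+T_4$ (in the form \eqref{eq:T41}) reused for $\mM_+(\eta)$. You even (correctly) restore the sign on the $\tg_2$ term in the paper's displayed formula for $v_+$ and make explicit the boundary term $-f(x)$ in $v_+'$ which the paper's pointwise bound silently drops; aside from these cosmetic corrections the two arguments coincide.
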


\begin{proof}[Proof of Lemma~\ref{lem:Mpmlarge}]
If $\nu(\eta)>a>0$, then \eqref{eq:tgasymp} and \eqref{eq:thasymp} imply
$\ker(\mM_\pm(\eta))=\{0\}$ and that \eqref{eq:m-} and \eqref{eq:m+v=f} have at most one solution.
\par
First we prove \eqref{Mpml1}. Let 
$$v_+(x)=\tg_1(x)\int_{-\infty}^x k_1'(t)f(t)\,dt
+\tg_2(x)\int_x^\infty k_2'(t)\,dt\,.$$ Then $v(x)$ is a solution of
\eqref{eq:m-}.  Since $|\tg_1(x)k_1'(t)|+|\tg_1'(x)k_1'(t)|\lesssim
e^{-\nu(\eta)(x-t)}$ if $x>t$ and
$|\tg_2(x)k_2'(t)|+|\tg_2'(x)k_2'(t)|\lesssim e^{\nu(\eta)(x-t)}$ if
$x<t$, we have
$$|v_+(x)|+|\pd_xv_+(x)|\lesssim \int_\R e^{-\nu(\eta)|x-t|}|f(t)|\,dt\,.$$
Using Young's inequality, we have
\begin{align*}
\|v_+\|_{L^2_a(\R)}+\|\pd_xv\|_{L^2_a(\R)}
\lesssim \|e^{-(\nu(\eta)-a)|x|}\|_{L^1(\R)}\|f\|_{L^2_a(\R)}
\lesssim (\nu(\eta)-a)^{-1}\|f\|_{L^2_a(\R)}\,.
\end{align*}
Thus we can prove \eqref{eq:Mpml1}
in the same way as the proof of Lemma~\ref{lem:RangeM-}.
\par
Now we prove \eqref{Mpml2}.
Let $v_-=T_3(f)+T_4(f)$.  Obviously, $v_-$ is a solution of \eqref{eq:m+v=f}
satisfying 
$$|v_-(x)|+|\pd_xv_-(x)|\lesssim \int_\R e^{-\nu(\eta)|x-t|}|f(t)|\,dt\,.$$
Thus we can prove \eqref{eq:Mpml2} in the same way as \eqref{eq:Mpml1}.
This completes the proof of Lemma~\ref{lem:Mpmlarge}.
\end{proof}

Using Lemmas~\ref{lem:RangeM-}, \ref{lem:RangeM+} and \ref{lem:Mpmlarge},
we will investigate the spectrum $\sigma(\mL(\eta))$ of $\mL(\eta)$.
\begin{lemma}
\label{lem:Llarge-eta}
Let $a\in(0,2)$ and  $\eta_*$ be a positive number satisfying $\nu(\eta_*)=a$.
\begin{enumerate}
\item If $\eta\in (-\eta_*,\eta_*)$, then $\mL(\eta)$
has no eigenvalue other than $\lambda(\pm\eta)$ and
$$\sigma(\mL(\eta))=\{\lambda(\pm\eta_*)\}
\cup\{ip(\xi+ia,\eta)\mid \xi\in\R\}\,.$$
\item If $\eta\in \R\setminus[-\eta_*,\eta_*]$, then
 $\sigma(\mL(\eta))=\{ip(\xi+ia,\eta)\mid \xi\in\R\}$.
\end{enumerate}
\end{lemma}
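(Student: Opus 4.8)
The plan is to split $\sigma(\mL(\eta))$ into an essential part, which is insensitive to the soliton potential $-6\pd_x(\varphi\,\cdot)$, and a point spectrum, which is pinned down through the linearized Miura maps $\mM_\pm(\eta)$ by the fact that the constant‑coefficient operator $\mL_0(\eta)$ has no eigenvalue.

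First I would compute the essential spectrum. Since $\varphi$ and $\varphi'$ decay like $e^{-2|x|}$ and $a<2$, multiplication by $\varphi$ or $\varphi'$ is compact from $H^k_a(\R)$ into $L^2_a(\R)$, so $-6\pd_x(\varphi\,\cdot)=\mL(\eta)-\mL_0(\eta)$ is $\mL_0(\eta)$‑compact and Weyl's theorem gives $\sigma_{ess}(\mL(\eta))=\sigma_{ess}(\mL_0(\eta))$. Conjugating $\mL_0(\eta)$ by $e^{ax}$ turns $\pd_x$ and $\pd_x^{-1}$ into the Fourier multipliers $i\xi-a$ and $(i\xi-a)^{-1}$, so $e^{ax}\mL_0(\eta)e^{-ax}$ is a Fourier multiplier with symbol $ip(\xi+ia,\eta)$, a proper map of $\R$ (cubic growth); hence $\sigma(\mL_0(\eta))=\sigma_{ess}(\mL_0(\eta))=\{ip(\xi+ia,\eta)\mid\xi\in\R\}$ and $\sigma_p(\mL_0(\eta))=\emptyset$. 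Off this curve $\mL(\eta)-\lambda$ is Fredholm of index $0$, so any non‑invertible such $\lambda$ has nontrivial kernel; therefore in both cases $\sigma(\mL(\eta))=\{ip(\xi+ia,\eta)\mid\xi\in\R\}\cup\sigma_p(\mL(\eta))$, and it remains only to determine the point spectrum.

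By elliptic regularity every eigenfunction of $\mL_0(\eta)$, $\mL_M(\eta)$ or $\mL(\eta)$ lies in $\bigcap_{k}H^k_a(\R)$, so the intertwining relations \eqref{eq:LM+}--\eqref{eq:LM-} may be applied to it; I then proceed in three steps. \emph{Step 1:} if $|\eta|>\eta_*$ then $\sigma_p(\mL_M(\eta))=\emptyset$, and if $|\eta|<\eta_*$ then $\sigma_p(\mL_M(\eta))=\{\lambda(\eta)\}$. Indeed, if $\mL_M(\eta)v=\mu v$ with $v\neq0$, then \eqref{eq:LM-} gives $\mL_0(\eta)(\mM_-(\eta)v)=\mu\,\mM_-(\eta)v$, and $\sigma_p(\mL_0(\eta))=\emptyset$ forces $\mM_-(\eta)v=0$; when $|\eta|>\eta_*$ this contradicts $\ker\mM_-(\eta)=\{0\}$ (Lemma~\ref{lem:Mpmlarge}\eqref{Mpml1}), while when $|\eta|<\eta_*$ Lemma~\ref{lem:RangeM-} yields $v\in\spann\{g_M(\cdot,\eta)\}$, whence $\mu=\lambda(\eta)$ by Lemma~\ref{lem:mkpresonance}. \emph{Step 2 (part (ii)):} if $|\eta|>\eta_*$ and $\mL(\eta)u=\mu u$ with $u\neq0$, then $\mM_+(\eta)$ is onto by Lemma~\ref{lem:Mpmlarge}\eqref{Mpml2}, so $u=\mM_+(\eta)v$ with $v\neq0$, and \eqref{eq:LM+} gives $\mL_M(\eta)v=\mu v$, contradicting Step 1; hence $\sigma_p(\mL(\eta))=\emptyset$, which proves (ii). \emph{Step 3 (part (i)):} let $|\eta|<\eta_*$. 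By Lemma~\ref{lem:kpresonances}, $\lambda(\pm\eta)\in\sigma_p(\mL(\eta))$, with eigenfunctions $g(\cdot,\pm\eta)\in L^2_a(\R)$ since $|\eta|<\eta_*$. Conversely, suppose $\mL(\eta)u=\mu u$ with $u\neq0$ and $\mu\notin\{\lambda(\eta),\lambda(-\eta)\}$; pairing this against $g^*(\cdot,-\eta)$ and using \eqref{eq:lem-kp2} with $\overline{\lambda(\eta)}=\lambda(-\eta)$ (the boundary terms vanish because $\nu(\eta)<a<2$) gives $(\mu-\lambda(-\eta))\int_\R u\,\overline{g^*(\cdot,-\eta)}\,dx=0$, so $u\in{}^\perp\spann\{g^*(\cdot,-\eta)\}=\mathrm{Range}(\mM_+(\eta))$ by Lemma~\ref{lem:RangeM+}; writing $u=\mM_+(\eta)v$ with $v\neq0$, \eqref{eq:LM+} gives $\mL_M(\eta)v=\mu v$, contradicting Step 1 since $\mu\neq\lambda(\eta)$. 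Hence $\sigma_p(\mL(\eta))=\{\lambda(\eta),\lambda(-\eta)\}$, and (i) follows.

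The step I expect to be most delicate is not any single computation but the domain and regularity bookkeeping around the intertwining relations \eqref{eq:LM+}--\eqref{eq:LM-}: because $\mM_\pm(\eta)$ lose a derivative, these identities are only formal on $H^3_a(\R)$ and must be applied after promoting eigenfunctions to $\bigcap_{k}H^k_a(\R)$. The other subtle point is the exclusion, for $|\eta|<\eta_*$, of eigenvalues other than $\lambda(\pm\eta)$: this hinges on the precise Fredholm picture of $\mM_+(\eta)$ in Lemma~\ref{lem:RangeM+}, namely that its range is exactly the orthogonal complement of the single adjoint resonance $g^*(\cdot,-\eta)$, so that the one missing codimension is used up precisely by the biorthogonality relation coming from $\mL(\eta)^*g^*(\cdot,-\eta)=\lambda(\eta)g^*(\cdot,-\eta)$.
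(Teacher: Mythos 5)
Your proof is correct and follows essentially the same route as the paper: Weyl's theorem locates the essential spectrum, and the Miura intertwining relations \eqref{eq:LM+}, \eqref{eq:LM-} together with the kernel/range analysis of $\mM_\pm(\eta)$ (Lemmas~\ref{lem:RangeM-}, \ref{lem:RangeM+}, \ref{lem:Mpmlarge}) rule out spurious point spectrum. Your only departure is organizational: you isolate $\sigma_p(\mL_M(\eta))\subseteq\{\lambda(\eta)\}$ as a standalone Step~1 and stop as soon as $\mL_M(\eta)v=\mu v$ contradicts it (needing only the orthogonality $\int u\,\overline{g^*(\cdot,-\eta)}\,dx=0$), whereas the paper additionally uses $\int u\,\overline{g^*(\cdot,\eta)}\,dx=0$ and the last clause of Lemma~\ref{lem:RangeM+} to chase the argument all the way to $u=0$; both are sound and rest on the same mechanism.
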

\begin{proof}[Proof of Lemma~\ref{lem:Llarge-eta}]
The equation $\nu(\eta)=a$ has a unique positive root $\eta_*$
because $\nu(\eta)$ is monotone increasing for $\eta\ge0$,
$\nu(0)=0$ and $\nu(\infty)=\infty$.
\par

Since $\lambda-\mL(\eta)$ and $\lambda-\mL_0(\eta)$ are
invertible for large $\lambda>0$ and 
$(\lambda-\mL(\eta))^{-1}-(\lambda-\mL_0(\eta))^{-1}$
is compact, it follows from the Weyl essential
spectrum theorem that $\sigma(\mL(\eta))\setminus \sigma_p(\mL(\eta))
=\{ip(\xi+ia)\mid \xi\in\R\}$.
\par
Suppose that $\eta\in(-\eta_*,\eta_*)$ and  that $\mathcal{L}(\eta)u=\lambda u$
for some $u\in H^3_a(\R)$ and $\lambda\in \C\setminus\{\lambda(\pm\eta)\}$.
Then
\begin{equation}
  \label{eq:uperpg}
\int_\R u(x)\overline{g^*(x,\pm\eta)}\,dx=0\,.
\end{equation}
Indeed, it follows from Lemma~\ref{lem:kpresonances} that
\begin{align*}
(\lambda-\lambda(\pm\eta))\int_\R u(x)\overline{g^*(x,\pm\eta)}\,dx
=& \int_\R \{\lambda u(x) -(\mL(\eta)u)(x)\}\overline{g^*(x,\pm\eta)}\,dx
=0\,.
\end{align*}
Lemma~\ref{lem:RangeM+} implies that there exists a solution $v\in H^4_a(\R)$ of
$u=\mM_+(\eta)v$ satisfying $\int_\R v(x)\overline{g_M^*(x,\eta)}\,dx=0$.
By \eqref{eq:LM+},
\begin{align*}
\mM_+(\eta)(\mL_M(\eta)v-\lambda v)=& (\mL(\eta)-\lambda)\mM_+(\eta)v
\\ =& \mL(\eta)u-\lambda u=0\,.
\end{align*}
Since $\ker(\mM_+(\eta))=\{0\}$, it follows that $\mL_M(\eta)v=\lambda v$.
Using \eqref{eq:LM-}, we have
\begin{equation}
\label{eq:supp}
(\mL_0(\eta)-\lambda)\mM_-(\eta)v
= \mM_-(\eta)(\mL_M(\eta)v-\lambda v)=0\,,
\end{equation}
whence $\mM_-(\eta)v=0$ because \eqref{eq:supp} implies that
the support of  $\mF_x(\mM_-(\eta)v)(\xi)$ 
is contained in $\{\xi\in \R \mid \xi^4+4\xi^2+i\lambda \xi-\eta^2=0\}$.
Lemma~\ref{lem:RangeM-} implies there exists an $\alpha \in \C$ such that
$v(x)=\alpha g_M(x,\eta)$ and hence it follows from \eqref{eq:LMgM} that
$$u(x)=\mM_+(\eta)v=-2i\alpha\eta g(x,\eta)\,.$$
By Lemma~\ref{lem:kpresonances} and \eqref{eq:uperpg},
$$\int_\R u(x)\overline{g^*(x,\eta)}\,dx=-2i\alpha\eta=0\,,$$
whence $u=0$. Thus we prove (1).
\par
Suppose $\eta\in\R\setminus[-eta_*,\eta_*]$ and that
$\mathcal{L}u=\lambda u$ for some $u\in H^3_a(\R)$ and $\lambda\in \C$.
Lemma~\ref{lem:Mpmlarge} implies that there exists $v\in H^4_a(\R)$
satisfying $u=\mM_+(\eta)v$ and we can prove that $\mM_-(\eta)v=0$
in the same way as the proof of (1). Since $\mM_-(\eta)$ has
the bounded inverse, it follows that $v=0$ and $u=\mM_+(\eta)v=0$.
Thus we complete the proof.
\end{proof}
\bigskip

\section{Semigroup estimates for the linearized KP-II equation}
\label{sec:semigroup}
In this section, we will prove exponential decay estimates
of solutions to \eqref{eq:LKP}.
To begin with, we define a spectral projection to low frequency resonant
modes. Let $P_0(\eta_0)$ be an operator defined by
\begin{equation*}
P_0(\eta_0)f(x,y)=\frac{1}{2\pi}\sum_{k=1,\,2}
\int_{-\eta_0}^{\eta_0}a_k(\eta)g_k(x,\eta)e^{iy\eta}\,d\eta\,,  
\end{equation*}
\begin{align*}
 a_k(\eta)=&\int_\R \lim_{M\to\infty}\left(\int_{-M}^M f(x_1,y_1)e^{-iy_1\eta}
\,dy_1\right)\overline{g_k^*(x_1,\eta)}\,dx_1
\\=& \sqrt{2\pi}\int_\R (\mF_yf)(x,\eta)\overline{g_k^*(x,\eta)}\,dx\,.
  \end{align*}
We will show that $P_0(\eta_0)$ is a spectral projection on
$X=L^2(\R^2;e^{2ax}dxdy)$.
\begin{lemma}
  \label{lem:p0}
Let $a\in(0,2)$ and $\eta_1$ be a positive constant satisfying
$\nu(\eta_1)<a$. If $\eta_0\in[-\eta_1,\eta_1]$, then
\begin{enumerate}
\item \label{it:p01}
$\|P_0(\eta_0)f\|_X+\|P_0(\eta_0)\pd_xf\|_X
\le C\|f\|_X$ for any $f\in X$,
where $C$ is a positive constant depending only on $a$ and $\eta_1$,
\item \label{it:p02}
$\|P_0(\eta_0)f\|_X+\|P_0(\eta_0)\pd_xf\|_X
\le C\|e^{ax}f\|_{L^1_xL^2_y}$ for any $e^{ax}f\in L^1_xL^2_y$,
where $C$ is a positive constant depending only on $a$ and $\eta_1$,
\item \label{it:p03}
$\mL P_0(\eta_0)f=P(\eta_0)\mL f$ for any $f\in D(\mL)
=\{u\mid u\,,\; \pd_x^3u\,,\;\pd_x^{-1}\pd_y^2u\in X\}$,
\item \label{it:p04}
$P_0(\eta_0)^2=P_0(\eta_0)$ on $X$,
\item \label{it:p05}
$e^{t\mL}P_0(\eta_0)=P_0(\eta_0)e^{t\mL}$ on $X$.
\end{enumerate}
\end{lemma}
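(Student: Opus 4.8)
The plan is to treat all five items by taking the Fourier transform in $y$, which turns $P_0(\eta_0)$ into a fiberwise operator: for fixed $\eta\in[-\eta_0,\eta_0]$ it is the (oblique, non-orthogonal) projection onto $\spann\{g_1(\cdot,\eta),g_2(\cdot,\eta)\}$ in $L^2_a(\R)$ associated with the biorthogonal system $\{g_k^*(\cdot,\eta)\}$ guaranteed by Lemma~\ref{cl:L*g}. First I would record the Plancherel identity $\|f\|_X^2 = \int_\R \|(\mF_yf)(\cdot,\eta)\|_{L^2_a(\R)}^2\,d\eta$ and the formula $(\mF_y P_0(\eta_0)f)(x,\eta) = \mathbf{1}_{[-\eta_0,\eta_0]}(\eta)\sum_{k}\bigl(\sqrt{2\pi}\int_\R (\mF_yf)(x_1,\eta)\overline{g_k^*(x_1,\eta)}\,dx_1\bigr)g_k(x,\eta)$. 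For item (1), I would bound the fiber operator norm: by Claim~\ref{cl:gk-approx} together with the analytic continuation of the explicit formulas in Lemma~\ref{lem:kpresonances}, the functions $g_k(\cdot,\eta)\in L^2_a(\R)$ and $g_k^*(\cdot,\eta)\in L^2_{-a}(\R)$ depend continuously on $\eta$ on the compact set $[-\eta_1,\eta_1]$ (this uses $\nu(\eta_1)<a$, so the exponential weight dominates the growth $e^{\nu(\eta)x}$ as $x\to-\infty$), hence $\sup_{|\eta|\le\eta_1}(\|g_k(\cdot,\eta)\|_{L^2_a}+\|g_k^*(\cdot,\eta)\|_{L^2_{-a}})<\infty$. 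Cauchy–Schwarz in $x_1$ (pairing $e^{ax_1}\mF_yf$ against $e^{-ax_1}g_k^*$) then gives $|a_k(\eta)|\lesssim \|(\mF_yf)(\cdot,\eta)\|_{L^2_a}$ fiberwise, and integrating in $\eta$ over $[-\eta_0,\eta_0]\subset[-\eta_1,\eta_1]$ yields $\|P_0(\eta_0)f\|_X\lesssim\|f\|_X$. For the $\pd_x$-bound one moves the derivative onto $g_k$: $\pd_x P_0(\eta_0)f$ has the same form with $g_k$ replaced by $\pd_x g_k$, and $\sup_{|\eta|\le\eta_1}\|\pd_x g_k(\cdot,\eta)\|_{L^2_a}<\infty$ by the same analyticity argument (differentiating the closed-form expressions), so the identical estimate applies.

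For item (2), I would instead estimate $a_k(\eta)$ by pulling the $L^1_x$ norm outside: $|a_k(\eta)|\le\sqrt{2\pi}\,\|g_k^*(\cdot,\eta)\|_{L^\infty_{-a}}\int_\R e^{ax}|(\mF_yf)(x,\eta)|\,dx$, and then $\|a_k\|_{L^2_\eta([-\eta_0,\eta_0])}\lesssim \|\,\|e^{ax}(\mF_yf)(x,\cdot)\|_{L^2_\eta}\,\|_{L^1_x} = \|e^{ax}f\|_{L^1_xL^2_y}$ by Minkowski's integral inequality; here I need $\sup_{|\eta|\le\eta_1}\|g_k^*(\cdot,\eta)\|_{L^\infty_{-a}}<\infty$, again from the explicit formula and $\nu(\eta_1)<a$. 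Multiplying by $\sup\|g_k\|_{L^2_a}$ (resp. $\sup\|\pd_x g_k\|_{L^2_a}$) and integrating over $\eta$ finishes (2). For item (3), I would use that $\mL$ acts fiberwise as $\mL(\eta)$ and that, by Lemma~\ref{cl:L*g}, $\spann\{g_1(\cdot,\eta),g_2(\cdot,\eta)\}$ is $\mL(\eta)$-invariant with $g_k^*(\cdot,\eta)$ spanning the corresponding $\mL(\eta)^*$-invariant subspace; hence for $f\in D(\mL)$ one computes, using $\mL(\eta)^*g_k^* = \overline{\lambda}$-combinations of the $g_j^*$, that the coefficients $a_k(\eta)$ of $\mL f$ equal the matrix $\begin{pmatrix}\Re\lambda & -\eta\Im\lambda\\ \Im\lambda/\eta & \Re\lambda\end{pmatrix}$ applied to those of $f$, which is exactly the fiberwise form of $\mL(\eta)$ restricted to the range of the fiber projection; comparing with $\mL(\eta)$ applied to $\sum_k a_k(\eta)g_k(\cdot,\eta)$ via the same Lemma~\ref{cl:L*g} identities gives $\mL P_0(\eta_0)f = P_0(\eta_0)\mL f$. (The computation needs care at $\eta=0$ because of the $1/\eta$ and $\eta$ factors, but these cancel; alternatively one argues by continuity from $\eta\ne0$.)

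Item (4) is the biorthogonality relation $\int g_j(x,\eta)\overline{g_k^*(x,\eta)}\,dx=\delta_{jk}$ of Lemma~\ref{cl:L*g}: applying $P_0(\eta_0)$ twice, the inner coefficients of $P_0(\eta_0)f$ are $\sum_j a_j(\eta)\delta_{jk}=a_k(\eta)$, so $P_0(\eta_0)^2f=P_0(\eta_0)f$; this is purely algebraic once (1) gives that everything is well-defined in $X$. Item (5) then follows from (3): since $\mL P_0(\eta_0)=P_0(\eta_0)\mL$ as operators on $D(\mL)$, the resolvents commute, $P_0(\eta_0)(\lambda-\mL)^{-1}=(\lambda-\mL)^{-1}P_0(\eta_0)$ for $\lambda$ in the resolvent set (which is nonempty and contains a right half-plane since $\mL$ generates a $C_0$-semigroup on $X$, cf.\ the discussion around \eqref{eq:LKP} and Lemma~\ref{lem:free-semigroup}), and hence $e^{t\mL}P_0(\eta_0)=P_0(\eta_0)e^{t\mL}$ by the Hille–Yosida/Laplace-transform representation of the semigroup, or equivalently because both sides solve the same abstract Cauchy problem. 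The main obstacle is item (1): making rigorous the uniform-in-$\eta$ bounds on $\|g_k(\cdot,\eta)\|_{L^2_a}$, $\|\pd_x g_k(\cdot,\eta)\|_{L^2_a}$ and $\|g_k^*(\cdot,\eta)\|_{L^\infty_{-a}}$, which hinges on the condition $\nu(\eta_1)<a$ so that the weight $e^{2ax}$ controls the $x\to-\infty$ growth $g_k(x,\eta)\sim e^{\nu(\eta)|x|}$ of the resonant modes uniformly on $[-\eta_1,\eta_1]$; everything else is bookkeeping built on Lemmas~\ref{lem:kpresonances}–\ref{cl:L*g} and Claim~\ref{cl:gk-approx}.
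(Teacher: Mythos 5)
Your treatment of items (1) and (2) is essentially the paper's: Plancherel in $y$, Cauchy--Schwarz (resp.\ Minkowski) against $g_k^*$, and uniform-in-$\eta$ control of $g_k$, $g_k^*$ on $[-\eta_1,\eta_1]$ coming from $\nu(\eta_1)<a$; the paper records this as explicit pointwise exponential bounds (its eq.~\eqref{eq:gk-bound2}) rather than by an abstract continuity-on-a-compact argument, but the content is the same. For items (3)--(5), however, you take a genuinely different and more elementary route. The paper shows that $P_0(\eta_0)$ coincides with a Dunford contour integral of the resolvent (which requires the fact from Lemma~\ref{lem:Llarge-eta} that $\lambda(\pm\eta)$ are isolated in $\sigma(\mL(\eta))$) and then invokes the standard properties of Riesz spectral projections to get idempotence and commutation with $\mL$ and $e^{t\mL}$ in one stroke. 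You instead verify (3) and (4) by direct fiberwise computation: biorthogonality $\int g_j\overline{g_k^*}=\delta_{jk}$ gives $P_0^2=P_0$ immediately, and the $2\times 2$ matrix recorded in Lemma~\ref{cl:L*g} for $\mL(\eta)$ on $\spann\{g_1,g_2\}$ matches, after taking adjoints and conjugates, the matrix for $\mL(\eta)^*$ on $\spann\{g_1^*,g_2^*\}$, which yields $\mL P_0=P_0\mL$ on $D(\mL)$; (5) then follows by commuting resolvents and taking Laplace transforms, exactly the ``standard argument'' the paper alludes to. Your route trades the machinery of the Dunford integral for a few lines of linear algebra, avoids any reliance on the spectral-gap statement of Lemma~\ref{lem:Llarge-eta}, and makes the cancellation of the apparent $1/\eta$ singularity in $\mL(\eta)g_2$ and $\mL(\eta)^*g_2^*$ at $\eta=0$ explicit (removable since $\Im\lambda(\eta)=O(\eta)$); the paper's route is terser and reuses infrastructure it needs anyway. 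One small point you should make explicit if you write this out: to pass from ``$P_0\mL=\mL P_0$ on $D(\mL)$'' to commuting resolvents, you need $P_0$ to map $D(\mL)$ into itself, which holds because $g_k(\cdot,\eta)\in H^3_a(\R)$ for $|\eta|\le\eta_1$ with uniformly bounded norms, so $P_0(\eta_0)$ in fact maps all of $X$ into $D(\mL)$.
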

\begin{proof}
First, we will show \eqref{it:p01}.
Since $C_0^\infty(\R)\otimes C_0^\infty(\R)$ is dense in $X$,
we may assume $f\in C_0^\infty(\R)\otimes C_0^\infty(\R)$. Let
$$ f_k(x,y)=\frac{1}{2\pi}
\int_{-\eta_0}^{\eta_0}a_k(\eta)g_k(x,\eta)e^{iy\eta}d\eta
\quad\text{for $k=1$, $2$.}$$
By Plancherel's theorem,
\begin{equation}
  \label{eq:fk-bound}
\begin{split}
\|f_k(x,y)\|_{L^2_y}=& \frac{1}{\sqrt{2\pi}}
\left(\int_{-\eta_0}^{\eta_0}\left|a_k(\eta)g_k(x,\eta)\right|^2d\eta\right)^{1/2}
\\ \le & \frac{1}{\sqrt{2\pi}}\sup_{\eta\in[-\eta_0,\eta_0]}|g_k(x,\eta)|
\left(\int_{-\eta_0}^{\eta_0}\left|a_k(\eta)\right|^2d\eta\right)^{1/2}\,.
\end{split}  
\end{equation}
If $\nu(\eta_1)<a$, then it follows from the definition of $g_k$ and $g_k^*$
that there exists a positive constant $C'$ such that
for $\eta\in[-\eta_1,\eta_1]$ and $x\in\R$,
\begin{equation}
  \label{eq:gk-bound2}
\begin{split}
& |g_1(x,\eta)|\le C'\la x\ra e^{-2x_+}e^{\nu(\eta_1)x_-}\,,\quad
|g_2(x,\eta)|\le C' e^{-2x_+}e^{\nu(\eta_1)x_-}\,,\\
& |g_1^*(x,\eta)|\le C' e^{\nu(\eta_1)x_+}e^{-2x_-}\,,\quad
|g_2^*(x,\eta)|\le C' \la x\ra e^{\nu(\eta_1)x_+}e^{-2x_-}\,,
\end{split}  
\end{equation}
where $x_\pm=\max(\pm x\,,\,0)$ and $C'$ is a constant depending only $\eta_1$.
Hence it follows from \eqref{eq:fk-bound} and \eqref{eq:gk-bound} that
\begin{equation}
  \label{eq:f-bound2}
\begin{split}
\|P_0(\eta_0)f\|_X \le
& \sum_{k=1,2} \left\|\|f_k\|_{L^2(\R_y)}\right\|_{L^2_a(\R_x)}
\\ \le & C_1\left(\int_{-\eta_0}^{\eta_0}(|a_1(\eta)|^2+|a_2(\eta)|^2)
\,d\eta\right)^{1/2}\,,
\end{split}  
\end{equation}
where $C_1$ is a constant depending only on $a$ and $\eta_1$.
Using the Schwarz inequality and \eqref{eq:gk-bound2}, we have
for $\eta\in[-\eta_0,\eta_0]$,
\begin{align*}
|a_k(\eta)|\le & \sqrt{2\pi}\|(\mF_yf)(x,\eta)\|_{L^2_a(\R_x)}
\|g_k^*(x,\eta)\|_{L^2_{-a}(\R_x)}
\le  C_2\|(\mF_yf)(x,\eta)\|_{L^2_a(\R_x)}\,,
\end{align*}
where $C_2$ is a constant depending only on $a$ and $\eta_1$.
Hence it follows that for any $f\in C_0^\infty(\R)\otimes C_0^\infty(\R)$, 
\begin{align*}
\|P_0(\eta)f\|_X
\le & C_1C_2 \left(
\int_{-\eta_0}^{\eta_0}\|\mF_yf(x,\eta)\|_{L^2_a(\R_x)}^2\,d\eta\right)^{1/2}
\\ =& C_1C_2\|f\|_X\,.
\end{align*}
We can prove $\|P_0(\eta)\pd_x f\|_X \lesssim\|f\|_X$ in exactly the same
way.

Next we will prove \eqref{it:p02}.
Using Minkowski's inequality and applying 
\eqref{eq:gk-bound2} and Plancherel's theorem to the resulting equation,
we have
\begin{align*}
\|a_k\|_{L^2(-\eta_0,\eta_0)}\le &\sqrt{2\pi}\int_\R
 \|(\mF f)(x,\cdot)\overline{g_k^*(x,\cdot)}\|_{L^2(-\eta_0,\eta_0)}\, dx
\\ \le & \sqrt{2\pi}
\sup_{x\in\R\,, \eta\in[-\eta_0,\eta_0]}|e^{-ax}g_k^*(x,\eta)|
\int_\R e^{ax}\|(\mF f)(x,\cdot)\|_{L^2(-\eta_0,\eta_0)}\,dx
\\ \lesssim & \|e^{ax}f\|_{L^1_xL^2_y}\,.
\end{align*}
Substituting the above into \eqref{eq:f-bound2}, we have
$\|P_0(\eta)f\|_X \lesssim \|e^{ax}f\|_{L^1_xL^2_y}$.
We can prove $\|P_0(\eta)\pd_xf\|_X \lesssim \|e^{ax}f\|_{L^1_xL^2_y}$
in exactly the same way.
\par
Since the potential of $\mL$ is independent of $y$,
it suffices to show \eqref{it:p03} for $f\in D(\mL)\cap\widetilde{X}$,
where $\widetilde{X}=\{f\in X\mid (\mF_yf)(\cdot,\eta)=0\;
a.e.\;\eta\not\in[-\eta_0,\eta_0]\}$.
Since $\lambda(\pm\eta)$ are isolated eigenvalue of $\mL(\eta)$
by Lemma~\ref{lem:Llarge-eta}, it follows from Lemmas~\ref{lem:kpresonances}
and \ref{cl:L*g} that
\begin{align*}
P_0(\eta_0)f= & \frac{1}{(2\pi)^{3/2}i}
\int_{-\eta_0}^{\eta_0}\int_\Gamma
(\lambda-\mL(\eta))^{-1}(\mF_yf)(\cdot,\eta)e^{iy\eta}\,d\lambda\, d\eta
\\ =& \frac{1}{2\pi i}\int_\Gamma (\lambda-\mL)^{-1}f\,d\lambda\,,
\end{align*}
where $\Gamma$ is the boundary of a domain
$D\supset \{\lambda(\pm\eta)\mid \eta\in[-\eta_0,\eta_0]\}$
satisfying $D \cap \{p(\eta+ia)\mid \eta\in\R\} =\emptyset$.
Thus $P_0(\eta_0)$ equals to a spectral projection of $\mL|_{\widetilde{X}}$
defined by the Dunford integral and 
\eqref{it:p03}--\eqref{it:p05} can be obtained by a standard argument.
We remark that $e^{t\mL}$ is a $C^0$-semigroup on $X$ 
because $\mL_0:=-\pd_x^3+4\pd_x-3\pd_x^{-1}\pd_y^2$ is $m$-dissipative on $X$ and
$\mL-\mL_0$ is infinitesimally small with respect to $\mL_0$.
Thus we complete the proof of Lemma~\ref{lem:p0}.
\end{proof}

Let $0<\eta_1\le \eta_2\le \infty$ and
$P_1(\eta_1,\eta_2)$ and $P_2(\eta_1,\eta_2)$ be projections defined by
\begin{gather*}
P_1(\eta_1, \eta_2)u(x,y)=\frac{1}{2\pi}\int_{\eta_1\le |\eta|\le \eta_2}
\int_\R  u(x,y_1)e^{i\eta(y-y_1)}dy_1d\eta\,,
\\ P_2(\eta_1,\eta_2)= P_1(0,\eta_2)-P_0(\eta_1)\,.
\end{gather*}
We remark that $P_2(\eta_1,\eta_2)$ is a projection onto 
non-resonant low frequency modes and
that $\|P_2(\eta_1,\eta_2)e^{t\mL}\|_{B(X)}$ decays exponentially as $t\to\infty$.
\begin{proposition}
  \label{prop:semigroup-p2}
Let $a\in (0,2)$ and $\eta_1$ be a positive number satisfying
$\nu(\eta_1)<a$.  Then there exist positive constants
$K$ and  $b$ such that for any $\eta_0\in(0,\eta_1]$, $f\in X$ and $t\ge 0$,
$$ \|e^{t\mL}P_2(\eta_0,\infty)f\|_X\le K
(\eta_0^{-1}e^{\Re\lambda(\eta_0)t}+e^{-bt})\|f\|_X\,.$$
\end{proposition}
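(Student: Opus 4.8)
The plan is to Fourier-transform in $y$, reduce the assertion to an estimate for $e^{t\mL(\eta)}$ in $B(L^2_a(\R))$ that is uniform in the transverse frequency $\eta$, and split the $\eta$-range at $\eta_1$. Since $P_1(0,\infty)=\operatorname{Id}$, one has $P_2(\eta_0,\infty)=\operatorname{Id}-P_0(\eta_0)$, which commutes with $e^{t\mL}$ and is bounded on $X$ uniformly in $\eta_0\in(0,\eta_1]$ by Lemma~\ref{lem:p0}; so it suffices to estimate $e^{t\mL}$ on its range. I would write $P_2(\eta_0,\infty)=P_2(\eta_0,\eta_1)+P_1(\eta_1,\infty)$ and treat the two summands separately; by Plancherel in $y$ each $X$-bound becomes a supremum over $\eta$ of $L^2_a(\R)$-operator norms, with $\mL$, $e^{t\mL}$, $\mM_\pm$ acting fiberwise as $\mL(\eta)$, $e^{t\mL(\eta)}$, $\mM_\pm(\eta)$.

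For the band $|\eta|\le\eta_1$: here $|\eta|<\eta_*$, so by Lemma~\ref{lem:Llarge-eta} $\sigma(\mL(\eta))=\{\lambda(\pm\eta)\}\cup\sigma_{\mathrm{ess}}$ with $\sigma_{\mathrm{ess}}\subset\{\Re z\le -b_0\}$; conjugating $\mL_0(\eta)$ by $e^{ax}$ turns it into a Fourier multiplier whose symbol has real part $\le a^3-4a<0$ uniformly in $\eta$, so $\|e^{t\mL_0(\eta)}\|_{B(L^2_a)}\le e^{-b_0t}$ with $b_0:=4a-a^3$. Let $P_\eta$ be the rank-two Riesz projection onto $\spann\{g(\cdot,\eta),g(\cdot,-\eta)\}$; it is idempotent and $\mL(\eta)$-invariant by \eqref{eq:lem-kp2}--\eqref{eq:lem-kp3}, and coincides with the $\eta$-fiber of $P_0(\eta_0)$ when $|\eta|\le\eta_0$. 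On the eigenspace $\|e^{t\mL(\eta)}P_\eta\|\le\|P_\eta\|\,e^{\Re\lambda(\eta)t}$; from the formula of Lemma~\ref{lem:kpresonances} one gets $\|g(\cdot,\pm\eta)\|_{L^2_a}\lesssim|\eta|^{-1}$ and $\|g^*(\cdot,\pm\eta)\|_{L^2_{-a}}\lesssim1$ (with constants depending only on $a$, $\eta_1$), so $\|P_\eta\|\lesssim\eta_0^{-1}$ for $\eta_0\le|\eta|\le\eta_1$, while $\Re\lambda(\eta)\le\Re\lambda(\eta_0)$ because $\eta\mapsto\Re\lambda(\eta)=-4\eta(1+\eta^2)^{1/4}\sin(\tfrac12\arctan\eta)$ is decreasing on $[0,\eta_*)$; for $|\eta|<\eta_0$ the term $-P_0(\eta_0)$ removes this contribution. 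On the complement $\operatorname{Range}(\operatorname{Id}-P_\eta)={}^\perp\spann\{g^*(\cdot,\pm\eta)\}\subset\operatorname{Range}(\mM_+(\eta))$ I would transfer to the free flow: by \eqref{eq:LM+}--\eqref{eq:LM-}, Lemma~\ref{lem:RangeM+} and Lemma~\ref{lem:RangeM-}, the composite $\mM_-(\eta)\circ\mM_+(\eta)^{-1}$ restricted to $\operatorname{Range}(\operatorname{Id}-P_\eta)$ is a bounded isomorphism onto $L^2_a(\R)$ intertwining $\mL(\eta)$ with $\mL_0(\eta)$, hence conjugating $e^{t\mL(\eta)}|_{\operatorname{Range}(\operatorname{Id}-P_\eta)}$ to $e^{t\mL_0(\eta)}$, and it together with its inverse has norm bounded by a constant depending only on $a$, $\eta_1$ — this is exactly where $\nu(\eta_1)<a$ is used, to keep $a-\nu(\eta)$ away from $0$ — whence $\|e^{t\mL(\eta)}(\operatorname{Id}-P_\eta)\|\lesssim e^{-b_0t}$. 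Collecting the two pieces, integrating in $\eta$, and absorbing the residual $\eta_0^{-1}e^{-b_0t}$ into $\eta_0^{-1}e^{\Re\lambda(\eta_0)t}+e^{-bt}$ (possible since $\Re\lambda(\eta_0)\to0$ as $\eta_0\to0$ whereas $b_0$ is fixed) yields the bound for $e^{t\mL}P_2(\eta_0,\eta_1)$.

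For the band $|\eta|\ge\eta_1$: the target is $\|e^{t\mL(\eta)}\|_{B(L^2_a)}\le Ce^{-bt}$ uniformly, with $b:=\tfrac12\min(b_0,-\Re\lambda(\eta_1))>0$. By Lemma~\ref{lem:Llarge-eta}, $\{\Re z>-2b\}\subset\rho(\mL(\eta))$ for all such $\eta$, so by the Gearhart--Pr\"uss theorem on the Hilbert space $L^2_a(\R)$ it suffices to bound $(z-\mL(\eta))^{-1}$ uniformly on $\{\Re z\ge -b\}$. For $|\eta|$ beyond a fixed $\eta_2>\eta_*$, Lemma~\ref{lem:Mpmlarge} makes $\mM_\pm(\eta)$ invertible, so $\mL(\eta)=S(\eta)\mL_0(\eta)S(\eta)^{-1}$ with $S(\eta)=\mM_+(\eta)\mM_-(\eta)^{-1}$ and $\|S(\eta)^{\pm1}\|\le C/(\nu(\eta)-a)\le C/(\nu(\eta_2)-a)$, while $(z-\mL_0(\eta))^{-1}$ has the explicit multiplier bound $\le\operatorname{dist}(z,\sigma(\mL_0(\eta)))^{-1}\le(b_0-b)^{-1}$ on that half-plane. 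For $\eta_1\le|\eta|\le\eta_2$, $\eta\mapsto\mL(\eta)$ varies continuously up to the small bounded perturbation $3(\eta^2-(\eta')^2)\pd_x^{-1}$, and since $(z-\mL_0(\eta))^{-1}$ gains derivatives, $\|(z-\mL_0(\eta))^{-1}B\|\to0$ as $|z|\to\infty$ in $\{\Re z\ge -b\}$ for $B:=-6\pd_x(\varphi\cdot)$; a Neumann series disposes of $|z|$ large and compactness of the parameter set disposes of $|z|$ bounded, giving the uniform resolvent bound. Combining the two bands and passing to the smaller exponent completes the proof.

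The step I expect to be the main obstacle is the uniform-in-$\eta$ exponential decay of $e^{t\mL(\eta)}$ on the non-resonant part: Lemma~\ref{lem:Llarge-eta} only locates the spectrum, so the decay must be imported from the explicit free semigroup through the linearized Miura maps, and the delicate region is a neighbourhood of $\eta=\pm\eta_*$, where the eigenvalue $\lambda(\eta)$ coalesces with the continuous spectrum and $\mM_\pm(\eta)$ degenerate; this is why the band $\eta_1\le|\eta|\le\eta_2$ must be handled by the separate compactness/resolvent argument rather than by a single global similarity. A secondary point needing care is that every constant in the $|\eta|\le\eta_1$ estimate be genuinely independent of $\eta_0$, so that only the displayed factor $\eta_0^{-1}$ survives.
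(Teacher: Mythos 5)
Your proposal is essentially correct, and for the two ``easy'' bands ($|\eta|\le\eta_1$ via the Miura conjugation and resonant-mode projection with the $|\eta|^{-1}$ blow-up of $g(\cdot,\eta)$, and $|\eta|\ge\eta_2$ via Lemma~\ref{lem:Mpmlarge}) it follows the same route as the paper (Lemmas~\ref{cl:nonresonant-low}, \ref{cl:decay-resonant}, \ref{cl:high}; your explicit bound $\|g(\cdot,\eta)\|_{L^2_a}\lesssim|\eta|^{-1}$ with $\Re\lambda(\eta)$ monotone is just the hands-on version of the energy computation in Lemma~\ref{cl:decay-resonant}). The genuinely different step is the treatment of the transitional band $\eta_1\le|\eta|\le\eta_2$, which contains the degenerate frequency $\eta_*$ where $\nu(\eta_*)=a$ and the Miura maps $\mM_\pm(\eta)$ cease to be invertible with uniform bounds. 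You handle it by locating $\sigma(\mL(\eta))$ (Lemma~\ref{lem:Llarge-eta}), establishing a uniform resolvent bound on $\{\Re z\ge -b\}$ via a Neumann series for $|z|$ large and norm-continuity of $\eta\mapsto(z-\mL(\eta))^{-1}$ plus compactness for $|z|$ bounded, and then invoking a quantitative Gearhart--Pr\"uss theorem on the Hilbert space $L^2_a(\R)$. The paper instead avoids abstract resolvent machinery altogether: it observes that the Miura conjugation \emph{does} work on this band if one changes the exponential weight, picks $a_1<\nu(\eta_1)<a<\nu(\eta_2)<a_2$, derives exponential decay of $e^{t\mL}P_1(\eta_1,\eta_2)$ in $L^2(\R^2;e^{2a_1x})$ (from Lemma~\ref{cl:high}, since $\eta_1>\eta_{*}(a_1)$) and in $L^2(\R^2;e^{2a_2x})$ (from Lemmas~\ref{cl:nonresonant-low} and \ref{cl:decay-resonant}, since $\eta_2<\eta_*(a_2)$), and then uses complex interpolation between these two weighted $L^2$ spaces to recover the estimate at weight $a$. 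Both routes close; yours is conceptually clean but needs a \emph{quantitative} Gearhart--Pr\"uss statement (with control of the constant through the uniform resolvent bound and the uniform local growth bound $\sup_{0\le t\le1}\|e^{t\mL(\eta)}\|$) to ensure $\eta$-independence — the qualitative theorem alone would not suffice — whereas the paper's interpolation trick is entirely explicit and reuses only machinery already built (Lemmas~\ref{lem:RangeM-}--\ref{lem:Mpmlarge} with a shifted weight), at the cost of the perhaps less intuitive device of deforming the exponential weight rather than the transverse frequency. You correctly identified the region near $\eta_*$ as the obstruction; the contrast is in how each argument circumvents it.
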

\begin{corollary}
  \label{cor:semigroup-p2}
Let $a$ and $\eta_0$ be as in Lemma~\ref{prop:semigroup-p2}.
Then there exist positive constants $K_1$ and $b$ such that
for every $M\ge \eta_0$ and $f\in X$ and $t>0$,
\begin{gather}
\label{eq:corsp21}
\|e^{t\mL}P_2(\eta_0,M)\pd_xf\|_X
\le K_1(1+\eta_0^{-1}+t^{-1/2})e^{-b t}\|f\|_X\,,\\
\label{eq:corsp22}
\|e^{t\mL}P_2(\eta_0,M)\pd_xf\|_X \le
K_1(1+\eta_0^{-1}+t^{-3/4})e^{-b t}\|e^{ax}f\|_{L^1_xL^2_y}\,.
\end{gather}
\end{corollary}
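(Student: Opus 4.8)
The plan is to derive the estimate for small $t$ from a short-time smoothing bound for the full semigroup $e^{t\mL}$, and for large $t$ from Proposition~\ref{prop:semigroup-p2}. Write $\mL=\mL_0+V$ with $Vu=-6\pd_x(\varphi u)$, and recall that $\pd_x$ commutes with $\mL_0$ (hence with $e^{t\mL_0}$) but not with $\mL$. First I would show that for $0<t\le1$,
$$\|e^{t\mL}\pd_xf\|_X\lesssim t^{-1/2}\|f\|_X\,,\qquad \|e^{t\mL}\pd_xf\|_X\lesssim t^{-3/4}\|e^{ax}f\|_{L^1_xL^2_y}\,.$$
Both follow from the Duhamel identity $e^{t\mL}\pd_xf=\pd_xe^{t\mL_0}f+\int_0^te^{(t-s)\mL_0}V\,e^{s\mL}\pd_xf\,ds$, in which the integral term equals $-6\int_0^t\pd_xe^{(t-s)\mL_0}\bigl(\varphi\,e^{s\mL}\pd_xf\bigr)\,ds$ after moving $\pd_x$ through $e^{(t-s)\mL_0}$. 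The smoothing estimates for $e^{t\mL_0}$ (Lemma~\ref{lem:free-semigroup}) bound the inhomogeneous term by $t^{-1/2}\|f\|_X$ (resp.\ $t^{-3/4}\|e^{ax}f\|_{L^1_xL^2_y}$) and each integrand in $X$ by $C(t-s)^{-1/2}\|e^{s\mL}\pd_xf\|_X$, since $\varphi\in L^\infty$ costs no smoothing and the single $x$-derivative carried by $V$ is compensated by the integrable kernel $(t-s)^{-1/2}$. A contraction argument in the norm $\sup_{0<t\le T}t^{1/2}\|e^{t\mL}\pd_xf\|_X$ (resp.\ $\sup_{0<t\le T}t^{3/4}\|e^{t\mL}\pd_xf\|_X$), using that $\int_0^t(t-s)^{-1/2}s^{-1/2}\,ds$ is finite and $\int_0^t(t-s)^{-1/2}s^{-3/4}\,ds=C\,t^{-1/4}$, closes on a small interval $[0,T]$; the bounds then extend to $[0,1]$ by writing $e^{t\mL}\pd_xf=e^{(t-\delta)\mL}\bigl(e^{\delta\mL}\pd_xf\bigr)$ with $\delta<T$ fixed and using that $e^{t\mL}$ is a bounded $C^0$-semigroup on $X$.

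For $t\ge1$ I would use Proposition~\ref{prop:semigroup-p2}. The operator $e^{t\mL}$ commutes with the $y$-Fourier multipliers $P_0(\eta_0)$ and $P_1(0,M)$, hence with $P_2(\eta_0,\infty)$ and $P_2(\eta_0,M)$; moreover $P_2(\eta_0,\infty)P_2(\eta_0,M)=P_2(\eta_0,M)$, since $P_0(\eta_0)P_1(0,M)=P_0(\eta_0)$ and $P_0(\eta_0)^2=P_0(\eta_0)$ (Lemma~\ref{lem:p0}). Therefore, for $t\ge1$,
$$e^{t\mL}P_2(\eta_0,M)\pd_xf=e^{(t-1)\mL}P_2(\eta_0,\infty)\bigl(P_2(\eta_0,M)\,e^{\mL}\pd_xf\bigr)\,,$$
and Proposition~\ref{prop:semigroup-p2}, together with $\|P_2(\eta_0,M)\|_{B(X)}\lesssim1$ (Lemma~\ref{lem:p0}) and the $t=1$ case of the smoothing bounds above, yields
$$\|e^{t\mL}P_2(\eta_0,M)\pd_xf\|_X\lesssim\bigl(\eta_0^{-1}e^{\Re\lambda(\eta_0)(t-1)}+e^{-b(t-1)}\bigr)\bigl(\|f\|_X\ \text{or}\ \|e^{ax}f\|_{L^1_xL^2_y}\bigr)\,.$$
Since $\eta_0$ is fixed and $\Re\lambda(\eta_0)<0$ (from the formula for $\lambda$ in Lemma~\ref{lem:kpresonances}), setting $b_1:=\min\{-\Re\lambda(\eta_0),\,b\}>0$ turns the right-hand side into $(1+\eta_0^{-1})e^{b_1}e^{-b_1t}$ times the input norm. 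For $0<t\le1$ one estimates $\|e^{t\mL}P_2(\eta_0,M)\pd_xf\|_X\le\|P_2(\eta_0,M)\|_{B(X)}\,\|e^{t\mL}\pd_xf\|_X$ by the short-time bounds and notes that $e^{-b_1t}\ge e^{-b_1}$ on $[0,1]$. Combining the two ranges gives \eqref{eq:corsp21} and \eqref{eq:corsp22}, with the rate $b$ there equal to $b_1$ (and $K_1$ can be taken independent of $\eta_0$).

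The main difficulty is the short-time step: one must make the Duhamel/contraction argument rigorous — well-definedness of $e^{t\mL}\pd_xf$ as an element of $X$ for $t>0$ with $\sup_{0<t\le1}t^{1/2}\|e^{t\mL}\pd_xf\|_X<\infty$, uniqueness, and continuation past the small interval — and the derivative loss from the potential $V$ is exactly borderline, controlled only because the free smoothing costs the integrable power $t^{-1/2}$. The remaining points (commutation of $e^{t\mL}$ with the spectral projections, the uniform operator norm of $P_2(\eta_0,M)$ from Lemma~\ref{lem:p0}, and matching the $\eta_0$-dependent rate $\Re\lambda(\eta_0)$ of Proposition~\ref{prop:semigroup-p2} with the exponential in the statement) are routine bookkeeping.
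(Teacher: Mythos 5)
Your proposal is correct and follows essentially the same route as the paper: a Duhamel expansion of $e^{t\mL}\pd_x$ with $\mL=\mL_0+V$, the free-semigroup smoothing estimates of Lemma~\ref{lem:free-semigroup} to handle the singular kernel, a singular Volterra/Gronwall step (you phrase it as a small-interval contraction plus semigroup extension, the paper invokes Gronwall directly) to close the small-time bound, and composition with Proposition~\ref{prop:semigroup-p2} via the commutation/idempotence properties of Lemma~\ref{lem:p0} for large time. The only cosmetic difference is that the paper keeps the projection $P_2(\eta_0,\infty)$ inside the Duhamel iteration while you estimate $e^{t\mL}\pd_xf$ first and tack on $P_2(\eta_0,M)$ at the end using its uniform boundedness and commutation with $e^{t\mL}$; both are equivalent.
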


To prove Proposition~\ref{prop:semigroup-p2}, we need decay estimates
for the free semigroup $e^{t\mL_0}$.
\begin{lemma}
  \label{lem:free-semigroup}
Let $a\in(0,2)$. Then there exists a positive constant
$C$ such that for every $f\in C_0^\infty(\R^2)$ and $t>0$,
  \begin{gather}
\label{eq:lemfs1}
\|e^{t\mL_0}f\|_X \le Ce^{-a(4-a^2)t}\|f\|_X\,,\\
\label{eq:lemfs2}
\|e^{t\mL_0}\pd_x f\|_X \le C(1+t^{-1/2})e^{-a(4-a^2)t}\|f\|_X\,,\\
\|e^{t\mL_0}\pd_x^{-1}\pd_y f\|_X \le Ct^{-1/2}e^{-a(4-a^2)t}\|f\|_X\,,\\
\label{eq:lemfs3}
\|e^{t\mL_0}\pd_xf\|_X\le C(1+t^{-3/4})e^{-a(4-a^2)t}\|e^{ax}f\|_{L^2_yL^1_x}\,,\\
\label{eq:lemfs4}
\|e^{t\mL_0}f\|_X\le C(t^{-1/2}+t^{-3/4})e^{-a(4-a^2)t}\|e^{ax}f\|_{L^1(\R^2)}\,.
  \end{gather}
\end{lemma}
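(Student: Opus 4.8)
The plan is to strip off the exponential weight by conjugation and reduce the whole lemma to $L^\infty$/$L^2$ bounds on a single scalar Fourier multiplier. Since $u\mapsto e^{ax}u$ is an isometric isomorphism of $X$ onto $L^2(\R^2)$ that intertwines $e^{t\mL_0}$ with $e^{t\mL_0^a}$, where $\mL_0^a:=e^{ax}\mL_0e^{-ax}$ is obtained from $\mL_0$ by replacing $\pd_x$ with $\pd_x-a$, the operator $e^{t\mL_0^a}$ is the Fourier multiplier with symbol $e^{t\,ip(\xi+ia,\eta)}$, where $ip(\xi+ia,\eta)=-(i\xi-a)^3+4(i\xi-a)+3\eta^2(i\xi-a)^{-1}$ is the symbol of $\mL_0^a$ (the same symbol whose values give $\sigma(\mL_0(\eta))$ in Lemma~\ref{lem:Llarge-eta}). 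The computation everything hinges on is
\begin{equation*}
\Re\bigl(ip(\xi+ia,\eta)\bigr)=-a\Bigl(4-a^2+3\xi^2+\frac{3\eta^2}{\xi^2+a^2}\Bigr)\qquad(\xi,\eta\in\R),
\end{equation*}
whence $\bigl|e^{t\,ip(\xi+ia,\eta)}\bigr|=e^{-a(4-a^2)t}\,e^{-3a\xi^2t}\,e^{-3a\eta^2t/(\xi^2+a^2)}$; the hypothesis $a\in(0,2)$ is precisely what makes $4-a^2>0$.

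Granting this, \eqref{eq:lemfs1}, \eqref{eq:lemfs2} and the $\pd_x^{-1}\pd_y$ estimate follow at once from Plancherel, since they require only the $L^\infty_{\xi,\eta}$-norm of, respectively, $e^{t\,ip(\xi+ia,\eta)}$, $(i\xi-a)e^{t\,ip(\xi+ia,\eta)}$ and $(i\eta)(i\xi-a)^{-1}e^{t\,ip(\xi+ia,\eta)}$. For the last two I would use the elementary bounds $|\xi|\,e^{-3a\xi^2t}\lesssim t^{-1/2}$ and $\frac{|\eta|}{\sqrt{\xi^2+a^2}}\,e^{-3a\eta^2t/(\xi^2+a^2)}\lesssim t^{-1/2}$ together with $|i\xi-a|\le|\xi|+a$; the bounded contribution $a$ in the last inequality accounts for the ``$1+$'' in \eqref{eq:lemfs2}, whereas no such term is needed for the $\pd_x^{-1}\pd_y$ bound because $\frac{|\eta|}{\sqrt{\xi^2+a^2}}e^{-3a\eta^2t/(\xi^2+a^2)}$ already vanishes at $\eta=0$.

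The $L^1$-type estimates \eqref{eq:lemfs3} and \eqref{eq:lemfs4} I would prove through the convolution kernel $K_t:=\mF^{-1}\bigl(e^{t\,ip(\xi+ia,\eta)}\bigr)$ (and its $\pd_x$-weighted version). Using $ip(\xi+ia,\eta)=ip(\xi+ia,0)-3i\eta^2/(\xi+ia)$ and $\Re\bigl(3it/(\xi+ia)\bigr)=3at/(\xi^2+a^2)>0$, the $\eta$-integral is a convergent complex Gaussian and yields a factor $\simeq t^{-1/2}(\xi+ia)^{1/2}e^{-ay^2/(12t)}$ together with an Airy/KdV-type oscillatory integral in $\xi$, evaluated at the shifted point $x+y^2/(12t)$ and carrying the damping $e^{-3a\xi^2t}$. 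For \eqref{eq:lemfs4} one then applies Young's inequality $\|K_t*h\|_{L^2}\le\|K_t\|_{L^2}\|h\|_{L^1}$ and evaluates, by Plancherel, $\|K_t\|_{L^2}^2=\int e^{2t\Re(ip(\xi+ia,\eta))}\,d\xi\,d\eta\lesssim e^{-2a(4-a^2)t}\,t^{-1/2}\!\int\!\sqrt{\xi^2+a^2}\,e^{-6a\xi^2t}\,d\xi\lesssim(t^{-3/2}+t^{-1})e^{-2a(4-a^2)t}$. For the mixed-norm bound \eqref{eq:lemfs3} I would instead take the partial Fourier transform in $y$, so that for each $\eta$ the flow is the one-dimensional multiplier $m_{t,\eta}(D_x)$ with $m_{t,\eta}(\xi)=(i\xi-a)e^{t\,ip(\xi+ia,\eta)}$; then Young in $x$ ($\|m_{t,\eta}(D_x)h\|_{L^2_x}\le\|m_{t,\eta}\|_{L^2_\xi}\|h\|_{L^1_x}$), Plancherel in $y$, and Minkowski's integral inequality to push the $x$-integration past the $\eta$-integration reduce it to $\sup_\eta\|m_{t,\eta}\|_{L^2_\xi}\lesssim\bigl(\int(\xi^2+a^2)e^{-2a(4-a^2)t-6a\xi^2t}\,d\xi\bigr)^{1/2}\lesssim(1+t^{-3/4})e^{-a(4-a^2)t}$.

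The main difficulty I anticipate is the anisotropic coupling of $x$ and $y$ coming from the nonlocal term $-3\pd_x^{-1}\pd_y^2$: the transverse ``diffusion rate'' $3a/(\xi^2+a^2)$ in $\Re(ip(\xi+ia,\eta))$ depends on the longitudinal frequency $\xi$, so $K_t$ does not factor into a one-dimensional KdV kernel times a one-dimensional heat kernel. Consequently the order of operations in \eqref{eq:lemfs3} matters: one must disentangle the $\xi$- and $\eta$-integrations — by integrating out $\eta$ first (which exposes the Gaussian $e^{-ay^2/(12t)}$ and the shift $x\mapsto x+y^2/(12t)$) or, equivalently, by transforming in $y$ and treating $\eta$ as a parameter — before any Young/Minkowski step is applied. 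Once this ordering is in place, the remainder is routine bookkeeping of one-dimensional Gaussian integrals, always keeping the uniform prefactor $e^{-a(4-a^2)t}$.
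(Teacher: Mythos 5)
Your proposal is correct and follows essentially the same route as the paper's proof: conjugate by $e^{ax}$, use Plancherel to reduce to bounds on the multiplier $e^{itp(\xi+ia,\eta)}$, invoke the identity $\Im p(\xi+ia,\eta)=a(4-a^2)+3a\xi^2+3a\eta^2/(\xi^2+a^2)$, and then take $L^\infty$ bounds on the multiplier (for \eqref{eq:lemfs1}--\eqref{eq:lemfs2} and the $\pd_x^{-1}\pd_y$ estimate) or $L^2$ bounds on the multiplier combined with Hölder/Young (for \eqref{eq:lemfs3}--\eqref{eq:lemfs4}). The only cosmetic difference is that for the $L^1$-type estimates you phrase the argument on the physical side via Young's inequality and the convolution kernel $K_t$, whereas the paper applies Cauchy–Schwarz directly in the frequency variable and then uses $\|\hat f\|_{L^\infty_\xi}\lesssim\|\mF_yf\|_{L^1_x}$; these are dual formulations of the same computation and yield identical decay rates.
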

\begin{proof}
Let $u(t)$ be a solution to \eqref{eq:LKP0} satisfying the initial condition
$u(0)=f$. Then
$$\hat{u}(t,\xi,\eta)=e^{itp(\xi,\eta)}\hat{f}(\xi,\eta)\,,\quad
p(\xi,\eta)=\xi^3+4\xi-\frac{3\eta^2}{\xi}\,.$$
It follows from Plancherel's theorem that for every $g\in X$,
\begin{equation}
  \label{eq:planchrel}
\|g\|_X^2=\int_{\R^2}e^{2ax}g(x,y)^2dxdy=
\int_{\R^2}|\hat{g}(\xi+ia,\eta)|^2d\xi d\eta\,.
\end{equation}
Making use of \eqref{eq:planchrel} and the fact that
\begin{equation}
  \label{eq:Imp}
\Im p(\xi+ia,\eta)=a(4-a^2)+3a\xi^2+3a\eta^2/(\xi^2+a^2)\,,  
\end{equation}
we have for $j\ge0$,
\begin{align*}
\|\pd_x^je^{t\mL_0}f\|_X\lesssim &
\left\||\xi+ia|^j e^{-t\Im p(\xi+ia,\eta)}|\hat{f}(\xi+ia,\eta)|\right\|_{L^2}
\\ \lesssim & e^{-a(4-a^2)t}\left(\sup_{\xi}(|\xi|+a)^j e^{-3at|\xi|^2}\right)
\|\hat{f}(\cdot+ia,\cdot)\|_{L^2(\R^2)}
\\ \lesssim & e^{-a(4-a^2)t}(1+t^{-j/2})\|f\|_X\,,
\end{align*}
and
\begin{multline*}
\|(\pd_x^{-1}\pd_y)^je^{t\mL_0}f\|_X\lesssim 
\left\|\frac{\eta^j}{(\xi^2+a^2)^{j/2}} 
e^{-t\Im p(\xi+ia,\eta)}|\hat{f}(\xi+ia,\eta)|\right\|_{L^2}
\\ \lesssim  e^{-a(4-a^2)t}\|\hat{f}(\cdot+ia,\cdot)\|_{L^2(\R^2)}
\sup_{\xi\,,\,\eta}\left(\frac{\eta^2}{\xi^2+a^2}\right)^{j/2}e^{-3a\eta^2t/(\xi^2+a^2)}
\\ \lesssim   e^{-a(4-a^2)t}t^{-j/2}\|f\|_X\,.
\end{multline*}
\par

Similarly,
\begin{align*}
\|\pd_x^je^{t\mL_0}f\|_X\lesssim &
\left\|\left\||\xi+ia|^j
e^{-t\Im p(\xi+ia,\eta)}|\hat{f}(\xi+ia,\eta)|\right\|_{L^2_\xi}
\right\|_{L^2_\eta}
\\ \lesssim & e^{-a(4-a^2)t}\left\|(|\xi|+a)^j e^{-3at|\xi|^2}\right\|_{L^2}
\|\hat{f}(\cdot+ia,\cdot)\|_{L^2_\eta L^\infty_\xi}
\\ \lesssim & e^{-a(4-a^2)t}(1+t^{-(2j+1)/4})\|e^{ax}f\|_{L^2_yL^1_x}\,,
\end{align*}
and
\begin{align*}
\|e^{t\mL_0}f\|_X\lesssim &
\left\|e^{-t\Im p(\xi+ia,\eta)}\right\|_{L^2_{\xi,\eta}}
\left\|\hat{f}(\xi+ia,\eta)\right\|_{L^\infty_{\xi,\eta}}
\\ \lesssim & e^{-a(4-a^2)t}
\left\|e^{-3at\xi^2}\|e^{-3at\eta^2/(\xi^2+a^2)}\|_{L^2_\eta}\right\|_{L^2_\xi}
\|\hat{f}(\cdot+ia,\cdot)\|_{L^2_\eta L^\infty_\xi}
\\ \lesssim & e^{-a(4-a^2)t}(t^{-1/2}+t^{-3/4})\|e^{ax}f\|_{L^1(\R^2)}\,.
\end{align*}
This completes the proof of Lemma~\ref{lem:free-semigroup}.
\end{proof}

Combining properties of the linearized Miura transformation
and Lemma~\ref{lem:free-semigroup}, 
we will prove linear decay estimates for non-resonant modes.
\begin{lemma}
\label{cl:nonresonant-low}
Let $a$ and $\eta_*$ be as in Lemma~\ref{lem:Llarge-eta} and let
$\eta_1\in(0,\eta_*)$. Then there exists a positive constant $K$
such that for every $t\ge 0$, $\eta_0\in[-\eta_1,\eta_1]$
and $f\in C_0^\infty(\R^2)$, 
\begin{equation}
\label{eq:nr-low1}
\|e^{t\mL}P_2(\eta_0,\eta_0)f\|_X\le K e^{-a(4-a^2)t}\|f\|_X\,.
\end{equation}
\end{lemma}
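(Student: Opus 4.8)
\emph{The plan} is to conjugate the flow $e^{t\mL}$ to the free flow $e^{t\mL_0}$ by sandwiching it between the linearized Miura transformations $\nabla M_\pm(Q)$, following the idea of \cite{MP}, and then to quote the exponential decay of $e^{t\mL_0}$ on $X$ from \eqref{eq:lemfs1}. Write $g:=P_2(\eta_0,\eta_0)f=P_1(0,\eta_0)f-P_0(\eta_0)f$. Then $(\mF_yg)(\cdot,\eta)=0$ for $|\eta|>\eta_0$, and for a.e.\ $\eta\in[-\eta_0,\eta_0]$ the slice $(\mF_yg)(\cdot,\eta)\in L^2_a(\R)$ is orthogonal, in the $L^2_a$--$L^2_{-a}$ pairing, to $\spann\{g_1^*(\cdot,\eta),g_2^*(\cdot,\eta)\}=\spann\{g^*(\cdot,\eta),g^*(\cdot,-\eta)\}$, by the biorthogonality of Lemma~\ref{cl:L*g}; these pairings make sense precisely because $\eta_1<\eta_*$, i.e.\ $\nu(\eta)<a$ on $[-\eta_1,\eta_1]$, so $g^*(\cdot,\pm\eta)$ and $g_M^*(\cdot,\eta)$ lie in $L^2_{-a}(\R)$.

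First I would invert $\nabla M_+(Q)$ on $g$ slicewise via Lemma~\ref{lem:RangeM+}: orthogonality to $g^*(\cdot,-\eta)$ puts $(\mF_yg)(\cdot,\eta)$ in $\mathrm{Range}(\mM_+(\eta))$, and the extra orthogonality to $g^*(\cdot,\eta)$ singles out the unique preimage $v_0(\cdot,\eta)\in H^1_a(\R)$ with $\mM_+(\eta)v_0(\cdot,\eta)=(\mF_yg)(\cdot,\eta)$, $\int_\R v_0(\cdot,\eta)\overline{g_M^*(\cdot,\eta)}\,dx=0$, and $\|v_0(\cdot,\eta)\|_{H^1_a}+|\eta|\|\pd_x^{-1}v_0(\cdot,\eta)\|_{L^2_a}\lesssim\|(\mF_yg)(\cdot,\eta)\|_{L^2_a}$ with a constant uniform in $|\eta|\le\eta_1$. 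Undoing the $y$-Fourier transform gives $v_0\in X_M$, $y$-Fourier-supported in $[-\eta_0,\eta_0]$, with $\nabla M_+(Q)v_0=g$ and $\|v_0\|_{X_M}\lesssim\|g\|_X\lesssim\|f\|_X$ (using that $P_1(0,\eta_0)$ is a contraction on $X$ and the $X$-bound on $P_0(\eta_0)$ in Lemma~\ref{lem:p0}).

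Next I would set $v(t)=e^{t\mL_M}v_0\in C([0,\infty);X_M)$. Lemma~\ref{lem:relation2}(i), applied with $u(t)=e^{t\mL}g$ (a solution of \eqref{eq:LKP} with $u(0)=g=\nabla M_+(Q)v_0$), gives $e^{t\mL}g=\nabla M_+(Q)v(t)$; Lemma~\ref{lem:relation2}(ii), applied with $\nabla M_-(Q)v_0\in X$, gives $\nabla M_-(Q)v(t)=e^{t\mL_0}\nabla M_-(Q)v_0$. Since the coefficients of $\mL_M$ depend only on $x$, $v(t)$ stays $y$-Fourier-supported in $[-\eta_0,\eta_0]$; and since $\mL_M(\eta)^*g_M^*(\cdot,\eta)=\lambda(-\eta)g_M^*(\cdot,\eta)$ by \eqref{eq:lem-mkp2}, the scalar $\phi_\eta(t):=\int_\R v(t,\cdot,\eta)\overline{g_M^*(\cdot,\eta)}\,dx$ satisfies $\phi_\eta'=\overline{\lambda(-\eta)}\phi_\eta$, hence $\phi_\eta\equiv0$. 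Therefore Lemma~\ref{lem:RangeM-} applies to each slice of $v(t)$ and, after integrating in $\eta$ by Plancherel, yields $\|v(t)\|_{X_M}\lesssim\|\nabla M_-(Q)v(t)\|_X$. Using that $\nabla M_\pm(Q):X_M\to X$ are bounded and invoking \eqref{eq:lemfs1},
\begin{align*}
\|e^{t\mL}P_2(\eta_0,\eta_0)f\|_X
&=\|\nabla M_+(Q)v(t)\|_X\lesssim\|v(t)\|_{X_M}\lesssim\|\nabla M_-(Q)v(t)\|_X \\
&=\|e^{t\mL_0}\nabla M_-(Q)v_0\|_X\le Ce^{-a(4-a^2)t}\|\nabla M_-(Q)v_0\|_X
\lesssim e^{-a(4-a^2)t}\|f\|_X,
\end{align*}
with all constants depending only on $a$ and $\eta_1$, which is the claim.

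\emph{The main obstacle} is twofold. First, one needs the bounds for $\mM_\pm(\eta)^{-1}$ on the relevant codimension-one subspaces to be \emph{uniform in $\eta$} on $[-\eta_1,\eta_1]$; this is exactly where the hypothesis $\eta_1<\eta_*$ (so that $\nu(\eta_1)<a$) is used, and it is supplied by Lemmas~\ref{lem:RangeM-} and \ref{lem:RangeM+}. Second, one must ensure the orthogonality $v(t,\cdot,\eta)\perp g_M^*(\cdot,\eta)$ persists for all $t$; without it, $\mM_-(\eta)v(t,\cdot,\eta)$ would control $v(t,\cdot,\eta)$ only up to the exponentially growing resonant mode $g_M(\cdot,\eta)$ and the estimate would collapse. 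The ODE for $\phi_\eta$ above handles this, once one checks that the smoothing of $e^{t\mL_M}$ for $t>0$ makes the integration by parts $\int_\R(\mL_M(\eta)v)\overline{g_M^*}=\int_\R v\,\overline{\mL_M(\eta)^*g_M^*}$ legitimate and that continuity at $t=0$ transfers the initial orthogonality.
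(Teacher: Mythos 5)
Your argument is correct and follows essentially the same route as the paper: factor $e^{t\mL}P_2(\eta_0,\eta_0)$ through the mKP-II flow $e^{t\mL_M}$ via the slicewise Miura inverses from Lemmas~\ref{lem:RangeM-}--\ref{lem:RangeM+} (uniform in $|\eta|\le\eta_1<\eta_*$), then transfer the free-flow decay \eqref{eq:lemfs1} back through the resulting norm equivalences. The only difference is bookkeeping: the paper obtains persistence of the $g_M^*$-orthogonality by applying Lemma~\ref{lem:RangeM+} to $u_\eta(t)$ for every $t$ (using $P_0(\eta_0)u(t)=0$, i.e.\ invariance under the spectral projection), whereas you derive it directly from the closed ODE $\phi_\eta'=\overline{\lambda(-\eta)}\phi_\eta$ — these are equivalent.
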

\begin{proof}
Since $C_0^\infty(\R)$ is dense in $X$,
it suffices to prove \eqref{eq:nr-low1} for $f\in C_0^\infty(\R)$.
\par
Let $u(t)=e^{t\mL}P_2(\eta_0,\eta_0)f$.
Since $P_0(\eta)$ is a spectral projection associated with $\mL$
(Lemma~\ref{lem:p0} (\ref{it:p05}), we have $P_0(\eta_0)u(t)=0$
for every $t\ge0$.
Let $u_\eta(t,x)=(\mF_yu)(t,x,\eta)$. Then
$u_\eta(t,\cdot)\in L^2_a(\R)$ and $\int_\R u_\eta(t,x)\overline{g(x,\pm\eta)}\,dx=0$
for \textit{a.e.} $\eta\in[-\eta_0,\eta_0]$.
Hence it follows from Lemma~\ref{lem:RangeM+} that
there exists $v_\eta(t,x)\in H^1_a(\R)$ 
such that for $t\ge0$ and \textit{a.e.} $\eta\in[-\eta_0,\eta_0]$, 
\begin{gather}
\notag u_\eta(t,\cdot)=\mM_+(\eta)v_\eta(t,\cdot)\,,
\\  \notag
\int_\R v_\eta(t,x)\overline{g_M^*(x,\eta)}\,dx=0\,,
\\ \label{eq:ueta-veta}
(C_1\|u_\eta(t)\|_{L^2_a(\R)})^2
\le \|v_\eta(t)\|_{H^1_a(\R)}^2+(|\eta|\|\pd_x^{-1}v_\eta(t)\|_{L^2_a(\R)})^2
\le (C_2\|u_\eta(t)\|_{L^2_a(\R)})^2\,,
\end{gather}
where $C_1$ and $C_2$ are positive constants depending only on $a$ and $\eta_1$.
Moreover, 
$$v(t)=\frac{1}{\sqrt{2\pi}}\int_{-\eta_0}^{\eta_0}
v_\eta(t,x,\eta)e^{iy\eta}\,d\eta$$
satisfies $u(t)=\nabla M_+(Q)v(t)$ for every $t\ge0$.
Hence it follows from Lemma~\ref{lem:relation2} that $v(t)$
is a solution of \eqref{eq:LMKP} satisfying.
Moreover, we have for $t\ge0$,
\begin{equation}
\label{eq:vorth}
\int_\R (\mF_yv)(t,x,\eta)\overline{g_M^*(x,\eta)}e^{-iy\eta}\,dx=0\,.
\end{equation}
Integrating \eqref{eq:ueta-veta} over $[-\eta_0,\eta_0]$ and using
Plancherel's theorem, we obtain
\begin{equation}
 \label{eq:equiv-u-v}
C_1\|u(t)\|_X\le \|v(t)\|_{X_M}\le C_2\|u(t)\|_X\,.
\end{equation}
\par
Let $\tilde{u}(t)=\nabla M_-(Q)v(t)$ and
$\tilde{u}_\eta(t,x)=(\mF_y\tilde{u})(t,x)$.
Then $\tilde{u}_\eta(t)=\mM_-(\eta)v_\eta(t)$ and it follows from 
Lemma~\ref{lem:relation} that $\tilde{u}(t)$ is a solution to \eqref{eq:LKP0}.
Using Lemma~\ref{lem:RangeM-}, we can prove that for $t\ge0$,
\begin{equation}
  \label{eq:equiv-tu-v}
C_1'\|\tilde{u}(t)\|_X\le \|v(t)\|_{X_M}\le C_2'\|\tilde{u}(t)\|_X\,,
\end{equation}
in the same way as \eqref{eq:equiv-u-v}.
Here  $C_1'$ and $C_2'$ are positive constants depending only on $a$ and
$\eta_1$.
By Lemma~\ref{lem:free-semigroup},
\begin{equation}
  \label{eq:u0-est}
  \|\tilde{u}(t)\|_X \le C\|\tilde{u}(0)\|_Xe^{-a(4-a^2)t}\,.
\end{equation}
Combining \eqref{eq:equiv-u-v}, \eqref{eq:equiv-tu-v} and \eqref{eq:u0-est},
we obtain \eqref{eq:nr-low1}.
Thus we complete the proof.
\end{proof}

\begin{lemma}
\label{cl:high}
Let $a$ and $\eta_*$ be as in Lemma~\ref{lem:Llarge-eta} and 
let $\eta_2>\eta_*$. Then there exists a positive constant $K$
such that for every $t\ge 0$ and $f\in C_0^\infty(\R^2)$, 
\begin{equation}
\label{eq:nr-high}
\|e^{t\mL}P_1(\eta_2,\infty)f\|_X\le K e^{-a(4-a^2)t}\|f\|_X\,.
\end{equation}
\end{lemma}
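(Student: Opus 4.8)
The plan is to mimic the proof of Lemma~\ref{cl:nonresonant-low} but to use the high-frequency estimates for the Miura operators from Lemma~\ref{lem:Mpmlarge} rather than those from Lemmas~\ref{lem:RangeM-} and \ref{lem:RangeM+}. As before it suffices to prove \eqref{eq:nr-high} for $f\in C_0^\infty(\R^2)$. Set $u(t)=e^{t\mL}P_1(\eta_2,\infty)f$ and write $u_\eta(t,x)=(\mF_yu)(t,x,\eta)$, which is supported in $|\eta|\ge\eta_2>\eta_*$, i.e.\ where $\nu(\eta)>a$. For each such $\eta$, Lemma~\ref{lem:Mpmlarge}~\eqref{Mpml2} produces a unique $v_\eta(t,\cdot)\in H^1_a(\R)$ with $u_\eta(t,\cdot)=\mM_+(\eta)v_\eta(t,\cdot)$ and the two-sided bound
\begin{equation*}
\frac{C_1}{\la\eta\ra}\|u_\eta(t)\|_{L^2_a(\R)}
\le \|v_\eta(t)\|_{H^1_a(\R)}+|\eta|\,\|\pd_x^{-1}v_\eta(t)\|_{L^2_a(\R)}
\le \frac{C}{\nu(\eta)-a}\|u_\eta(t)\|_{L^2_a(\R)}\,.
\end{equation*}
Here the lower bound is immediate from the definition of $\mM_+(\eta)$ as a first-order operator (one gains at most a factor $\la\eta\ra$), and the upper bound is \eqref{eq:Mpml2}. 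Assembling $v(t)=(2\pi)^{-1/2}\int_{|\eta|\ge\eta_2}v_\eta(t,x)e^{iy\eta}\,d\eta$, Lemma~\ref{lem:relation2} gives $u(t)=\nabla M_+(Q)v(t)$ for all $t\ge0$, and then Lemma~\ref{lem:relation} shows $\tilde u(t):=\nabla M_-(Q)v(t)$ solves the free equation \eqref{eq:LKP0} with, fiberwise, $\tilde u_\eta(t)=\mM_-(\eta)v_\eta(t)$.

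The next step is to bound $v$ and $\tilde u$ against each other in $X$. For the upper bound on $\|\tilde u_\eta\|_{L^2_a}$ one just uses that $\mM_-(\eta)$ is first order, so $\|\tilde u_\eta(t)\|_{L^2_a}\lesssim \la\eta\ra\big(\|v_\eta(t)\|_{H^1_a}+|\eta|\|\pd_x^{-1}v_\eta(t)\|_{L^2_a}\big)$. For the lower bound, apply Lemma~\ref{lem:Mpmlarge}~\eqref{Mpml1}: since $\tilde u_\eta=\mM_-(\eta)v_\eta$, inverting $\mM_-(\eta)$ gives $\|v_\eta(t)\|_{H^1_a}+|\eta|\|\pd_x^{-1}v_\eta(t)\|_{L^2_a}\le C(\nu(\eta)-a)^{-1}\|\tilde u_\eta(t)\|_{L^2_a}$. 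Chaining these with the displayed inequalities for $v_\eta\leftrightarrow u_\eta$ yields, for $|\eta|\ge\eta_2$,
\begin{equation*}
\|u_\eta(t)\|_{L^2_a(\R)}\le C(\eta_2)\,\|\tilde u_\eta(t)\|_{L^2_a(\R)}\,,
\end{equation*}
where the constant depends on $\eta_2$ through the fixed gap $\nu(\eta_2)-a>0$ and through $\la\eta\ra/(\nu(\eta)-a)$, which is bounded on $|\eta|\ge\eta_2$ because $\nu(\eta)\sim|\eta|$ as $|\eta|\to\infty$ while near $\eta=\pm\eta_2$ one has $\nu(\eta)-a$ bounded below. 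Squaring, integrating over $|\eta|\ge\eta_2$ and using Plancherel gives $\|u(t)\|_X\le C(\eta_2)\|\tilde u(t)\|_X$; running the same chain in the other direction (now with the roles reversed, using the lower bound on $\|\tilde u_\eta\|_{L^2_a}$ in terms of $v_\eta$ and the upper bound on $v_\eta$ in terms of $u_\eta$) gives $\|\tilde u(t)\|_X\le C(\eta_2)\|u(t)\|_X$, so in particular $\|\tilde u(0)\|_X\le C(\eta_2)\|f\|_X$.

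Finally, since $\tilde u(t)$ solves \eqref{eq:LKP0}, Lemma~\ref{lem:free-semigroup}, estimate \eqref{eq:lemfs1}, gives $\|\tilde u(t)\|_X\le Ce^{-a(4-a^2)t}\|\tilde u(0)\|_X$. Combining this with the two equivalences above yields
\begin{equation*}
\|e^{t\mL}P_1(\eta_2,\infty)f\|_X=\|u(t)\|_X\le C(\eta_2)\,\|\tilde u(t)\|_X
\le C(\eta_2)e^{-a(4-a^2)t}\|\tilde u(0)\|_X\le K e^{-a(4-a^2)t}\|f\|_X\,,
\end{equation*}
which is \eqref{eq:nr-high}. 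The only genuinely delicate point is the uniformity of the comparison constants as $|\eta|\to\infty$: the bounds from Lemma~\ref{lem:Mpmlarge} degenerate like $(\nu(\eta)-a)^{-1}$, which is harmless (it tends to $0$), but one must make sure the factors of $\la\eta\ra$ coming from the first-order operators $\mM_\pm(\eta)$ are absorbed — this works precisely because $\nu(\eta)$ grows linearly in $|\eta|$, so $\la\eta\ra(\nu(\eta)-a)^{-1}$ stays bounded on $\{|\eta|\ge\eta_2\}$. Everything else is a verbatim repetition of the argument in Lemma~\ref{cl:nonresonant-low}, and since $f$ (hence $u$) is supported away from the resonant band $[-\eta_*,\eta_*]$, the spectral projection $P_0$ plays no role and no orthogonality conditions need to be imposed.
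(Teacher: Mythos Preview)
Your approach is exactly what the paper intends: it says only that one proves Lemma~\ref{cl:high} ``in exactly the same way'' as Lemma~\ref{cl:nonresonant-low}, replacing Lemmas~\ref{lem:RangeM-} and \ref{lem:RangeM+} by Lemma~\ref{lem:Mpmlarge}, and you have carried this out correctly in outline.

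There is, however, a slip in your uniformity argument at the end. You claim $\nu(\eta)\sim|\eta|$ as $|\eta|\to\infty$, but in fact $\nu(\eta)=\Re\sqrt{1+i\eta}-1\sim|\eta|^{1/2}/\sqrt{2}$, so $\la\eta\ra/(\nu(\eta)-a)\sim|\eta|^{1/2}\to\infty$. Fortunately this does not matter, because the factor $\la\eta\ra$ you inserted in the ``trivial'' direction is spurious: since $\mM_\pm(\eta)v=\pm v'+i\eta\pd_x^{-1}v-2Qv$, one has directly
\[
\|\mM_\pm(\eta)v\|_{L^2_a}\le \|v\|_{H^1_a}+|\eta|\,\|\pd_x^{-1}v\|_{L^2_a}+2\|v\|_{L^2_a}
\le 3\bigl(\|v\|_{H^1_a}+|\eta|\,\|\pd_x^{-1}v\|_{L^2_a}\bigr),
\]
with constant independent of $\eta$. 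Chaining this with \eqref{eq:Mpml1}--\eqref{eq:Mpml2} gives $\|u_\eta\|_{L^2_a}\le 3C(\nu(\eta)-a)^{-1}\|\tilde u_\eta\|_{L^2_a}$ and vice versa, and $(\nu(\eta)-a)^{-1}\le(\nu(\eta_2)-a)^{-1}$ uniformly on $|\eta|\ge\eta_2$. So your two errors cancel, but you should remove both: drop the $\la\eta\ra$ and the incorrect claim about the growth of $\nu$.
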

Using Lemma~\ref{lem:Mpmlarge} instead of Lemmas~\ref{lem:RangeM-}
and \ref{lem:RangeM+}, we can prove Lemma~\ref{cl:high} in exactly
the same way as Lemma~\ref{cl:nonresonant-low}. Thus we omit the proof.

\par
Middle frequency resonant modes are exponentially stable.
We can obtain decay estimates of these modes by a direct computation.
\begin{lemma}
  \label{cl:decay-resonant}
Let $a$ and $\eta_*$ be as in Lemma~\ref{lem:Llarge-eta}.
Let $\eta_0$ and $\eta_1$ be positive numbers satisfying
$0<\eta_0<\eta_1<\eta_*$. Then for every $f\in X$,
$$\|e^{t\mL}(P_0(\eta_1)-P_0(\eta_0))f\|_{X} \le  C(1+\eta_0^{-1})e^{\Re \lambda(\eta_0)t}\|f\|_X\,,$$
where $C$ is a constant depending only on $a$ and $\eta_1$.
\end{lemma}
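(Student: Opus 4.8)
The plan is to fibre the problem over the transverse frequency $\eta$ and reduce everything to a uniform bound on a single $2\times2$ matrix exponential. Writing $\vec a(\eta)=(a_1(\eta),a_2(\eta))$ as in the definition of $P_0$, and using $\eta_0<\eta_1$, one has
$$(P_0(\eta_1)-P_0(\eta_0))f(x,y)=\frac{1}{2\pi}\sum_{k=1,2}\int_{\eta_0\le|\eta|\le\eta_1}a_k(\eta)g_k(x,\eta)e^{iy\eta}\,d\eta.$$
Since the potential of $\mL$ is independent of $y$, by Lemma~\ref{lem:p0} the semigroup $e^{t\mL}$ commutes with $\mF_y$ and with $P_0$, and on each fibre it acts, in the basis $\{g_1(\cdot,\eta),g_2(\cdot,\eta)\}$, through the matrix $e^{tA(\eta)}$, where by Lemma~\ref{cl:L*g}
$$A(\eta)=\begin{pmatrix}\Re\lambda(\eta) & -\eta\,\Im\lambda(\eta)\\[2pt] \eta^{-1}\Im\lambda(\eta) & \Re\lambda(\eta)\end{pmatrix}.$$
Thus it suffices to have three ingredients: (i) the uniform bound $\|g_k(\cdot,\eta)\|_{L^2_a(\R)}\lesssim 1$ for $|\eta|\le\eta_1$, which is immediate from \eqref{eq:gk-bound2} and $\nu(\eta_1)<a$; (ii) the bound $|a_k(\eta)|\lesssim\|(\mF_yf)(\cdot,\eta)\|_{L^2_a(\R)}$, which was already established in the proof of Lemma~\ref{lem:p0}; and (iii) the matrix estimate $\|e^{tA(\eta)}\|\le C(1+\eta_0^{-1})e^{\Re\lambda(\eta_0)t}$ for $\eta_0\le|\eta|\le\eta_1$ and $t\ge0$. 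Granting these, Plancherel's theorem in $y$ gives
$$\|e^{t\mL}(P_0(\eta_1)-P_0(\eta_0))f\|_X^2\lesssim\int_{\eta_0\le|\eta|\le\eta_1}\bigl|e^{tA(\eta)}\vec a(\eta)\bigr|^2\,d\eta\lesssim(1+\eta_0^{-1})^2e^{2\Re\lambda(\eta_0)t}\,\|f\|_X^2,$$
which is the claim.

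For (iii) the plan is to exploit the algebraic structure of $A(\eta)$. Write $A(\eta)=\Re\lambda(\eta)I+B(\eta)$ with $B(\eta)$ the off-diagonal part. Then $B(\eta)^2=-\Im\lambda(\eta)^2 I$, and since $\Im\lambda(\eta)=4\eta\,\Re\beta(\eta)\ne0$ on the annulus (the principal branch gives $\Re\beta(\eta)\ge1$), one obtains the closed form
$$e^{tB(\eta)}=\cos\!\bigl(t\,\Im\lambda(\eta)\bigr)I+\frac{\sin\!\bigl(t\,\Im\lambda(\eta)\bigr)}{\Im\lambda(\eta)}\,B(\eta),$$
so that $\|e^{tA(\eta)}\|\le e^{\Re\lambda(\eta)t}\bigl(1+\|B(\eta)\|/|\Im\lambda(\eta)|\bigr)$ using $|\sin s|\le|s|$. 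The operator norm of the off-diagonal matrix $B(\eta)$ equals $|\Im\lambda(\eta)|\max(|\eta|,|\eta|^{-1})$, hence $\|e^{tA(\eta)}\|\le e^{\Re\lambda(\eta)t}\bigl(1+\max(|\eta|,|\eta|^{-1})\bigr)$. Finally, from $\lambda(\eta)=4i\eta\sqrt{1+i\eta}$ one checks that $\Re\lambda(\eta)=\Re\lambda(|\eta|)$ is negative and strictly decreasing in $|\eta|$ on $(0,\infty)$ (its modulus increases while its argument moves from $\pi/2$ toward $3\pi/4$). Therefore, throughout $\eta_0\le|\eta|\le\eta_1$ we have $\Re\lambda(\eta)\le\Re\lambda(\eta_0)$ and $\max(|\eta|,|\eta|^{-1})\le\eta_1+\eta_0^{-1}$; since $t\ge0$, both positive factors are dominated by their values at $\eta_0$, giving (iii) with a constant depending only on $a$ and $\eta_1$.

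The only delicate point is (iii): the matrix $A(\eta)$ becomes strongly non-normal as $\eta\to0$, where it degenerates precisely into the $2\times2$ Jordan block of the KdV linearized operator $\mL(0)$ at the eigenvalue $0$ — the translation mode $g_2(\cdot,0)=-\tfrac12\varphi'$ lies in the kernel and $g_1(\cdot,0)$ is the generalized eigenvector, with $\mL(0)g_1(\cdot,0)=4g_2(\cdot,0)$. Consequently $e^{tA(\eta)}$ genuinely grows like $\eta^{-1}$ over the time scale $t\sim\eta^{-1}$ before the dissipation $\Re\lambda(\eta)\sim-2\eta^2$ takes over, and the factor $1+\eta_0^{-1}$ is sharp. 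The identity $B(\eta)^2=-\Im\lambda(\eta)^2I$ is what isolates the non-normal contribution and extracts exactly $\max(|\eta|,|\eta|^{-1})$ with a $t$-uniform constant; the remaining steps are a routine use of Plancherel's theorem together with the pointwise bounds on $g_k$ and $g_k^*$ already recorded in the proof of Lemma~\ref{lem:p0}.
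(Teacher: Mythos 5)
Your proof is correct, but it takes a genuinely different route from the paper's. The paper diagonalizes the non-normality away via a weighted energy
$$E_a(t)=\int_{\eta_0\le|\eta|\le\eta_1}\bigl(|a_1(t,\eta)|^2+\eta^2|a_2(t,\eta)|^2\bigr)\,d\eta\,,$$
for which — thanks to the cancellation of the cross terms produced by the system \eqref{eq:dif-a1}--\eqref{eq:dif-a2} — one gets the exact Gronwall inequality $\partial_tE_a\le 2\Re\lambda(\eta_0)E_a$, hence $E_a(t)\le e^{2\Re\lambda(\eta_0)t}E_a(0)$. The factor $1+\eta_0^{-1}$ then enters only when translating from $E_a^{1/2}$ back to the $X$-norm, since the $X$-norm of $\sum a_kg_k$ controls $(\int(|a_1|^2+|a_2|^2))^{1/2}$ and $|a_2|\le\eta_0^{-1}\cdot\eta|a_2|$ on the annulus. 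You instead compute the fibrewise matrix exponential $e^{tA(\eta)}$ in closed form by exploiting $B(\eta)^2=-\Im\lambda(\eta)^2I$, which gives $e^{tB}=\cos(t\Im\lambda)I+\frac{\sin(t\Im\lambda)}{\Im\lambda}B$ and thus a $t$-uniform operator norm bound $1+\max(|\eta|,|\eta|^{-1})$ directly in the original $(a_1,a_2)$-coordinates. The two arguments encode the same non-normal $2\times2$ algebra, but the paper buries it in the choice of weight while you make the matrix exponential explicit; both are clean and yield the same sharp $\eta_0^{-1}$ dependence.

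One small slip in wording: the inequality $\|e^{tB}\|\le 1+\|B\|/|\Im\lambda|$ uses $|\cos|\le1$ and $|\sin|\le1$, not $|\sin s|\le|s|$ (which would give the time-dependent bound $1+t\|B\|$, not what you want). The conclusion you actually wrote down is the right one.
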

\begin{proof}
Let $a_k(t,\eta)=\int_\R(\mF_yu)(t,x,\eta)\overline{g_k^*(x,\eta)}e^{-iy\eta}\,dx$ for
$k=1$, $2$ and let
$$E_a(t,\eta_0,\eta_1)=\int_{\eta_0\le |\eta|\le \eta_1}
(|a_1(t,\eta)|^2+\eta^2|a_2(t,\eta)|^2)d\eta\,.$$
Since $u(t)$ is a solution of \eqref{eq:LKP}, it follows from Lemma~\ref{cl:L*g} that
\begin{equation}
  \label{eq:dif-a1}
\begin{split}
\pd_ta_1(t,\eta)=&\int_\R  \mL(\eta)(\mF_yu)(t,x,\eta)\overline{g_1^*(x,\eta)}\,dx
\\=& \Re\lambda(\eta) a_1-\eta\Im\lambda(\eta)a_2\,,
\end{split}  
\end{equation}
\begin{equation}
  \label{eq:dif-a2}
\begin{split}
\pd_ta_2(t,\eta)=&\int_\R  \mL(\eta)(\mF_yu)(t,x,\eta)\overline{g_2^*(x,\eta)}\,dx
\\=& \eta^{-1}\Im\lambda(\eta) a_1+\Re\lambda(\eta)a_2\,,
\end{split}  
\end{equation}
Using \eqref{eq:dif-a1}, \eqref{eq:dif-a2} and the fact that $\Re\lambda(\eta)$ is even
and monotone decreasing for $\eta\ge0$,
\begin{align*}
\pd_tE_a(t,\eta_0,\eta_1)=& 2\int_{\eta_0\le |\eta|\le \eta_1}
\Re\lambda(\eta)(|a_1(t,\eta)|^2+\eta^2|a_2(t,\eta)|^2)d\eta
\\ \le & 2\Re\lambda(\eta_0)E_a(t,\eta_0,\eta_1)\,.
\end{align*}
Thus we have for $t\ge0$,
\begin{equation}
  \label{eq:Ea-bound}
E_a(t,\eta_0,\eta_1)\le E_a(0,\eta_0,\eta_1)e^{2\Re\lambda(\eta_0)t}\,.
\end{equation}
\par

As in the proof of Lemma~\ref{lem:p0}, we have
\begin{equation}
  \label{eq:f-bound3}
 \|e^{t\mL}(P_0(\eta_1)-P_0(\eta_0))f\|_{X}
\le C_1(1+\eta_0^{-1}) E_a(t,\eta_0,\eta_1)^{1/2}\,,
\end{equation}
\begin{align}
\label{eq:f-bound4}
E_a(0,\eta_0,\eta_1)^{1/2} \le C_2 \|f\|_X\,,
\quad E_a(0,\eta_0,\eta_1)^{1/2} \le C_2 \|e^{ax}f\|_{L^1_xL^2_y}\,,
\end{align}
where $C_1$ and $C_2$ are constants depending only on $a$ and $\eta_1$.
By \eqref{eq:Ea-bound},
\begin{equation}
  \label{eq:f-bound6}
E_a(t,\eta_0,\eta_1) \le  e^{2\Re\lambda(\eta_0)t} E_a(0,\eta_0,\eta_1)\,.
\end{equation}
Combining \eqref{eq:f-bound3}--\eqref{eq:f-bound6}, we obtain
Lemma~\ref{cl:decay-resonant}.
Thus we complete the proof.
\end{proof}

Now we are in position to prove Proposition~\ref{prop:semigroup-p2}.
\begin{proof}[Proof of Proposition~\ref{prop:semigroup-p2}]
Let $a_1$, $a_2$, $\eta_1$ and $\eta_2$ be positive numbers satisfying
$a_1<\nu(\eta_1)<a<\nu(\eta_2)<a_2$. 
Note that $\eta_0<\eta_1<\eta_*<\eta_2$, where $\eta_*$ is a root of
$\nu(\eta)=a$. Since
$$P_2(\eta_0,\infty)=P_2(\eta_1,\eta_1)+P_0(\eta_1)-P_0(\eta_0)
+P_1(\eta_1,\eta_2)+P_1(\eta_2,\infty)\,,$$
it follows from Lemmas~\ref{cl:nonresonant-low}, \ref{cl:high}
and \ref{cl:decay-resonant} that
\begin{equation}
  \label{eq:sem-pf1}
\|e^{t\mL}(P_2(\eta_0,\infty)-P_1(\eta_1,\eta_2))f\|_X \le 
(e^{-a(4-a^2)t}+(1+\eta_0^{-1})e^{\Re\lambda(\eta_0)t})\|f\|_X\,.
\end{equation}
\par
In order to estimate $\|e^{t\mL}P_1(\eta_1,\eta_2)f\|_X$,
we will interpolate the decay estimate of $e^{t\mL}P_1(\eta_1,\eta_2)$
in $L^2(\R^2;e^{2a_jx}dxdy)$ ($j=1$, $2$).
Since $\nu(\eta_2)<a_2$, it follows from Lemmas~\ref{cl:nonresonant-low}
and \ref{cl:decay-resonant} that for $t\ge0$,
\begin{gather*}
\|e^{a_2x}e^{t\mL}P_2(\eta_2,\eta_2)f\|_{L^2(\R^2)}\lesssim
e^{-a_2(4-a_2^2)t}\|e^{a_2x}f\|_{L^2(\R^2)}\,,
\\
\|e^{a_2x}e^{t\mL}(P_0(\eta_2)-P_0(\eta_1))f\|_{L^2(\R^2)}\lesssim
(e^{-a_2(4-a_2^2)t}+\eta_1^{-1}e^{\Re\lambda(\eta_1)t})\|e^{a_2x}f\|_{L^2(\R^2)}\,.
\end{gather*}
Since $P_1(\eta_1,\eta_2)=P_2(\eta_1,\eta_2)+P_0(\eta_2)-P_1(\eta)$,
\begin{equation*}
\|e^{a_2x}e^{t\mL}P_1(\eta_1,\eta_2)f\|_{L^2(\R^2)}
 \lesssim (e^{-a_2(4-a_2^2)t}+e^{\Re\lambda(\eta_1)t})\|e^{a_2x}f\|_{L^2(\R^2)}\,.
\end{equation*}
On the other hand, Lemma~\ref{cl:high} implies that
$$\|e^{a_1x}e^{t\mL}P_1(\eta_1,\eta_2)f\|_{L^2(\R^2)}
\lesssim \|e^{a_1x}f\|_{L^2(\R^2)}\,.$$
Hence it follows from  the complex interpolation theorem that
\begin{equation}
  \label{eq:sem-pf2}
\|e^{t\mL}P_1(\eta_1,\eta_2)f\|_X\lesssim 
\left\{e^{-a_1(4-a_1^2)t}+e^{-a_2(4-a_2^2)t}+(1+\eta_1^{-1})e^{\Re\lambda(\eta_1)t}
\right\}\|f\|_X\,.
\end{equation}
\par
By \eqref{eq:sem-pf1} and \eqref{eq:sem-pf2}, we obtain
\begin{align*}
\|e^{t\mL}(P_2(\eta_0,\infty)f\|_X  \lesssim &
\{e^{-a(4-a^2)t}+(1+\eta_0^{-1})e^{\Re(\lambda(\eta_0)t}\}\|f\|_X
\\ & +\{e^{-a_1(4-a_1^2)t}+e^{-a_2(4-a_2^2)t}
+(1+\eta_1^{-1})e^{\Re\lambda(\eta_1)t}\}\|f\|_X\,.
\end{align*}
Thus we complete the proof of Proposition~\ref{prop:semigroup-p2}.
\end{proof}

\begin{proof}[Proof of Corollary~\ref{cor:semigroup-p2}]
Without loss of generality, we may assume that $M=\infty$.
By the variation of constants formula, we have for any $f\in X$,
\begin{equation}
\label{eq:var1}
\begin{split}
e^{t\mL}P_2(\eta_0,\infty)\pd_xf=& e^{t\mL_0}P_2(\eta_0,\infty)\pd_xf\\
&-6\int_0^t \pd_xe^{(t-s)\mL_0}\left(\varphi e^{s\mL}P_2(\eta_0,\infty)\pd_xf\right)
ds\,.  
\end{split}
\end{equation}
Let $t\in (0,2]$. 
Applying Proposition~\ref{prop:semigroup-p2} and
Lemmas~\ref{lem:p0}, \ref{lem:free-semigroup} to \eqref{eq:var1},
\begin{align*}
\|e^{t\mL}P_2(\eta_0,\infty)\pd_xf\|_X \le & \|e^{t\mL_0}P_2(\eta_0,\infty)\pd_xf\|_X
\\ &   + 6\int_0^t \left\|\pd_x e^{(t-s)\mL_0}
\left(\varphi e^{s\mL}P_2(\eta_0,\infty)\pd_xf\right)\right\|_X
\\ \lesssim &  (1+t^{-1/2})\|f\|_X
+\int_0^t(t-s)^{-1/2}\|e^{s\mL}P_2(\eta_0,\infty)\pd_xf\|_{L^2}\,.
\end{align*}
By Gronwall's inequality, we have
\begin{equation}
  \label{eq:p21}
\|e^{t\mL}P_2(\eta_0,\infty)\pd_xf\|_X \le Ct^{-1/2}\|f\|_X\quad\text{for $t\in(0,2]$,}
\end{equation}
where $C$ is a constant independent of $t\in(0,2]$ and $f\in X$.
\par 
Let $t\ge 2$. Eq.~\eqref{eq:p21} implies that $e^{\mL}P_2(\eta_0,\infty)\pd_x$ is bounded
on $X$. Applying Proposition~\ref{prop:semigroup-p2} to
$e^{t\mL}P_2(\eta_0,\infty)\pd_x=e^{(t-1)\mL}P_2(\eta_0,\infty)e^{\mL}
P_2(\eta_0,\infty)\pd_x$, we have for $t\ge 2$,
\begin{align*}
\|e^{t\mL}P_2(\eta_0,\infty)\pd_x f\|_X  \lesssim e^{-bt}\|f\|_X\,.
\end{align*}
Combining the above with \eqref{eq:p21}, we obtain \eqref{eq:corsp21}.
\par
Using \eqref{eq:lemfs2} and \eqref{eq:lemfs3} and Lemma~\ref{prop:semigroup-p2},
we can prove \eqref{eq:corsp22} in the same way as \eqref{eq:corsp21}.
\end{proof}
\bigskip

\section{Preliminaries}
\label{sec:preliminaries}
To begin with, we will introduce notation of Banach spaces which shall be used
to analyze modulation equations. For an $\eta_0>0$, let
$Y$ and $Z$ be closed subspaces of $L^2(\R)$ defined by
$$Y=\mF^{-1}_\eta Z\quad\text{and}\quad
Z=\{f\in L^2(\R)\mid\supp f \subset[-\eta_0,\eta_0]\}\,,$$
and let $Y_1=\mF^{-1}_\eta Z_1$ and
$Z_1=\{f\in Z \mid\|f\|_{Z_1}:=\|f\|_{L^\infty}<\infty\}$.
\begin{remark}
  \label{rem:smoothness}
We have
\begin{equation}
  \label{eq:H^s-Y}
\|f\|_{H^s}\le (1+\eta_0^2)^{s/2}\|f\|_{L^2}
\quad\text{for any $s\ge0$ and $f\in Y$,}  
\end{equation}
since $\hat{f}$ is $0$ outside of $[-\eta_0,\eta_0]$.
Especially, we have $\|f\|_{L^\infty}\lesssim \|f\|_{L^2}$ for any $f\in Y$.
\par
Let $\wP_1$ be a projection defined by
$\wP_1 f=\mF_\eta^{-1}\mathbf{1}_{[-\eta_0,\eta_0]}\mF_yf$,
where $\mathbf{1}_{[-\eta_0,\eta_0]}(\eta)=1$ for $\eta\in[-\eta_0,\eta_0]$
and $\mathbf{1}_{[-\eta_0,\eta_0]}(\eta)=0$ for $\eta\not\in[-\eta_0,\eta_0]$.
Then $\|\wP_1f\|_{Y_1}\le (2\pi)^{-1/2}\|f\|_{L^1}(\R)$ for any $f\in L^1(\R)$.
In particular, for any $f$, $g\in Y$,
\begin{equation}
  \label{eq:Y1-L1}
\|\wP_1(fg)\|_{Y_1}\le (2\pi)^{-1/2}\|fg\|_{L^1}\le (2\pi)^{-1/2}\|f\|_Y\|g\|_Y\,.
\end{equation}
\end{remark}

In order to estimate modulation parameters $c(t,y)$ and $x(t,y)$,
we will use a linear estimate for solutions to
\begin{equation}
\label{eq:cx-linear}
\frac{\pd u}{\pd t}=A(t)u\,,
\end{equation}
where $A(t)=A_0(D_y)+A_1(t,D_y)$,
$u(t,y)={}^t(u_1(t,y)\,,\, u_2(t,y))$,
$$A_0(D_y)=\begin{pmatrix}a_{11}(D_y) & a_{12}(D_y)\pd_y
\\ a_{21}(D_y)\pd_y & a_{22}(D_y)\end{pmatrix}\,,\quad
A_1(t,D_y)=\begin{pmatrix}b_{11}(t,D_y) & b_{12}(t,D_y)\\
b_{21}(t,D_y) & b_{22}(t,D_y)\end{pmatrix}\,,$$
and $a_{ij}(\eta)$ and $b_{ij}(t,\eta)$
are continuous in $\eta\in[-\eta_0,\eta_0]$ and $t\ge0$.
We denote by $U(t,s)f$  a solution of \eqref{eq:cx-linear} 
satisfying $u(s,y)=f(y)$. Then we have the following.
\begin{lemma}
\label{lem:decay-BB}
Let $k\in\Z_{\ge0}$, $\mu>1/8$. Let $\delta_1$, $\delta_2$, $\kappa$
be positive constants.  Suppose that $a_{ij}(\eta)$, $b_{ij}(t,\eta)$
$(i$, $j=1$, $2)$ satisfy
\begin{equation}
  \label{eq:H}
  \begin{split}
& |a_{11}(\eta)+3\eta^2|\le \delta_1\eta^2\,,
\quad |a_{12}(\eta)-8|\le \delta_1|\eta|\,,\\
& |a_{21}(\eta)-(2+\mu\eta^2)| \le \delta_1\eta^2\,,
\quad |a_{22}(\eta)+\eta^2|\le \delta_1\eta^2\,,\\
& |b_{ij}(t,\eta)|\le \delta_2 e^{-\kappa t}\quad\text{for $j=1$, $2$.}
  \end{split}  \tag{H}
\end{equation}
If $\delta_1$ is sufficiently small, then for every $t\ge s\ge0$ and $f\in Y$,
\begin{align}
\label{eq:BB1}
\|\pd_y^kU(t,s)f\|_Y\le C(1+t-s)^{-k/2}\|f\|_Y\,,\\
\label{eq:BB2}
\|\pd_y^kU(t,s)f\|_Y\le C(1+t-s)^{-(2k+1)/4}\|f\|_{Y_1}\,,
\end{align}
where $C=C(\eta_0)$ is a constant satisfying
$\limsup_{\eta_0\downarrow0}C(\eta_0)<\infty$.
\end{lemma}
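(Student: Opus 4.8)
\emph{Proof strategy.} The plan is to conjugate by the Fourier transform in $y$. For each fixed $\eta\in[-\eta_0,\eta_0]$, \eqref{eq:cx-linear} becomes the linear ODE system $\pd_t\hat u(t,\eta)=\hat A(t,\eta)\hat u(t,\eta)$, where $\hat A(t,\eta)=\hat A_0(\eta)+\hat A_1(t,\eta)$,
\[
\hat A_0(\eta)=\begin{pmatrix} a_{11}(\eta) & i\eta\,a_{12}(\eta)\\ i\eta\,a_{21}(\eta) & a_{22}(\eta)\end{pmatrix},
\qquad |\hat A_1(t,\eta)|\le C\delta_2 e^{-\kappa t}.
\]
Since all Fourier supports lie in the fixed compact set $[-\eta_0,\eta_0]$, $U(t,s)$ is a well-defined bounded operator family on $Y$, and it suffices to establish the pointwise bound
\[
|\hat u(t,\eta)|\le C\,e^{-c\eta^2(t-s)}\,|\hat u(s,\eta)|\qquad(t\ge s\ge0,\ |\eta|\le\eta_0)
\]
with $c>0$ and $C$ independent of $\eta$ and of $s$. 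Granting this, \eqref{eq:BB1} and \eqref{eq:BB2} follow from Plancherel's theorem (writing $\widehat{\pd_y^kU(t,s)f}(\eta)=(i\eta)^k\hat u(t,\eta)$) together with the elementary estimates $\sup_{|\eta|\le\eta_0}|\eta|^{2k}e^{-2c\eta^2\tau}\lesssim(1+\tau)^{-k}$ and $\int_{-\eta_0}^{\eta_0}|\eta|^{2k}e^{-2c\eta^2\tau}\,d\eta\lesssim(1+\tau)^{-(2k+1)/2}$, the latter used with $\|f\|_{Y_1}=\|\hat f\|_{L^\infty}$; for small $\tau$ these are absorbed using that $\eta_0$ is bounded, so the resulting $C(\eta_0)$ stays bounded (indeed tends to $0$ in $k$) as $\eta_0\downarrow0$.

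To prove the pointwise bound I would first analyze the unperturbed symbol $\hat A_0(\eta)$. Its trace is $a_{11}+a_{22}=-4\eta^2+O(\delta_1\eta^2)$ and its discriminant $D(\eta)=(a_{11}-a_{22})^2-4\eta^2a_{12}a_{21}$ satisfies, by \eqref{eq:H}, $D(\eta)=-64\eta^2+\eta^2E(\eta)$ with $|E(\eta)|\le C|\eta|(|\eta|+\delta_1)$ --- here it is crucial that the allowed error in $a_{12}$ is $O(\delta_1|\eta|)$ rather than $O(\delta_1)$. Hence any square root obeys $\sqrt{D(\eta)}=8i|\eta|+O\bigl(\eta^2(\eta_0+\delta_1)\bigr)$, so both eigenvalues $\lambda_\pm(\eta)=\tfrac12(a_{11}+a_{22})\pm\tfrac12\sqrt{D(\eta)}$ satisfy $\Re\lambda_\pm(\eta)\le-2\eta^2+C(\eta_0+\delta_1)\eta^2\le-\eta^2$ once $\delta_1$ and $\eta_0$ are small. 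For $\eta\ne0$ the eigenvectors from the first row, rescaled by $1/\eta$, are $v_\pm(\eta)=\bigl(i\,a_{12}(\eta),\ (\lambda_\pm(\eta)-a_{11}(\eta))/\eta\bigr)$, which extend continuously to $\eta=0$ with linearly independent limits $v_+(0)=(8i,4i)$, $v_-(0)=(8i,-4i)$; thus $P(\eta)=(v_+(\eta)\mid v_-(\eta))$ and $P(\eta)^{-1}$ are uniformly bounded on $[-\eta_0,\eta_0]$. (Alternatively one can bypass diagonalization entirely by using the $\eta$-independent symmetrizer $S=\diag(1,4)$, which symmetrizes the hyperbolic part $i\eta\bigl(\begin{smallmatrix}0&8\\2&0\end{smallmatrix}\bigr)$; the self-adjoint part of $S\hat A_0(\eta)$ is then $\le-c\eta^2$ and the skew part contributes nothing to $\tfrac{d}{dt}\la S\hat u,\hat u\ra$.) Setting $v(t,\eta)=P(\eta)^{-1}\hat u(t,\eta)$ gives $\pd_tv=\Lambda(\eta)v+B(t,\eta)v$ with $\Lambda=\diag(\lambda_+,\lambda_-)$ and $|B(t,\eta)|=|P^{-1}\hat A_1P|\le C\delta_2e^{-\kappa t}$; since $\|e^{(t-s)\Lambda(\eta)}\|\le e^{-c\eta^2(t-s)}$, the variation-of-constants formula shows that $m(t):=e^{c\eta^2(t-s)}|v(t,\eta)|$ obeys $m(t)\le m(s)+C\delta_2\int_s^te^{-\kappa\tau}m(\tau)\,d\tau$, and Gronwall's inequality yields $m(t)\le e^{C\delta_2/\kappa}m(s)$ uniformly in $\eta$ and $s$. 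Undoing the change of variables gives the desired pointwise bound.

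I expect the only genuine obstacle to be the degeneracy at $\eta=0$: the two eigenvalues of $\hat A_0(\eta)$ collide there (the symbol vanishes to second order), so the naive diagonalization is not uniformly bounded near $\eta=0$. The resolution is that, after rescaling the eigenvectors by $1/\eta$, the eigenprojections extend continuously across $\eta=0$ to a nondegenerate limit --- equivalently, the fixed symmetrizer $S=\diag(1,4)$ works --- and one must simultaneously check that all perturbations permitted by \eqref{eq:H} remain $o(\eta^2)$ relative to the dissipative leading term $-2\eta^2$ \emph{uniformly down to} $\eta=0$, which is precisely why the hypotheses carry the weights $\eta^2$ and $|\eta|$ that they do. Everything else --- the Gronwall/energy step and the Fourier-side heat-kernel-type integral estimates --- is routine.
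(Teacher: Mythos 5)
Your proof is correct and follows essentially the same route as the paper: Fourier-conjugate in $y$, diagonalize the leading symbol with an eigensystem $P(\eta)$ that remains uniformly bounded and invertible down to $\eta=0$, show both eigenvalues have real part $\le -c\eta^2$, absorb the $O(\delta_2 e^{-\kappa t})$ perturbation by Gronwall, and convert the pointwise bound $|\hat u(t,\eta)|\lesssim e^{-c\eta^2(t-s)}|\hat u(s,\eta)|$ into \eqref{eq:BB1}--\eqref{eq:BB2} via the Gaussian sup- and $L^1$-estimates. The paper does the same but fixes a reference matrix $A_*(\eta)$ (its entries retain the $\mu\eta^2$ correction in the $(2,1)$ slot, and $\omega(\eta)=\sqrt{16+(8\mu-1)\eta^2}$ already packages the $O(\eta^2)$ part of the discriminant), so that the eigenvalue perturbation is $O(\delta_1\eta^2)$ with no residual $O(\eta^2\cdot\eta^2)$ term; this gives $\Re\lambda^\pm\le(-2+O(\delta_1))\eta^2$ using \emph{only} $\delta_1$-smallness. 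Your version compares against the matrix stripped of the $\mu\eta^2$ term, which leaves an $O(\eta^2)$ remainder in $E(\eta)$ and hence the estimate $\Re\lambda_\pm\le -2\eta^2 + C(\eta_0+\delta_1)\eta^2$; this tacitly asks for $\eta_0$ small in addition to $\delta_1$ small, a (harmless but) slight deviation from the lemma as stated, and is exactly what comparing against $A_*(\eta)$ avoids. Your symmetrizer alternative with $S=\diag(1,4)$, which kills the hyperbolic part of $\tfrac{d}{dt}\la S\hat u,\hat u\ra$ directly, is a clean variant that sidesteps the eigenvector-normalization issue near $\eta=0$ entirely.
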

\begin{proof}
We will prove Lemma~\ref{lem:decay-BB} by the energy method. Let
\begin{equation}
\begin{split}
  \label{eq:A*}
& \omega(\eta)=\sqrt{16+(8\mu-1)\eta^2}\,,\\
& A_*(\eta)=\begin{pmatrix}
-3\eta^2 & 8i\eta \\ i\eta(2+\mu\eta^2) & -\eta^2\end{pmatrix}\,,
\quad \Pi_*(\eta)=
\begin{pmatrix} 8i & 8i \\ \eta+i\omega(\eta) & \eta-i\omega(\eta)
\end{pmatrix}\,.
\end{split}
\end{equation}
The matrix $A_*(\eta)$ has eigenvalues $\lambda_*^\pm=-2\eta^2\pm i\eta\omega$
and  $\Pi_*(\eta)^{-1}A_*(\eta)\Pi_*(\eta)=
\diag(\lambda^+_*(\eta)\,,\,\lambda^-_*(\eta))$.
By the assumption, there exist
eigenvalues $\lambda^\pm(\eta)$ and an eigensystem $\Pi(\eta)$ of $A_0(\eta)$
satisfying for $\eta\in[-\eta_0,\eta_0]$,
\begin{gather*}
|\lambda^\pm(\eta)-\lambda_*^\pm(\eta)|\lesssim \delta_1\eta^2\,,\quad
|\Pi(\eta)-\Pi_*(\eta)|\lesssim \delta_1\,.
\end{gather*}
Let $\Lambda(\eta)=\diag(\lambda^+(\eta), \lambda^-(\eta))$,
$B(t,\eta)=\Pi(\eta)^{-1}A_1(t,\eta)\Pi(\eta)$ and
\begin{equation*}
\mathbf{e}(t,\eta)=\begin{pmatrix}e_+(t,\eta)\\ e_-(t,\eta)\end{pmatrix}
=\Pi(\eta)^{-1}(\mF_yu)(t,\eta)\,.
\end{equation*}
Then \eqref{eq:cx-linear} can be rewritten as
\begin{equation*}
\pd_t  \mathbf{e}(t,\eta)=(\Lambda(\eta)+B(t,\eta))\mathbf{e}(t,\eta)\,.
\end{equation*}
Differentiating the energy function $e(t,\eta):=|e_+(t,\eta)|^2+|e_-(t,\eta)|^2$ with respect to $t$,
we have
\begin{equation}
  \label{eq:Et}
\begin{split}
\pd_t e(t,\eta)=& 2\sum_\pm\Re \lambda_\pm(\eta)|e_\pm(t,\eta)|^2
+2\Re\la B(t,\eta)\mathbf{e}(t,\eta)\,,\, \mathbf{e}(t,\eta)\ra_{\C^2}
\\ \le & (-4+O(\delta_1))\eta^2e(t,\eta)
+C\delta_2 e^{-\kappa t}e(t,\eta)\,,
\end{split}  
\end{equation}
where $C$ is a positive constant and 
$\la \cdot,\cdot\ra_{\C^2}$ is the standard inner product on $\C^2$.
By Gronwall's inequality, there exists a positive constant $c_3$ such that
\begin{equation}
  \label{eq:Et-bound}
  e(t,\eta)\le c_3e(s,\eta)e^{(-4+O(\delta_1))\eta^2(t-s)}
\quad\text{for $t\ge s\ge0$.}
\end{equation}
Since $\|\pd_y^ku(t)\|_Y^2\simeq \int_{|\eta|\le \eta_0}
\eta^{2k}e(t,\eta)d\eta$,
\begin{align*}
\|\pd_y^ku(t)\|_Y^2
\lesssim & \left(\sup_{|\eta|\le \eta_0}\eta^{2k}e^{-4\eta^2(t-s)}\right)
\int_{|\eta|\le \eta_0} e(s,\eta)d\eta
\\ \lesssim & \la t-s\ra^{-k}\|u(s)\|_Y^2
\quad\text{for $t\ge s\ge 0$.}
\end{align*}
Similarly, we have
\begin{align*}
\|\pd_y^ku(t)\|_Y^2\lesssim &
 \sup_{|\eta|\le \eta_0}e(s,\eta)
\int_{|\eta|\le \eta_0} \eta^{2k}e^{(-4+O(\delta_1))\eta^2(t-s)} d\eta
\\ \lesssim & \la t-s\ra^{-(2k+1)/2}\|u(s)\|_{Y_1}^2
\quad\text{for $t\ge s\ge 0$.}
\end{align*}
Thus we complete the proof
\end{proof}

Let $A_*=\mF^{-1}A_*(\eta)\mF$, where $A_*(\eta)$ is a matrix
defined by \eqref{eq:A*}.
For a specific choice of $\mu$, we can express the semigroup
$e^{tA_*}$ by using the kernel $H_t(y)=(4\pi t)^{-1/2}e^{-y^2/4t}$.
\par
\begin{lemma}
  \label{lem:mu=1/8}
Let $\mu=1/8$ and $(f_1,f_2)\in Y\times Y$. Then
\begin{equation}
  \label{eq:expA*}
e^{tA_*}\begin{pmatrix}f_1\\ f_2\end{pmatrix}
=\begin{pmatrix}k_{11}(t,\cdot)*f_1  + k_{12}(t,\cdot)*f_2 \\
k_{21}(t,\cdot)*f_1+ k_{22}(t,\cdot)*f_2 \end{pmatrix}\,,  
\end{equation}
where
\begin{align*}
& k_{11}(t,y)=\left(\frac12+\frac18\pd_y\right)H_{2t}(y+4t)
+\left(\frac12-\frac18\pd_y\right)H_{2t}(y-4t)\,,\\
& k_{12}(t,y)=H_{2t}(y+4t)-H_{2t}(y-4t)\,,\\
& k_{21}(t,y)=\left(\frac{1}{4}-\frac{1}{64}\pd_y^2\right)
\left(H_{2t}(y+4t)-H_{2t}(y-4t)\right)\,,\\
& k_{22}(t,y)=\left(\frac12-\frac18\pd_y\right)H_{2t}(y+4t)
+\left(\frac12+\frac18\pd_y\right)H_{2t}(y-4t)\,.
\end{align*}
Moreover, for every $k\in \Z_{\ge0}$, there exists a positive constant $C$
such that
$$\|\pd_y^ke^{tA_*}\|_{B(Y,Y)}\le C\la t\ra^{-k/2}\,,\quad
\|\pd_y^ke^{tA_*}\|_{B(Y_1,Y)}\le C\la t\ra^{-(2k+1)/4}\,.$$
\end{lemma}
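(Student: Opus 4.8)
The plan is to diagonalize $A_*(\eta)$ explicitly, read off the convolution kernels $k_{ij}$ by translating the resulting Fourier symbols back to physical space, and then obtain the decay estimates directly from the Fourier representation, as in the proofs of Lemmas~\ref{lem:decay-BB} and \ref{lem:free-semigroup}.

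First I would specialize \eqref{eq:A*} to $\mu=1/8$. Then $8\mu-1=0$, so $\omega(\eta)\equiv4$, the eigenvalues collapse to $\lambda_*^\pm(\eta)=-2\eta^2\pm4i\eta$, and $\Pi_*(\eta)=\begin{pmatrix}8i & 8i \\ \eta+4i & \eta-4i\end{pmatrix}$ has constant determinant $\det\Pi_*(\eta)=8i\bigl((\eta-4i)-(\eta+4i)\bigr)=64$, hence $\Pi_*(\eta)^{-1}=\tfrac1{64}\begin{pmatrix}\eta-4i & -8i \\ -(\eta+4i) & 8i\end{pmatrix}$. One checks (or recalls from the setup preceding \eqref{eq:A*}) that $\Pi_*(\eta)^{-1}A_*(\eta)\Pi_*(\eta)=\diag(\lambda_*^+(\eta),\lambda_*^-(\eta))$, so $e^{tA_*(\eta)}=\Pi_*(\eta)\diag\bigl(e^{t\lambda_*^+(\eta)},e^{t\lambda_*^-(\eta)}\bigr)\Pi_*(\eta)^{-1}$.

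Next I would identify the scalar semigroups: $e^{t\lambda_*^\pm(\eta)}=e^{-2t\eta^2}e^{\pm4it\eta}$, and since $\widehat{H_{2t}}(\eta)=(2\pi)^{-1/2}e^{-2t\eta^2}$ while the translation $f(\cdot)\mapsto f(\cdot\pm4t)$ has symbol $e^{\pm4it\eta}$, convolution with $H_{2t}(\cdot\pm4t)$ is exactly the Fourier multiplier $e^{t\lambda_*^\pm(\eta)}$. I would then carry out the $2\times2$ matrix product $\Pi_*(\eta)\diag(\cdot)\Pi_*(\eta)^{-1}$, grouping the entries into symmetric and antisymmetric combinations of $E_\pm:=e^{t\lambda_*^\pm(\eta)}$ and translating multiplication by $i\eta$ into $\pd_y$ and by $\eta^2$ into $-\pd_y^2$; this reproduces $k_{11},\dots,k_{22}$ of \eqref{eq:expA*} (e.g.\ the $(1,1)$-entry is $\tfrac12(E_++E_-)+\tfrac{i\eta}{8}(E_+-E_-)$, which is the symbol of $(\tfrac12+\tfrac18\pd_y)H_{2t}(y+4t)+(\tfrac12-\tfrac18\pd_y)H_{2t}(y-4t)$). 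Since multiplication by a bounded symbol does not enlarge the Fourier support, $e^{tA_*}$ maps $Y$ into $Y$ and the convolution identity \eqref{eq:expA*} makes sense on $Y$.

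Finally, for the operator-norm bounds I would argue in Fourier space: for $f\in Y$, $\widehat{\pd_y^k e^{tA_*}f}(\eta)=(i\eta)^k e^{tA_*(\eta)}\hat f(\eta)$ is supported in $|\eta|\le\eta_0$, and every entry of $e^{tA_*(\eta)}$ is bounded by $C(1+|\eta|)e^{-2t\eta^2}$; by Plancherel, $\|\pd_y^k e^{tA_*}f\|_Y\lesssim\bigl(\sup_{|\eta|\le\eta_0}|\eta|^k e^{-2t\eta^2}\bigr)\|f\|_Y\lesssim\la t\ra^{-k/2}\|f\|_Y$, and $\|\pd_y^k e^{tA_*}f\|_Y\lesssim\bigl\||\eta|^k e^{-2t\eta^2}\bigr\|_{L^2(|\eta|\le\eta_0)}\|\hat f\|_{L^\infty}\lesssim\la t\ra^{-(2k+1)/4}\|f\|_{Y_1}$. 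There is no genuine obstacle here; the only points needing care are the sign and normalization conventions (the direction of the shifts $\pm4t$ and the correspondences $i\eta\leftrightarrow\pd_y$, $\eta^2\leftrightarrow-\pd_y^2$ under the chosen Fourier transform) and the bookkeeping in the $2\times2$ product. What makes $\mu=1/8$ special — and the reason a closed form exists only in this case — is precisely that $\omega(\eta)$ becomes constant, so the two drift speeds $\pm4$ decouple from $\eta$ and each diagonal entry is a single shifted Gaussian.
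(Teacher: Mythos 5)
Your proposal is correct and follows essentially the same route as the paper: compute $e^{tA_*(\eta)}$ explicitly (the paper writes the resulting matrix in $\cos4t\eta$, $\sin4t\eta$, $e^{-2t\eta^2}$ form, which is just your $E_\pm$ combinations repackaged), recognize the entries as Fourier symbols of shifted-and-differentiated Gaussians, and get the decay bounds from the symbol estimates on $|\eta|\le\eta_0$. The only minor imprecision is that the entrywise bound on $e^{tA_*(\eta)}$ should be $C(1+\eta^2)e^{-2t\eta^2}$ rather than $C(1+|\eta|)e^{-2t\eta^2}$ (the $(2,1)$-entry carries a factor $\tfrac{1}{64}(\eta^2+16)$), but since $|\eta|\le\eta_0$ this changes nothing in the conclusion.
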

\begin{proof}
In view of the proof of Lemma~\ref{lem:decay-BB},
$$e^{tA_*}(\eta)=e^{-2t\eta^2}\begin{pmatrix}
\cos4t\eta-\frac{\eta}{4}\sin4t\eta & 2i\sin4t\eta\\
\left(\frac{i\eta^2}{32}+\frac{i}{2}\right)\sin4t\eta
& \cos4t\eta+\frac{\eta}{4}\sin4t\eta\end{pmatrix}\,,$$
provided $\mu=1/8$. 
Taking the inverse Fourier transform of the above, we obtain
\eqref{eq:expA*}. The decay estimates follow immediately
from \eqref{eq:expA*} and the fact that
$\|\pd_y^kH_t\|_{B(Y,Y)}\lesssim \la t\ra^{-k/2}$ and
$\|\pd_y^kH_t\|_{B(Y_1,Y)}\lesssim \la t\ra^{-(2k+1)/4}$.
Thus we complete the proof.
\end{proof}

Using \eqref{eq:expA*} and the fact that 
$\|\pd_y^kH_{2t}*f\|_Y\lesssim \la t\ra^{-(2k+1)/4}\|f\|_{Y_1}$
for $t\ge0$ and $k\in \Z_{\ge0}$, we can obtain the first order
asymptotics of $e^{tA_*}(f_1,f_2)$ as $t\to\infty$.
\begin{corollary}
  \label{cor:mu=1/8}
Let $\mu$ and $A_*$ be as in Lemma~\ref{lem:mu=1/8}. Then there
exists a positive constant $C$ such that for every
$(f_1,f_2)\in Y_1\times Y_1$,
$$\left\|e^{tA_*}\begin{pmatrix}f_1\\ f_2\end{pmatrix}
-e^{4t\pd_y}H_{2t}* \begin{pmatrix} 2f_+ \\ f_+ \end{pmatrix}
-e^{-4t\pd_y}H_{2t}*\begin{pmatrix} 2f_- \\ -f_- \end{pmatrix}\right\|_Y
\le C\la t\ra^{-3/4}\sum_{i=1,2}\|f_i\|_{Y_1}\,,$$
where $f_+=\frac{1}{4}f_1+\frac12f_2$ and $f_-=\frac{1}{4}f_1-\frac12f_2$.
\end{corollary}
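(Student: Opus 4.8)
The plan is to read off the first order asymptotics directly from the explicit representation \eqref{eq:expA*} of $e^{tA_*}$ obtained in Lemma~\ref{lem:mu=1/8}. First I would note that $e^{\pm 4t\pd_y}$ acts as the translation $(e^{\pm 4t\pd_y}g)(y)=g(y\pm 4t)$, hence on the Fourier side it is multiplication by the unimodular factor $e^{\pm 4it\eta}$; in particular $e^{\pm 4t\pd_y}$ is an isometry of $Y$ and commutes with $\pd_y$ and with convolution by $H_{2t}$. Using this, the claimed leading term is, component by component,
$$e^{4t\pd_y}H_{2t}*\begin{pmatrix}2f_+\\ f_+\end{pmatrix}+e^{-4t\pd_y}H_{2t}*\begin{pmatrix}2f_-\\ -f_-\end{pmatrix}
=\begin{pmatrix}H_{2t}(\cdot+4t)*(2f_+)+H_{2t}(\cdot-4t)*(2f_-)\\[2pt] H_{2t}(\cdot+4t)*f_+-H_{2t}(\cdot-4t)*f_-\end{pmatrix},$$
a sum of convolutions of the translated Gaussians $H_{2t}(\cdot\pm 4t)$ against linear combinations of $f_1$ and $f_2$.

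Next I would subtract this from \eqref{eq:expA*}, expanding $k_{11},\dots,k_{22}$. Collecting the terms on which no $\pd_y$ falls, a short computation shows that these coincide with the displayed leading term, precisely because $2f_\pm=\tfrac12f_1\pm f_2$ and $f_\pm=\tfrac14f_1\pm\tfrac12f_2$; hence $e^{tA_*}(f_1,f_2)^{t}$ minus the leading term equals
$$\begin{pmatrix}\tfrac18\pd_y\bigl(H_{2t}(\cdot+4t)-H_{2t}(\cdot-4t)\bigr)*f_1\\[4pt]
-\tfrac1{64}\pd_y^2\bigl(H_{2t}(\cdot+4t)-H_{2t}(\cdot-4t)\bigr)*f_1
-\tfrac18\pd_y\bigl(H_{2t}(\cdot+4t)-H_{2t}(\cdot-4t)\bigr)*f_2\end{pmatrix}.$$
The essential point is that every surviving term carries at least one $\pd_y$ acting on a heat kernel $H_{2t}(\cdot\pm 4t)$.

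Finally I would estimate each surviving term by $\|e^{\pm 4t\pd_y}\pd_y^kH_{2t}*g\|_Y=\|\pd_y^kH_{2t}*g\|_Y\lesssim \la t\ra^{-(2k+1)/4}\|g\|_{Y_1}$. This bound follows from $\widehat{\pd_y^kH_{2t}*g}(\eta)=(i\eta)^ke^{-2t\eta^2}\hat g(\eta)$, Plancherel, and the fact that $\hat g\in L^\infty$ is supported in $[-\eta_0,\eta_0]$ by the definition of $Y_1$, together with $\|\cdot\|_Y\lesssim\|\cdot\|_{Y_1}$ from Remark~\ref{rem:smoothness} to absorb small $t$; it is the estimate already used in the proof of Lemma~\ref{lem:mu=1/8}. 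The two terms with one $\pd_y$ contribute $\la t\ra^{-3/4}\|f_j\|_{Y_1}$, and the single $\pd_y^2$ term contributes $\la t\ra^{-5/4}\|f_1\|_{Y_1}\le\la t\ra^{-3/4}\|f_1\|_{Y_1}$, which yields the asserted estimate. I do not expect a genuine obstacle here: the only real work is the algebraic check that the non-derivative parts of \eqref{eq:expA*} cancel against the leading term for exactly the stated choice of $f_\pm$, after which the decay is immediate from the Gaussian estimates recorded in Lemma~\ref{lem:mu=1/8}.
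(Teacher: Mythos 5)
Your proposal is correct and follows exactly the route the paper indicates (the paper omits the proof, simply pointing to \eqref{eq:expA*} and the Gaussian decay estimate $\|\pd_y^kH_{2t}*f\|_Y\lesssim \la t\ra^{-(2k+1)/4}\|f\|_{Y_1}$). Your algebraic verification that the non-derivative parts of the kernels $k_{ij}$ reproduce the leading term with the stated $f_\pm$, and your bound on the surviving terms carrying at least one $\pd_y$, are both accurate.
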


To estimate inhomogeneous terms of modulation equations,
we will use the following.
\begin{claim}
\label{cl:abc}
Let $\alpha$ and $\beta$ be positive constants and
$\gamma=\min\{\alpha,\beta,\alpha+\beta-1\}$. 
If $(\alpha,\beta)$ satisfies
\textrm{i)} $\alpha\ne1$ and $\beta\ne1$
or \textrm{ii)} $\alpha>\beta=1$ or \textrm{(iii)} $\beta>\alpha=1$,
then there exists a positive constant $C$ such that 
$$\int_0^t \la t-s\ra^{-\alpha}\la s\ra^{-\beta}ds \le C\la t\ra^{-\gamma}\,.$$
\par
If $\alpha <1$ and $\beta=1$ or $\alpha=1$ and $\beta<1$, then
there exists a $C>0$ such that 
$$\int_0^t \la t-s\ra^{-\alpha}\la s\ra^{-\beta}ds \le C\la t\ra^{-\min(\alpha,\beta)}
\log\la t\ra\,.$$
\end{claim}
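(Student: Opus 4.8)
The plan is to prove Claim~\ref{cl:abc} by the elementary device of splitting the $s$-integral at the midpoint and reducing to a one-variable estimate. First I would write $\int_0^t=\int_0^{t/2}+\int_{t/2}^t$. On the first piece, $s\le t/2$ forces $\la t-s\ra\gtrsim\la t\ra$, so $\la t-s\ra^{-\alpha}\lesssim\la t\ra^{-\alpha}$ and that piece is $\lesssim\la t\ra^{-\alpha}\int_0^{t/2}\la s\ra^{-\beta}\,ds$. On the second piece the substitution $s\mapsto t-s$ turns it into $\int_0^{t/2}\la s\ra^{-\alpha}\la t-s\ra^{-\beta}\,ds$, which by the same reasoning is $\lesssim\la t\ra^{-\beta}\int_0^{t/2}\la s\ra^{-\alpha}\,ds$. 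Thus matters reduce to the scalar integral $\Phi_\rho(t):=\int_0^{t/2}\la s\ra^{-\rho}\,ds$, for which $\Phi_\rho(t)=O(1)$ if $\rho>1$, $\Phi_\rho(t)=O(\log\la t\ra)$ if $\rho=1$, and $\Phi_\rho(t)=O(\la t\ra^{1-\rho})$ if $\rho<1$. Hence the left-hand side of the claim is $\lesssim\la t\ra^{-\alpha}\Phi_\beta(t)+\la t\ra^{-\beta}\Phi_\alpha(t)$, and it remains to go through the listed cases and check that this matches the asserted bound.

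If $\alpha\ne1$ and $\beta\ne1$: when $\alpha,\beta>1$ the bound is $O(\la t\ra^{-\min(\alpha,\beta)})$ and $\min(\alpha,\beta)=\gamma$; when exactly one exponent, say $\beta$, is $<1$ while $\alpha>1$, the bound is $O(\la t\ra^{1-\alpha-\beta})+O(\la t\ra^{-\beta})=O(\la t\ra^{-\beta})$ (using $\alpha>1$), and here $\gamma=\beta$; when both exponents are $<1$ the bound is $O(\la t\ra^{1-\alpha-\beta})=O(\la t\ra^{-\gamma})$ since then $\gamma=\alpha+\beta-1$. In case~(ii), $\alpha>\beta=1$, the first term is $O(\la t\ra^{-\alpha}\log\la t\ra)=O(\la t\ra^{-1})$ because $\alpha>1$, and the second is $O(\la t\ra^{-1})$; as $\gamma=1$ here, this gives the claim, and case~(iii) is symmetric. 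Finally, when $\alpha<\beta=1$ (or symmetrically $\beta<\alpha=1$), the term $\la t\ra^{-\alpha}\Phi_1(t)=O(\la t\ra^{-\alpha}\log\la t\ra)$ is the dominant one while the other term is $O(\la t\ra^{-\alpha})$, so the sum is $O(\la t\ra^{-\min(\alpha,\beta)}\log\la t\ra)$, which is exactly the estimate with logarithmic loss.

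There is no genuine obstacle here; the only real work is the bookkeeping, namely checking for each placement of $\alpha$ and $\beta$ relative to $1$ that the exponent produced by the estimate above coincides with $\gamma=\min\{\alpha,\beta,\alpha+\beta-1\}$, together with the repeated use of the trivial fact that $\la t\ra^{-\rho}\log\la t\ra\lesssim\la t\ra^{-1}$ whenever $\rho>1$. One small point to record is that when $\alpha+\beta\le1$ the exponent $\gamma$ is non-positive, so $\la t\ra^{-\gamma}$ is a growing bound; the computation still delivers it, so the statement remains correct (and vacuous in content) in that regime.
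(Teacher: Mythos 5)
The paper states Claim~\ref{cl:abc} without proof, so there is no in-paper argument to compare against. Your proof is correct and uses the standard technique: split at $t/2$, note $\la t-s\ra\gtrsim\la t\ra$ on $[0,t/2]$ and symmetrize on $[t/2,t]$, then reduce to the three-way behavior of $\Phi_\rho(t)=\int_0^{t/2}\la s\ra^{-\rho}\,ds$ according to $\rho>1$, $\rho=1$, $\rho<1$, and check cases. Your case bookkeeping is accurate, including the observation that $\gamma=\min(\alpha,\beta)$ when both exceed $1$, $\gamma=\alpha+\beta-1$ when both are below $1$, and the trivial inequality $\la t\ra^{-\rho}\log\la t\ra\lesssim\la t\ra^{-1}$ for $\rho>1$ handles cases~(ii) and~(iii).

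One small point worth recording, though it is a defect of the claim as stated rather than of your argument: for the logarithmic bound, both the left side and the right side vanish as $t\to0^+$, but at different rates. The left side behaves like $t$ for small $t$, whereas $\la t\ra^{-\min(\alpha,\beta)}\log\la t\ra=O(t^2)$ since $\log\la t\ra=\tfrac12\log(1+t^2)\sim t^2/2$. Hence the inequality as literally written fails on a neighborhood of $t=0$; the same issue appears in your assertion $\Phi_1(t)=O(\log\la t\ra)$. The usual remedy is to read $\log\la t\ra$ as $1+\log\la t\ra$, or to understand the estimate for $t\ge1$ (which is all that is ever used downstream, where the log-loss bound is further absorbed into a weaker polynomial decay). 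Your argument proves the intended statement and matches how the claim is invoked in the paper.
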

\bigskip

\section{Decomposition of the perturbed line soliton}
\label{sec:decomp}
In this section, we will decompose a solution around a line soliton solution
$\varphi(x-4t)$ into a sum of a modulating line soliton
and a non-resonant dispersive
part plus a small wave which is caused by amplitude changes of the line
soliton:
\begin{equation}
  \label{eq:decomp}
u(t,x,y)=\varphi_{c(t,y)}(z)-\psi_{c(t,y),L}(z+4t)+v(t,z,y)\,,\quad
z=x-x(t,y)\,.
\end{equation}
The modulation parameters $c(t_0,y_0)$ and $x(t_0,y_0)$ denote
the maximum height and the phase shift of the modulating line soliton
$\varphi_{c(t,y)}(x-x(t,y))$ along the line $y=y_0$ at the time $t=t_0$,
and $\psi_{c,L}$ is an auxiliary function such that
\begin{equation}
  \label{eq:0mean}
\int_\R \psi_{c,L}(x)\,dx=\int_\R(\varphi_c(x)-\varphi(x))\,dx\,.
\end{equation}
More precisely, 
$$\psi_{c,L}(x)=2(\sqrt{2c}-2)\psi(x+L)\,,$$
where $L>0$ is a large constant to be fixed later and 
$\psi(x)$ is a nonnegative function such that
$\psi(x)=0$ if  $|x|\ge1$ and that $\int_\R \psi(x)\,dx=1$.
Since a localized solution to KP-type equations
satisfies $\int_\R u(t,x,y)\,dx=0$ for any $y\in\R$ and $t>0$
(see \cite{MST_contr}), it is natural to expect small perturbations
appear in the rear of the solitary wave if the solitary wave is amplified.
\par
To fix the decomposition \eqref{eq:decomp}, we impose that 
$v(t,z,y)$ is symplectically orthogonal to low frequency resonant modes.
More precisely, we impose the constraint that for $k=1$, $2$,
\begin{equation}
  \label{eq:orth}
\lim_{M\to\infty}\int_{-M}^M \int_\R
v(t,z,y)\overline{g_k^*(z,\eta,c(t,y))}e^{-iy\eta}\,dzdy=0
\quad\text{in $L^2(-\eta_0,\eta_0)$,}
\end{equation}
where 
$g_1^*(x,\eta,c)=cg_1^*(\sqrt{c/2}x,\eta)$ and
$g_2^*(x,\eta,c)=\frac{c}{2}g_2^*(\sqrt{c/2}x,\eta)$.
\par
We will show that the decomposition \eqref{eq:decomp} with
\eqref{eq:orth} is well defined if $u$ is close to a modulating line soliton
in the exponentially weighted space $X$.
It is expected that $\|c(t,\cdot)-2\|_{L^\infty}$ remains small as long as
\eqref{eq:decomp} persists.
\par

Now let us introduce functional to prove the existence of
the representation \eqref{eq:decomp} that satisfies 
the orthogonality condition \eqref{eq:orth}.
For $v\in X$ and $\gamma$, $\tc\in Y$ and $L\ge 0$, let 
$c(y)=2+\tc(y)$ and 
\begin{align*}
F_k[u,\tc,\gamma,L](\eta):=\mathbf{1}_{[-\eta_0,\eta_0]}&(\eta)
\lim_{M\to\infty}\int_{-M}^M\int_\R
\bigl\{u(x,y)+\varphi(x)- \varphi_{c(y)}(x-\gamma(y))
\\ & +\psi_{c(y),L}(x-\gamma(y))\bigr\}
\overline{g_k^*(x-\gamma(y),\eta,c(y))}e^{-iy\eta}\,dxdy\,.
\end{align*}
To begin with, we will show that $F=(F_1,F_2)$ is a mapping from
$X\times Y\times Y\times \R$ into $Z\times Z$.
\begin{lemma}
\label{lem:F_k}
Let $a\in(0,2)$,  $u\in X=L^2(\R^2;e^{2ax}dxdy)$, $\tc$, $\gamma\in Y$
and $L\ge 0$.
Then there exists a $\delta>0$ such that if 
$\|\tc\|_Y+\|\gamma\|_Y\le \delta$, then
$F_k[u,\tc,\gamma,L]\in Z$ for $k=1$, $2$.
Moreover, if $u\in X_1:=L^1(\R_y;L^2_a(\R_x))$ and $\tc$, $\gamma\in Y_1$, then
$F_k[u,\tc,\gamma,L]\in Z_1$ for $k=1$, $2$.
\end{lemma}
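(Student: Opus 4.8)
The plan is to realize $F_k$ as, essentially, the partial Fourier transform in $y$ of an $L^2_y$-- (resp.\ $L^1_y$--) valued function obtained by pairing the ``combined source''
\[
w(x,y):=u(x,y)+\varphi(x)-\varphi_{c(y)}(x-\gamma(y))+\psi_{c(y),L}(x-\gamma(y)),\qquad c(y)=2+\tc(y),
\]
against the adjoint resonant modes $g_k^*(\cdot,\eta,c(\cdot))$. First I would control the soliton part of $w$: Taylor--expanding $\varphi_c(x-\gamma)$ and $\psi_{c,L}(x-\gamma)=2(\sqrt{2c}-2)\psi(x+L-\gamma)$ about $(c,\gamma)=(2,0)$, and using that $\pd_c\varphi,\pd_x\varphi\in L^2_a(\R)$ (exponential decay, $a<2$) while $\psi(\cdot+L)$ is compactly supported, one gets, once $\|\tc\|_Y+\|\gamma\|_Y\le\delta$ is small enough that $c(y)$ stays near $2$,
\[
\big\|\varphi-\varphi_{c(y)}(\cdot-\gamma(y))+\psi_{c(y),L}(\cdot-\gamma(y))\big\|_{L^2_a(\R)}\lesssim |\tc(y)|+|\gamma(y)|\qquad(y\in\R).
\]
Squaring and integrating in $y$ yields $w\in X$ with $\|w\|_X\lesssim\|u\|_X+\|\tc\|_Y+\|\gamma\|_Y$. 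When instead $u\in X_1$ and $\tc,\gamma\in Y_1$, the same pointwise bound shows the soliton part of $w$ lies in $X_1$ \emph{only modulo} the fact that $\tc,\gamma$ need not be in $L^1(\R_y)$; this is precisely why the $Z_1$ statement below must be argued on the Fourier side rather than by the naive bound through $X_1$.

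Next I record the facts about the weight. By Lemma~\ref{lem:kpresonances}, Claim~\ref{cl:gk-approx} and the bounds \eqref{eq:gk-bound2} (with $\eta_0$ small enough that $\nu(\eta_0)<a$, as needed elsewhere), $g_k^*(\cdot,\eta,c)\in L^2_{-a}(\R)$ with $\sup\{\|g_k^*(\cdot,\eta,c)\|_{L^2_{-a}}:|\eta|\le\eta_0,\ |c-2|\le\delta\}<\infty$; moreover $g_k^*$ is real and even in $\eta$, and, since $\beta(\eta)=\sqrt{1+i\eta}$ is holomorphic near $[-\eta_0,\eta_0]$ while $\nu(\eta_0)<a$, the map $\eta\mapsto e^{-a\cdot}g_k^*(\cdot,\eta,c)$ extends holomorphically to a complex neighbourhood of $[-\eta_0,\eta_0]$ with values in $L^2(\R)$, uniformly for $|c-2|\le\delta$; hence $g_k^*(x,\eta,c)=\sum_{j\ge0}\eta^{2j}g_{k,j}^*(x,c)$, converging in $L^2_{-a}(\R)$ uniformly for $|\eta|\le\eta_0$, $|c-2|\le\delta$, with $D_j:=\sup_{|c-2|\le\delta}\|g_{k,j}^*(\cdot,c)\|_{L^2_{-a}}$ and $\sum_j D_j\eta_0^{2j}<\infty$. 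For the first assertion, fix $\eta\in[-\eta_0,\eta_0]$: for a.e.\ $y$ the $x$--integral $h_k(y,\eta):=\int_\R w(x,y)g_k^*(x-\gamma(y),\eta,c(y))\,dx$ converges absolutely, and Cauchy--Schwarz in the weight (absorbing the shift $x\mapsto x-\gamma(y)$ into $e^{-ax}$, which costs only $e^{a|\gamma(y)|}\le e^{a\delta}$) gives $|h_k(y,\eta)|\lesssim\|w(\cdot,y)\|_{L^2_a}$, so $h_k(\cdot,\eta)\in L^2(\R_y)$ with $\|h_k(\cdot,\eta)\|_{L^2_y}\lesssim\|w\|_X$ uniformly in $\eta$. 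Inserting the expansion of $g_k^*$ term by term, $F_k(\eta)=\sqrt{2\pi}\,\mathbf 1_{[-\eta_0,\eta_0]}(\eta)\sum_{j\ge0}\eta^{2j}(\mF_y\phi_{k,j})(\eta)$ where $\phi_{k,j}(y):=\int_\R w(x,y)g_{k,j}^*(x-\gamma(y),c(y))\,dx$ is now \emph{independent of $\eta$}, lies in $L^2(\R_y)$, and satisfies $\|\phi_{k,j}\|_{L^2_y}\lesssim D_j\|w\|_X$. By Plancherel each $\mathbf 1_{[-\eta_0,\eta_0]}\mF_y\phi_{k,j}$ lies in $Z$ with $L^2$--norm $\lesssim D_j\|w\|_X$, and summing in $j$ (legitimate since $\sum_j D_j\eta_0^{2j}<\infty$) gives $F_k\in Z$, $\|F_k\|_Z\lesssim\|u\|_X+\|\tc\|_Y+\|\gamma\|_Y$; the support condition $\supp F_k\subset[-\eta_0,\eta_0]$ is built into the definition.

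For the second assertion I would split $F_k=F_k^u+F_k^\rho$ according to $w=u+\rho$, $\rho:=\varphi-\varphi_c(\cdot-\gamma)+\psi_{c,L}(\cdot-\gamma)$. For $F_k^u$: since $u\in X_1$, the bound $|h_k^u(y,\eta)|\lesssim\|u(\cdot,y)\|_{L^2_a}$ yields $\|h_k^u(\cdot,\eta)\|_{L^1_y}\lesssim\|u\|_{X_1}$ uniformly in $\eta$, hence $\|F_k^u\|_{L^\infty}\le(2\pi)^{-1/2}\sup_\eta\|h_k^u(\cdot,\eta)\|_{L^1_y}\lesssim\|u\|_{X_1}$, so $F_k^u\in Z_1$. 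For $F_k^\rho$: Taylor--expand both $\rho$ and $g_k^*(\cdot-\gamma,\eta,c)$ in $(\tc,\gamma)$ about $(0,0)$; since $\rho\equiv0$ at $(\tc,\gamma)=(0,0)$, $h_k^\rho(y,\eta)$ becomes a series (convergent for $\delta$ small) of terms $a_{\alpha,\beta}(\eta)\,\tc(y)^\alpha\gamma(y)^\beta$ with $\alpha+\beta\ge1$ and $\sum_{\alpha+\beta\ge1}\|a_{\alpha,\beta}\|_{L^\infty(-\eta_0,\eta_0)}r^{\alpha+\beta}<\infty$ for small $r$, so $F_k^\rho(\eta)$ is a series of $\sqrt{2\pi}\,\mathbf 1_{[-\eta_0,\eta_0]}(\eta)\,a_{\alpha,\beta}(\eta)\,\mF_y(\tc^\alpha\gamma^\beta)(\eta)$. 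Since $\tc,\gamma\in Y_1$ we have $\mF_y\tc,\mF_y\gamma\in L^\infty$ with support in $[-\eta_0,\eta_0]$, and by \eqref{eq:Y1-L1} (applied repeatedly) $\wP_1(\tc^\alpha\gamma^\beta)\in Y_1$ with $\|\wP_1(\tc^\alpha\gamma^\beta)\|_{Y_1}\lesssim\|\tc\|_{Y_1}\|\tc\|_Y^{\alpha-1}\|\gamma\|_Y^{\beta}$ when $\alpha\ge1$ (and the symmetric bound when $\beta\ge1$); multiplying by the bounded $a_{\alpha,\beta}(\eta)$ and restricting to $[-\eta_0,\eta_0]$, each summand lies in $Z_1$ and the series converges, whence $F_k^\rho\in Z_1$ and therefore $F_k\in Z_1$.

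The part I expect to require the most care is the bookkeeping, not the estimates: (i) giving meaning, in $L^2$, to the ``diagonal'' quantity $\eta\mapsto(\mF_y h_k(\cdot,\eta))(\eta)$ — a priori an $L^2_\eta$--function evaluated at a point — which is what forces the power--series peeling of the $\eta$--dependence of $g_k^*$ via its analyticity and evenness; and (ii) coping with the fact that the soliton part of $w$ is \emph{not} in $X_1$, so that $F_k^\rho$ must be handled purely through the partial Fourier transforms of products of the modulation parameters, using that $\wP_1$ carries products of $Y$--functions into $Y_1$.
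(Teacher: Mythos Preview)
Your argument is correct, but it is considerably more elaborate than the paper's. The paper avoids both the analyticity expansion in $\eta$ and the full Taylor series in $(\tc,\gamma)$ by a single, simpler splitting of the \emph{weight}: writing $g_k^*(x-\gamma(y),\eta,c(y))=g_k^*(x,\eta)+\Psi(x,y,\eta)$ and $\Phi_1=\Phi_{1,0}+\Phi_2$ (with $\Phi_{1,0}$ the part linear in $(\tc,\gamma)$), the integral decomposes as $I_1+I_2+I_3+I_4$, where $I_1=\int u\,\overline{g_k^*(x,\eta)}e^{-iy\eta}$ and $I_2=-\int\Phi_{1,0}\,\overline{g_k^*(x,\eta)}e^{-iy\eta}$ have the weight \emph{independent of $y$}, so Plancherel in $y$ applies directly and gives $\mathbf 1_{[-\eta_0,\eta_0]}(I_1+I_2)\in Z$ immediately; the remaining pieces $I_3,I_4$ (pairing the quadratic remainder $\Phi_2$ and the correction $\Psi$) are bounded pointwise in $\eta$ by $\|u\|_X(\|\tc\|_Y+\|\gamma\|_Y)+(\|\tc\|_Y+\|\gamma\|_Y)^2$, hence lie in $Z_1\subset Z$. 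For the $Z_1$ statement, the same decomposition works: $\sup_\eta|I_1|\lesssim\|u\|_{X_1}$ and $\sup_\eta|I_2|\lesssim\|\tc\|_{Y_1}+\|\gamma\|_{Y_1}$ by direct $L^1_y$-type bounds. In other words, the ``diagonal'' issue you identify---making sense of $\eta\mapsto(\mF_y h_k(\cdot,\eta))(\eta)$---is dissolved by the observation that, once you peel off $g_k^*(x,\eta)$ from $g_k^*(x-\gamma,\eta,c)$, the $y$- and $\eta$-dependences are already separated in the leading pieces, and the corrections are simply small in $L^\infty_\eta$. Your power-series device achieves the same separation but at the cost of invoking analyticity of $\eta\mapsto g_k^*(\cdot,\eta,c)$ in $L^2_{-a}$ and tracking convergence of two nested series; the paper's route needs only the boundedness and Lipschitz-type smallness of $g_k^*$ in $(\gamma,c)$.
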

\begin{proof}
Let $u\in C_0^\infty(\R^2)$ and
\begin{align*}
& \Phi_1(x,y)=\varphi_{c(y)}(x-\gamma(y))-\varphi(x)
-\psi_{c(y),L}(x-\gamma(y))\,,\\
& \Phi_{1,0}(x,y)=\pd_c\varphi(x)\tc(y)-\varphi'(x)\gamma(y)
-\psi_{c(y),L}(x)\,,\\
& \Phi_2(x,y)=\Phi_1(x,y)-\Phi_{1,0}(x,y)\,,\quad
\Psi(x,y)=\overline{g_k^*(x-\gamma(y),\eta,c(y))}-\overline{g_k^*(x,\eta)}
\,.  
\end{align*}
Then
\begin{align*}
\int_{\R^2}\left\{u(x,y)-\Phi_1(x,y)\right\}
\overline{g_k^*(x-\gamma(y),\eta,c(y))}e^{-iy\eta}\,dxdy=\sum_{j=1}^4 I_j(\eta)\,.
\end{align*}
where
\begin{align*}
I_1=& \int_{\R^2}u(x,y)\overline{g_k^*(x,\eta)}e^{-iy\eta}\,dxdy\,,\\
I_2=& -\int_{\R^2}\Phi_{1,0}(x,y)\overline{g_k^*(x,\eta)}e^{-iy\eta}\,dxdy\,,\\
I_3=& -\int_{\R^2}\Phi_2(x,y)\overline{g_k^*(x,\eta)}e^{-iy\eta}\,dxdy\,,\\
I_4=& \int_{\R^2}\{u(x,y)-\Phi_1(x,y)\}\Psi(x,y)e^{-iy\eta}\,dxdy\,.\\
\end{align*}
By Claim~\ref{cl:gk-approx}, 
\begin{equation}
  \label{eq:gk-bound}
\sup_{c\in[2-\delta,2+\delta]}\sup_{\eta\in[-\eta_0, \eta_0]}
\left\|\pd_c^j\pd_x^kg_k^*(\cdot,\eta,c)\right\|_{L^2_{-a}(\R)}<\infty
\enskip\text{ for $j$, $k\ge0$,}
\end{equation}
and it follows from Plancherel's theorem and \eqref{eq:gk-bound} that
\begin{align*}
 \int_{-\eta_0}^{\eta_0}|I_1(\eta)|^2d\eta \lesssim \|u\|_X^2\,,
\enskip
 \int_{-\eta_0}^{\eta_0}|I_2(\eta)|^2d\eta \lesssim 
\|\tc\|_Y^2+\|\gamma\|_Y^2\,.
\end{align*}
Since $\sup_y(|\tc(y)|+|\gamma(y)|)\lesssim \|\tc\|_Y+\|\gamma\|_Y$, we have
\begin{gather*}
\|\Phi_1\|_X+\|\Phi_{1,0}\|_X\le C(\|\tc\|_Y+\|\gamma\|_Y)\,,\\  
\|e^{ax}\Phi_2(x,y)\|_{L^1(\R^2)}\le C(\|\tc\|_Y+\|\gamma\|_Y)^2\,,\\
\|e^{-ax}\Psi(x,y)\|_{L^2(\R^2)}\le C(\|\tc\|_Y+\|\gamma\|_Y)\,,
\end{gather*}
where $C$ is a positive constant depending only on $\delta$.
\par
Combining the above, we obtain
$$\sup_{-\eta_0\le \eta\le \eta_0}(|I_3(\eta)|+|I_4(\eta)|)
\lesssim \|u\|_X(\|\tc\|_Y+\|\gamma\|_Y)
+(\|\tc\|_Y+\|\gamma\|_Y)^2\,.$$
Since $C_0^\infty(\R^2)$ is dense in $X$, it follows that for any $u\in X$,
$$\mathbf{1}_{[-\eta_0,\eta_0]}(I_1+I_2)\in Z\,,\quad
\mathbf{1}_{[-\eta_0,\eta_0]}(I_3+I_4)\in Z_1\subset Z\,.$$
\par
Suppose $u\in X_1$ and $\tc$, $\gamma\in Y_1$.
Noting that $\sqrt{2c}-2=\tc/2+O(\tc^2)\in Y_1$, we have
\begin{align*}
& \sup_{[-\eta_0,\eta_0]}|I_1(\eta)|\lesssim \|u\|_{X_1}\,,\quad
\sup_{[-\eta_0,\eta_0]}|I_2(\eta)|\lesssim 
\|\tc\|_{Y_1}+\|\gamma\|_{Y_1}\,,
\end{align*}
and $\mathbf{1}_{[-\eta_0,\eta_0]}\sum_{1\le i\le 4}I_i\in Z_1$.
Thus we complete the proof.
\end{proof}

\begin{lemma}
  \label{lem:decomp}
Let $a\in(0,2)$.
There exist positive constants $\delta_0$, $\delta_1$, $L_0$ and $C$ such that
if $\|u\|_X<\delta_0$ and $L\ge L_0$, then there exists a unique
$(\tc,\gamma)$ with $c=2+\tc$ satisfying
\begin{gather}
\label{eq:imp1}
  \|\tc\|_{Y}+\|\gamma\|_{Y} <\delta_1\,,\\
\label{eq:imp2}
F_1[u,\tc,\gamma,L]=F_2[u,\tc,\gamma,L]=0\,.
\end{gather}
Moreover, the mapping $\{u\in X\mid \|u\|_X<\delta_0\}
\ni u\mapsto (\tc,\gamma)=:\Phi(u)$ is $C^1$.
\end{lemma}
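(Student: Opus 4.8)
The plan is to solve \eqref{eq:imp2} for $(\tc,\gamma)$ as a function of $u$ by the implicit function theorem in Banach spaces, applied to $F=(F_1,F_2)$ regarded, for a fixed $L$, as a map of $(u,\tc,\gamma)$ near the origin of $X\times Y\times Y$, and then to track the uniformity of all constants as $L\to\infty$. By Lemma~\ref{lem:F_k}, $F$ takes values in $Z\times Z$ once $\|\tc\|_Y+\|\gamma\|_Y$ is small. At the base point $(u,\tc,\gamma)=(0,0,0)$ the integrand defining $F_k$ collapses to $\{\varphi(x)-\varphi_2(x)+\psi_{2,L}(x)\}\overline{g_k^*(x,\eta,2)}e^{-iy\eta}$, which is identically $0$ because $\psi_{2,L}\equiv0$; hence $F[0,0,0,L]=0$.

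First I would check that $F$ is $C^1$ on a neighborhood of $(0,0,0)$, with all bounds uniform in $L\ge L_0$. It is affine in $u$. Differentiability in $(\tc,\gamma)$ reduces to the smoothness of $c\mapsto\varphi_c$, $c\mapsto\psi_{c,L}$ and $c\mapsto g_k^*(\cdot,\eta,c)$ together with the $\gamma$-translations of the smooth, exponentially localized profiles $\varphi$, $\pd_c\varphi$, $g_k^*(\cdot,\eta)$; the estimates needed are exactly of the type already carried out in the proof of Lemma~\ref{lem:F_k}, using Claim~\ref{cl:gk-approx} and \eqref{eq:gk-bound} to bound $\pd_c^j\pd_x^k g_k^*(\cdot,\eta,c)$ in $L^2_{-a}(\R)$ uniformly for $\eta\in[-\eta_0,\eta_0]$ and $c$ near $2$. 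This is the main technical step.

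Next I would identify the partial derivative $L_F:=D_{(\tc,\gamma)}F[0,0,0,L]$. Since the bracket $G:=u+\varphi-\varphi_{c(y)}(\cdot-\gamma(y))+\psi_{c(y),L}(\cdot-\gamma(y))$ vanishes identically at the base point, the contributions from differentiating $\overline{g_k^*(x-\gamma(y),\eta,c(y))}$ drop out and one is left with a Fourier multiplier: for $(h_c,h_\gamma)\in Y\times Y$,
\begin{equation*}
(L_F(h_c,h_\gamma))(\eta)=\sqrt{2\pi}\,\mathbf{1}_{[-\eta_0,\eta_0]}(\eta)\,M_L(\eta)
\begin{pmatrix}\widehat{h_c}(\eta)\\ \widehat{h_\gamma}(\eta)\end{pmatrix},
\end{equation*}
where $M_L(\eta)$ is the $2\times2$ matrix whose $k$-th row is $\bigl(\int_\R(-\pd_c\varphi+\psi(\cdot+L))\overline{g_k^*(\cdot,\eta,2)}\,dx,\ \int_\R\varphi'\,\overline{g_k^*(\cdot,\eta,2)}\,dx\bigr)$. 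By Claim~\ref{cl:gk-approx} and \eqref{eq:orth} one has $g_1^*(\cdot,0,2)=\varphi$ and $g_2^*(\cdot,0,2)=\int_{-\infty}^x\pd_c\varphi\,dx'$, and since $\psi(\cdot+L)$ is supported near $x=-L$, where $g_k^*$ is $O(e^{-2L})$ by \eqref{eq:gk-bound2}, a short computation gives $M_L(\eta)=M_\infty(0)+O(\eta^2)+O(e^{-2L})$ with $M_\infty(0)=\begin{pmatrix}-2&0\\ -1/2&-2\end{pmatrix}$, so $\det M_\infty(0)=4\neq0$. Hence for $\eta_0$ small and $L\ge L_0$ large, $M_L(\eta)$ is invertible with $\sup_{|\eta|\le\eta_0}\|M_L(\eta)^{-1}\|\le C$ uniformly, which makes $L_F:Y\times Y\to Z\times Z$ a Banach-space isomorphism.

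Finally, the implicit function theorem produces $\delta_0>0$ and a $C^1$ map $\Phi:\{u\in X:\|u\|_X<\delta_0\}\to Y\times Y$ with $\Phi(0)=0$, $F[u,\Phi(u),L]=0$, a bound $\|\Phi(u)\|_{Y\times Y}\le C\|u\|_X$, and uniqueness of the small solution; shrinking $\delta_0$ gives \eqref{eq:imp1} with any prescribed $\delta_1$, and since the isomorphism bound for $L_F$ and the $C^1$-bounds on $F$ are uniform in $L\ge L_0$, the resulting $\delta_0$, $\delta_1$, $C$ are uniform in $L\ge L_0$ as well. The hard part is Step~2--3: verifying, uniformly in $L$ and in $\eta\in[-\eta_0,\eta_0]$, that $F$ is $C^1$ and that its linearization is the stated nondegenerate multiplier, the subtlety being that the adjoint resonant modes $g_k^*(\cdot,\eta,c)$ are only mildly regular and depend non-locally on $\eta$, so that differentiation in $\gamma$ and $c$ must be justified in the exponentially weighted spaces rather than by naive pointwise manipulation.
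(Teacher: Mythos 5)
Your proposal is correct and takes essentially the same route as the paper: apply the implicit function theorem to $F=(F_1,F_2)$, using the vanishing of the bracket at $(0,0,0)$ to reduce the linearization to a Fourier multiplier whose symbol is an explicit $2\times2$ matrix, and then verify invertibility via Claim~\ref{cl:gk-approx} and the integral identities in Appendix~\ref{ap:G}. You carry the normalization factors $c$ and $c/2$ from the definition of $g_k^*(\cdot,\eta,c)$ more explicitly than the paper does, obtaining $M_\infty(0)=\begin{pmatrix}-2&0\\-1/2&-2\end{pmatrix}$, which is consistent with the paper's approximate matrix (the rows differ by a harmless constant); both are visibly invertible, and your uniformity-in-$L$ observation that the $\psi(\cdot+L)$ contributions are $O(e^{-2L})$ is correct and slightly sharper than the paper's $O(e^{-aL})$.
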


\begin{proof}
Clearly, we have $(F_1,F_2)\in C^1(X\times Y\times Y\times\R;Z\times Z)$ and 
for $\tc$, $\tilde{\gamma}\in Y$, 
$$D_{(c,\gamma)}(F_1,F_2)(0,0,0,L)
\begin{pmatrix}\tc \\ \tilde{\gamma}\end{pmatrix}
=\sqrt{2\pi}\begin{pmatrix}f_{11} & f_{12} \\ f_{21} & f_{22}\end{pmatrix}
\begin{pmatrix}
\mF_y\tc \\ \mF_y\tilde{\gamma}\end{pmatrix}\,,
$$
where 
\begin{align*}
& f_{11}=-\int_\R(\pd_c\varphi(x)-\psi(x+L))g_1^*(x,\eta)\,dx\,,\quad
 f_{12}=\int_\R\varphi'(x)g_1^*(x,\eta)\,dx\,,\\
& f_{21}=-\int_\R(\pd_c\varphi(x)-\psi(x+L))g_2^*(x,\eta)\,dx\,,\quad
 f_{22}=\int_\R\varphi'(x)g_2^*(x,\eta)\,dx\,.
\end{align*}
By Claims~\ref{cl:gk-approx} and \ref{cl:phi-int} in Appendix~\ref{ap:G},
\begin{align*}
& f_{11}=-1+O(\eta_0^2)+O(e^{-aL})\,,\quad f_{12}=O(\eta_0^2)\,,\\
& f_{21}=-\frac12+O(\eta_0^2)+O(e^{-aL})\,,\quad f_{22}=-2+O(\eta_0^2)\,,
\end{align*}
and $D_{(c,\gamma)}(F_1,F_2)(0,0,0,L)\in B(Y\times Y,Z\times Z)$ has a bounded
inverse if $\delta_0$ and $e^{-aL}$ are sufficiently small.
Hence it follows from the implicit function theorem that for any
$u$ satisfying $\|u\|_X<\delta_0$,
there exists a unique $(c,\gamma)\in Y\times Y$ 
satisfying \eqref{eq:imp1} and \eqref{eq:imp2}.
Moreover, the mapping $(\tc,\gamma)=\Phi(u)$ is $C^1$.
\end{proof}
\begin{remark}
\label{rem:decomp}
In Lemma \ref{lem:decomp}, we can replace $X$ by a Banach space $X_2$ whose
norm is
$$\|u\|_{X_2}=\left(\int_{-\eta_0}^{\eta_0}\int_{\R}
\frac{|\hat{u}(\xi+ia,\eta)|^2}{(1+\xi^2)^3}\,d\xi d\eta\right)^{1/2}\,.
$$
Suppose $u(t,x,y)$ is a solution of \eqref{KPII_integrated} satisfying
$u(0,x,y)=\varphi(x)+v_0(x,y)$ and $v_0\in X\cap H^1(\R)$. Then for any $T>0$,
\begin{equation}
  \label{eq:lbd}
\tv(t,x,y):=u(t,x+4t,y)-\varphi(x)\in C([0,\infty);X)\,,  
\end{equation}
(see Proposition~\ref{prop:LWPX} in Appendix~\ref{ap:LWP}).
Combining \eqref{eq:lbd} and the fact that
$\tv(t)$ is a solution of \eqref{eq:v-fix}, we have
$\pd_tP_1(0,\eta_0)u\in C([0,\infty);X_2)$ and 
\begin{equation}
  \label{eq:tuC1}
P_1(0,\eta_0)\tv(t)\in C^1([0,\infty);X_2)\,.
\end{equation}
If $\sup_{t\in[0,T)}\|\tv(t)\|_{X_2}$ is sufficiently small for a $T>0$,
then there exists $(\tc(t),\tx(t)):=\Phi(\tv(t))$ satisfying
\eqref{eq:orth} for  $t\in[0,T)$,
where $c(t,y)=\tc(t,y)+2$ and $x(t,y)=4t+\tx(t,y)$ and $v$ and $z$ are
defined by \eqref{eq:decomp}.  That is, the decomposition
\eqref{eq:decomp} satisfying \eqref{eq:orth} 
exists on $[0,T]$ if $\|v_0\|_{X_2}\lesssim \|v_0\|_X$
is sufficiently small.
Since $X_2\ni u\mapsto \Phi(u)\in Y\times Y$ is $C^1$, it follows from
\eqref{eq:tuC1} that
$$(\tc(t),\tx(t))=\Phi(\tv(t))\in C([0,T];Y\times Y)
\cap C^1((0,T);Y\times Y)\,.$$
\end{remark}
We use the following lemma to decompose initial data around the line soliton.
\begin{lemma}
  \label{lem:decomp2} Let $a\in(0,2)$.
There exist positive constants $\delta_2$, $\delta_3$ and $L_0'$ such that
if $\|u\|_{X_1}<\delta_2$ and $L\ge L_0'$, then there exists a unique
$(\tc,\gamma)$ with $c=2+\tc$ satisfying
$\|\tc\|_{Y_1}+\|\gamma\|_{Y_1}<\delta_3$ and \eqref{eq:imp2}.
\end{lemma}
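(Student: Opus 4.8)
The plan is to repeat the implicit function theorem argument used for Lemma~\ref{lem:decomp}, but now in the space $X_1=L^1(\R_y;L^2_a(\R_x))$ with target $Y_1\times Y_1$ in place of $Y\times Y$. By the second assertion of Lemma~\ref{lem:F_k}, $F=(F_1,F_2)$ maps $X_1\times Y_1\times Y_1\times\R$ into $Z_1\times Z_1$. First I would upgrade this to the statement that $(F_1,F_2)\in C^1(X_1\times Y_1\times Y_1\times\R;Z_1\times Z_1)$: the decomposition $I_1+\cdots+I_4$ from the proof of Lemma~\ref{lem:F_k}, estimated now in the $L^\infty_\eta$-norm rather than the $L^2_\eta$-norm, together with the smoothness of $\varphi_c$, $\psi_{c,L}$ and $g_k^*(\cdot,\eta,c)$ in $c$ with $\eta$-uniform bounds on their $c$-derivatives (Claim~\ref{cl:gk-approx}, cf.\ \eqref{eq:gk-bound}), yields the required bounds on difference quotients; one uses $\|f\|_{L^\infty_y}\lesssim\|f\|_{Y_1}$ on $Y_1$ and the convolution bound \eqref{eq:Y1-L1} exactly where their $L^2$-analogues were used.

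Next I would compute the partial Fr\'echet derivative at the base point. As in the proof of Lemma~\ref{lem:decomp},
\[
D_{(c,\gamma)}(F_1,F_2)(0,0,0,L)\begin{pmatrix}\tc\\ \tilde\gamma\end{pmatrix}
=\sqrt{2\pi}\begin{pmatrix}f_{11}&f_{12}\\ f_{21}&f_{22}\end{pmatrix}
\begin{pmatrix}\mF_y\tc\\ \mF_y\tilde\gamma\end{pmatrix},
\]
with the same symbols $f_{ij}(\eta)$, which are continuous on $[-\eta_0,\eta_0]$ and satisfy $f_{11}=-1+O(\eta_0^2)+O(e^{-aL})$, $f_{22}=-2+O(\eta_0^2)$, $f_{12}=O(\eta_0^2)$, $f_{21}=-\tfrac12+O(\eta_0^2)+O(e^{-aL})$. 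Hence the $2\times2$ symbol matrix is invertible with a continuous inverse on $[-\eta_0,\eta_0]$ once $\eta_0$ and $e^{-aL}$ are small, and since both the symbol and its inverse are bounded on the compact frequency support, the Fourier multiplier it defines is an isomorphism of $Z_1\times Z_1$; pulling back by $\mF_y$ shows that $D_{(c,\gamma)}(F_1,F_2)(0,0,0,L)$ is an isomorphism of $Y_1\times Y_1$ onto $Z_1\times Z_1$.

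With these two ingredients, the implicit function theorem gives $\delta_2,\delta_3>0$ and $L_0'>0$ such that for $\|u\|_{X_1}<\delta_2$ and $L\ge L_0'$ there is a unique $(\tc,\gamma)\in Y_1\times Y_1$ with $\|\tc\|_{Y_1}+\|\gamma\|_{Y_1}<\delta_3$ and $F_1[u,\tc,\gamma,L]=F_2[u,\tc,\gamma,L]=0$, which is \eqref{eq:imp2}. I expect the only genuinely delicate point to be the $C^1$-regularity of $F$ measured in the $L^\infty_\eta$-based norm of $Z_1$: one must verify that every error term arising from expanding $\varphi_{c(y)}(x-\gamma(y))$, $\psi_{c(y),L}(x-\gamma(y))$ and $g_k^*(x-\gamma(y),\eta,c(y))$ about $c=2$, $\gamma=0$ is quadratically small in $\|\tc\|_{Y_1}+\|\gamma\|_{Y_1}$ \emph{uniformly in} $\eta\in[-\eta_0,\eta_0]$, which is where the uniform pointwise bounds of Claim~\ref{cl:gk-approx} and the embedding $Y_1\hookrightarrow L^\infty_y$ do the work; the remainder of the argument is a routine transcription of the proof of Lemma~\ref{lem:decomp}.
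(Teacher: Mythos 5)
Your proposal is correct and takes essentially the same route as the paper, which simply asserts that Lemma~\ref{lem:decomp2} ``can be proved in exactly the same way as Lemma~\ref{lem:decomp}'': you supply precisely the adaptation from $(X,Y,Z)$ to $(X_1,Y_1,Z_1)$ that this one-line remark presupposes, noting correctly that the $2\times2$ multiplier with continuous, invertible symbol on the compact set $[-\eta_0,\eta_0]$ is an isomorphism of $Z_1\times Z_1$ just as of $Z\times Z$, and that the $C^1$-estimates go through with $\sup_\eta$ in place of $L^2_\eta$ via Lemma~\ref{lem:F_k} and the bounds of Claim~\ref{cl:gk-approx}.
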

Lemma~\ref{lem:decomp2} can be proved in exactly the same as Lemma
\ref{lem:decomp}.
\par
We provide a continuation principle that ensures the existence of
\eqref{eq:decomp} as long as $\|v(t)\|_X$ and $\|\tc(t)\|_Y$ remain small.
\begin{proposition}
\label{prop:continuation}
Let $a$, $\delta_0$ and $L$ be the same as in Lemma~\ref{lem:decomp}
and let $u(t)$ be a solution of \eqref{KPII_integrated} such that
$u(t,x,y)-\varphi(x-4t)\in C([0,\infty);X\cap L^2(\R^2))$.
Then there exists a constant $\delta_2>0$
such that if \eqref{eq:decomp} and \eqref{eq:orth} hold for $t\in[0,T)$
and $v(t,z,y)$, $\tc(t,y):=c(t,y)-2$ and $\tx(t,y):=x(t,y)-4t$ satisfy
\begin{gather}
\label{eq:C1cx}
(\tc,\tx)\in C([0,T);Y\times Y)\cap C^1((0,T);Y\times Y)\,,\\
\label{eq:bd-vc}
\sup_{t\in[0,T)}\|v(t)\|_X\le \frac{\delta_0}{2}\,,
\quad \sup_{t\in[0,T)}\|\tc(t)\|_Y<\delta_2\,,\quad
\sup_{t\in[0,T)}\|\tx(t)\|_Y<\infty\,,
\end{gather}
then either $T=\infty$ or $T$ is not the maximal time of
the decomposition \eqref{eq:decomp} satisfying \eqref{eq:orth},
\eqref{eq:C1cx} and \eqref{eq:bd-vc}.
\end{proposition}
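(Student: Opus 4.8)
The plan is a continuation (continuity--uniqueness) argument. Assume $T<\infty$ and that \eqref{eq:decomp}, \eqref{eq:orth}, \eqref{eq:C1cx} and \eqref{eq:bd-vc} hold on $[0,T)$; I will produce $\tau>0$ so that the decomposition with all four properties extends to $[0,T+\tau]$, whence $T$ cannot be maximal. The one delicate point is that, although $v(t)$ and $\tc(t)$ stay small on $[0,T)$, the accumulated phase $\tx(t,\cdot)$ is merely bounded in $Y$, so the perturbation in the fixed frame $\tv(t,x,y):=u(t,x+4t,y)-\varphi(x)$ need not be small in $X_2$ as $t\uparrow T$, and Lemma~\ref{lem:decomp} (in the form of Remark~\ref{rem:decomp}) cannot be applied to it directly; instead I recenter around the modulating soliton at a time $t_0<T$ close to $T$.

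Fix $t_0\in(0,T)$ and put $w(t,x,y):=u(t,x+x(t_0,y),y)$. From \eqref{eq:decomp} at $t=t_0$, $w(t_0,x,y)-\varphi(x)=(\varphi_{c(t_0,y)}(x)-\varphi(x))-\psi_{c(t_0,y),L}(x-\tx(t_0,y)+4t_0)+v(t_0,x,y)$; using \eqref{eq:bd-vc}, the embedding $Y\hookrightarrow L^\infty$ of Remark~\ref{rem:smoothness}, smoothness of $c\mapsto\varphi_c$, and the fact that $\psi_{c,L}$ is supported deep in $\{x\to-\infty\}$ (so that its $X$-norm is $O(\|\tc(t_0)\|_Y e^{-aL})$), one obtains $\|w(t_0,\cdot,\cdot)-\varphi\|_{X_2}\le\|w(t_0,\cdot,\cdot)-\varphi\|_X\le C(\|v(t_0)\|_X+\|\tc(t_0)\|_Y)$, which is below the smallness threshold of Lemma~\ref{lem:decomp} once $\delta_2$ is taken small (and $L$ large); this is the additional restriction imposed on $\delta_2$. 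Writing moreover $w(t,x,y)-\varphi(x)=[u(t,\cdot,\cdot)-\varphi(\cdot-4t)](x+x(t_0,y),y)+(\varphi(x+x(t_0,y)-4t)-\varphi(x))$ and using that $u-\varphi(\cdot-4t)\in C([0,T+1];X\cap L^2(\R^2))$ is uniformly continuous, that translation by the $t$-independent function $x(t_0,\cdot)$ is bounded on $X$ with a bound uniform for $t_0\in[0,T]$ thanks to $\sup_{[0,T)}\|\tx\|_Y<\infty$, and that $\varphi$ is smooth and localized, I get that $t\mapsto w(t,\cdot,\cdot)-\varphi$ is continuous into $X_2$ with modulus of continuity uniform in $t_0\in[0,T]$. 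Hence there is $\tau>0$, independent of $t_0$, with $\|w(t,\cdot,\cdot)-\varphi\|_{X_2}<\delta_0$ on $[t_0,t_0+\tau]$; now choose $t_0>T-\tau$.

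Apply $\Phi$ from Lemma~\ref{lem:decomp}/Remark~\ref{rem:decomp} to $w(t,\cdot,\cdot)-\varphi$ on $[t_0,t_0+\tau]$, obtaining $(\tc'(t),\gamma'(t)):=\Phi(w(t,\cdot,\cdot)-\varphi)$ with $\|\tc'(t)\|_Y+\|\gamma'(t)\|_Y<\delta_1$ and $F_k=0$ ($k=1,2$). Set $c(t,y):=2+\tc'(t,y)$, $x(t,y):=x(t_0,y)+\gamma'(t,y)$, and let $v,z$ be given by \eqref{eq:decomp}. One checks that this decomposition satisfies \eqref{eq:orth} (the $\psi_{c,L}$-terms, tested against the $e^{2z}$-tail of $g_k^*$, contributing only an exponentially small amount that the construction absorbs) and retains the smallness of $v$ and $\tc$, so for $t\in[t_0,T)$ it agrees, by the uniqueness part of Lemma~\ref{lem:decomp}, with the given $(\tc(t),\tx(t),v(t))$; thus it is a genuine extension past $T$. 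Property \eqref{eq:C1cx} propagates because $\Phi$ is $C^1$ and $t\mapsto P_1(0,\eta_0)(w(t,\cdot,\cdot)-\varphi)$ is of class $C^1$ into $X_2$ on $(t_0,t_0+\tau)$ --- this uses that $u$ solves \eqref{KPII_integrated} together with \eqref{eq:C1cx}, and that the $x$-translation is a $C^1$ group on $X_2$ since $\pd_x\in B(X_2)$. Finally, by continuity $\|\tc(T)\|_Y\le\sup_{[0,T)}\|\tc(t)\|_Y<\delta_2$, $\|v(T)\|_X\le\delta_0/2$ and $\|\tx(T)\|_Y<\infty$, so after shrinking $\tau$ if necessary the bounds \eqref{eq:bd-vc} persist on $[T,t_0+\tau]$ (the spare factor $1/2$ in the $X$-bound for $v$ giving the room). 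Therefore $T$ is not maximal, as claimed.

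The step I expect to be the main obstacle is the recentering estimate: showing that $\|w(t_0,\cdot,\cdot)-\varphi\|_{X_2}$ is small, and that the modulus of continuity of $t\mapsto w(t,\cdot,\cdot)-\varphi$ into $X_2$ is controlled, both \emph{uniformly in} $t_0\in(0,T)$, despite the bounded-but-not-small translation $x(t_0,\cdot)$; this rests squarely on the exponential weight defining $X$ and on the precise placement of the correction $\psi_{c,L}$ in \eqref{eq:decomp}. The remaining steps are routine bookkeeping with the continuity and uniqueness statements already in hand.
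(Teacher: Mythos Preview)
Your argument is correct and follows the same strategy as the paper's proof: freeze a recentering phase $x(t_0,\cdot)$ at some $t_0<T$ close to $T$, use \eqref{eq:decomp} at $t=t_0$ together with \eqref{eq:bd-vc} to see that the recentered perturbation is small in $X$ (hence in $X_2$), push this smallness to a window $[t_0,t_0+\tau]$ by continuity of $\tv$ in $X$ and boundedness of the translation by $\tx(t_0,\cdot)$, apply Lemma~\ref{lem:decomp}/Remark~\ref{rem:decomp}, and match by uniqueness. The only cosmetic difference is that the paper includes the free drift in its recentering, writing $\bar x(t,y)=x(T-\tau,y)+4(t+\tau-T)$, so that $u_1(t)-u_1(T-\tau)$ is \emph{exactly} the translated $\tv(t)-\tv(T-\tau)$; in your fixed frame the residual shift $4(t_0-t)$ produces two extra continuity terms (a translate of $\tv(t_0)$ and of $\varphi$), but both are $O(|t-t_0|)$ and are absorbed into the choice of $\tau$, so nothing is lost.
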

\begin{proof}
Suppose $T<\infty$.
Let $\tau\in(0,T)$, $\bar{x}(t,y)=x(t,y)$ for $t\in[0,T-\tau]$
and $\bar{x}(t,y)=x(T-\tau,y)+4(t+\tau-T)$ for $t\ge T-\tau$.
Let $u_1(t,x,y)=u(t,x+\bar{x}(t,y),y)-\varphi(x)$.
Then 
\begin{align*}
\sup_{t\in[0,T-\tau]}\|u_1(t)\|_X \le &   \sup_{t\in[0,T)}
(\|v(t)\|_X+\|\varphi_{c(t,y)}-\varphi\|_X+\|\tpsi_{c(t,y)}\|_X)
\\ \le & \frac{\delta_0}{2}+C_1\sup_{t\in[0,T)}\|\tc(t)\|_Y
\le \frac{\delta_0}{2}+C_1\delta_2\,,
\end{align*}
where $\tpsi_c(x)=\psi_{c,L}(x+4t)$ and 
$C_1$ is a constant that does not depend on $\tau$.
Since $Y\subset L^\infty(\R)$, it follows from the assumption that
$C_2:=\sup_{\tau\in(0,T)}\sup_ye^{-a\tx(\tau,y)}<\infty$. 
Thus for $t\in [T-\tau, T)$,
\begin{align*}
\|u_1(t)\|_X\le & \|u_1(T-\tau)\|_X
+\left\|e^{-a\tx(T-\tau,y)}\{\tv(t)-\tv(T-\tau)\}\right\|_X
\\ \le & \frac{\delta_0}{2}+C_1\delta_2
+C_2\|\tv(t)-\tv(T-\tau)\|_X\,.
\end{align*}
Now we choose $\delta_2$ and $\tau$ so that 
$$\delta_2<\min\left\{\delta_1,\delta_0/(4C_1)\right\}\,,\quad
\sup_{t_1\,,\,t_2\in[T-\tau,T+\tau]}
\|\tv(t_2)-\tv(t_1)\|_X<\delta_0/(4C_2)\,.$$
Then we have $\sup_{t\in[0,T+\tau]}\|u_1(t)\|_X<\delta_0$ and it follow from
Lemma~\ref{lem:decomp} and Remark~\ref{rem:decomp} that there exists
a unique 
$$(\tc_1(t),\tx_1(t))\in C([T-\tau,T+\tau];Y\times Y)\cap
C^1((T-\tau,T+\tau);Y\times Y)$$
satisfying $\sup_{t\in(T-\tau,T+\tau)}(\|\tc_1(t)\|_Y+\|\tx_1(t)\|_Y)<\delta_1$ and
\begin{gather}
\label{eq:decompu1}
u(t,x+\bar{x}(t,y),y)=\varphi_{c_1(t,y)}(z_1)-\tpsi_{c_1(t,y)}(z_1)+v_1(t,z_1,y)\,,\\
\label{eq:orthv1}
\lim_{M\to\infty}\int_{-M}^M\int_\R v_1(t,z_1,y)
\overline{g_k^*(z_1,\eta,c_1(t,y))}e^{-iy\eta}\,dz_1dy=0
\quad\text{in $L^2(-\eta_0,\eta_0)$}
\end{gather}
for $k=1$ and $2$, where $c_1(t,y)=2+\tc_1(t,y)$ and $z_1=x-x_1(t,y)$.
By the local uniqueness of the decomposition, we have for $t\in[T-\tau,T)$,
\begin{equation}
  \label{eq:tctx-1}
\tc(t)=\tc_1(t)\,,\quad\tx(t)=\tx(T-\tau)+\tx_1(t)\,,\quad
v(t,x,y)=v_1(t,x,y)\,.
\end{equation}
Let us define $\tc(t)$ and $\tx(t)$ by \eqref{eq:tctx-1} 
and $v(t)$ by \eqref{eq:decomp} for $t\in[T,T+\tau]$.
Then $(\tc,\tx)\in C([0,T+\tau];Y\times Y)\cap
C^1((0,T+\tau),Y\times Y)$ and  \eqref{eq:decompu1} and \eqref{eq:orthv1}
imply that $v(t)$ satisfy \eqref{eq:orth} for $t\in[0,T+\tau]$.
Thus we prove that $T$ is not maximal.
This completes the proof of Proposition~\ref{prop:continuation}.
\end{proof}
\bigskip

\section{Modulation equations}
\label{sec:modulation}
In this section, we will derive a system of PDEs which describe the motion
of  $c(t,y)$ and $x(t,y)$.
Substituting the ansatz \eqref{eq:decomp} into \eqref{KPII_integrated},
we obtain
\begin{equation}
  \label{eq:v}
\pd_tv=\mL_{c(t,y)}v+\ell+\pd_z(N_1+N_2)+N_3\,,  
\end{equation}
where  
$\mL_cv=-\pd_z(\pd_z^2-2c+6\varphi_c)v-3\pd_z^{-1}\pd_y^2$,
$\ell=\ell_1+\ell_2$, $\ell_k=\ell_{k1}+\ell_{k2}+\ell_{k3}$
$(k=1, 2)$ and
\begin{align*}
\ell_{11}=&(x_t-2c-3(x_y)^2)\varphi'_c-(c_t-6c_yx_y)\pd_c\varphi_c\,,\quad
\ell_{12}=3x_{yy}\varphi_c\,,\\
\ell_{13}=& 3c_{yy}\int_z^\infty\pd_c\varphi_c(z_1)dz_1
+3(c_y)^2\int_z^\infty \pd_c^2\varphi_{c}(z_1)dz_1\,,\\
\ell_{21}=& (c_t-6c_yx_y)\pd_c\tpsi_c-(x_t-4-3(x_y)^2)\tpsi_c'\,,\\
\ell_{22}=&\pd_z^3\tpsi_c-3\pd_z(\tpsi_c^2)
+6\pd_z(\varphi_c\tpsi_c)-3x_{yy}\tpsi_c\,,\\
\ell_{23}=&-3c_{yy}\int_z^\infty\pd_c\tpsi_c(z_1)dz_1
-3(c_y)^2\int_z^\infty \pd_c^2\tpsi_c(z_1)dz_1\,,
\end{align*}
\begin{align*}
& N_1=-3v^2\,,\quad N_2=\{x_t-2c-3(x_y)^2\}v+6\tpsi_cv\,,\\
& N_3=6x_y\pd_yv+3x_{yy}v=6\pd_y(x_yv)-3x_{yy}v\,.
\end{align*}
Here we abbreviate $c(t,y)$ as $c$ and $x(t,y)$ as $x$.
\par

First, we will derive modulation equations of $c(t,y)$ and $x(t,y)$
from the orthogonality condition \eqref{eq:orth}
assuming that $v_0\in X\cap H^3(\R^2)$ and $\pd_x^{-1}v_0\in H^2(\R^2)$.
If $v_0\in H^3(\R^2)$ and $\pd_x^{-1}v_0\in H^2(\R^2)$, then
it follows from \cite{MST} that $\tv(t)\in H^3(\R^2)\in C(\R;H^3(\R^2))$
and $\pd_x^{-1}\tv(t)\in C(\R;H^2(\R^2))$. Moreover, Proposition~\ref{prop:LWPX}
implies that $\tv(t)\in C([0,\infty);X)$.
If $\bM_1(T)$ and $\bM_2(T)$ are sufficiently small,
then the decomposition \eqref{eq:decomp} satisfying \eqref{eq:orth}
and \eqref{eq:C1cx} exists for $t\in[0,T]$
by Lemma~\ref{lem:decomp}, Remark~\ref{rem:decomp}
and Proposition~\ref{prop:continuation}.
Since $Y\subset H^4(\R)$,
$$v(t,z,y)-\tv(t,z+\tx(t,y),y)=\varphi(z+x(t,y))
-\varphi_{c(t,y)}(z)+\tpsi_{c(t,y)}(z)\in H^3(\R^2)\,,$$
and we easily see that $v(t)\in C([0,T];X\cap H^3(\R^2))$.
Using
$$\int_\R(v(t,z,y)-\tv(t,z+\tx(t,y),y))\,dz=0\,,$$
by \eqref{eq:0mean} and its integrand decays exponentially as $z\to\pm\infty$,
we have 
$$(\pd_z^{-1}v)(t,z,y)=-\int_z^\infty v(t,z_1,y)\,dz_1\in X\cap H^2(\R^2)\,.$$
\par
By Proposition~\ref{prop:continuation} and Remark~\ref{rem:decomp},
the mapping
$$ t\mapsto \int_{\R^2}v(t,z,y)\overline{g_k^*(z,\eta,c(t,y))}
e^{-iy\eta}dzdy\in Z$$
is $C^1$ for $t\in[0,T]$ if we have \eqref{eq:C1cx} and \eqref{eq:bd-vc}.
Differentiating \eqref{eq:orth} with respect to $t$ and substituting
\eqref{eq:v} into the resulting equation, we have in $L^2(-\eta_0,\eta_0)$
\begin{equation}
\label{eq:orth_t}
\begin{split}
& \frac{d}{dt}\int_{\R^2}v(t,z,y)\overline{g_k^*(z,\eta,c(t,y))}
e^{-iy\eta}dzdy
\\=& \int_{\R^2} \ell\overline{g_k^*(z,\eta,c(t,y))}e^{-iy\eta}dzdy
+\sum_{j=1}^5II^j_k(t,\eta)=0\,,
\end{split}  
\end{equation}
where
\begin{align*}
II^1_k=& \int_{\R^2} v(t,z,y)\mL_{c(t,y)}^*(\overline{g_k^*(t,z,c(t,y))
e^{iy\eta}})\, dzdy\,,\\
II^2_k=& -\int_{\R^2} N_1\overline{\pd_zg_k^*(z,\eta,c(t,y))}
e^{-iy\eta}\, dzdy\,,\\
II^3_k=& \int_{\R^2} N_3\overline{g_k^*(z,\eta,c(t,y))}e^{-iy\eta}dzdy
\\ & +6\int_{\R^2} v(t,z,y)c_y(t,y)x_y(t,y)
\overline{\pd_cg_k^*(z,\eta,c(t,y))}e^{-iy\eta}\, dzdy\,,\\
II^4_k=& \int_{\R^2} v(t,z,y)\left(c_t-6c_yx_y\right)(t,y)
\overline{\pd_cg_k^*(z,\eta,c(t,y))}e^{-iy\eta}\, dzdy\,,\\
II^5_k=& -\int_{\R^2} N_2\overline{\pd_zg_k^*(z,\eta,c(t,y))}e^{-iy\eta}\, dzdy\,.
\end{align*}
\par

Next, we will show the second equation of \eqref{eq:orth_t} 
for $t\in[0,T]$ and 
$v(t)\in C([0,T]; L^2(\R^2))\cap L^\infty([0,T];X)$ assuming that
$\bM_1(T)$ and $\bM_2(T)$ are sufficiently small.
Let $\{v_{0n}\}_{n=1}^\infty$ be a sequence such that
$$v_{0n}\in H^3(\R^2)\cap X\,, \quad\pd_x^{-1}v_{0n}\in H^2(\R^2)\,,\quad
\lim_{n\to\infty}(\|v_{0n}-v_0\|_X+\|v_{0n}-v_0\|_{L^2(\R^2)})=0\,,$$
and let $u_n(t)$ be a solution of \eqref{KPII_integrated} satisfying
$u_n(0,x,y)=\varphi(x)+v_{0n}(x,y)$ and $\tv_n(t,x,y)=u_n(t,x,y)-\varphi(x)$.
Since $\sup_{t\in[0,T]}\|\tv_n(t)-\tv(t)\|_{L^2(\R^2)}\to0$ as $n\to\infty$
by \cite{MST} 
and $\sup_n\sup_{t\in[0,T]}\|\tv_n(t)\|_X<\infty$ by \eqref{eq:virial-fx} in
Appendix~\ref{ap:LWP}, we have 
$\lim_{n\to\infty}\sup_{t\in[0,T]}\|\tv_n(t)-\tv(t)\|_{L^2(\R^2;e^{az}dz)}=0$
for any $t\ge0$.  If $\eta_0$ is so small that $a/2>\nu_0$, we can
replace the weight function $e^{2az}$  by $e^{az}$ in Lemma~\ref{lem:decomp},
Remark~\ref{rem:decomp} and Proposition~\ref{prop:continuation} and see that
there exist $v_n(t)$, $c_n(t)$ and $x_n(t)$ satisfying for $t\in[0,T]$, 
\begin{gather*}
u_n(t,x,y)=\varphi_{c_n(t,y)}(z)-\psi_{c_n(t,y),L}(z+4t)+v_n(t,z,y)\,,\quad
z=x-x_n(t,y)\,.\\
\lim_{M\to\infty}\int_{-M}^M \int_\R
v_n(t,z,y)\overline{g_k^*(z,\eta,c_n(t,y))}e^{-iy\eta}\,dzdy=0
\quad\text{in $L^2(-\eta_0,\eta_0)$,}\\
\lim_{n\to\infty}\sup_{t\in[0,T]}\|v_n(t)-v(t)\|_{L^2(\R^2;e^{ax}dzdy)}=0\,,\\
\lim_{n\to\infty}\left\|\left(c_n(t)-c(t),x_n(t)-x(t)\right)
\right\|_{C^1([0,T];Y\times Y)}=0\,.
\end{gather*}
Thus we can obtain the second equation of \eqref{eq:orth_t} on
$[0,T]$ for $v_0\in X\cap L^2(\R^2)$ by passing to the limit $n\to\infty$.
\par

The modulation PDEs of $c(t,y)$ and $x(t,y)$ can be obtained by
computing the inverse Fourier transform of \eqref{eq:orth_t}
in the $\eta$-variable. The leading term of
$$\int_{\R^2}\ell_1\overline{g_k^*(z,\eta,c(t,y))}e^{-iy\eta}\,dzdy$$
is $\sqrt{2\pi}\mF_yG_k(t,\eta)$, where 
\begin{equation}
  \label{eq:Gk}
G_k(t,y)=\int_\R \ell_1 \overline{g_k^*(z,0,c(t,y))}dz\,.  
\end{equation}
Using the asymptotic formula of $g_k^*(z,\eta)$, we can see
that $G_k$ has the following expression.
\begin{lemma}
  \label{lem:G}
Let $\mu_1=\frac{1}{2}-\frac{\pi^2}{12}$ and
$\mu_2=\frac{\pi^2}{32}-\frac{3}{16}$. Then
  \begin{align*}
G_1=&16x_{yy}\left(\frac{c}{2}\right)^{3/2}
-2(c_t-6c_yx_y)\left(\frac{c}{2}\right)^{1/2}+6c_{yy}-\frac{3}{c}(c_y)^2\,,\\
G_2=& -2(x_t-2c-3(x_y)^2)\left(\frac{c}{2}\right)^{2}
+6x_{yy}\left(\frac{c}{2}\right)^{3/2}
-\frac12(c_t-6c_yx_y)\left(\frac{c}{2}\right)^{1/2}
\\ & +\mu_1c_{yy}+\mu_2(c_y)^2\left(\frac{c}{2}\right)^{-1}\,.
  \end{align*}
\end{lemma}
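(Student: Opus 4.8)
The plan is to reduce $G_k$ to one-dimensional integrals of hyperbolic functions after making $g_k^*(z,0,c)$ explicit. From the scaling definitions $g_1^*(x,\eta,c)=cg_1^*(\sqrt{c/2}x,\eta)$, $g_2^*(x,\eta,c)=\frac c2g_2^*(\sqrt{c/2}x,\eta)$ together with Claim~\ref{cl:gk-approx}, which gives $g_1^*(x,0)=\frac12\varphi$ and $g_2^*(x,0)=\int_{-\infty}^x\pd_c\varphi\,dx_1$, one obtains after a change of variables the closed forms
$$g_1^*(z,0,c)=\varphi_c(z)\,,\qquad
g_2^*(z,0,c)=\Bigl(\tfrac c2\Bigr)^{3/2}\int_{-\infty}^z\pd_c\varphi_c(z_1)\,dz_1\,,$$
the second identity relying on the scaling relation $\varphi_c(z)=\frac c2\varphi_2(\sqrt{c/2}z)$, which forces $\pd_c\varphi(\sqrt{c/2}z)=\pd_c\varphi_c(z)$. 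Both functions are real, so the conjugations in the definition of $G_k$ are harmless; note in addition that $\pd_zg_2^*(z,0,c)=(c/2)^{3/2}\pd_c\varphi_c(z)$.

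Next I would insert $\ell_1=\ell_{11}+\ell_{12}+\ell_{13}$ into $G_k=\int_\R\ell_1\,g_k^*(z,0,c)\,dz$ and collect the result according to the five modulation quantities $x_t-2c-3(x_y)^2$, $c_t-6c_yx_y$, $x_{yy}$, $c_{yy}$ and $(c_y)^2$. The $\ell_{11}$- and $\ell_{12}$-terms pair directly with $g_k^*(z,0,c)$, while the $\ell_{13}$-terms, which carry the antiderivatives $\int_z^\infty\pd_c\varphi_c\,dz_1$ and $\int_z^\infty\pd_c^2\varphi_c\,dz_1$, are handled by integration by parts, moving the $z$-derivative onto $g_k^*(z,0,c)$ (for $G_2$ one uses $\pd_zg_2^*(z,0,c)=(c/2)^{3/2}\pd_c\varphi_c$), and using $\int_\R\pd_c\varphi_c\,dz=\sqrt{2/c}$ and $\int_\R\pd_c^2\varphi_c\,dz=-\frac1{\sqrt2}c^{-3/2}$, which follow by differentiating $\int_\R\varphi_c\,dz=2\sqrt{2c}$. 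One must be a little careful here, since several of the individual antiderivatives grow linearly at one end; the combinations that actually occur decay exponentially at both ends and should be kept together before any splitting.

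The final step is to evaluate the resulting integrals. The rational coefficients in $G_1$ and $G_2$ come out of $\int_\R\sech^2w\,dw=2$, $\int_\R\sech^4w\,dw=\frac43$, parity cancellations, and $\int_\R\varphi_c\pd_c\varphi_c\,dz=\frac12\pd_c\|\varphi_c\|_{L^2}^2=2(c/2)^{1/2}$; for example $\int_\R\varphi_c'\,g_1^*(z,0,c)\,dz=\frac12[\varphi_c^2]_{-\infty}^\infty=0$ explains why no $x_t-2c-3(x_y)^2$ term appears in $G_1$. The transcendental constants $\mu_1=\frac12-\frac{\pi^2}{12}$ and $\mu_2=\frac{\pi^2}{32}-\frac3{16}$ enter only through the $c_{yy}$- and $(c_y)^2$-coefficients of $G_2$: there the integration by parts leaves integrals weighted by the factor $\sqrt{c/2}\,z\sech^2(\sqrt{c/2}z)$ coming from $\int_{-\infty}^z\pd_c\varphi_c\,dz_1$, which reduce to $\int_\R w^2\sech^2w\,dw=\frac{\pi^2}{6}$ and $\int_\R w^2\sech^4w\,dw=\frac{\pi^2-6}{9}$, the latter being a consequence of the former. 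All the hyperbolic-integral identities needed are those collected in Appendix~\ref{ap:G}.

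The main obstacle is precisely the $G_2$ computation of the $c_{yy}$- and $(c_y)^2$-coefficients: one must evaluate the double-antiderivative integrals $\int_\R\bigl(\int_z^\infty\pd_c\varphi_c\,dz_1\bigr)\bigl(\int_{-\infty}^z\pd_c\varphi_c\,dz_1\bigr)\,dz$ and the analogous expression with one factor $\pd_c^2\varphi_c$, carefully tracking the cancellation of the linearly growing pieces before the $\sech$-power integrals (and hence the $\pi^2$) surface. Everything else is routine bookkeeping; in particular $G_1$ is immediate once the explicit forms above are available, since $g_1^*(z,0,c)=\varphi_c(z)$ decays and kills all boundary terms.
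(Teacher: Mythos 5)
Your proposal matches the paper's proof in all essentials: the paper also works from the identities $g_1^*(z,0,c)=\varphi_c(z)$ and $g_2^*(z,0,c)=(c/2)^{3/2}\int_{-\infty}^z\pd_c\varphi_c\,dz_1$ (implicitly, by writing $G_1=\int_\R\ell_1\varphi_c\,dz$ and $(c/2)^{-3/2}G_2=\int_\R\ell_1\int_{-\infty}^z\pd_c\varphi_c\,dz$), then evaluates the resulting one-dimensional integrals via the scaling relations in \eqref{eq:phi-7} and the hyperbolic integrals $\int_0^\infty x/(e^x+1)\,dx=\pi^2/12$ etc., which is exactly the content of Claim~\ref{cl:phi-int}. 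The only cosmetic difference is that you quote the $\sech$-power moments directly rather than routing through the specific $\pd_c^j\varphi$-pairings listed in \eqref{eq:phi-int1}--\eqref{eq:phi-int6}.
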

The proof is given in Appendix~\ref{ap:G}.
We remark that $(G_1,G_2)$ are the dominant part of diffusion wave
equations for $c$ and $x$.
\par

Next, we will expand 
$$\int_\R \ell_1\left(\overline{g_k^*(z,\eta,c(t,y))-g_k^*(z,0,c(t,y))}\right)
e^{-iy\eta}dzdy$$ in $c(t,y)$ and $x(t,y)$ up to the second order.
In order to express the coefficients of $c_t$, $x_t$, $c_{yy}$ and $x_{yy}$,
let us introduce the operators $S_k^j$ ($j\,,\,k=1$, $2$).
For $q_c(z)=\varphi_c(z)$, $\varphi_c'(z)$, $\pd_c\varphi_c(z)$
and $\pd_z^{-1}\pd_c^m\varphi_c(z)=-\int_z^\infty \pd_c^m\varphi_c(z_1)dz_1$
($m\ge1$), let 
\begin{align*}
& S_k^1[q_c](f)(t,y)=\frac{1}{2\pi} \int_{-\eta_0}^{\eta_0}\int_{\R^2}
f(y_1)q_2(z)\overline{g_{k1}^*(z,\eta,2)}e^{i(y-y_1)\eta}dy_1dzd\eta\,,\\
& S_k^2[q_c](f)(t,y)=\frac{1}{2\pi} \int_{-\eta_0}^{\eta_0}\int_{\R^2}
f(y_1)\tc(t,y_1)\overline{g_{k2}^*(z,\eta,c(t,y_1))}
e^{i(y-y_1)\eta}dy_1dzd\eta\,,
\end{align*}
where 
\begin{align*}
& g_{k1}^*(z,\eta,c)=\frac{g_k^*(z,\eta,c)-g_k^*(z,0,c)}{\eta^2}\,,\quad
\delta q_c(z)=\frac{q_c(z)-q_2(z)}{c-2}\,,\\
& g_{k2}^*(z,\eta,c)=g_{k1}^*(z,\eta,2)\delta q_c(z)+
\frac{g_{k1}^*(z,\eta,c)-g_{k1}^*(z,\eta,2)}{c-2}q_c(z)\,.
\end{align*}
We have $S^1_k\in B(Y)$ and $S^1_k$ are independent of $c(t,y)$
whereas  $\|S^2_k\|_{B(Y,Y_1)}\lesssim \|\tc\|_Y$.
See Claims~\ref{cl:S1} and \ref{cl:S2} in Appendix~\ref{ap:opS}.
Using $S^j_k$ ($j$, $k=1$, $2$), we have
\begin{equation}
  \label{eq:der2}
\begin{split}
& \frac{1}{\sqrt{2\pi}}\wP_1\mF^{-1}_\eta\left(
\int_\R \ell_1\left(\overline{g_k^*(z,\eta,c(t,y))-g_k^*(z,0,c(t,y))}\right)
e^{-iy\eta}dzdy\right)
\\=& -\sum_{j=1,2}\pd_y^2\left(S^j_k[\varphi_c'](x_t-2c-3(x_y)^2)
 -S^j_k[\pd_c\varphi_c](c_t-6c_yx_y)\right)-\pd_y^2(R^1_k+R^2_k)\,,
\end{split}  
\end{equation}
\begin{align*}
& R^1_k=3S^1_k[\varphi_c](x_{yy})-3S^1_k[\pd_z^{-1}\pd_c\varphi_c](c_{yy})\,,
\\ & 
R^2_k=3S^2_k[\varphi_c](x_{yy})-3S^2_k[\pd_z^{-1}\pd_c\varphi_c](c_{yy})
-3\sum_{j=1,2}S^j_k[\pd_z^{-1}\pd_c^2\varphi_c](c_y^2)\,.
\end{align*}
We rewrite the linear term $R^1_k$ as
$$\begin{pmatrix}R^1_1 \\ R^1_2\end{pmatrix}
=\wS_0\begin{pmatrix}c_{yy} \\ x_{yy} \end{pmatrix}\,,\quad
\wS_0=3\begin{pmatrix}-S^1_1[\pd_z^{-1}\pd_c\varphi_c] & S^1_1[\varphi_c]
\\ -S^1_2[\pd_z^{-1}\pd_c\varphi_c] & S^1_2[\varphi_c]\end{pmatrix}\,.$$
\par

Next we deal with
$\int_\R \ell_2\overline{g_k^*(z,\eta,c(t,y))}e^{-iy\eta}dzdy$.
Let $S^3_k[p]$ and $S^4_k[p]$ be operators defined by
\begin{equation*}
S^3_k[p](f)(t,y)=\frac{1}{2\pi}\int_{-\eta_0}^{\eta_0}\int_{\R^2}
f(y_1)p(z+4t+L)\overline{g_k^*(z,\eta)}e^{i(y-y_1)\eta}dy_1dzd\eta\,,  
\end{equation*}
\begin{multline*}
S^4_k[p](f)(t,y)=\frac{1}{2\pi}\int_{-\eta_0}^{\eta_0}\int_{\R^2}
f(y_1)\tc(t,y_1)p(z+4t+L)\\ \times\overline{g_{k3}^*(z,\eta,c(t,y_1))}
e^{i(y-y_1)\eta}dy_1dzd\eta\,,
\end{multline*}
where
$$g_{k3}^*(z,\eta,c)=\frac{g_k^*(z,\eta,c)-g_k^*(z,\eta)}{c-2}\,.$$
By the definition of $\tpsi_c$,
\begin{equation}
  \label{eq:l21F-1}
  \begin{split}
& \frac{\mathbf{1}_{[-\eta_0,\eta_0]}(\eta)}{\sqrt{2\pi}}
\int_\R \ell_{21}\overline{g_k^*(z,\eta,c(t,y))}e^{-iy\eta}dzdy
\\=&\mF_y\left\{(S^3_k[\psi]+S^4_k[\psi])
\bigl(\sqrt{2/c}(c_t-6c_yx_y)\bigr)\right\}(t,\eta)
\\ &-2\sqrt{2}\mF_y\left\{(S^3_k[\psi']+S^4_k[\psi'])
\bigl((\sqrt{c}-\sqrt{2})(x_t-4-3(x_y)^2\bigr)\right\}(t,\eta)\,.    
  \end{split}
\end{equation}
The operator norms of $S^j_k[\psi]$, $S^j_k[\psi']$
($j=3\,,\,4$, $k=1\,,\,2$) decay exponentially because $g_k^*(z,\eta)$
and $g_k^*(z,\eta,c)$ are exponentially localized as $z\to-\infty$
and  $\psi\in C_0^\infty(\R)$. See Claims~\ref{cl:S3} and \ref{cl:S4}
in Appendix~\ref{ap:opS}.

\par

Next we decompose
$(2\pi)^{-1}\int_{\R^2}(\ell_{22}+\ell_{23})
\overline{g_k^*(z,\eta,c(t,y))}e^{-iy\eta}dzdy$ into a linear part
and a nonlinear part with respect to $\tc$ and $\tx$. 
The linear part can be written as
\begin{equation}
  \label{eq:der4}
\frac{1}{2\pi}\int_{-\eta_0}^{\eta_0}\int_{\R^2}
\ell_{2,lin}(t,z,y_1)\overline{g_k^*(z,\eta)}e^{i(y-y_1)\eta}dy_1dzd\eta
=:\ta_k(t,D_y)\tc\,,  
\end{equation}
where 
$$\ell_{2,lin}(t,z,y)=\tc(t,y)\pd_z\left\{\pd_z^2+6\varphi(z)\right\}
\psi(z+4t+L)-3c_{yy}(t,y)\int_z^\infty\psi(z_1+4t+L)dz_1\,,$$
\begin{equation}
\label{eq:def-ak}
\begin{split}
\ta_k(t,\eta)=& \Bigl[
\int_\R\left\{\psi'''(z+4t+L)+6(\varphi(z)\psi(z+4t+L))_z\right\}
\overline{g_k^*(z,\eta)}dz\\
& +3\eta^2\int_\R\left(\int_z^\infty\psi(z_1+4t+L)dz_1\right)
\overline{g_k^*(z,\eta)}dz\Bigr]\mathbf{1}_{[-\eta_0,\eta_0]}(\eta)\,,
\end{split}
\end{equation}
and the nonlinear part is 
\begin{equation}
  \label{eq:der5}
\begin{split}
R^3_k(t,y):=& \frac{1}{2\pi}\int_{-\eta_0}^{\eta_0}
\int_\R (\ell_{22}+\ell_{23})\overline{g_k^*(z,\eta,c(t,y_1))}
e^{i(y-y_1)\eta}dzdy_1d\eta \\
& -\frac{1}{2\pi}\int_{-\eta_0}^{\eta_0}
\int_\R \ell_{2,lin}\overline{g_k^*(z,\eta)}e^{i(y-y_1)\eta}dzdy_1d\eta,.
\end{split}
\end{equation}
\par

Next,  we deal with  $II^j_k$ $(j=1\,,\,\cdots\,,\,5)$ in \eqref{eq:orth_t}.
Let 
\begin{align*}
II^3_{k1}=& -3\int_{\R^2}v(t,z,y)x_{yy}(t,y)
\overline{g_k^*(z,\eta,c(t,y))}e^{-iy\eta}\,dzdy\,,\\
II^3_{k2}=& 6\int_{\R^2}v(t,z,y)x_y(t,y)
\overline{g_k^*(z,\eta,c(t,y))}e^{-iy\eta}\,dzdy
\end{align*}
so that $II^3_k=II^3_{k1}+i\eta II^3_{k2}$. Let
\begin{equation}
\label{eq:der6}
\begin{split}
R^4_k(t,y)=& \frac{1}{2\pi}\int_{-\eta_0}^{\eta_0}
\left\{II^1_k(t,\eta)+II^2_k(t,\eta)+ II^3_{k1}(t,\eta)\right\}
e^{iy\eta}d\eta\,,\\
R^5_k(t,y)=&  \frac{1}{2\pi}\int_{-\eta_0}^{\eta_0}II^3_{k2}(t,\eta)e^{iy\eta}d\eta\,.
\end{split}
\end{equation}

Let $S^5_k$ and $S^6_k$ be operators defined by
\begin{gather*}
S^5_kf(t,y)=\frac{1}{2\pi}\int_{-\eta_0}^{\eta_0}
\int_{\R^2} v(t,z,y_1)f(y_1)\overline{\pd_cg_k^*(z,\eta,c(t,y_1))}
e^{i(y-y_1)\eta}\,dzdy_1d\eta\,,\\
S^6_kf(t,y)=-\frac{1}{2\pi}\int_{-\eta_0}^{\eta_0}\int_{\R^2}
v(t,z,y_1)f(y_1)\overline{\pd_zg_k^*(z,\eta,c(t,y_1))}
e^{i(y-y_1)\eta}\,dzdy_1d\eta
\end{gather*}
so that
\begin{equation}
  \label{eq:der7}
\begin{split}
& \mathbf{1}_{[-\eta_0,\eta_0]}(\eta)II^4_k(t,\eta)=
\sqrt{2\pi}\mF_y(S^5_k(c_t-6c_yx_y))\,,\\
& \mathbf{1}_{[-\eta_0,\eta_0]}(\eta)II^5_k(t,\eta)
=\sqrt{2\pi}\mF_y\left\{S^6_k\left(x_t-2c-3(x_y)^2\right)+R^6_k\right\}\,,
\end{split}  
\end{equation}
where
$$R^6_k=-\frac{3}{\pi}\int_{-\eta_0}^{\eta_0}\int_{\R^2}\psi_{c(t,y_1),L}(z+4t)
v(t,z,y_1)\overline{\pd_zg_k^*(z,\eta,c(t,y_1))}e^{i(y-y_1)\eta}dy_1dzd\eta\,.$$
\par

Now we are in position to translate \eqref{eq:orth_t} into
a PDE form.
Using \eqref{eq:Gk}--\eqref{eq:l21F-1} and \eqref{eq:der4}--\eqref{eq:der7},
we can translate \eqref{eq:orth_t} into
\begin{equation}
  \label{eq:modpre}
  \begin{split}
&\wP_1\begin{pmatrix}  G_1 \\ G_2\end{pmatrix}
-\left(\pd_y^2(\wS_1+\wS_2)-\wS_3-\wS_4-\wS_5\right)
\begin{pmatrix} c_t-6c_yx_y \\ x_t-2c-3(x_y)^2\end{pmatrix}
\\ &
+\widetilde{\mathcal{A}}_1(t)
\begin{pmatrix}  \tc \\ x\end{pmatrix}
-\pd_y^2R_1+\wR^1+\pd_y\wR^2=0\,,    
  \end{split}
\end{equation}
where
\begin{align*}
& \wS_j=\begin{pmatrix}-S^j_1[\pd_c\varphi_c]  & S^j_1[\varphi_c']
\\ -S^j_2[\pd_c\varphi_c]  & S^j_2[\varphi_c']\end{pmatrix}
\enskip\text{for $j=1$, $2$,}\quad
\wS_3=\begin{pmatrix}
S^3_1[\psi] & 0
\\ S^3_2[\psi] & 0\end{pmatrix}\,,
\\ &
\wS_4=\begin{pmatrix}
S^3_1[\psi]((\sqrt{2/c}-1)\cdot)+S^4_1[\psi](\sqrt{2/c}\cdot)
& -2(S^3_1[\psi']+S^4_1[\psi'])((\sqrt{2c}-2)\cdot) 
\\ S^4_1[\psi]((\sqrt{2/c}-1)\cdot)+S^4_2[\psi](\sqrt{2/c}\cdot)
& -2(S^3_2[\psi']+S^4_2[\psi'])((\sqrt{2c}-2)\cdot)
\end{pmatrix}\,,\\
& \wS_5=\begin{pmatrix}  S^5_1 & S^6_1 \\ S^5_2 & S^6_2 \end{pmatrix}\,,
\quad
\widetilde{\mathcal{A}}_1(t)=
\begin{pmatrix} \ta_1(t,D_y) & 0  \\ \ta_2(t,D_y) & 0\end{pmatrix}\,,
\\ &\wR^1=R^3+R^4+R^6-\wS_4\begin{pmatrix} 0 \\ 2\tc\end{pmatrix}\,,\quad
\wR^2=R^5-\pd_yR^2\,,
\end{align*}
and  $R^j={}^t\!(R^j_1\,,\,R^j_2)$ for $1\le j\le 6$.
In $G_1$, the nonlinear terms $6(c/2)^{1/2}c_yx_y$ and $16x_{yy}\{((c/2)^{3/2}-1\}$
are critical because they are expected to decay like $t^{-1}$
as $t\to\infty$. To translate these nonlinearity into 
a divergence form, we will make use of the following change of variables. Let
\begin{align}
\label{eq:bdef}
& \tilde{x}(t,\cdot)=x(t,\cdot)-4t\,,\,\quad
b(t,\cdot)=\frac{1}{3}\wP_1\left\{\sqrt{2}c(t,\cdot)^{3/2}-4\right\}\,,\\
& \cC_1=\frac12\wP_1\left\{c(t,\cdot)^2-4\right\}\wP_1\,,\notag\\
& \wC_1=\begin{pmatrix}0 & 0 \\ 0 & \cC_1\end{pmatrix}\,,\quad
B_1=\begin{pmatrix}  2 & 0 \\ \frac12 & 2\end{pmatrix}\,,\quad
B_2=\begin{pmatrix}  6 & 16 \\ \mu_1 & 6\end{pmatrix}\,. \notag
\end{align}
We remark that $b\simeq \tilde{c}=c-2$ if $c$ is close to $2$
(see Claim~\ref{cl:b-capprox}).
By \eqref{eq:bdef}, we have
$b_t=\wP_1(c/2)^{1/2}c_t$, $b_y=\wP_1(c/2)^{1/2}c_y$ and it follows from
Lemma~\ref{lem:G} that 
\begin{equation}
  \label{eq:G-T}
\begin{split}
\wP_1\begin{pmatrix}G_1 \\ G_2 \end{pmatrix}=& 
-(B_1+\wC_1)\wP_1\begin{pmatrix} b_t-6(bx_y)_y \\ x_t-2c-3(x_y)^2 \end{pmatrix}
+B_2\begin{pmatrix}b_{yy} \\ x_{yy} \end{pmatrix} +\wP_1R^7\,,
\end{split}  
\end{equation}
where $R^7={}^t(R^7_1,R^7_2)$ and
\begin{equation}
  \label{eq:R6-def}
\begin{split}
R^7_1=& \left\{4\sqrt{2}c^{3/2}-16-12b\right\}x_{yy}-6(b_{yy}-c_{yy})
\\ & -6(2b_y-(2c)^{1/2}c_y)x_y-3c^{-1}(c_y)^2\,,\\
R^7_2=& 6\left\{\left(\frac{c}{2}\right)^{3/2}-1\right\}x_{yy}
+3\left(\frac{c}{2}\right)^{1/2}c_yx_y-3(bx_y)_y
\\ & -\mu_1(b_{yy}-c_{yy})+\mu_2\frac{2}{c}(c_y)^2\,.
\end{split}  
\end{equation}
Let 
$\cC_2=\wP_1\left\{\left(\frac{c(t,\cdot)}{2}\right)^{1/2}-1\right\}\wP_1$,
$\wC_2=\begin{pmatrix}\cC_2 & 0 \\ 0 & 0 \end{pmatrix}$,
$\bS_j=\wS_j(I+\wC_2)^{-1}$ for $1\le j\le 5$ and
\begin{equation}
  \label{eq:def-B3}
B_3=B_1+\wC_1+\pd_y^2(\bS_1+\bS_2)-\bS_3-\bS_4-\bS_5\,.  
\end{equation}
Note that $I+\wC_2$ is invertible as long as $c(t,\cdot)$ remains small in $Y$
and that $B_3$ is a bounded operator on $Y\times Y$ depending on $\tc$ and $v$.
Substituting \eqref{eq:G-T} into \eqref{eq:modpre}, we have
\begin{align*}
& B_3\wP_1\begin{pmatrix}b_t-6(bx_y)_y \\ x_t-2c-3(x_y)^2 \end{pmatrix}
\\ =& \left\{(B_2-\pd_y^2\wS_0)\pd_y^2+\widetilde{\mathcal{A}}_1(t)\right\}
\begin{pmatrix}b \\ x \end{pmatrix}
+\wP_1R^7+\wR^1+\pd_y\wR^2+\wR^3\,,
\end{align*}
where $\wR^3=R^8+R^9+R^{10}$ and
\begin{align*}
& R^8=6(B_3-B_1-\wC_1)
\begin{pmatrix}(I+\mathcal{C}_2)(c_yx_y)-(bx_y)_y \\ 0\end{pmatrix}\,,\\
& R^9=\pd_y^2\wS_0\begin{pmatrix} b_{yy}-c_{yy}\\ 0\end{pmatrix}\,,\quad
R^{10}=\widetilde{\mathcal{A}}_1(t)\begin{pmatrix}\tc -b \\ 0 \end{pmatrix}\,.
\end{align*}
Let
\begin{align*}
& \bM_1(T)=\sup_{0\le t\le T}\bigl\{\sum_{k=0}^1 (1+t)^{(2k+1)/4}
(\|\pd_y^k\tc(t,\cdot)\|_{L^2}+\|\pd_y^{k+1}x(t,\cdot)\|_{L^2})
\\ & \qquad +(1+t)(\|\pd_y^2\tc(t,\cdot)\|_{L^2}+\|\pd_y^3x(t,\cdot)\|_{L^2})
\bigr\}\,,
\\ & \bM_2(T)=(1+t)^{3/4}\|v(t,\cdot)\|_X\,,\quad
\bM_3(T)=\sup_{0\le t\le T}\|v(t,\cdot)\|_{L^2(\R^2)}\,.
\end{align*}
Then we have the following.
\begin{proposition}
\label{prop:modulation}
There exists a $\delta_3>0$ such that if 
$\bM_1(T)+\bM_2(T)+\eta_0+e^{-aL}<\delta_3$ for a $T\ge0$, then  
\begin{equation}
  \label{eq:modeq}
\begin{pmatrix}b_t \\ \tx_t\end{pmatrix}=\mathcal{A}(t)
\begin{pmatrix}b \\ \tx\end{pmatrix}\,+\sum_{i=1}^8 \cN_i
\end{equation}
where $B_4=B_1+\pd_y^2\wS_1-\wS_3=B_3|_{\tc=0\,,\,v=0}$,
\begin{align*}
& \mathcal{A}(t)=B_4^{-1}
\left\{(B_2-\pd_y^2\wS_0)\pd_y^2+\widetilde{\mathcal{A}}_1(t)\right\}
+\begin{pmatrix}  0 & 0 \\ 2 & 0\end{pmatrix}\,,\\
\cN_1=& \wP_1\begin{pmatrix}
6(b\tx_y)_y\\ 2(\tilde{c}-b)+3(\tx_y)^2
\end{pmatrix}\,,\quad \cN_2=B_3^{-1}\wP_1R^7\,,\\
\cN_3=& B_3^{-1}\wR^1\,,\quad \cN_4=B_3^{-1}\pd_y\wR^2\,,
\quad \cN_5=B_3^{-1}\wR^3\,,\\
\cN_6=& (B_3^{-1}-B_4^{-1})\widetilde{\mathcal{A}}_1(t)
\begin{pmatrix}b \\ x\end{pmatrix}\,,\quad
\cN_7=(B_3^{-1}-B_4^{-1})(B_2-\pd_y^2\wS_0)
\begin{pmatrix}b_{yy} \\ 0 \end{pmatrix}\,,\\ 
\cN_8=& (B_3^{-1}-B_4^{-1})(B_2-\pd_y^2\wS_0)
\begin{pmatrix} 0 \\ x_{yy} \end{pmatrix}
\,.\end{align*}
\end{proposition}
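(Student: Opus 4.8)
The plan is to derive \eqref{eq:modeq} as a purely algebraic consequence of the identity obtained by substituting \eqref{eq:G-T} into \eqref{eq:modpre}, namely
\begin{equation}
\label{eq:modpre2}
B_3\wP_1\,{}^t\!\bigl(b_t-6(bx_y)_y\,,\,x_t-2c-3(x_y)^2\bigr)
=\left\{(B_2-\pd_y^2\wS_0)\pd_y^2+\widetilde{\mathcal{A}}_1(t)\right\}
{}^t\!(b\,,\,x)+\wP_1R^7+\wR^1+\pd_y\wR^2+\wR^3\,,
\end{equation}
the only genuine input being the invertibility of $B_3$ on $Y\times Y$. First I would fix $\delta_3$ so small that $\bM_1(T)+\bM_2(T)+\eta_0+e^{-aL}<\delta_3$ forces $\|\tc(t)\|_Y$ and $\|v(t)\|_X$ to be arbitrarily small on $[0,T]$. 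Then $I+\wC_2$ is invertible on $Y$ (since $\|\wC_2\|_{B(Y)}\lesssim\|\tc\|_Y$), so the $\bS_j=\wS_j(I+\wC_2)^{-1}$ are well defined, and $B_4=B_1+\pd_y^2\wS_1-\wS_3$ is invertible because $B_1$ is invertible, $\|\pd_y^2\wS_1\|_{B(Y)}\lesssim\eta_0^2$ by \eqref{eq:H^s-Y} and Claim~\ref{cl:S1}, and $\|\wS_3\|_{B(Y)}\lesssim e^{-aL}$ by Claim~\ref{cl:S3}. Writing $B_3-B_4=\wC_1+\pd_y^2(\bS_1-\wS_1+\bS_2)-(\bS_3-\wS_3)-\bS_4-\bS_5$ and invoking Claims~\ref{cl:S1}--\ref{cl:S4} together with $\|\wC_1\|_{B(Y)}\lesssim\|\tc\|_Y$, $\|\wS_2\|_{B(Y,Y_1)}\lesssim\|\tc\|_Y$ and the smallness of $S^4_k,S^5_k,S^6_k$, I conclude that $\|B_3-B_4\|_{B(Y\times Y)}$ is small, so $B_3$ is invertible with $\|B_3^{-1}\|_{B(Y\times Y)}$ and $\|B_3^{-1}-B_4^{-1}\|_{B(Y\times Y)}$ bounded uniformly in $t\in[0,T]$.

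Granting this, the rearrangement is as follows. Since $b,\tc,\tx\in Y$ and $\wP_1=\mathbf{1}_{[-\eta_0,\eta_0]}(D_y)$ acts as the identity on these and commutes with $\pd_t$ and $\pd_y$, and since $x=4t+\tx$, $c=2+\tc$ give $x_y=\tx_y$ and $x_t-2c=\tx_t-2\tc$, I apply $B_3^{-1}$ to \eqref{eq:modpre2} and solve for ${}^t(b_t,\tx_t)$, obtaining ${}^t(b_t,\tx_t)=\wP_1\,{}^t\!\bigl(6(b\tx_y)_y\,,\,2\tc+3(\tx_y)^2\bigr)+B_3^{-1}[\,\cdots\,]$. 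Splitting $2\tc=2b+2(\tilde c-b)$ and using $\wP_1 b=b$ isolates $\cN_1$ plus a remaining term ${}^t(0,2b)$. In the bracket, $\pd_y^2$ and $\widetilde{\mathcal{A}}_1(t)$ annihilate the constant-in-$y$ summand $4t$ of $x$ (and $\widetilde{\mathcal{A}}_1(t)$ acts only on the first component), so $\{(B_2-\pd_y^2\wS_0)\pd_y^2+\widetilde{\mathcal{A}}_1(t)\}\,{}^t\!(b,x)=\{(B_2-\pd_y^2\wS_0)\pd_y^2+\widetilde{\mathcal{A}}_1(t)\}\,{}^t\!(b,\tx)$; writing $B_3^{-1}=B_4^{-1}+(B_3^{-1}-B_4^{-1})$, the $B_4^{-1}$ part produces $\mathcal{A}(t)\,{}^t\!(b,\tx)-{}^t(0,2b)$ by the definition of $\mathcal{A}(t)$, which cancels the leftover ${}^t(0,2b)$, while the $(B_3^{-1}-B_4^{-1})$ part, after splitting $\pd_y^2\,{}^t\!(b,x)={}^t(b_{yy},0)+{}^t(0,x_{yy})$, yields exactly $\cN_6+\cN_7+\cN_8$. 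The remaining summands $B_3^{-1}\wP_1R^7$, $B_3^{-1}\wR^1$, $B_3^{-1}\pd_y\wR^2$ and $B_3^{-1}\wR^3$ are $\cN_2$, $\cN_3$, $\cN_4$ and $\cN_5$ by definition, so \eqref{eq:modeq} follows.

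I expect the only real difficulty to lie in the first step: verifying that the compound operator $B_3$, which depends on $t$ through $\tc(t)$ and $v(t)$, is boundedly invertible on $Y\times Y$ uniformly for $t\in[0,T]$, and that each of $\cN_1,\dots,\cN_8$ is a well-defined element of $Y\times Y$ --- e.g.\ that $\wP_1R^7$, $\wR^1$, $\pd_y\wR^2$ and $\wR^3$ lie in $Y\times Y$, which holds because $R^7$ is a $\wP_1$-localized product of $Y$-functions while $R^3,\dots,R^{10}$ are already of the form $\mathbf{1}_{[-\eta_0,\eta_0]}(D_y)$ applied to bounded quantities by their definitions and Claims~\ref{cl:S1}--\ref{cl:S4}. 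All of this reduces to the operator-norm estimates for the $S^j_k$ collected in Appendix~\ref{ap:opS}; the rest of the argument is bookkeeping.
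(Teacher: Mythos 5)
Your proposal is correct and follows essentially the same route as the paper. The paper's proof of Proposition~\ref{prop:modulation} is a three-line pointer: Proposition~\ref{prop:continuation} guarantees the decomposition persists on $[0,T]$, Claims~\ref{cl:(1+wT2)^{-1}}--\ref{cl:invB_4} give invertibility of $I+\wC_k$, $B_3$ and $B_4$, and then the identity \eqref{eq:modeq} is left as an algebraic rearrangement of the display just above it. You spell out that rearrangement explicitly (apply $B_3^{-1}$, replace $(c,x)$ by $(\tc,\tx)$ using $x_t-2c=\tx_t-2\tc$, $x_y=\tx_y$, and the fact that $\pd_y^2$ kills $4t$ and the second column of $\widetilde{\mathcal A}_1$ vanishes, split $2\tc=2b+2(\tc-b)$, and insert $B_3^{-1}=B_4^{-1}+(B_3^{-1}-B_4^{-1})$ to produce $\mathcal A(t)$ and $\cN_6,\cN_7,\cN_8$), all of which checks out. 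The one cosmetic difference is that you deduce invertibility of $B_3$ from smallness of $B_3-B_4$, whereas the paper's Claim~\ref{cl:invB_3} bounds $B_3-B_1$ directly; both reduce to the same Appendix~\ref{ap:opS} operator estimates. One small omission worth flagging: you should explicitly invoke Proposition~\ref{prop:continuation} (as the paper does) to justify that the decomposition with the $C^1$ modulation parameters actually exists on all of $[0,T]$ under the stated smallness hypothesis, since equation \eqref{eq:modpre} presupposes it.
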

\begin{proof}
  Proposition~\ref{prop:continuation} implies that the
  \eqref{eq:decomp} persists on $[0,T]$ if $\delta_3$ is sufficiently
  small. Moreover Claims~\ref{cl:(1+wT2)^{-1}}--\ref{cl:invB_4} below
  imply that $B_3$, $B_4$ and $I+\wC_k$ are invertible if
  $\|\tc(t)\|_Y$, $\|v(t)\|_X$, $\eta_0$ and $e^{-aL}$ are
  sufficiently small.  Thus we have \eqref{eq:modeq}.
\end{proof}

\begin{claim}
  \label{cl:(1+wT2)^{-1}}
There exist positive constants $\delta$ and $C$ such that
if $\bM_1(T)\le \delta$, then for $s\in[0,T]$ and $k=1$, $2$,
\begin{align}
  \label{eq:T-bound1}
& \|\wC_k\|_{B(Y)}\le  C\bM_1(T)\la s\ra^{-1/2}\,,\\
  \label{eq:T-bound}
& \|\wC_k\|_{B(Y,Y_1)}\le C\bM_1(T)\la s\ra^{-1/4}\,,\\
\notag
& \|(I+\wC_k)^{-1}\|_{B(Y)}+\|(I+\wC_k)^{-1}\|_{B(Y_1)}\le C\,.
\end{align}
\end{claim}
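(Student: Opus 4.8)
The plan is to note that, up to the frequency cutoff $\wP_1$, the operators $\wC_1$ and $\wC_2$ are nothing but multiplication by the scalar functions
\[
m_1(t,\cdot)=\tfrac12\bigl(c(t,\cdot)^2-4\bigr)=2\tc+\tfrac12\tc^2,
\qquad
m_2(t,\cdot)=\Bigl(\tfrac{c(t,\cdot)}{2}\Bigr)^{1/2}-1,
\]
and to control these multipliers in $L^\infty$ and $L^2$ via the decay rates encoded in $\bM_1(T)$. First I would record that, since $\tc(t,\cdot)\in Y$ and $Y\subset H^1(\R)$ by Remark~\ref{rem:smoothness}, the one-dimensional Gagliardo--Nirenberg inequality $\|f\|_{L^\infty}^2\lesssim\|f\|_{L^2}\|\pd_yf\|_{L^2}$ combined with the definition of $\bM_1(T)$ yields
\[
\|\tc(s,\cdot)\|_{L^\infty}\lesssim\|\tc(s,\cdot)\|_{L^2}^{1/2}\|\pd_y\tc(s,\cdot)\|_{L^2}^{1/2}\lesssim\bM_1(T)\la s\ra^{-1/2}.
\]
Shrinking $\delta$ so that the right-hand side is at most $1$, the functions $m_1,m_2$ are well defined, satisfy $|m_j|\lesssim|\tc|$ pointwise, and obey $\|m_j(s,\cdot)\|_{L^\infty}\lesssim\bM_1(T)\la s\ra^{-1/2}$ and $\|m_j(s,\cdot)\|_{L^2}\lesssim\|\tc(s,\cdot)\|_{L^2}\lesssim\bM_1(T)\la s\ra^{-1/4}$.

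Granting this, the three bounds are short. For \eqref{eq:T-bound1}: if $f\in Y$ then $\wP_1f=f$, so $\|\wC_kf\|_Y\le\|m_kf\|_{L^2}\le\|m_k\|_{L^\infty}\|f\|_{L^2}\lesssim\bM_1(T)\la s\ra^{-1/2}\|f\|_Y$, using that $\wP_1$ is an $L^2$-contraction. For \eqref{eq:T-bound}: recalling that $Y_1$ carries the norm $\|g\|_{Y_1}=\|\mF_yg\|_{L^\infty}$, Hausdorff--Young and the Cauchy--Schwarz inequality give $\|\wC_kf\|_{Y_1}\le(2\pi)^{-1/2}\|m_kf\|_{L^1}\le(2\pi)^{-1/2}\|m_k\|_{L^2}\|f\|_{L^2}\lesssim\bM_1(T)\la s\ra^{-1/4}\|f\|_Y$. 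For the last estimate, taking $s=0$ in \eqref{eq:T-bound1} shows $\|\wC_k\|_{B(Y)}\le C\bM_1(T)\le\tfrac12$ once $\delta$ is small, so $(I+\wC_k)^{-1}=\sum_{n\ge0}(-\wC_k)^n$ converges in $B(Y)$ with norm $\le2$; on $Y_1$ I would argue on the Fourier side, writing $\mF_y(\wC_kg)=(2\pi)^{-1/2}\mathbf{1}_{[-\eta_0,\eta_0]}\bigl((\mF_ym_k)*(\mF_yg)\bigr)$ and noting that for $|\eta|\le\eta_0$, since $\supp\mF_yg\subset[-\eta_0,\eta_0]$,
\[
\bigl|\bigl((\mF_ym_k)*(\mF_yg)\bigr)(\eta)\bigr|\le\|\mF_yg\|_{L^\infty}\int_{-2\eta_0}^{2\eta_0}|(\mF_ym_k)(\zeta)|\,d\zeta\le C\eta_0^{1/2}\|m_k\|_{L^2}\,\|g\|_{Y_1},
\]
so $\|\wC_k\|_{B(Y_1)}\lesssim\eta_0^{1/2}\bM_1(T)\le\tfrac12$ for $\delta$ small and the Neumann series also converges in $B(Y_1)$. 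All constants depend only on $a$ and $\eta_1$, hence are independent of $s$, $T$, and $\eta_0\in(0,\eta_1]$.

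I do not anticipate a real obstacle here. The only point that is not purely mechanical is extracting the $\la s\ra^{-1/2}$ rate in \eqref{eq:T-bound1} rather than the cruder $\la s\ra^{-1/4}$; this forces the use of the interpolation inequality above, i.e.\ of \emph{both} decay rates that $\bM_1(T)$ controls. For $\wC_2$ one additionally needs $c(t,\cdot)/2$ to stay bounded away from $0$ so that $(c/2)^{1/2}-1$ is a well-behaved smooth function of $\tc$, which is exactly what the smallness of $\delta$ (via the $L^\infty$ bound on $\tc$) provides.
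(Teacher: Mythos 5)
Your proof is correct and follows essentially the same route as the paper: the paper discharges \eqref{eq:T-bound1} and \eqref{eq:T-bound} to Claim~\ref{cl:T1-T2} in Appendix~\ref{ap:opS}, which is exactly the pair of multiplier bounds $\|\cC_kf\|_Y\lesssim\|\tc\|_{L^\infty}\|f\|_Y$ and $\|\cC_kf\|_{Y_1}\lesssim\|\tc\|_Y\|f\|_Y$ that you re-derive in place via $L^\infty$ and Cauchy--Schwarz, and then feeds in the decay of $\|\tc\|_{L^\infty}$ and $\|\tc\|_Y$ from $\bM_1(T)$ via the same interpolation you use. For the Neumann series in $B(Y_1)$ the paper invokes $Y_1\subset Y$ and $\|\wC_k\|_{B(Y_1)}\lesssim\|\cC_k\|_{B(Y,Y_1)}$, which is precisely the $\eta_0^{1/2}$ embedding computation you carry out on the Fourier side.
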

Claim~\ref{cl:(1+wT2)^{-1}} immediately follows from Claim~\ref{cl:T1-T2}
in Appendix~\ref{ap:opS} and the fact that $Y_1\subset Y$ and
$\|\wC_k\|_{B(Y_1)}\lesssim \|\cC_k\|_{B(Y,Y_1)}$.  

\begin{claim}
  \label{cl:invB_3}
There exist positive constants $\delta$ and $C$ such that
if $\bM_1(T)+\bM_2(T)+\eta_0^2+e^{-aL}\le \delta$, then 
$$\|B_3^{-1}\|_{B(Y)}\le C \quad\text{and}\quad
\|B_3^{-1}\|_{B(Y_1)}\le C\quad\text{for $s\in[0,T]$.}$$
\end{claim}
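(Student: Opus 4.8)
The plan is to realize $B_3$ as a small bounded perturbation of the constant invertible matrix $B_1$ from \eqref{eq:bdef} and to invert it by a Neumann series, with all estimates uniform in $s\in[0,T]$. First I would write
\[
B_3=B_1+R,\qquad R=\wC_1+\pd_y^2(\bS_1+\bS_2)-\bS_3-\bS_4-\bS_5,
\]
and note that $B_1=\begin{pmatrix}2 & 0\\ \frac12 & 2\end{pmatrix}$ acts as a constant $2\times2$ matrix on $Y\times Y$ and on $Y_1\times Y_1$, with $B_1^{-1}=\begin{pmatrix}\frac12 & 0\\ -\frac18 & \frac12\end{pmatrix}$, so that $\|B_1^{-1}\|_{B(Y)}$ and $\|B_1^{-1}\|_{B(Y_1)}$ are bounded by an absolute constant $C_0$.

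Second, I would bound $R$ on $Y$ and on $Y_1$ term by term. By Claim~\ref{cl:(1+wT2)^{-1}} we have $\|\wC_1\|_{B(Y)}+\|\wC_1\|_{B(Y_1)}\lesssim\bM_1(T)$ and $\|(I+\wC_2)^{-1}\|_{B(Y)}+\|(I+\wC_2)^{-1}\|_{B(Y_1)}\le C$, so $\|\bS_j\|\lesssim\|\wS_j\|$ on both spaces. Since the range of each operator $\wS_j$ consists of functions whose $y$-Fourier transform is supported in $[-\eta_0,\eta_0]$, one has $\|\pd_y^2f\|_Y\le\eta_0^2\|f\|_Y$ and $\|\pd_y^2f\|_{Y_1}\le\eta_0^2\|f\|_{Y_1}$ for $f$ in those ranges; combined with the boundedness of $\wS_1$ on $Y$ and $Y_1$ (Claim~\ref{cl:S1}) and with $\|\wS_2\|_{B(Y)}+\|\wS_2\|_{B(Y_1)}\lesssim\|\tc(s)\|_Y\lesssim\bM_1(T)$ (Claim~\ref{cl:S2}), this gives $\|\pd_y^2(\bS_1+\bS_2)\|_{B(Y)}+\|\pd_y^2(\bS_1+\bS_2)\|_{B(Y_1)}\lesssim\eta_0^2$. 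Claims~\ref{cl:S3} and \ref{cl:S4} give $\|\bS_3\|+\|\bS_4\|\lesssim e^{-aL}$ on both spaces, and the bounds $\|S^5_k\|+\|S^6_k\|\lesssim\|v(s)\|_X$ (proved by Cauchy--Schwarz in $z$ against the exponentially localized adjoint modes $\pd_cg_k^*$ and $\pd_zg_k^*$ together with Plancherel in $y$, exactly as for Claim~\ref{cl:S1}) give $\|\bS_5\|_{B(Y)}+\|\bS_5\|_{B(Y_1)}\lesssim\bM_2(T)$. Summing, $\|R\|_{B(Y)}+\|R\|_{B(Y_1)}\le C_1\bigl(\bM_1(T)+\bM_2(T)+\eta_0^2+e^{-aL}\bigr)\le C_1\delta$.

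Third, I would fix $\delta$ so small that $C_0C_1\delta\le\frac12$; then $\|B_1^{-1}R\|_{B(Y)}\le\frac12$ and $\|B_1^{-1}R\|_{B(Y_1)}\le\frac12$, so $I+B_1^{-1}R$ is invertible on $Y$ and on $Y_1$ with inverse of norm at most $2$, whence $B_3=B_1(I+B_1^{-1}R)$ is invertible with $\|B_3^{-1}\|_{B(Y)}\le 2C_0=:C$ and $\|B_3^{-1}\|_{B(Y_1)}\le C$, uniformly in $s\in[0,T]$. The only nonroutine point is the term $\bS_5$: unlike $\bS_1,\dots,\bS_4$ it is built out of the remainder $v$ rather than out of the modulation parameters, so one must first establish the operator bounds $\|S^5_k\|_{B(Y)}+\|S^5_k\|_{B(Y_1)}+\|S^6_k\|_{B(Y)}+\|S^6_k\|_{B(Y_1)}\lesssim\|v(s)\|_X$; granting these, the rest is an assembly of estimates already recorded in Appendix~\ref{ap:opS}.
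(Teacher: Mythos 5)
Your proof is correct and follows essentially the same approach as the paper: decompose $B_3 = B_1 + R$ with $B_1$ a fixed invertible matrix, bound $R$ term by term using the operator estimates from Appendix~\ref{ap:opS} (Claims~\ref{cl:S1}--\ref{cl:S5-S6}, \ref{cl:(1+wT2)^{-1}}) and the gain of $\eta_0$ from each factor of $\pd_y$, and conclude by a Neumann-series argument. The only cosmetic difference is that you flag the $S^5_k$, $S^6_k$ bounds as ``to be established,'' but these are exactly the content of Claim~\ref{cl:S5-S6}, already proved in Appendix~\ref{ap:opS}; you could simply cite it.
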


\begin{claim}
\label{cl:invB_4}
There exist positive constants $C$ and $\delta$ such that
if $\eta_0^2+e^{-aL}\le \delta$, then 
$\|B_4^{-1}\|_{B(Y)}+\|B_4^{-1}\|_{B(Y_1)}\le C$.
\end{claim}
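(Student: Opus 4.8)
The plan is to treat $B_4=B_1+\pd_y^2\wS_1-\wS_3$ as a small bounded perturbation of the constant invertible matrix $B_1=\begin{pmatrix}2&0\\\tfrac12&2\end{pmatrix}$ and to invert it by a Neumann series in the algebra of $2\times2$ operator matrices on $Y\times Y$ (resp.\ on $Y_1\times Y_1$). Since $\det B_1=4$, $B_1$ has the bounded inverse $B_1^{-1}=\begin{pmatrix}1/2&0\\-1/8&1/2\end{pmatrix}$, acting as a constant matrix on $Y\times Y$ and on $Y_1\times Y_1$. Writing $B_4=B_1+E$ with $E:=\pd_y^2\wS_1-\wS_3$, it therefore suffices to show that $\|E\|_{B(Y)}+\|E\|_{B(Y_1)}$ can be made as small as we like by taking $\eta_0^2+e^{-aL}$ small.

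First I would estimate $\pd_y^2\wS_1$. By Claim~\ref{cl:S1} in Appendix~\ref{ap:opS}, each entry $S^1_k[\pd_c\varphi_c]$, $S^1_k[\varphi_c']$ of $\wS_1$ is a $y$-Fourier multiplier with bounded symbol (namely $\mathbf{1}_{[-\eta_0,\eta_0]}(\eta)\int_\R q_2(z)\overline{g_{k1}^*(z,\eta,2)}\,dz$, which is bounded thanks to Claim~\ref{cl:gk-approx}), so that $\|\wS_1\|_{B(Y)}+\|\wS_1\|_{B(Y_1)}\le C_1$. On the other hand $\pd_y^2$ has operator norm at most $\eta_0^2$ on $Y$ by \eqref{eq:H^s-Y} in Remark~\ref{rem:smoothness}, and likewise on $Y_1$ since $\|\pd_y^2f\|_{Y_1}=\|\eta^2\hat f\|_{L^\infty}\le\eta_0^2\|f\|_{Y_1}$. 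Hence $\|\pd_y^2\wS_1\|_{B(Y)}+\|\pd_y^2\wS_1\|_{B(Y_1)}\le C_1\eta_0^2$.

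Next I would estimate $\wS_3=\begin{pmatrix}S^3_1[\psi]&0\\S^3_2[\psi]&0\end{pmatrix}$. By Claim~\ref{cl:S3} in Appendix~\ref{ap:opS} the operator norms of $S^3_k[\psi]$ on $Y$ and on $Y_1$ decay exponentially in $L$, uniformly for $t\ge0$: since $\psi$ is supported near the origin while $g_k^*(z,\eta)$ is exponentially localized as $z\to-\infty$, the symbol $\int_\R\psi(z+4t+L)\overline{g_k^*(z,\eta)}\,dz$ is exponentially small once $4t+L$ is large, the worst case being $t=0$. Thus $\|\wS_3\|_{B(Y)}+\|\wS_3\|_{B(Y_1)}\le\rho(L)$ for a function $\rho$ with $\rho(L)\to0$ as $L\to\infty$, and combining with the previous step, $\|E\|_{B(Y)}+\|E\|_{B(Y_1)}\le C_1\eta_0^2+\rho(L)$, which is as small as desired when $\eta_0^2+e^{-aL}$ is small.

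It remains to close the Neumann series. Choosing $\delta>0$ so small that $C_1\eta_0^2+\rho(L)<\bigl(2(\|B_1^{-1}\|_{B(Y\times Y)}+\|B_1^{-1}\|_{B(Y_1\times Y_1)})\bigr)^{-1}$ whenever $\eta_0^2+e^{-aL}\le\delta$, we get $\|B_1^{-1}E\|<1/2$ on each of $Y\times Y$ and $Y_1\times Y_1$; then $B_4^{-1}=(I+B_1^{-1}E)^{-1}B_1^{-1}=\sum_{n\ge0}(-B_1^{-1}E)^nB_1^{-1}$ converges in each algebra with $\|B_4^{-1}\|_{B(Y)}+\|B_4^{-1}\|_{B(Y_1)}\le 2(\|B_1^{-1}\|_{B(Y\times Y)}+\|B_1^{-1}\|_{B(Y_1\times Y_1)})=:C$. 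There is no genuine obstacle here beyond this bookkeeping; the one point worth emphasizing is that the derivative $\pd_y^2$ appearing in $B_4$ is genuinely a \emph{small} operator precisely because we work on the frequency-localized spaces $Y$ and $Y_1$, so that $B_4$ really is a perturbation of the constant matrix $B_1$ rather than of an unbounded operator.
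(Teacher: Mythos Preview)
Your proof is correct and follows essentially the same approach as the paper: write $B_4=B_1+(\pd_y^2\wS_1-\wS_3)$, use Claims~\ref{cl:S1} and~\ref{cl:S3} together with the frequency localization bound $\|\pd_y\|_{B(Y)}+\|\pd_y\|_{B(Y_1)}\lesssim\eta_0$ to show the perturbation is $O(\eta_0^2+e^{-aL})$, and invert by Neumann series. The paper states this more tersely but the argument is the same.
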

The proof of Claims~\ref{cl:invB_3} and \ref{cl:invB_4} will be given
in Appendix~\ref{sec:[B3,pdy]}.
\bigskip

\section{\`A priori estimates for $c(t,y)$ and $x_y(t,y)$}
\label{sec:apriori}
In this section, we will estimate $\bM_1(T)$
assuming that $\bM_i(T)$ ($1\le i\le 3)$, $\eta_0$ and $e^{-aL}$ are 
sufficiently small.

\begin{lemma}
  \label{lem:M1-bound}
There exist positive constants $\delta_4$ and $C$ such that if
$\bM_1(T)+\bM_2(T)+\eta_0+e^{-aL}\le \delta_4$, then
\begin{equation}
  \label{eq:M1-bound}
\bM_1(T)\le C\|v_0\|_{X_1}+C(\bM_1(T)+\bM_2(T))^2\,.
\end{equation}
\end{lemma}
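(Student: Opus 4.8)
The plan is to run a contraction/bootstrap argument on the modulation system \eqref{eq:modeq} derived in Proposition~\ref{prop:modulation}, using the linear decay estimates of Section~\ref{sec:preliminaries}. First I would verify that the principal linear operator $\mathcal{A}(t)$ satisfies the hypothesis (H) of Lemma~\ref{lem:decay-BB}: writing $\mathcal{A}(t)=B_4^{-1}\{(B_2-\pd_y^2\wS_0)\pd_y^2+\widetilde{\mathcal{A}}_1(t)\}+\begin{pmatrix}0&0\\2&0\end{pmatrix}$, the symbol $\mathcal{A}(t,\eta)$ is a perturbation of $A_*(\eta)$ from \eqref{eq:A*} with $\mu=\mu_3$; the entries $a_{ij}(\eta)$ match the required values $-3\eta^2$, $8$, $2+\mu\eta^2$, $-\eta^2$ up to errors of size $O(\eta_0^2+e^{-aL})$ (coming from $\wS_0$, $\wS_1$, $\wS_3$ evaluated near $\eta=0$ via Claims~\ref{cl:gk-approx}, \ref{cl:phi-int} and the operator estimates of Appendix~\ref{ap:opS}), while $\widetilde{\mathcal{A}}_1(t,\eta)$ is exponentially decaying in $t$ by Claims~\ref{cl:S3}, \ref{cl:S4} (its kernel contains $\psi(z+4t+L)$). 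So for $\eta_0+e^{-aL}$ small, $\delta_1$ in (H) is small and Lemma~\ref{lem:decay-BB} gives $\|\pd_y^kU(t,s)f\|_Y\lesssim \la t-s\ra^{-k/2}\|f\|_Y$ and $\lesssim \la t-s\ra^{-(2k+1)/4}\|f\|_{Y_1}$ for $0\le k\le 2$.

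Next I would represent the solution by Duhamel: $\begin{pmatrix}b(t)\\ \tx(t)\end{pmatrix}=U(t,0)\begin{pmatrix}b(0)\\ \tx(0)\end{pmatrix}+\int_0^t U(t,s)\sum_{i=1}^8\cN_i(s)\,ds$, then differentiate in $y$ up to order $2$ (for $b$) and order $3$ (for $\tx$) and measure in $Y$. For the linear term, the initial data $b(0)$, $\tx(0)$ are controlled by $\|v_0\|_{X_1}$ through Lemma~\ref{lem:decomp2} (which gives $(\tc(0),\tx(0))\in Y_1\times Y_1$ with norm $\lesssim\|v_0\|_{X_1}$), and the $Y_1\to Y$ estimate \eqref{eq:BB2} supplies exactly the decay rates $(1+t)^{-(2k+1)/4}$ appearing in $\bM_1$. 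For the nonlinear terms I would estimate each $\cN_i$ in $Y$ or $Y_1$ as a function of $s$: $\cN_1$ is genuinely quadratic in $(b,\tx_y)$ with the critical terms already in divergence form $6(b\tx_y)_y$, so its $Y_1$-norm via \eqref{eq:Y1-L1} is $\lesssim \|b\|_Y\|\tx_y\|_Y+\cdots\lesssim \bM_1(T)^2\la s\ra^{-1}$ after accounting for the extra $\pd_y$; $\cN_2,\dots,\cN_5$ collect the quadratic remainders $R^7,\wR^1,\wR^2,\wR^3$ which are either products of modulation quantities (bounded by $\bM_1(T)^2$ with appropriate $\la s\ra$-weights, using also the $b\simeq\tc$ comparison Claim~\ref{cl:b-capprox}) or contain $v(t)$ linearly paired against exponentially localized weights (bounded by $\bM_1(T)\bM_2(T)$ or $\bM_2(T)^2$ via the $X$-norm of $v$), while $\cN_6,\cN_7,\cN_8$ carry the factor $B_3^{-1}-B_4^{-1}=O(\|\tc\|_Y+\|v\|_X)$ hence are again quadratic. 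Each contribution is then fed into $\int_0^t\la t-s\ra^{-\alpha}\la s\ra^{-\beta}\,ds$ and Claim~\ref{cl:abc} is invoked with the relevant exponents to produce the target decay $\la t\ra^{-(2k+1)/4}$ (and $\la t\ra^{-1}$ at top order), with the square $(\bM_1(T)+\bM_2(T))^2$ factored out.

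The main obstacle I anticipate is the top-order estimate, i.e.\ controlling $(1+t)\|\pd_y^2\tc(t)\|_{L^2}+(1+t)\|\pd_y^3 x(t)\|_{L^2}$: here the Duhamel integral of the quadratic nonlinearity is borderline — the worst term behaves like $\int_0^t\la t-s\ra^{-3/4}\la s\ra^{-1}\,ds$, which by Claim~\ref{cl:abc} is only $\lesssim\la t\ra^{-3/4}\log\la t\ra$, not $\la t\ra^{-1}$. To recover the clean $(1+t)^{-1}$ rate one must exploit the divergence structure: write the critical nonlinearity $6(b\tx_y)_y$ and integrate by parts in $y$ inside the Duhamel formula so that one derivative lands on the semigroup, $\pd_y U(t,s)$, gaining a factor $\la t-s\ra^{-1/2}$ and converting the bad integral into $\int_0^t\la t-s\ra^{-1}\la s\ra^{-?}\,ds$-type expressions that do close; equivalently one splits the time integral at $s=t/2$ and uses $\|\pd_y^2 U(t,s)\|_{B(Y)}\lesssim\la t-s\ra^{-1}$ on $[0,t/2]$ against the $L^2\times L^\infty$ bound on the nonlinearity, and the decay of the nonlinearity itself on $[t/2,t]$. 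Handling the non-divergence lower-order pieces (e.g.\ $3(\tx_y)^2$, the $c_y^2$ terms in $R^7$) requires the same splitting but is easier because they carry two $\pd_y$'s already. Once every $\cN_i$ is bounded as $\lesssim(\bM_1(T)+\bM_2(T))^2$ times an integrable-against-the-kernel weight, summing and taking the supremum over $t\in[0,T]$ yields \eqref{eq:M1-bound}, with the linear part contributing the $C\|v_0\|_{X_1}$ term; this completes the proof.
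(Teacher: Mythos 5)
Your overall plan — Duhamel formula built on the propagator from Lemma~\ref{lem:decay-BB}, linear part controlled by $\|v_0\|_{X_1}$ through Lemma~\ref{lem:decomp2}, quadratic estimates on $\cN_1,\dots,\cN_8$ fed into the convolution inequality of Claim~\ref{cl:abc}, and a $t/2$-splitting combined with the divergence structure to rescue the top-order bound — is the right one and matches the paper's proof closely. But there is one concrete error at the very start that would invalidate the estimates as written: the claim that ``the symbol $\mathcal{A}(t,\eta)$ is a perturbation of $A_*(\eta)$'' is false. Hypothesis (H) requires the operator to have the block form $\begin{pmatrix}a_{11}(D_y) & a_{12}(D_y)\pd_y\\ a_{21}(D_y)\pd_y & a_{22}(D_y)\end{pmatrix}$ with $a_{12}\approx 8$ and $a_{21}\approx 2+\mu\eta^2$, so its off-diagonal symbols are $O(\eta)$. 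The symbol of $\mathcal{A}(t)$ acting on $(b,\tx)$, on the other hand, is approximately $B_1^{-1}B_2(-\eta^2)+\begin{pmatrix}0&0\\2&0\end{pmatrix}$, whose off-diagonal entries are $\sim -8\eta^2$ (upper right) and $\sim 2$ (lower left); neither has the $i\eta$ scaling that (H) demands, and the lower-left entry would force $a_{21}(\eta)=(2+\mu\eta^2)/(i\eta)$, singular at $\eta=0$. So Lemma~\ref{lem:decay-BB} cannot be applied to the propagator of $\mathcal{A}(t)$ directly.

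The fix is exactly what the paper does and what your ``differentiate in $y$ once more for $\tx$'' instinct is pointing toward, but it has to be made explicit before the Duhamel formula is written down, not after: conjugate the equation by $\diag(1,\pd_y)$, i.e.\ pass from $(b,\tx)$ to $(b,x_y)$ and set $A(t)=\diag(1,\pd_y)\,\mathcal{A}(t)\,\diag(1,\pd_y^{-1})$. This conjugation multiplies the lower-left entry by $i\eta$ and divides the upper-right by $i\eta$, producing precisely $A_*(\eta)$ with $\mu=\mu_3$ plus $O(\eta^2)$ corrections, which is what \eqref{eq:A0A*} records, and \emph{then} (H) holds. The Duhamel formula should be for $(b(t),x_y(t))$ with inhomogeneity $\diag(1,\pd_y)\cN_i$, and it is this extra $\pd_y$ on $\cN_1$ (together with $[\pd_y,U(t,s)]=0$) that produces the gain $\pd_y^{k+1}U(t,s)$ used in \eqref{eq:fN1-esta}--\eqref{eq:fN1-estc}. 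Once that reformulation is in place, your discussion of the nonlinear terms, the $L^1\to L^\infty$ Fourier estimate \eqref{eq:Y1-L1}, the $t/2$-splitting, and the bookkeeping of powers of $\bM_1,\bM_2$ goes through as you describe.
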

To prove Lemma~\ref{lem:M1-bound}, we need the following.
\begin{claim}
  \label{cl:[B3,pdy]}
There exist positive constants $\eta_1$, $\delta$ and $C$ such that
if $\eta_0\in(0,\eta_1]$ and $\bM_1(T)\le \delta$, then 
\begin{equation*}
\|[\pd_y,B_3]\|_{B(Y,Y_1)} \le C(\bM_1(T)+\bM_2(T))\la s\ra^{-3/4}
\quad\text{for $s\in[0,T]$.}  
\end{equation*}
\end{claim}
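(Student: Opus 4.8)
The plan is to expand $B_3=B_1+\wC_1+\pd_y^2(\bS_1+\bS_2)-\bS_3-\bS_4-\bS_5$ and bound $[\pd_y,\cdot]$ of each summand in $B(Y,Y_1)$ (all operators, $v$ and $c=2+\tc$ being evaluated at time $s$). Since $B_1$ is a constant matrix, $[\pd_y,B_1]=0$; and since $\pd_y$ commutes with $\pd_y^2$ while $\|\pd_y^2\|_{B(Y)}\le\eta_0^2$, $\|\pd_y^2\|_{B(Y_1)}\le\eta_0^2$ by Remark~\ref{rem:smoothness}, it suffices to estimate $[\pd_y,\wC_1]$ and $[\pd_y,\bS_j]$ $(1\le j\le5)$ in $B(Y,Y_1)$. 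I shall use repeatedly that, by the definitions of $\bM_1(T)$ and $\bM_2(T)$, one has $\|\tc_y(s)\|_{L^2}\le\bM_1(T)\la s\ra^{-3/4}$, $\|\tc(s)\|_Y\lesssim\bM_1(T)\la s\ra^{-1/4}$ and $\|v(s)\|_X\le\bM_2(T)\la s\ra^{-3/4}$ for $s\in[0,T]$, and that $\|(I+\wC_k)^{-1}\|_{B(Y)}+\|(I+\wC_k)^{-1}\|_{B(Y_1)}\lesssim1$ by Claim~\ref{cl:(1+wT2)^{-1}}.

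Two observations organize the computation. First, $\pd_y$ commutes with $\wP_1$ and, more generally, with every Fourier multiplier in $y$ whose symbol is supported in $[-\eta_0,\eta_0]$; since the kernels of $S^1_k[q_c]$ and $S^3_k[p]$ depend on $(y,y_1)$ only through $e^{i(y-y_1)\eta}$ and $S^1_k$ is $t$-independent, the matrices $\wS_1$ and $\wS_3$ are such multipliers and $[\pd_y,\wS_1]=[\pd_y,\wS_3]=0$. Second, if an operator $T$ maps $Y$ into $Y_1$ then $\|[\pd_y,T]\|_{B(Y,Y_1)}\le\|\pd_y T\|_{B(Y,Y_1)}+\|T\pd_y\|_{B(Y,Y_1)}\le2\eta_0\|T\|_{B(Y,Y_1)}$, because $\|\pd_y\|_{B(Y)}\le\eta_0$ and $\|\pd_y\|_{B(Y_1)}\le\eta_0$. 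This already disposes of $\bS_5=\wS_5(I+\wC_2)^{-1}$: the kernel estimates for $S^5_k$ and $S^6_k$ in Appendix~\ref{ap:opS} give $\|\wS_5\|_{B(Y,Y_1)}\lesssim\|v(s)\|_X$, hence $\|[\pd_y,\bS_5]\|_{B(Y,Y_1)}\lesssim\eta_0\|v(s)\|_X\lesssim\eta_0\bM_2(T)\la s\ra^{-3/4}$.

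For the remaining pieces the idea is to expand the commutator so that a $y$-derivative falls onto a coefficient built from $c$, producing a factor $c_y=\tc_y$ and hence the gain from $\la s\ra^{-1/4}$ to $\la s\ra^{-3/4}$. Because $\pd_y$ commutes with $\wP_1$, $[\pd_y,\cC_1]=\frac12\wP_1 M_{\pd_y(c^2-4)}\wP_1=\wP_1 M_{cc_y}\wP_1$ (writing $M_g$ for multiplication by $g$), and by Remark~\ref{rem:smoothness} together with Hölder's inequality $\|\wP_1(cc_y\wP_1 f)\|_{Y_1}\le(2\pi)^{-1/2}\|c\|_{L^\infty}\|\tc_y\|_{L^2}\|f\|_Y\lesssim\bM_1(T)\la s\ra^{-3/4}\|f\|_Y$; the same computation with $c^2-4$ replaced by $(c/2)^{1/2}-1$ gives $\|[\pd_y,\wC_2]\|_{B(Y,Y_1)}\lesssim\|\tc_y\|_{L^2}\lesssim\bM_1(T)\la s\ra^{-3/4}$, whence $\|[\pd_y,(I+\wC_2)^{-1}]\|_{B(Y,Y_1)}=\|(I+\wC_2)^{-1}[\pd_y,\wC_2](I+\wC_2)^{-1}\|_{B(Y,Y_1)}\lesssim\bM_1(T)\la s\ra^{-3/4}$. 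Writing $[\pd_y,\bS_j]=[\pd_y,\wS_j](I+\wC_2)^{-1}+\wS_j[\pd_y,(I+\wC_2)^{-1}]$, the second summand is controlled for every $j$ by the operator bounds of Claims~\ref{cl:S1}--\ref{cl:S4} (exponentially small in $s$ when $j=3,4$), and $[\pd_y,\wS_1]=[\pd_y,\wS_3]=0$ finishes $\bS_1$ and $\bS_3$. For $\bS_2$ and $\bS_4$ one must further estimate $[\pd_y,\wS_2]$ and $[\pd_y,\wS_4]$: differentiating under the integrals defining $S^2_k$ and $S^4_k$ and integrating by parts in $y_1$, the $i\eta$ produced by $\pd_y$ cancels the $-i\eta$ produced by $\pd_{y_1}e^{-iy_1\eta}$, leaving only terms in which $\pd_{y_1}$ acts on the factor $\tc(t,y_1)$ or, through the chain rule, on $c(t,y_1)$ inside $\overline{g_{k2}^*}$, $\overline{g_{k3}^*}$; each such term carries a factor $\tc_y$, so repeating the estimates of Claims~\ref{cl:S2} and \ref{cl:S4} gives $\|[\pd_y,\wS_2]\|_{B(Y,Y_1)}\lesssim\|\tc_y\|_{L^2}$ and $\|[\pd_y,\wS_4]\|_{B(Y,Y_1)}\lesssim e^{-\kappa s}(1+\|\tc_y\|_{L^2})$ (in $\wS_4$, the summands of the form $S^3_k[p]\circ M_g$ with $p\in\{\psi,\psi'\}$ and $g$ a smooth function of $c$ are handled by $[\pd_y,S^3_k[p]M_g]=S^3_k[p]M_{\pd_y g}$, again exponentially small and linear in $\tc_y$). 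Collecting the contributions and absorbing the bilinear terms $\bM_1(T)^2\la s\ra^{-1}$ and $\bM_1(T)\bM_2(T)\la s\ra^{-1}$ into $C\delta(\bM_1(T)+\bM_2(T))\la s\ra^{-3/4}$ for $\delta$ small completes the argument.

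The main obstacle is that $\wS_2$, $\wS_4$ and $\wS_5$ are genuinely nonlocal in $y$, so one cannot simply commute $\pd_y$ through them. The crux is therefore the integration-by-parts computation showing that $[\pd_y,\wS_2]$ and $[\pd_y,\wS_4]$ always carry a factor $\tc_y$, which upgrades the decay from the $\la s\ra^{-1/4}$ rate of the operators themselves to the required $\la s\ra^{-3/4}$; for $\wS_5$, whose operator norm already decays like $\la s\ra^{-3/4}$, the cheap bound $\|[\pd_y,T]\|_{B(Y,Y_1)}\le2\eta_0\|T\|_{B(Y,Y_1)}$ for $Y_1$-valued $T$ suffices and, crucially, avoids any need to control $\pd_y v$.
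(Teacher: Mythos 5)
Your proposal is correct and follows essentially the same route as the paper's proof: decompose $B_3$ according to \eqref{eq:def-B3}, note that $B_1$ is constant and that $\wS_1$, $\wS_3$ are Fourier multipliers in $y$ (hence commute with $\pd_y$), compute $[\pd_y,\wC_k]$ directly to pick up the factor $c_y$ that upgrades the decay to $\la s\ra^{-3/4}$, expand $[\pd_y,\bS_j]$ via $[\pd_y,(I+\wC_2)^{-1}]$, and use the cheap bound $\|[\pd_y,T]\|_{B(Y,Y_1)}\le 2\eta_0\|T\|_{B(Y,Y_1)}$ for $\bS_5$. The only cosmetic difference is your treatment of $\bS_4$: you perform the integration-by-parts computation to exhibit the $\tc_y$ factor in $[\pd_y,\wS_4]$, whereas the paper simply applies the cheap $2\eta_0\|\bS_4\|_{B(Y,Y_1)}$ bound (estimate \eqref{eq:[bS4,pdy]}) and lets the exponential factor $e^{-a(4s+L)}$ absorb the missing polynomial decay; both are valid, and the paper's shortcut is a little cleaner.
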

The proof is given in Appendix~\ref{sec:[B3,pdy]}.
\begin{claim}
  \label{cl:difS1S3}
There exist  positive constants $\eta_1$, $\delta$ and $C$ such that
if $\eta_0\in(0,\eta_1]$ and $\bM_1(T)\le\delta$, then for $t\in[0,T]$,
\begin{gather*}
\|\bS_1-\wS_1\|_{B(Y,Y_1)}\le C\bM_1(T)\la t\ra^{-1/4}\,,\\
\|\bS_3-\wS_3\|_{B(Y,Y_1)}\le C\bM_1(T)\la t\ra^{-1/4}e^{-a(4t+L)}\,.
\end{gather*}
\end{claim}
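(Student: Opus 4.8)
The plan is to reduce everything to the algebraic identity
\begin{equation*}
\bS_j-\wS_j=\wS_j\bigl((I+\wC_2)^{-1}-I\bigr)=-\wS_j(I+\wC_2)^{-1}\wC_2\qquad(j=1,3),
\end{equation*}
which holds since $\bS_j=\wS_j(I+\wC_2)^{-1}$. Reading the right-hand side as a composition $Y\xrightarrow{\wC_2}Y_1\xrightarrow{(I+\wC_2)^{-1}}Y_1\xrightarrow{\wS_j}Y_1$, the bound will follow by multiplying three estimates. The first is $\|\wC_2\|_{B(Y,Y_1)}\le C\bM_1(T)\la t\ra^{-1/4}$, which is Claim~\ref{cl:(1+wT2)^{-1}} (ultimately Claim~\ref{cl:T1-T2} in Appendix~\ref{ap:opS}, using that $\cC_2$ is multiplication by $(c(t,\cdot)/2)^{1/2}-1$ sandwiched between the projections $\wP_1$, and $\|(c/2)^{1/2}-1\|_{L^2}\lesssim\|\tc(t)\|_{L^2}\le\bM_1(T)\la t\ra^{-1/4}$). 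The second is $\|(I+\wC_2)^{-1}\|_{B(Y_1)}\le C$, again from Claim~\ref{cl:(1+wT2)^{-1}}. The third is a bound for $\|\wS_j\|_{B(Y_1)}$, which is the only place the two cases differ.

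For $j=1$ I would observe that each entry $S^1_k[q_c]$, with $q_c\in\{\varphi_c',\pd_c\varphi_c\}$, is the Fourier multiplier in $\eta$ whose symbol is $\mathbf{1}_{[-\eta_0,\eta_0]}(\eta)\int_\R q_2(z)\overline{g_{k1}^*(z,\eta,2)}\,dz$; for $\eta_0\le\eta_1$ with $\nu(\eta_1)<a$ (so that $g_{k1}^*(\cdot,\eta,2)$ stays bounded in $L^2_{-a}(\R)$ uniformly in $\eta\in[-\eta_0,\eta_0]$) this symbol is bounded uniformly in $\eta$. A Fourier multiplier with bounded symbol is bounded on $Y_1$ as well as on $Y$, so $\|\wS_1\|_{B(Y_1)}\le C$; this is the $Y_1$-version of Claim~\ref{cl:S1}. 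Multiplying the three estimates gives $\|\bS_1-\wS_1\|_{B(Y,Y_1)}\le C\bM_1(T)\la t\ra^{-1/4}$.

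For $j=3$ the scheme is the same, but now $\wS_3$ has entries $S^3_k[\psi]$, whose symbol is $\mathbf{1}_{[-\eta_0,\eta_0]}(\eta)\int_\R\psi(z+4t+L)\overline{g_k^*(z,\eta)}\,dz$; after the translation $w=z+4t+L$ the function $g_k^*$ is evaluated at $w-4t-L$, which runs to $-\infty$ on $\supp\psi$ as $t\to\infty$, and the exponential decay $|g_k^*(x,\eta)|\lesssim\la x\ra e^{-2|x|}$ as $x\to-\infty$ (see \eqref{eq:gk-bound2}) forces $\sup_{|\eta|\le\eta_0}|\text{symbol}|\lesssim e^{-a(4t+L)}$, hence $\|\wS_3\|_{B(Y_1)}\le Ce^{-a(4t+L)}$; this is the content of Claim~\ref{cl:S3}. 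Combining with the two estimates above yields $\|\bS_3-\wS_3\|_{B(Y,Y_1)}\le C\bM_1(T)\la t\ra^{-1/4}e^{-a(4t+L)}$, and $\eta_1,\delta$ are chosen small enough for Claims~\ref{cl:(1+wT2)^{-1}}, \ref{cl:S1} and \ref{cl:S3} to apply simultaneously.

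The only point requiring care is the domain bookkeeping in the third factor: one must know that $\wS_1$ and $\wS_3$ are bounded \emph{on} $Y_1$, not merely from $Y_1$ into $Y$, which is why it matters that $S^1_k$ and $S^3_k[\psi]$ are honest Fourier multipliers in $\eta$ (a multiplier with $L^\infty$ symbol preserves the class $\hat f\in L^\infty$ defining $Y_1$), and that the decay factor $\la t\ra^{-1/4}$ arises precisely from reading $\wC_2$ as an operator $Y\to Y_1$ rather than $Y\to Y$. Beyond this the argument is a routine three-factor operator-norm estimate with no analytic obstacle.
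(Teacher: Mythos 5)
Your proof is correct and matches the paper's own argument: the paper dispatches Claim~\ref{cl:difS1S3} with the citation ``follows immediately from \eqref{eq:wS1-new}, \eqref{eq:wS3-new} and Claim~\ref{cl:(1+wT2)^{-1}},'' and what you have written is precisely the algebraic identity $\bS_j-\wS_j=-\wS_j(I+\wC_2)^{-1}\wC_2$ together with the three-factor operator-norm estimate, which is exactly how those three ingredients combine. Your derivations of $\|\wS_1\|_{B(Y_1)}\lesssim 1$ and $\|\wS_3\|_{B(Y_1)}\lesssim e^{-a(4t+L)}$ via the Fourier-multiplier/shift argument also track Claims~\ref{cl:S1} and \ref{cl:S3}, which is how the paper obtains \eqref{eq:wS1-new} and \eqref{eq:wS3-new}.
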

Claim~\ref{cl:difS1S3} follows immediately from \eqref{eq:wS1-new},
\eqref{eq:wS3-new} and Claim~\ref{cl:(1+wT2)^{-1}}.

\begin{proof}[Proof of Lemma~\ref{lem:M1-bound}]
To apply Lemma~\ref{lem:decay-BB},
we will transform \eqref{eq:modeq} into a system of $b$ and $x_y$.
Let $A(t)=\diag(1,\pd_y)\mathcal{A}(t)\diag(1,\pd_y^{-1})$,
$B_5=B_1+\pd_y^2\wS_1$ and
\begin{align*}
& A_0=\diag(1,\pd_y)\left\{B_5^{-1}(B_2-\pd_y^2\wS_1)\pd_y^2
+\begin{pmatrix}0 & 0 \\ 2 & 0\end{pmatrix}\right\}\diag(1,\pd_y)^{-1}\,,\\
& A_1(t)=\diag(1,\pd_y)(B_4^{-1}-B_5^{-1})(B_2-\pd_y^2\wS_0)
\diag(\pd_y^2,\pd_y)+\diag(1,\pd_y)B_4^{-1}\widetilde{\mathcal{A}}_1(t)\,,
\end{align*}
where $\pd_y^{-1}=\mF_\eta^{-1}(i\eta)^{-1}\mF_y$.
Then $A(t)=A_0+A_1(t)$. 
Note that $\widetilde{\mathcal{A}}_1(t)=\linebreak
\widetilde{\mathcal{A}}_1(t)\diag(1,\pd_y^{-1})$.
Multiplying \eqref{eq:modeq} by $\diag(1,\pd_y)$ from the left,
we can transform \eqref{eq:modeq} into
\begin{equation}
  \label{eq:modeq2}
\pd_t\begin{pmatrix}b \\ x_y\end{pmatrix}=A(t)
\begin{pmatrix}b \\ x_y\end{pmatrix}\,+\sum_{i=1}^8\diag(1,\pd_y)\cN_i\,.
\end{equation}
\par
Now we will show that $A(t)$ satisfies the hypothesis 
\eqref{eq:H} of Lemma~\ref{lem:decay-BB}. Let
$A_0(\eta)$ be the Fourier transform of the operator $A_0$.
Then 
\begin{equation}
  \label{eq:A0A*}
\begin{split}
A_0(\eta)=& \begin{pmatrix}1 & 0\\ 0 & i\eta\end{pmatrix}
(B_1^{-1}+O(\eta^2))(B_2+O(\eta^2))
\begin{pmatrix}-\eta^2 & 0\\ 0 & i\eta\end{pmatrix}
+
\begin{pmatrix}  0 & 0 \\ 2i\eta & 0\end{pmatrix}
\\=&
\begin{pmatrix}-3\eta^2 & 8i\eta \\ i\eta(2+\mu_3\eta^2) & -\eta^2
\end{pmatrix}
+\begin{pmatrix} O(\eta^4) & O(\eta^3)\\ O(\eta^5) & O(\eta^4)\end{pmatrix}\,,
\end{split}  
\end{equation}
where 
$\mu_3=-\frac{\mu_1}{2}+\frac{3}{4}=\frac{1}{2}+\frac{\pi^2}{24}>1/8$.
Claim~\ref{cl:akbound} in Appendix~\ref{sec:Rk} implies
$\|A_1(t)\|_{B(Y)}\lesssim e^{-a(4t+L)}$.
Thus we prove that $A(t)=A_0+A_1(t)$ satisfies \eqref{eq:H}.
\par
Let $U(t,s)$ be the semigroup generated by $A(t)$. 
By the variation of the constant formula,
\begin{equation*}
\begin{pmatrix}b(t)\\ x_y(t)\end{pmatrix}
=U(t,0)\begin{pmatrix}b(0)\\ x_y(0)\end{pmatrix}
+\sum_{i=1}^8\int_0^t U(t,s) \diag(1,\pd_y)\cN_i(s)ds\,.
\end{equation*}
By  Lemma~\ref{lem:decomp2} and Claim~\ref{cl:b-capprox},
$$\|b(0)\|_{Y_1}+\|x_y(0)\|_{Y_1}\lesssim
\|\tc(0)\|_{Y_1}+\eta_0\|\tx(0)\|_{Y_1}\lesssim\|v_0\|_{X_1}\,.$$ 
Applying Lemma~\ref{lem:decay-BB} to the first term of the 
right hand side, we have for $k\ge0$,
\begin{equation}
  \label{eq:fN0}
\begin{split}
& \|\pd_y^kb(t)\|_Y+\|\pd_y^{k+1}x(t)\|_Y \lesssim  
(1+t)^{-(2k+1)/4}\|v_0\|_{X_1}
+ \fN_1^k+\fN_2^k\,,\\
& \fN_1^k=\int_0^t\|\pd_y^kU(t,s)\diag(1,\pd_y)\cN_1(s)\|_Y\,ds\,,\\
& \fN_2^k=\int_0^t\left\|\pd_y^kU(t,s)\sum_{i=2}^8\diag(1,\pd_y)\cN_i(s)
\right\|_Y\,ds\,.
\end{split}  
\end{equation}
\par 
Now we will estimate $\fN_i^k$ ($i=1$, $2$, $k=0$, $1$, $2$).
First, we  estimate $\fN_1^k$. Let
$n_1=6bx_y$ and $n_2=2(\tilde{c}-b)+3(x_y)^2$.
Then $\diag(1,\pd_y)\cN_1=\pd_y\wP_1{}^t(n_1, n_2)$.
Since $[\pd_y,U(t,s)]=0$,
\begin{equation}
  \label{eq:fN1-est0}
\pd_y^kU(t,s)\diag(1,\pd_y)\cN_1=\pd_y^{k+1}U(t,s){}^t\wP_1(n_1(s), n_2(s))\,.
\end{equation}
By \eqref{eq:Y1-L1}, Claim~\ref{cl:b-capprox}
and the fact that $[\pd_y,\wP_1]=0$,
\begin{equation}
  \label{eq:cN1-est}
\begin{split}
\|\wP_1n_1\|_{Y_1}+\|\wP_1n_2\|_{Y_1} \lesssim &
\|b\|_Y\|x_y\|_Y+\|x_y\|_Y^2+\|b-\tilde{c}\|_{Y_1}
\\ \lesssim & (1+\|\tc\|_{L^\infty})\|\tc\|_Y\|x_y\|_Y+\|\tx_y\|_Y^2+\|\tc\|_Y^2
\\ \lesssim & \bM_1(T)^2\la s\ra^{-1/2}\quad\text{for $s\in[0,T]$,}
\end{split}  
\end{equation}
\begin{equation}
  \label{eq:N1'}
  \begin{split}
& \|\pd_y\wP_1n_1\|_{Y_1}+\|\pd_y\wP_1n_2\|_{Y_1} \\ \lesssim &
\|b_y\|_Y\|x_y\|_Y+\|b\|_Y\|x_{yy}\|_Y+\|x_y\|_Y\|x_{yy}\|_Y+\|b_y-c_y\|_{Y_1}
\\ \lesssim & M_1(T)^2\la s\ra^{-1}\quad\text{for $s\in[0,T]$,}
  \end{split}
\end{equation}
\begin{equation}
  \label{eq:N1''}
\begin{split}
& \|\pd_y^2\wP_1n_1(s)\|_{Y_1}+\|\pd_y^2\wP_1n_2(s)\|_{Y_1}
\\ \lesssim & 
\|(bx_y)_{yy}\|_{L^1}+\|b_{yy}-\tilde{c}_{yy}\|_{Y_1}+\|((x_y)^2)_{yy}\|_{L^1}
\\ \lesssim & \|\tilde{c}\|_Y\|x_{yyy}\|_Y+\|c_y\|_Y\|x_{yy}\|_Y
+\|c_{yy}\|_Y\|x_y\|_Y+\|\tilde{c}\|_Y\|c_{yy}\|_Y+\|c_y\|_Y^2
\\ & +\|x_y\|_Y\|x_{yyy}\|_Y+\|x_{yy}\|_Y^2
\\ \lesssim & \bM_1(T)^2\la s\ra^{-5/4}\quad
\text{for any $s\in[0,T]$.}
\end{split}  
\end{equation}
Using Lemma~\ref{lem:decay-BB}, \eqref{eq:fN1-est0} with $k=0$ and
\eqref{eq:cN1-est}, we obtain
\begin{equation}
  \label{eq:fN1-esta}
\begin{split}
 \fN_1^0\lesssim & \bM_1(T)^2 \int_0^t \la t-s\ra^{-3/4}\la s\ra^{-1/2}ds
\lesssim \bM_1(T)^2\la t\ra^{-1/4}\quad\text{ for $t\in[0,T]$.}
\end{split}  
\end{equation}
In the last line, we use Claim~\ref{cl:abc}.
Using Lemma~\ref{lem:decay-BB}, \eqref{eq:cN1-est} for
$s\in [0,t/2]$ and \eqref{eq:N1'} for $s\in[t/2,t]$, we obtain
\begin{equation}
  \label{eq:fN1-estb}
\begin{split}
\fN_1^1\lesssim & \int_0^{t/2}\|\pd_y^2U(t,s)\|_{B(Y_1,Y)}
(\|\wP_1n_1(s)\|_{Y_1}+\|\wP_1n_2(s)\|_{Y_1})\,ds
\\ &+\int_{t/2}^t\|\pd_yU(t,s)\|_{B(Y_1,Y)}
(\|\pd_y\wP_1n_1(s)\|_{Y_1}+\|\pd_y\wP_1n_2(s)\|_{Y_1})\,ds
\\ \lesssim & 
\bM_1(T)^2 \left(\int_0^{t/2} \la t-s\ra^{-5/4}\la s\ra^{-1/2}ds
+\int_{t/2}^t\la t-s\ra^{-3/4}\la s\ra^{-1}ds\right)
\\ \lesssim & \bM_1(T)^2\la t\ra^{-3/4}\quad\text{ for $t\in[0,T]$.}
\end{split}  
\end{equation}
Similarly, we have
\begin{equation}
  \label{eq:fN1-estc}
\begin{split}
\fN_1^2\lesssim & \int_0^{t/2}\|\pd_y^3U(t,s)\|_{B(Y_1,Y)}
(\|\wP_1n_1(s)\|_{Y_1}+\|\wP_1n_2(s)\|_{Y_1})\,ds
\\ &+\int_{t/2}^t\|\pd_yU(t,s)\|_{B(Y_1,Y)}
(\|\pd_y^2\wP_1n_1(s)\|_{Y_1}+\|\pd_y^2\wP_1n_2(s)\|_{Y_1})\,ds
\\ \lesssim & 
\bM_1(T)^2\left(\int_0^{t/2}\la t-s\ra^{-7/4}\la s\ra^{-1/2}ds
+\int_{t/2}^t \la t-s\ra^{-3/4}\la s\ra^{-5/4}ds\right)
\\ \lesssim & \bM_1(T)^2\la t\ra^{-1}\quad\text{for $t\in[0,T]$.}
\end{split}  
\end{equation}
\par
The rest of nonlinear terms $\sum_{i=2}^8\diag(1,\pd_y)\cN_i$ can be rewritten
as a sum of $\cN'(t)$ and $\pd_y\cN''(t)$ satisfying
\begin{equation}
  \label{eq:N'N''}
  \begin{split}
& \|\cN'(t)\|_{Y_1}\lesssim (\bM_1(T)+\bM_2(T))^2\la t\ra^{-5/4}  
\quad\text{for $t\in[0,T]$,}\\
& \|\cN''(t)\|_{Y_1}\lesssim \bM_1(T)(\bM_1(T)+\bM_2(T))\la t\ra^{-1}
\quad\text{for $t\in[0,T]$.}
  \end{split}
\end{equation}
First, we prove decay estimates of $\fN_2^k$ ($k=0$, $1$, $2$)
presuming that \eqref{eq:N'N''} is true.
Then for $t\in[0,T]$ and $0\le k\le 2$,
\begin{equation}
  \label{eq:fN2-est}
\fN_2^k\lesssim (\bM_1(T)+\bM_2(T))^2\la t\ra^{-\min\{1,(2k+1)/4\}}\,.
\end{equation}
Indeed, Lemma~\ref{lem:decay-BB} implies that for $t\in[0,T]$,
\begin{align*}
\int_0^t \|\pd_y^kU(t,s)\cN'(s)\|_Y\,ds 
\lesssim & \int_0^t \|\pd_y^kU(t,s)\|_{B(Y_1,Y)}\|\cN'(s)\|_{Y_1}\,ds
\\ \lesssim &  (\bM_1(T)+\bM_2(T))^2
\int_0^t \la t-s\ra^{-(2k+1)/4}\la s\ra^{-5/4}ds\,,
\end{align*}
\begin{align*}
\int_0^t \|\pd_y^kU(t,s)\pd_y\cN''(s)\|_Y\,ds 
\lesssim & \int_0^t \|\pd_y^{k+1}U(t,s)\|_{B(Y_1,Y)}\|\cN''(s)\|_{Y_1}\,ds
\\ \lesssim &  \bM_1(T)(\bM_1(T)+\bM_2(T))
\int_0^t \la t-s\ra^{-(2k+3)/4}\la s\ra^{-1}ds\,.
\end{align*}
By  Claim~\ref{cl:abc},
$$\int_0^t \la t-s\ra^{-(2k+1)/4}\la s\ra^{-5/4}ds \lesssim
\la t\ra^{-(2k+1)/4}\quad\text{for $k=0$, $1$, $2$,}$$
and
$$\int_0^t \la t-s\ra^{-(2k+3)/4}\la s\ra^{-1}ds \lesssim
\begin{cases}
&  \la t\ra^{-3/4}\log\la t\ra\text{ for $k=0$,}\\
&  \la t\ra^{-1}\text{ for $k=1$, $2$.}
\end{cases}$$
Thus we have \eqref{eq:fN2-est} presuming \eqref{eq:N'N''}.
\par

Now we turn to estimate $\cN_k$ ($2\le k\le 8$).
First, we will estimate $\cN_2^k$. Let $E_1=\diag(1,0)$, $E_2=\diag(0,1)$ and
\begin{gather*}
\cN_{21}= B_3^{-1}E_1\wP_1 R^7\,,\enskip
\cN_{22}=(B_3^{-1}-B_1^{-1})E_2 \wP_1 R^7\,,\enskip
\cN_{23}=B_1^{-1}E_2 \wP_1 R^7\,.
\end{gather*}
Then $\cN_2=\cN_{21}+\cN_{22}+\cN_{23}$.
Claim~\ref{cl:invB_3} implies that 
$B_3^{-1}$ is uniformly bounded on $Y$ and $Y_1$ for $s\in[0,T]$.
Thus by Claim~\ref{cl:R6},
\begin{equation}
  \label{eq:cN21-est}
\|\cN_{21}\|_{Y_1}\lesssim \|R^7_1\|_{Y_1}
\lesssim  \bM_1(T)^2\la s\ra^{-5/4}\quad\text{for $s\in[0,T]$.}
\end{equation}
\par
Next we will estimate $\|\cN_{22}\|_{Y_1}$.
Since $(I+\wC_2)^{-1}B_1^{-1}E_2=\frac12E_2$, we have
$\bS_jB_1^{-1}=\frac12\wS_j$ for $1\le j\le 5$ and $\bS_3B_1^{-1}E_2\wP_1R^7= 0$.
Thus $\cN_{22}$ can be rewritten as
\begin{align*}
\cN_{22}=& B_3^{-1}(B_1-B_3)B_1^{-1}E_2\wP_1R^7
\\=& -B_3^{-1}\left\{\wC_1-\pd_y^2(\bS_1+\bS_2)+\bS_3+\bS_4+\bS_5\right\}
B_1^{-1}E_2\wP_1R^7
\\=&  -\frac12B_3^{-1}\left\{\wC_1-\pd_y^2(\wS_1+\wS_2)+\wS_4+\wS_5\right\}
E_2 R^7\,.
\end{align*}
By Claims~\ref{cl:S1}--\ref{cl:S5-S6}, \ref{cl:T1-T2} and
\ref{cl:R6}, we have for $s\in [0,T]$ and $j=1$, $2$,
\begin{align*}
\|\pd_y^2\wS^1E_2R^7\|_{Y_1}\lesssim & \eta_0
 \|\pd_yS^1_j[\varphi_c'](R^7_2)\|_{Y_1}
\lesssim \|\pd_yR^7_2\|_{Y_1}\lesssim M_1(T)^2\la s\ra^{-5/4}\,,\\
\|\pd_y^2\wS^2E_2R^7\|_{Y_1}\lesssim & \eta_0\|\pd_yS^2_j[\varphi_c'](R^7_2)\|_{Y_1}
\lesssim  \|\tc\|_Y\|\pd_yR^7_2\|_Y+\|c_y\|_Y\|R^7_2\|_Y
\\ \lesssim & M_1(T)^2\la s\ra^{-3/2}\,,  
\end{align*}
\begin{equation}
  \label{eq:wS4R7}
\begin{split}
& \left\|\wS_4 E_2R^7\right\|_{Y_1}\lesssim 
\sum_{j=3,4\,, k=1,2}\|S^j_k[\psi'](\sqrt{c}-\sqrt{2})R^7_2\|_{Y_1}
\\ \lesssim & 
\left(\|\|S^3_k[\psi']\|_{B(Y_1)}\|(\sqrt{c}-\sqrt{2})R^7_2\|_{L^1}
+\|\|S^4_k[\psi']\|_{B(Y,Y_1)}\|(\sqrt{c}-\sqrt{2})R^7_2\|_{L^2}\right)
\\ \lesssim & \bM_1(T)^3e^{-a(4s+L)}\la s\ra^{-3/2}\,,
\end{split}  
\end{equation}
\begin{align*}
& \|\wS^5E_2R^7\|_{Y_1}\lesssim \sum_{k=1,2}\|S^6_k(R^7_2)\|_{Y_1}
\lesssim \|v(t,\cdot)\|_X\|R^7_2\|_Y
\lesssim \bM_1(T)^2\bM_2(T)\la s\ra^{-2}\,,\\  
& \|\wC_1E_2R^7\|_{Y_1}\le \|\cC_1\|_{B(Y,Y_1)}\|R^7_2\|_Y
\lesssim \bM_1(T)^3\la s\ra^{-3/2}\,.
\end{align*}
Combining the above, we have
\begin{equation}
  \label{eq:cN22-est}
\|\cN_{22}\|_{Y_1}\lesssim \bM_1(T)^2\la s\ra^{-5/4}  \quad\text{for $s\in[0,T]$.}
\end{equation}
By Claim~\ref{cl:R6}
and the fact that $\diag(1,\pd_y)\cN_{23}=\frac12\pd_y(0,R^7_2)$, 
\begin{equation}
  \label{eq:cN23-est}
\|\diag(1,\pd_y)\cN_{23}\|_{Y_1}\lesssim \|\pd_yR^7_2\|_{Y_1}
\lesssim  \bM_1(T)^2\la s\ra^{-5/4}  \quad\text{for $s\in[0,T]$.}
\end{equation}
Combining \eqref{eq:cN21-est}, \eqref{eq:cN22-est} and \eqref{eq:cN23-est},
we have 
\begin{equation}
  \label{eq:cN2-est}
\|\diag(1,\pd_y)\cN_2\|_{Y_1} \lesssim  \bM_1(T)^2\la s\ra^{-5/4}
\quad\text{for $s\in[0,T]$.}
\end{equation}
\par
Next, we will estimate $\cN_3$.
Claims~\ref{cl:R3} and \ref{cl:R4-R5} imply that for $s\in[0,T]$,
$$\|R^3\|_{Y_1}+\|R^4\|_{Y_1}+\|R^6\|_{Y_1}\lesssim
(\bM_1(T)+\bM_2(T))^2\la s\ra^{-3/2}\,.$$
Using Claims~\ref{cl:S3}, \ref{cl:S4} and \eqref{eq:Y1-L1}, we can
show that for $s\in[0,T]$,
\begin{align*}
& \left\|\wS_4 \begin{pmatrix}0\\ \tc \end{pmatrix}
\right\|_{Y_1} \lesssim
\sum_{\substack{j=3,4\\ k=1,2}}\|S^j_k[\psi'](\sqrt{c}-\sqrt{2})\tc\|_{Y_1}
\lesssim  \bM_1(T)^2 e^{-a(4t+L)}\la s\ra^{-1/2}
\end{align*}
in the same way as \eqref{eq:wS4R7}. Thus for $s\in[0,T]$,
\begin{equation}
  \label{eq:cN3-est}
\|\cN_3(s)\|_{Y_1} \lesssim  \|B_3^{-1}\|_{B(Y_1)}\|\wR^1\|_{Y_1}
 \lesssim  (\bM_1(T)+\bM_2(T))^2\la s\ra^{-3/2}\,.
\end{equation}
\par

Next we will estimate $\cN_4^k$.
Let $\cN_{41}=B_3^{-1}\wR^2$ and $\cN_{42}= [B_3^{-1},\pd_y]\wR^2$.
Then $\cN_4=\pd_y\cN_{41}+\cN_{42}$. By Claims~\ref{cl:R1-R2}, \ref{cl:R4-R5} and
\eqref{eq:pdy-bound},
\begin{equation}
  \label{eq:cN40-est}
\|\wR^2\|_{Y_1}\lesssim \bM_1(T)(\bM_1(T)+\bM_2(T))\la s\ra^{-1}
\quad\text{for $s\in[0,T]$.}
\end{equation}
By Claims~\ref{cl:invB_3} and \ref{cl:[B3,pdy]},
\begin{equation}
  \label{eq:[b3-1,pdy]}
\|[B_3^{-1},\pd_y]\|_{B(Y,Y_1)} \lesssim  (\bM_1(T)+\bM_2(T))\la s\ra^{-3/4}
\quad\text{for $s\in[0,T]$.}  
\end{equation}
Combining \eqref{eq:cN40-est}, \eqref{eq:[b3-1,pdy]} and Claim~\ref{cl:invB_3},
we have for $s\in[0,T]$, 
\begin{gather}
  \label{eq:cN41-est}
\|\cN_{41}(s)\|_{Y_1}
\lesssim  \bM_1(T)(\bM_1(T)+\bM_2(T))\la s\ra^{-1}\,,
\\  \label{eq:cN42-est}
\|\cN_{42}(s)\|_{Y_1}\lesssim \bM_1(T)(\bM_1(T)+\bM_2(T))^2\la s\ra^{-7/4}\,.
\end{gather}
\par

Next we will estimate $\cN_5$.
Let $r_8=\wP_1\{(I+\mathcal{C}_2)(c_yx_y)-(bx_y)_y\}$. Then
$$\|r_8\|_Y\lesssim \|r_8\|_{Y_1}\lesssim \bM_1(T)^2\la s\ra^{-1}\,,
\quad \|\pd_yr_8\|_{Y_1}\lesssim \bM_1(T)^2\la s\ra^{-5/4}\,.$$
Here we use Claims~\ref{cl:(1+wT2)^{-1}} and \ref{cl:T-pdy}.
By \eqref{eq:def-B3} and \eqref{eq:pdy-bound},  
\begin{align*}
\|R^8\|_{Y_1}\lesssim & \|\bS_1\|_{B(Y_1)}\|\pd_yr_8\|_{Y_1}
+\left(\|[\pd_y,\bS_1]\|_{B(Y,Y_1)}+\sum_{j=2,4,5}\|\bS_j\|_{B(Y,Y_1)}\right)\|r_8\|_Y
\\ & +\|\bS_3\|_{B(Y_1)}\|r_8\|_{Y_1}\,.
\end{align*}
Combining the above with \eqref{eq:bS1-bound}--\eqref{eq:bS5-bound} and
\eqref{eq:[bS1,pdy]} in Appendix~\ref{sec:[B3,pdy]},
we have 
\begin{equation*}
\|R^8\|_{Y_1}\lesssim \bM_1(T)^2(1+\bM_1(T)+\bM_2(T))\la s\ra^{-5/4}
\quad\text{for $s\in[0,T]$.}  
\end{equation*}
Since $\|\pd_y^2\wS_0\|_{B(Y_1)}\lesssim\eta_0^2$ by Claim~\ref{cl:S1} and
\eqref{eq:pdy-bound}, it follows from Claim~\ref{cl:b-capprox}  that
$$\|R^9\|_{Y_1}\lesssim \|b_{yy}-c_{yy}\|_{Y_1}
\lesssim \bM_1(T)^2\la s\ra^{-5/4}\quad\text{for $s\in[0,T]$.}$$
By Claims~\ref{cl:wA1-bound} and ~\ref{cl:b-capprox},
$$\|R^{10}\|_{Y_1}\lesssim e^{-a(4s+L)}\|b-\tc\|_{Y_1}
\lesssim \bM_1(T)^2\la s\ra^{-1/2}e^{-a(4s+L)}\quad\text{for $s\in[0,T]$.}$$
Thus we have 
\begin{equation}
  \label{eq:cN5-est}
\|\cN_5\|_{Y_1}\lesssim
\|\wR^3(s)\|\lesssim \bM_1(T)^2\la s\ra^{-5/4}\text{ for $s\in[0,T]$.}  
\end{equation}
\par

Next we will estimate $\cN_6^k$.
Since the second column of $\widetilde{A}_1(t)$ is $0$,
$$\cN_6=(B_3^{-1}-B_4^{-1})\widetilde{A}_1(t)
\begin{pmatrix}  b\\ 0\end{pmatrix}\,.$$
By the definitions of $B_3$ and $B_4$,
$$B_3-B_4=\wC_1+\pd_y^2(\bS_1-\wS_1)+\pd_y^2\bS_2-(\bS_3-\wS_3)-\bS_4-\bS_5\,.$$
Hence it follows from Claims~\ref{cl:(1+wT2)^{-1}}, \ref{cl:difS1S3},
\eqref{eq:wS2bound} and \eqref{eq:bS4-bound}--\eqref{eq:pdy-bound} that
\begin{align*}
\|B_4-B_3\|_{B(Y,Y_1)} \le 
& \|\wC_1\|_{B(Y,Y_1)}+\sum_{j=1,3}\|\pd_y^2(\bS_j-\wS_j)\|_{B(Y,Y_1)}
+\sum_{j=2,4,5}\|\pd_y^2\bS_j\|_{B(Y,Y_1)}
\\ \lesssim & (\bM_1(T)+\bM_2(T))\la s\ra^{-1/4} \quad\text{for $s\in[0,T]$.}
  \end{align*}   
In view of Claims~\ref{cl:invB_3} and \ref{cl:invB_4} and the above,
we have for $s\in[0,T]$,
\begin{equation}
  \label{eq:invb3-b4}
  \begin{split}
\|B_3^{-1}-B_4^{-1}\|_{B(Y,Y_1)} \le & 
\|B_4^{-1}\|_{B(Y_1)}\|B_4-B_3\|_{B(Y,Y_1)}\|B_3^{-1}\|_{B(Y)}
\\ \lesssim & (\bM_1(T)+\bM_2(T))\la s\ra^{-1/4}\,.    
  \end{split}
\end{equation}
Combining Claim~\ref{cl:wA1-bound} and \eqref{eq:invb3-b4}, we have
for $s\in[0,T]$,
\begin{equation}
  \label{eq:cN6-est}
  \begin{split}
\|\cN_6\|_{Y_1}\lesssim &
\|B_3^{-1}-B_4^{-1}\|_{B(Y,Y_1)}\|\widetilde{\mathcal{A}}_1(t)\|_{B(Y)}\|b\|_Y
\\ \lesssim  & (\bM_1(T)+\bM_2(T))\bM_1(T)\la s\ra^{-1/2}e^{-a(4s+L)}\,.
  \end{split}
\end{equation}
\par
Since $\|\wS_0\|_{B(Y)}\lesssim1$ by Claim~\ref{cl:S1},
it follows from \eqref{eq:pdy-bound} and \eqref{eq:invb3-b4} that
for $s\in[0,T]$,
\begin{equation}
  \label{eq:cN7-est}
\|\cN_7\|_{Y_1}\lesssim \|B_3^{-1}-B_4^{-1}\|_{B(Y,Y_1)}\|b_{yy}\|_Y
\lesssim   (\bM_1(T)+\bM_2(T))\bM_1(T)\la s\ra^{-5/4}\,.
\end{equation}
\par
Finally, we will estimate $\cN_8^k$. Let
\begin{align*}
& 
\cN_{81}=(B_4^{-1}-B_3^{-1})\pd_y^2\wS_0\begin{pmatrix}0 \\ x_{yy} \end{pmatrix}
\,,\quad
\wR^4= B_4^{-1}B_2\begin{pmatrix}0 \\ x_{yy} \end{pmatrix}\,,
\\ & 
\cN_{82}=B_3^{-1}(B_4-B_3+\wC_1)\wR^4\,,\quad
\cN_{83}=B_1^{-1}(B_3-B_1-\wC_1)B_3^{-1}\wC_1\wR^4\,,
\\ & \cN_{84}=-B_1^{-1}\wC_1(I-B_3^{-1}\wC_1)\wR^4\,.
\end{align*}
Then $\cN_8=\sum_{1\le j\le 4}\cN_{8j}$.
Since $[\pd_y,\wS_0]=0$, we have 
$$\left\|\pd_y\wS_0  \begin{pmatrix} 0 \\ x_{yy} \end{pmatrix}\right\|_Y
\lesssim \|x_{yyy}\|_Y.$$
Combining the above with \eqref{eq:pdy-bound} and \eqref{eq:invb3-b4},
we see that for $s\in[0,T]$,
\begin{equation}
  \label{eq:cN81-est}
  \begin{split}
\|\cN_{81}\|_{Y_1}  \lesssim  &
\eta_0(\bM_1(T)+\bM_2(T))\la s\ra^{-1/4}\|x_{yyy}\|_Y
\\ \lesssim & \eta_0(\bM_1(T)+\bM_2(T))\bM_1(T)\la s\ra^{-5/4}\,.    
  \end{split}
\end{equation}
Next we will estimate $\cN_{82}$. Let
$$n_3=\pd_y(\bS_1-\wS_1+\bS_2)\pd_y\wR^4\,,\enskip
n_4=\pd_y[\pd_y,\bS_1+\bS_2]\wR^4\,,\enskip
n_5=(\bS_3-\wS_3+\bS_4+\bS_5)\wR_4\,.$$
Then $\cN_{82}=B_3^{-1}(n_5-n_3-n_4)$.
Claim~\ref{cl:invB_4} and the fact that  $[\pd_y,B_4]=0$ imply that 
for $s\in[0,T]$,
\begin{equation}
  \label{eq:wR4-est}
  \begin{split}
& \|\wR^4\|_Y\lesssim \|x_{yy}\|_Y\lesssim \bM_1(T)\la s\ra^{-3/4}\,,\\ 
& \|\pd_y\wR^4\|_Y\lesssim \|x_{yyy}\|_Y\lesssim \bM_1(T)\la s\ra^{-1}\,.
  \end{split}
\end{equation}
We see that $\|n_3\|_{Y_1} \lesssim \bM_1(T)^2\la s\ra^{-5/4}$ follows from
Claim~\ref{cl:difS1S3}, \eqref{eq:wS2bound}, \eqref{eq:pdy-bound}
and \eqref{eq:wR4-est} and that 
$\|n_4\|_{Y_1} \lesssim \bM_1(T)^2\la s\ra^{-3/2}$ follows from
\eqref{eq:[bS1,pdy]}, \eqref{eq:[pdy,wS2]}, \eqref{eq:pdy-bound}
and \eqref{eq:wR4-est}.
By Claim~\ref{cl:difS1S3}, \eqref{eq:bS4-bound}, \eqref{eq:bS5-bound}
and \eqref{eq:wR4-est}, 
$$\|n_5\|_{Y_1}\lesssim \bM_1(T)\la s\ra^{-3/4}
(\bM_1(T)e^{-a(4s+L)}\la s\ra^{-1/4}+\bM_2(T)\la s\ra^{-3/4})\,.$$
Thus we have
\begin{equation}
  \label{eq:cN82-est}
\|\cN_{82}\|_{Y_1}\lesssim \bM_1(T)(\bM_1(T)+\bM_2(T))\la s\ra^{-5/4}
\quad\text{for $s\in[0,T]$.}
\end{equation}
Next we will estimate $\cN_{83}$. Let
\begin{align*}
& n_6=[\pd_y,\bS_1+\bS_2]B_3^{-1}\wC_1\wR^4\,,
\quad n_7=(\bS_1+\bS_2)[\pd_y,B_3^{-1}]\wC_1\wR^4\,,
\\ & n_8=(\bS_1+\bS_2)B_3^{-1}\pd_y\wC_1\wR^4\,,\quad
n_9=(\bS_3+\bS_4+\bS_5)B_3^{-1}\wC_1\wR^4\,.
\end{align*}
Then $\cN_{83}=B_1^{-1}\pd_y(n_6+n_7+n_8)-B_1^{-1}n_9$.
By Claims~\ref{cl:(1+wT2)^{-1}}, \ref{cl:invB_3}, \eqref{eq:[bS1,pdy]}
and \eqref{eq:[pdy,wS2]},
\begin{align*}
\|n_6\|_{Y_1}\lesssim & \sum_{j=1,2}
\|[\pd_y,\bS_j]\|_{B(Y,Y_1)}\|B_3^{-1}\|_{B(Y)}\|\wC_1\|_{B(Y)}\|\wR^4\|_Y{Y_1}
\\ \lesssim & \bM_1(T)^2\la s\ra^{-2}\quad\text{for $s\in[0,T]$.}
\end{align*}
Using Claims~\ref{cl:invB_3}, \ref{cl:invB_3},
\eqref{eq:bS1-bound}--\eqref{eq:bS5-bound}, \eqref{eq:wR4-est}
and \eqref{eq:[pdy,T]}, we can obtain
\begin{gather*}
\|n_7\|_{Y_1}\lesssim \bM_1(T)^2(\bM_1(T)+\bM_2(T))\la s\ra^{-2}\,,\quad
\|n_8\|_{Y_1}\lesssim \bM_1(T)^2\la s\ra^{-5/4}\,, \\
\|n_9\|_{Y_1}\lesssim \bM_1(T)^2\la s\ra^{-1}(e^{-a(4s+L)}+\bM_2(T)\la s\ra^{-3/4})\,. 
\end{gather*}
Combining the above, we have
\begin{equation}
  \label{eq:cN83-est}
  \|\cN_{83}\|_{Y_1}\lesssim \bM_1(T)^2\la s\ra^{-5/4}\,.
\end{equation}
Since $\diag(1,\pd_y)B_1^{-1}\wC_1=\frac12\pd_y\wC_1$,
\begin{align*}
2\diag(1,\pd_y)\cN_{84}=& 
\left\{[\pd_y,\wC_1]B_3^{-1}+\wC_1[\pd_y,B_3^{-1}]+(\wC_1B_3^{-1}-I)\pd_y\right\}
\wC_1\wR^4\,.
\end{align*}
Combining Claims~\ref{cl:invB_3}, \ref{cl:T1-T2}, \ref{cl:T-pdy} and
\ref{cl:[B3,pdy]} with \eqref{eq:wR4-est}, we obtain
\begin{equation}
  \label{eq:cN84'}
\|\diag(1,\pd_y)\cN_{84}\|_{Y_1}\lesssim
\bM_1(T)^2\la s\ra^{-5/4}\quad\text{for $s\in[0,T]$.} 
\end{equation}
Thus for $s\in[0,T]$,
\begin{equation}
  \label{eq:cN8-est}
\sum_{1\le j\le 4}\|\diag(1,\pd_y)\cN_{8j}\|\lesssim
\bM_1(T)(\bM_1(T)+\bM_2(T))\la s\ra^{-5/4}
\end{equation}
follows from \eqref{eq:cN81-est}, \eqref{eq:cN82-est}, \eqref{eq:cN83-est} and
\eqref{eq:cN84'}.
\par

Now let 
$\cN'=\diag(1,\pd_y)\left(\sum_{\substack{2\le i\le 8\\ i\ne4}}\cN_i
+\cN_{42}\right)$ and $\cN''=\diag(1,\pd_y)\cN_{41}$.
Then \eqref{eq:N'N''} follows immediately from 
\eqref{eq:cN2-est}, \eqref{eq:cN3-est}, \eqref{eq:cN41-est},
\eqref{eq:cN42-est}, \eqref{eq:cN5-est}, \eqref{eq:cN6-est},
\eqref{eq:cN7-est} and \eqref{eq:cN8-est}.
By \eqref{eq:fN0} and Claim~\ref{cl:b-capprox},
\begin{equation}
  \label{eq:fNfinal}
\begin{split}
\|\pd_y^k\tc(t)\|_Y+\|\pd_y^{k+1}x(t)\|_Y \lesssim &
\la t\ra^{-(2k+1)/4}\|v_0\|_{X_1}+\fN_1^k+\fN_2^k
\\ & +\bM_1(T)^2\la t\ra^{-\min\{(2k+3)/4,3/2\}}\,.
\end{split}    
\end{equation}
for $0\le k\le 2$ and $t\in[0,T]$.
Combining \eqref{eq:fN1-esta}--\eqref{eq:fN1-estc} and \eqref{eq:fN2-est} with
\eqref{eq:fNfinal}, we obtain \eqref{eq:M1-bound}.
This completes the proof of Lemma~\ref{lem:M1-bound}.
\end{proof}
\bigskip

\section{$L^2$ bound on $v(t,z,y)$}
\label{sec:L2norm}
In this section, we will estimate $\bM_3(T)$
assuming that $\bM_1(T)$ and $\bM_2(T)$ are small.
First, we will show a variant of the $L^2$ conservation law on $v$.
\begin{lemma}
  \label{lem:L2conserve}
Let $a\in(0,2)$ and $T>0$.
Suppose $v(t)\in C([0,T];X\cap L^2(\R^2))$ is 
a solution of \eqref{eq:v} and that $v(t)$, $c(t)$ and $x(t)$
satisfy \eqref{eq:orth}, \eqref{eq:C1cx} and \eqref{eq:bd-vc}.
Then
$$Q(t,v):=\int_{\R^2}\left\{v(t,z,y)^2
-2\psi_{c(t,y),L}(z+4t)v(t,z,y)\right\}\,dzdy$$
satisfies for $t\in[0,T]$,
\begin{multline*}
Q(t,v)= Q(0,v)+
2\int_0^t\int_{\R^2}\left(\ell_{11}+\ell_{12}+6\varphi_{c(s,y)}'(z)\tpsi_{c(s,y)}(z)
\right)v(s,z,y)\,dzdyds
\\  -2\int_0^t\int_{\R^2} \ell\psi_{c(t,y),L}(z+4t)\,dzdyds
-6\int_{\R^2}\varphi_{c(t,y)}'(z)v(t,z,y)^2\,dzdy
\\  -6\int_{\R^2}v(t,z,y)
\bigl\{c_{yy}(t,y)\int_{-\infty}^z\pd_c\varphi_{c(t,y)}(z_1)\,dz_1
\\ +c_y(t,y)^2\int_{-\infty}^z \pd_c^2\varphi_{c(t,y)}(z_1)\,dz_1\bigr\}\,dzdy\,.
\end{multline*}
\end{lemma}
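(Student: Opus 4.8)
The plan is to differentiate $Q(t,v)$ in $t$, insert \eqref{eq:v} for $\pd_tv$, eliminate all leading-order contributions by integrating by parts in $z$ and $y$, and then integrate back from $0$ to $t$. As in Section~\ref{sec:modulation}, it is enough to establish the identity first for smooth, well-localized data, say $v_0\in X\cap H^3(\R^2)$ with $\pd_x^{-1}v_0\in H^2(\R^2)$ --- for which $v(t)\in C([0,T];X\cap H^3(\R^2))$ and $\pd_z^{-1}v(t)\in C([0,T];X\cap H^2(\R^2))$, so that $Q(\cdot,v)\in C^1([0,T])$ and every integral below converges absolutely --- and then to pass to the limit using $v(t)\in C([0,T];X\cap L^2(\R^2))$ and the continuous dependence result of \cite{MST}.

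Writing $\tpsi_c(t,z,y):=\psi_{c(t,y),L}(z+4t)$, we have $\frac{d}{dt}Q(t,v)=\frac{d}{dt}\int_{\R^2}v^2\,dzdy-2\frac{d}{dt}\int_{\R^2}\tpsi_c v\,dzdy$. In the first term, $\frac{d}{dt}\int v^2=2\int v\,\pd_tv$, and testing \eqref{eq:v} against $v$: the skew parts $-\pd_z^3$, $2c\pd_z$ of $\mL_c$ drop out, the part $-6\pd_z(\varphi_cv)$ yields $-6\int\varphi_c'v^2$, $2\int vN_3=0$ after integrating by parts in $y$, of $2\int v\,\pd_z(N_1+N_2)$ only the $6\tpsi_cv$ piece of $N_2$ survives and gives $6\int\tpsi_c'v^2$, and the nonlocal part $-3\pd_z^{-1}\pd_y^2v$ contributes nothing. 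This last fact is the one non-routine point: writing $\pd_z^{-1}\pd_y^2v=\pd_y(\pd_z^{-1}\pd_yv)$ and integrating by parts once in $y$ and once in $z$ leaves $-\frac12\int_\R\bigl[(\pd_z^{-1}\pd_yv)^2\bigr]_{z=-\infty}^{z=+\infty}dy$, which vanishes because $\pd_z^{-1}v\to0$ as $z\to+\infty$ while, as $z\to-\infty$, $\pd_z^{-1}v\to-\int_\R v\,dz_1$, a quantity independent of $y$ by \eqref{eq:0mean} (it equals $\int_\R\tilde v\,dz$), so $\pd_z^{-1}\pd_yv\to0$ there as well. Thus $\frac{d}{dt}\int v^2=-6\int\varphi_c'v^2+6\int\tpsi_c'v^2+2\int v\,\ell$.

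For the second term, $-2\frac{d}{dt}\int\tpsi_cv=-2\int(\pd_t\tpsi_c)v-2\int\tpsi_c\,\pd_tv$; one uses $\pd_t\tpsi_c=c_t\pd_c\tpsi_c+4\pd_z\tpsi_c$ and substitutes \eqref{eq:v} into $\pd_tv$. The decisive cancellations are: the $-3v^2$ piece of $N_1$ gives $-2\int\tpsi_c\pd_z(-3v^2)=-6\int\tpsi_c'v^2$, which cancels the $+6\int\tpsi_c'v^2$ above; and the $-6\pd_z(\varphi_cv)$ piece of $\mL_c$ gives $-12\int\varphi_c\tpsi_c'v$, which together with the $+12\int\varphi_c\tpsi_c'v$ arising from the $6\pd_z(\varphi_c\tpsi_c)$ term of $\ell_{22}$ inside $2\int v\,\ell$ cancels, leaving precisely the $+12\int\varphi_c'\tpsi_cv=2\int 6\varphi_c'\tpsi_c\,v$ that appears in the first time-integrand. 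What is left of $-2\int\tpsi_c\,\pd_tv$ is the term $-2\int\ell\,\tpsi_c$. Finally, in $2\int v\,\ell$ one splits $\ell_{13}$ via $\int_z^\infty=\int_\R-\int_{-\infty}^z$, the $\int_{-\infty}^z$-part being exactly $-6\int v\{c_{yy}\int_{-\infty}^z\pd_c\varphi_c+c_y^2\int_{-\infty}^z\pd_c^2\varphi_c\}$, while $\ell_{11}$, $\ell_{12}$ survive as written and the remaining $\tpsi_c$-, $N$- and $\ell_{13}$-remainder terms, all lower order, assemble into the stated form. Integrating in $t$ and removing the regularity hypothesis by density finishes the proof.

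The bulk of the work is bookkeeping: following the boundary terms created by the many integrations by parts in $z$ against $\pd_z^{-1}v$, $\int_{-\infty}^z\pd_c^m\varphi_c$ and $\tpsi_c$ (none of which decays at $z=-\infty$), and verifying that the cancellations above hold exactly; this is routine but lengthy. The single structural ingredient is the vanishing of the nonlocal contribution, which hinges on $\pd_z^{-1}v$ being well defined with a $y$-independent limit at $z=-\infty$ --- exactly what the corrector $\psi_{c,L}$ was introduced to guarantee, via \eqref{eq:0mean}. One may alternatively observe that $u(t,z+x(t,y),y)-\varphi_{c(t,y)}(z)=v-\tpsi_c$, hence $Q(t,v)=\|u(t,\cdot)-\varphi_{c(t,\cdot)}(\cdot-x(t,\cdot))\|_{L^2(\R^2)}^2-\|\psi_{c(t,\cdot),L}\|_{L^2(\R^2)}^2$, and differentiate this using the conservation structure of \eqref{KPII_integrated}; the computation is of comparable length.
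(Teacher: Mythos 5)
Your proposal is correct and follows essentially the same route as the paper: assume extra regularity so that everything is classical, differentiate $Q$, substitute \eqref{eq:v}, integrate by parts to track the exact cancellations (with the $\int_\R\pd_c^k\tpsi_c=\int_\R\pd_c^k\varphi_c$ consequence of \eqref{eq:0mean} absorbing the $\ell_{13}$/$\ell_{23}$ cross terms), integrate in $t$, and pass to the limit by density using the continuous dependence from \cite{MST} together with the uniform bound \eqref{eq:virial-fx}. Your explicit justification of why $\int v\,\pd_z^{-1}\pd_y^2v\,dzdy$ vanishes is a useful addition that the paper leaves implicit; the only small imprecision is attributing the vanishing of $\pd_z^{-1}\pd_yv$ at $z=-\infty$ solely to \eqref{eq:0mean} — \eqref{eq:0mean} reduces it to $\int_\R\tv\,dz$, while the fact that this is $y$-independent (indeed zero) comes from the regularization hypothesis $\pd_x^{-1}v_0\in H^2(\R^2)$, which propagates by \cite{MST}.
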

\begin{proof}
Suppose $v_0\in H^3(\R^2)$ and $\pd_x^{-1}v_0\in H^2(\R^2)$.
Then as in Section~\ref{sec:modulation}, we have
$v(t)\in C([0,T];H^3(\R^2))$
and $\pd_x^{-1}\pd_yv(t)\in C([0,T];H^1(\R^2))$.
Using \eqref{eq:v}, we have
\begin{align*}
\frac{d}{dt}\int_{\R^2}v(t,x,y)^2\,dxdy=&
2 \int v\left(\mL_{c(t,y)}v+\ell+\pd_z(N_1+N_2)+N_3\right)\,dzdy
\\=&  2\int\ell vdzdy +6\int (\tpsi_c'-\varphi_c')v^2\,dzdy\,,
\end{align*}
and
\begin{multline*}
\frac{d}{dt}\int_{\R^2} \psi_{c(t,y),L}(z+4t)v(t,z,y)\,dzdy
= \int \left(c_t\pd_c\tpsi_c+4\tpsi_c'\right)v
\\  +\int \tpsi_c\left\{\mL_{c(t,y)}v+\ell+\pd_z(N_1+N_2)+N_3\right\}\,.
\end{multline*}
Since $(\pd_z^{-1}\pd_y^2)v(z,y)=-\int_z^\infty \pd_y^2v(z_1,y)\,dz_1$, we have
\begin{multline*}
 \int_{\R^2} \tpsi_c\mL_cv\, dzdy=
-\int \tpsi_c\pd_z(\pd_z^2-2c+6\varphi_c)v\,dzdy
-3\int\tpsi_c\pd_z^{-1}\pd_y^2v\,dzdy
\\= \int v(\tpsi_c'''-2c\tpsi_c'+6\varphi_c\tpsi_c')\,dzdy
\\  +3\int_{\R^2}v\left\{c_{yy}\int^z_{-\infty}\pd_c\tpsi_c
+(c_y)^2\int^z_{-\infty}\pd_c^2\tpsi_c\right\}\,dzdy\,,
\end{multline*}
where $\pd_c^k\tpsi_c=\pd_c^k\psi_{c(t,y),L}(x+4t)$ for $k\ge0$.
By integration by parts, we have
\begin{equation*}
  \int_{\R^2} \tpsi_c\pd_zN_1\,dzdy=3\int \tpsi'v^2\,dzdy\,,
\end{equation*}
and
\begin{align*}
& \int (\pd_zN_2+N_3)\tpsi_c\,dzdy\\ =&
-\int v\left\{\left(x_t-2c-3(x_y)^2\right)\tpsi_c'
+3x_{yy}\tpsi_c+6c_yx_y\pd_c\tpsi_c+6\tpsi_c'\tpsi_c\right\}\,dzdy\,.
\end{align*}
Combining the above, we have
  \begin{align*}
&\frac{d}{dt}\int_{\R^2}\left\{v(t,z,y)^2
-2\psi_{c(t,y),L}(z+4t)v(t,z,y)\right\}\,dzdy
\\= & 2\int_{\R^2}\left\{\left(
\ell_{11}+\ell_{12}+6\varphi_{c(t,y)}'(z)\tpsi_{c(t,y)}(z)\right)v(t,z,y)
-\ell\psi_{c(t,y),L}(z+4t)\right\}\,dzdy
\\ & +6\int_{\R^2}v(t,z,y)
\left\{c_{yy}(t,y)\int_\R\left(\pd_c\varphi_{c(t,y)}(z_1)
-\pd_c\psi_{c(t,y),L}(z_1+4t)\right)dz_1\right\}\,dzdy
\\ & +6\int_{\R^2}v(t,z,y)
\left\{c_y(t,y)^2\int_\R\left(\pd_c^2\varphi_{c(t,y)}(z_1)
-\pd_c^2\psi_{c(t,y),L}(z_1+4t)\right)dz_1\right\}\,dzdy
\\ & -6\int_{\R^2}\ell_{13}^* v(t,z,y)\,dzdy
-6\int_{\R^2}\varphi_{c(t,y)}'(z)v(t,z,y)^2\,dzdy\,,
  \end{align*}
where
$$\ell_{13}^*=
c_{yy}(t,y)\int_{-\infty}^z\pd_c\varphi_{c(t,y)}(z_1)\,dz_1
+c_y(t,y)^2\int_{-\infty}^z \pd_c^2\varphi_{c(t,y)}(z_1)\,dz_1\,.$$
Since $\int_\R\pd_c^k\varphi_c(x)\,dx=\int_\R\pd_c^k\psi_{c,L}(x)\,dx$ 
for $k\ge0$ by \eqref{eq:0mean}, we see that Lemma~\ref{lem:L2conserve}
holds provided $v_0$ and $\pd_x^{-1}\pd_yv_0$ are smooth.
\par
For general $v_0\in X\cap L^2(\R^2)$, we can prove Lemma~\ref{lem:L2conserve}
by a standard limiting argument. 
The mapping 
\begin{equation}
\label{map1:c}
L^2(\R^2)\ni v_0\mapsto \tv(t)\in C([0,T];L^2(\R^2))
\end{equation}
is continuous for any $T>0$ by \cite{MST}.
On the other hand, it follows from \eqref{eq:virial-fx} that
a solution $\tv(t)$ of \eqref{eq:v-fix} satisfies
$\sup_{t\in[0,T]}\|\tv(t)\|_X\le C$,
where $C$ is a constant depending only on $T$, $\|\tv(0)\|_{L^2(\R^2)}$ and
$\|\tv(0)\|_X$. Thus the mapping
\begin{equation}
\label{map2:c}
X\cap L^2(\R^2)\ni v_0\mapsto \tv(t)\in C([0,T];L^2(\R^2;e^{ax}dxdy))
\end{equation}
is continuous since
$\|u\|_{L^2(\R^2;e^{ax}dxdy)}\lesssim \|u\|_{L^2(\R^2)}^{1/2}\|u\|_X^{1/2}$
for every $u\in X\cap L^2(\R^2)$.
If $\eta_0$ is sufficiently small, it is clear from
Lemma~\ref{lem:decomp} and Remark~\ref{rem:decomp} that
$(\tc(t),\tx(t))\in Y\times Y$ as well as its time derivate
depends continuously on $v(t)\in L^2(\R^2;e^{ax}dxdy)$.
This completes the proof of Lemma~\ref{lem:L2conserve}.
\end{proof}

Using Lemma~\ref{lem:L2conserve}, we will estimate the upper bound of 
$\|v(t)\|_{L^2}$.
\begin{lemma}
  \label{lem:M4-bound}
Let $a\in(0,1)$ and $\delta_4$ be as in Lemma~\ref{lem:M1-bound}.
Then there exists a positive constant $C$ such that if
$\bM_1(T)+\bM_2(T)+\eta_0+e^{-aL}\le \delta_4$, then
$$\bM_3(T)\le  C(\|v_0\|_{L^2(\R^2)}+\bM_1(T)+\bM_2(T))\,.$$
\end{lemma}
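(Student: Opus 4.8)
The plan is an almost-conservation argument built on Lemma~\ref{lem:L2conserve}; throughout we work in the regime where $\bM_1(T)+\bM_2(T)+\eta_0+e^{-aL}$ is small, so that the decomposition \eqref{eq:decomp}--\eqref{eq:orth} persists on $[0,T]$ and the hypotheses of Lemma~\ref{lem:L2conserve} hold. Write the identity there as $Q(t,v)=Q(0,v)+I_1-I_2-I_3-I_4$, where
\begin{gather*}
I_1=2\int_0^t\!\!\int_{\R^2}\bigl(\ell_{11}+\ell_{12}+6\varphi_c'\tpsi_c\bigr)v\,dzdyds,\\
I_2=2\int_0^t\!\!\int_{\R^2}\ell\,\psi_{c(s,y),L}(z+4s)\,dzdyds,\\
I_3=6\int_{\R^2}\varphi_{c(t,y)}'(z)\,v(t,z,y)^2\,dzdy,\\
I_4=6\int_{\R^2}v(t,z,y)\Bigl\{c_{yy}(y)\!\int_{-\infty}^z\!\pd_c\varphi_c\,dz_1+c_y(y)^2\!\int_{-\infty}^z\!\pd_c^2\varphi_c\,dz_1\Bigr\}dzdy.
\end{gather*}
Since $\psi_{c,L}(\cdot)=2(\sqrt{2c}-2)\psi(\cdot+L)$ with $|\sqrt{2c}-2|\lesssim|\tc|$, one has $\|\psi_{c(t,\cdot),L}(\cdot+4t)\|_{L^2(\R^2)}\lesssim\|\tc(t)\|_{L^2}\lesssim\bM_1(T)\la t\ra^{-1/4}$, hence by Cauchy--Schwarz and Young $\bigl|Q(t,v)-\|v(t)\|_{L^2}^2\bigr|\le\tfrac12\|v(t)\|_{L^2}^2+C\bM_1(T)^2\la t\ra^{-1/2}$, the first term being absorbed on the left. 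So it suffices to bound $Q(0,v)$ and each $I_j$ by $C(\|v_0\|_{L^2}+\bM_1(T)+\bM_2(T))^2$ uniformly in $t\in[0,T]$ and then take the supremum.

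For $Q(0,v)$, the decomposition \eqref{eq:decomp} at $t=0$ expresses $v(0)$ as a translate of $v_0$ plus $\varphi(x)-\varphi_{c(0,y)}(x-x(0,y))$ plus $\psi_{c(0),L}$; the last two pieces are $O(\|\tc(0)\|_{L^2}+\|\tx(0)\|_{L^2})=O(\bM_1(T))$ in $L^2(\R^2)$, so $Q(0,v)\lesssim(\|v_0\|_{L^2}+\bM_1(T))^2$. The decisive tool for $I_1,I_3,I_4$ is that $v(t)$ is controlled in the weighted space $X$ by $\|v(t)\|_X\le\bM_2(T)\la t\ra^{-3/4}$: any weight $w(z)$ that is bounded as $z\to+\infty$ and decays exponentially as $z\to-\infty$ satisfies $\bigl|\int w(z)v(z,y)\,dz\bigr|\le\|e^{-az}w\|_{L^2(\R_z)}\|v(\cdot,y)\|_{L^2_a(\R_z)}$, hence $\bigl|\int\!\!\int w(z)m(y)v(z,y)\,dzdy\bigr|\lesssim\|m\|_{L^2_y}\|v\|_X$; the localized profiles $\varphi_c,\varphi_c',\pd_c\varphi_c$ qualify, and so do the bounded monotone ``steps'' $\int_{-\infty}^z\pd_c^m\varphi_c$, which behave like $e^{\sqrt{2c}z}$ as $z\to-\infty$ (this uses only $\sqrt{2c}>a$, automatic since $a<1$ and $\|\tc\|_{L^\infty}$ is small). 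Consequently $I_1$ is dominated by $\int_0^t\bigl(\|x_t-2c-3(x_y)^2\|_{L^2}+\|c_t-6c_yx_y\|_{L^2}+\|x_{yy}\|_{L^2}+e^{-\sigma(4s+L)}\|\tc\|_{L^2}\bigr)\|v(s)\|_X\,ds$, and by the modulation system \eqref{eq:modeq} (schematically \eqref{eq:approxeq}) the coefficient is $O(\bM_1(T)\la s\ra^{-1/2})$ --- the slowest contribution coming from $(x_y)^2$ and $b-\tilde c$, every other term decaying faster --- so $I_1\lesssim\bM_1(T)\bM_2(T)\int_0^t\la s\ra^{-5/4}ds\lesssim\bM_1(T)\bM_2(T)$; likewise $I_4\lesssim\bigl(\|c_{yy}(t)\|_{L^2}+\|c_y(t)\|_{L^2}^2\bigr)\|v(t)\|_X\lesssim\bM_1(T)\bM_2(T)$. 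For $I_3$ --- the only term quadratic in $v$ --- one uses $|\varphi_c'(z)|\lesssim e^{-\sqrt{2c}|z|}\le e^{-2a|z|}\le e^{2az}$, valid because $\sqrt{2c}\ge 2a$ when $c$ is close to $2$; this gives $|I_3|\lesssim\int_{\R^2}e^{2az}v(t)^2\,dzdy=\|v(t)\|_X^2\lesssim\bM_2(T)^2\la t\ra^{-3/2}$. The inequality $\sqrt{2c}\ge 2a$ is exactly where the hypothesis $a\in(0,1)$ enters.

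The term $I_2$ carries no factor of $v$, and $\psi_{c(s,y),L}(z+4s)$ is supported near $z\approx-4s-L$. Three phenomena cover it: (i) every contribution containing $\varphi_c$ or $\varphi_c'$ (those from $\ell_{11},\ell_{12}$ and the $\pd_z(\varphi_c\tpsi_c)$ piece of $\ell_{22}$) is exponentially small in $s$ because those profiles live near $z=0$ and $\psi$ is compactly supported; (ii) the contributions of $\pd_z^3\tpsi_c$, $\pd_z(\tpsi_c^2)$, and the $\tpsi_c'$ piece of $\ell_{21}$ vanish identically, since after multiplication by $\psi_{c,L}(z+4s)$ each is the $z$-derivative of a compactly supported function and integrates to zero; (iii) the remaining pieces --- $\ell_{13}$, the $x_{yy}\tpsi_c$ term of $\ell_{22}$, $\ell_{23}$, and the parts of $\ell_{21}$ carrying $c_t$ and $c_yx_y$ --- after absorbing the bounded factors $\int_z^\infty\pd_c^m\varphi_c$ or $\int_z^\infty\pd_c^m\tpsi_c$ evaluated near $z\approx-4s-L$, reduce to $L^1_y$-norms of products of two functions in $Y$, which by $\|fg\|_{L^1}\le\|f\|_{L^2}\|g\|_{L^2}$ are $O(\bM_1(T)^2\la s\ra^{-5/4})$; hence $I_2\lesssim\bM_1(T)^2$. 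Collecting the bounds and taking $\sup_{0\le t\le T}$ gives $\bM_3(T)^2\lesssim(\|v_0\|_{L^2}+\bM_1(T)+\bM_2(T))^2$, which is the claim. I expect the main obstacle to be the bookkeeping in $I_2$ --- sorting which pieces vanish by integration by parts in $z$, which are exponentially small in $t$ through separated supports, and which survive as genuinely small $L^1_y$ quantities --- together with checking that the restriction $a<1$ is invoked precisely for $I_3$, and that \emph{no} term is ever paired against $\|v\|_{L^2}\le\bM_3(T)$, so that the estimate closes directly rather than through a bootstrap.
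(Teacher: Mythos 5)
Your strategy is essentially the paper's, and most of the term-by-term analysis is sound (the treatment of $Q(0,v)$, $I_1$, $I_3$, $I_4$, and the exponentially-small and identically-vanishing pieces of $I_2$ is correct, including the use of $a<1$ to make $\varphi_c'$ admissible as a weight). But there is one genuine gap, and it is in exactly the spot you flagged as ``the main obstacle.''

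You put ``the part of $\ell_{21}$ carrying $c_t$'' into the bucket of terms that, after integrating in $z$, reduce to an $L^1_y$-norm of a product of two $Y$-functions decaying like $O(\bM_1(T)^2\la s\ra^{-5/4})$. That rate is wrong. The piece in question is $c_t\,\pd_c\tpsi_c\cdot\tpsi_c$; integrating in $z$ leaves $c_t\cdot(\sqrt{2c}-2)(2c)^{-1/2}\cdot\mathrm{const}$, i.e.\ essentially $c_t\,\tc$, and by \eqref{eq:cx_t-bound} together with the definition of $\bM_1(T)$ one only has $\|c_t\|_{L^2}\lesssim\bM_1(T)\la s\ra^{-3/4}$ and $\|\tc\|_{L^2}\lesssim\bM_1(T)\la s\ra^{-1/4}$, so the best pointwise bound is $O(\bM_1(T)^2\la s\ra^{-1})$. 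Its time integral is then only $O(\bM_1(T)^2\log\la t\ra)$, and your chain of inequalities fails to close: $I_2$ is not uniformly bounded by your route. (The $6c_yx_y\,\pd_c\tpsi_c\,\tpsi_c$ piece is fine, because there the factor $(\sqrt{2c}-2)(2c)^{-1/2}\approx\tc/4$ supplies an extra power, giving a cubic quantity with $\la s\ra^{-5/4}$ decay, and the $\tpsi_c'\tpsi_c$ piece integrates to zero in $z$; only the $c_t$ piece is critical.)

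The missing idea is to recognize the critical piece as a total time derivative rather than estimate it: since $c_t\pd_c\tpsi_c = \pd_t\tpsi_c - 4\tpsi_c'$, one has
\begin{equation*}
\int_{\R^2} c_t\,\pd_c\tpsi_{c(t,y)}\,\tpsi_{c(t,y)}\,dz\,dy
= \tfrac12\,\frac{d}{dt}\int_{\R^2}\psi_{c(t,y),L}(z)^2\,dz\,dy
= 4\|\psi\|_{L^2}^2\,\frac{d}{dt}\big\|\sqrt{2c(t,\cdot)}-2\big\|_{L^2(\R)}^2\,,
\end{equation*}
which is exactly the content of \eqref{eq:l2pf5}. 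This term must be moved to the left-hand side, so that the quantity one propagates is the modified energy $Q(t,v)+8\|\psi\|_{L^2}^2\|\sqrt{c(t,\cdot)}-\sqrt2\|_{L^2(\R)}^2$ (this is \eqref{eq:cons1}); since the correction is $O(\|\tc(t)\|_{L^2}^2)=O(\bM_1(T)^2\la t\ra^{-1/2})$, it is harmless for the a priori bound. Without this step your ``direct'' argument does not close, and since you explicitly insist that no term be paired against $\bM_3(T)$ and that the estimate close without bootstrap, the logarithm cannot be excused by a Gronwall-type device either. Once the modified energy is introduced the rest of your proof goes through unchanged.
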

\begin{proof}
Remark~\ref{rem:decomp} and Proposition~\ref{prop:continuation} tell us
that we can apply Lemma~\ref{lem:L2conserve} for $t\in[0,T]$
if $\bM_1(T)$ and $\bM_2(T)$ are sufficiently small.
\par
Since we have for $j$, $k\ge0$ and $z\in\R$,
\begin{equation}
  \label{eq:exp-localized}
  \pd_z^j\pd_c^k\varphi_c(z)\lesssim e^{-2a|z|}\,,\quad
\int_{-\infty}^z\pd_c^j\varphi_c(z_1)dz_1\lesssim \min(1,e^{2az})\,,
\end{equation}
it follows that
\begin{equation}
  \label{eq:l2pf1}
\begin{split}
& \left|\int_{\R^2}(\ell_{11}+\ell_{12}-\ell_{13}^*)v\,dzdy\right|
\\  \lesssim &
\bigl(\|c_t-6c_yx_y\|_{L^2}+\|x_t-2c-3(x_y)^2\|_{L^2}+\|x_{yy}\|_{L^2}
+\|c_{yy}\|_{L^2}+\|c_y\|_{L^4}^2\bigr)\|v\|_X\,,
\end{split}  
\end{equation}
\begin{equation}
  \label{eq:l2pf2}
\left|\int_{\R^2} \varphi_{c(t,y)}'(z)v(t,z,y)^2\,dzdy\right|
\lesssim \|v\|_X^2\,,  
\end{equation}
\begin{equation}
  \label{eq:l2pf3}
\begin{split}
\left|\int_{\R^2}\varphi_{c(t,y)}'(z)\tpsi_{c(t,y)}(z)v(t,z,y)\,dzdy\right|
\lesssim & \|e^{az}\tpsi_{c(t,y)}\|_{L^2_{yz}}\|v\|_Xe^{-a(4t+L)}\,.  
\end{split}  
\end{equation}
\par

Next,  we will estimate $\int_{\R^2}\ell \tpsi_{c(t,y)}$.
In view of \eqref{eq:exp-localized}, we see that
$\ell_{11}+\ell_{12}$ is exponentially localized and that
\begin{equation}
  \label{eq:l2pf4}
\begin{split}
&  \left|\int_{\R^2}(\ell_{11}+\ell_{12})\tpsi_{c(t,y)}(t,z,y)dzdy\right|
\\ \le & \|e^{-az}(\ell_{11}+\ell_{12})\|_{L^2_{yz}}
\|e^{az}\tpsi_{c(t,y)}\|_{L^2_{yz}}
\\ \lesssim &
\left(\|c_t-6c_yx_y\|_{L^2}+\|x_t-2c-3(x_y)^2\|_{L^2}+\|x_{yy}\|_{L^2}\right)
\|e^{az}\tpsi_{c(t,y)}\|_{L^2_{yz}}\,.
\end{split}  
\end{equation}
By integration by parts and the fact that
$\|\pd_c\tpsi_{c(t,y)}(z)\|_{L^\infty_yL^2_z}\lesssim 1$,
\begin{equation}
\label{eq:l2pf5}
\begin{split}
\int_{\R^2}\ell_{21}\tpsi_{c(t,y)}(t,z,y)dzdy=
& \frac12 \frac{d}{dt}\int_{\R^2}\psi_{c(t,y)}(z)^2dzdy
+O(\|c_yx_y\|_{L^2}\|\tc\|_{L^2})\,.
\end{split}
\end{equation}
Similarly,
\begin{equation}
  \label{eq:l2pf6}
\begin{split}
\left|\int_{\R^2}\ell_{22}\tpsi_{c(t,y)}dzdy\right|
=& 3\left|\int_{\R^2}\left\{\varphi_{c(t,y)}'(z)-x_{yy}(t,y)\right\}
\tpsi_{c(t,y)}(z)^2  dzdy\right|
\\ \lesssim & \|e^{az}\tpsi\|_{L^2_{yz}}^2
+\|x_{yy}\|_{L^\infty}\|\psi_{c(t,y),L}\|_{L^2_{yz}}^2\,.
\end{split}  
\end{equation}
Since
$\|\ell_{13}\|_{L_z^\infty L_y^2}+\|\ell_{23}\|_{L_z^\infty L_y^2}
\lesssim \|c_{yy}\|_{L^2}+\|c_y\|_{L^4}^2$ and 
$\|\psi_{c(t,y),L}\|_{L^1_zL^2_y}=O(\|\tc\|_{L^2})$,
\begin{equation}
\label{eq:l2pf7}
\sum_{j=1,2}\left|\int_{\R^2}\ell_{j3}\tpsi_{c(t,y)}\,dzdy\right|
\lesssim \|\tc\|_{L^2}(\|c_{yy}\|_{L^2}+\|c_y\|_{L^4}^2)\,.
\end{equation}

In view of the definition of $\tpsi$,
\begin{equation}
\label{eq:psinorm}
\begin{split}
& \|\tpsi_{c(t,y)}\|_X \lesssim \|\tc\|_{L^2}e^{-a(4t+L)}\,,\\
& \|\tpsi_{c(t,y)}\|_{L^2(\R^2)}=2\sqrt{2}\|\sqrt{c}-\sqrt{2}\|_{L^2(\R)}
\|\psi\|_{L^2(\R)}\lesssim\|\tc\|_{L^2}\,.  
\end{split}
\end{equation}
Claims~\ref{cl:wA1-bound}, \ref{cl:b-capprox} and \eqref{eq:modeq} imply
that for $t\in[0,T]$,
\begin{align*}
& \|c_t\|_Y+\|x_t-2c-3(x_y)^2\|_{L^2} \lesssim \|b_t\|_Y+\|x_t-2c-3(x_y)^2\|_{L^2} 
\\ \lesssim & \|c_{yy}\|_Y+\|x_{yy}\|_Y
+\|\widetilde{\mathcal{A}}_1(t)\|_{B(Y)}\|b\|_Y+\|(bx_y)_y\|_Y
+\sum_{i=2}^8\|\cN_i\|_Y
\\ \lesssim & \bM_1(T)\la t\ra^{-3/4}+\sum_{i=2}^8\|\cN_i\|_Y\,.
\end{align*}
Following the proof of Lemma~\ref{lem:M1-bound}, we see that
$$\sum_{2\le i\le 8}\|\cN_i\|_Y\lesssim (\bM_1(T)+\bM_2(T))^2\la t\ra^{-1}\,.$$
Thus we have
\begin{equation}
  \label{eq:cx_t-bound}
\|c_t\|_Y+\|x_t-2c-3(x_y)^2\|_{L^2} 
\lesssim \bM_1(T)\la t\ra^{-3/4}+(\bM_1(T)+\bM_2(T))^2\la t\ra^{-1}\,.
\end{equation}
\par
Combining \eqref{eq:l2pf1}--\eqref{eq:l2pf7}, \eqref{eq:psinorm}
and \eqref{eq:cx_t-bound} with Lemma~\ref{lem:L2conserve},
we see that for $t\in(0,T]$,
\begin{equation}
  \label{eq:cons1}
\begin{split}
& \Bigl[Q(s,v)+8\|\psi\|_{L^2}^2\|\sqrt{c(s)}-\sqrt{2}\|_{L^2(\R)}^2
\Bigr]_{s=0}^{s=t}
\\ \lesssim & (\bM_1(T)+\bM_2(T))^2\int_0^t \la s\ra^{-5/4}\,ds
\lesssim (\bM_1(T)+\bM_2(T))^2\,.
\end{split}  
\end{equation}
Since $\|\sqrt{c(0)}-\sqrt2\|_{L^2}
\lesssim \|\tc(0)\|_Y\lesssim \|v_0\|_{X_1}$ and 
$$Q(t,v)=\|v(t)\|_{L^2(\R^2)}^2+O(\|\tc(t)\|_Y\|v(t)\|_{L^2(\R^2)})\,,$$
Lemma~\ref{lem:M4-bound} follows immediately from \eqref{eq:cons1}.
Thus we complete the proof.
\end{proof}
\bigskip

\section{Low frequencies bound of $v(t,x,y)$ in $y$}
\label{sec:lowbound}
Let $v_1(t)=P_1(0,2M)v(t)$. Since $v_1(t)$ does not include high frequency
modes in the $y$ variable, we can estimate $v_1(t)$ in the similar mannar
as generalized KdV equations (\cite{PW}) by using the semigroup
estimates obtained in Section~\ref{sec:semigroup}. In this section,
we will estimate $v_1(t)$ in the exponentially weighted space $X$.

\begin{lemma}
  \label{lem:nonresonant-ylow}
Let $\eta_0$, $a$ and $M$ be positive constants satisfying $\nu_0<a<2$
and $\nu(2M)>a$.
Suppose that $v(t)$ is a solution of \eqref{eq:v}.
Then there exist positive constants $b_1$, $\delta_5$ and $C$ such that
if $\bM_1(T)+\bM_2(T)<\delta_5$, then for $t\in[0,T]$,
\begin{equation*}
\left\|v_1(t,\cdot)\right\|_X \le  Ce^{-2b_1\eta_0^2t}\|v(0,\cdot)\|_X
+\left\{\bM_1(T)+\bM_2(T)\sum_{i=1}^3\bM_i(T)\right\}\la t\ra^{-3/4}\,.
\end{equation*}
\end{lemma}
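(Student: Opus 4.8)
The plan is to treat the fixed-potential operator $\mL$ (the linearization about $\varphi=\varphi_2$, written in the moving frame $z=x-x(t,y)$) as the principal linear part and to absorb everything else into an inhomogeneous term, in the spirit of Pego--Weinstein, exploiting the exponentially weighted semigroup bounds of Section~\ref{sec:semigroup}. First I would decompose $v_1(t)=P_0(\eta_0)v(t)+w(t)$, where $w(t):=P_2(\eta_0,2M)v(t)=P_1(0,2M)v(t)-P_0(\eta_0)v(t)$. For the resonant low-frequency piece, the orthogonality condition \eqref{eq:orth}, which is imposed against the $c(t,y)$-dependent modes, gives that the coefficients $a_k(\eta)$ defining $P_0(\eta_0)v(t)$ equal, up to a constant, $\int_{\R^2}v(t,z,y)\bigl(\overline{g_k^*(z,\eta,2)}-\overline{g_k^*(z,\eta,c(t,y))}\bigr)e^{-iy\eta}\,dzdy$; by the $C^1$-dependence of $g_k^*(\cdot,\eta,c)$ on $c$ (Claim~\ref{cl:gk-approx}) together with $\|\tc(t)\|_{L^2}\lesssim\bM_1(T)\la t\ra^{-1/4}$ this is $O(\bM_1(T)\la t\ra^{-1/4}\|v(t)\|_X)$ uniformly in $\eta\in[-\eta_0,\eta_0]$, whence $\|P_0(\eta_0)v(t)\|_X\lesssim\bM_1(T)\bM_2(T)\la t\ra^{-1}$, which is absorbed into the asserted bound.

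Since $P_2(\eta_0,2M)$ commutes with $\mL$, with $e^{t\mL}$ (Lemma~\ref{lem:p0}) and with $\partial_y$, and since $\mL_{c(t,y)}-\mL=\partial_z\bigl((2\tc-6(\varphi_{c(t,y)}-\varphi))\,\cdot\,\bigr)$ with $\tc$ independent of $z$, the function $w$ solves $\partial_t w=\mL w+P_2(\eta_0,2M)F$ with $F=\partial_z\bigl[(2\tc-6(\varphi_{c(t,y)}-\varphi))v\bigr]+\ell_1+\ell_2+\partial_z(N_1+N_2)+N_3$. Duhamel's formula gives $w(t)=e^{t\mL}P_2(\eta_0,2M)v(0)+\int_0^t e^{(t-s)\mL}P_2(\eta_0,2M)F(s)\,ds$. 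The homogeneous term is bounded by $Ce^{-2b_1\eta_0^2t}\|v(0)\|_X$ using Proposition~\ref{prop:semigroup-p2} for the $P_2(\eta_0,\infty)$ part and Lemma~\ref{cl:high} for the remaining $P_1(2M,\infty)$ part (legitimate since $2M>\eta_*$); the slowest contribution is $\eta_0^{-1}e^{\Re\lambda(\eta_0)t}$ from the resonant band $[\eta_0,\eta_1]$, and $\Re\lambda(\eta_0)\le-c\eta_0^2$, so with $\eta_0$ fixed the constant $b_1>0$ and the $\eta_0^{-1}$ prefactor are harmless.

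For the Duhamel integral I would estimate each piece of $F$ separately, using the principle that a forcing bounded in $X$ (resp.\ in $e^{ax}L^1_xL^2_y$) by $C\la s\ra^{-\beta}$ produces, after convolution with the kernel (Proposition~\ref{prop:semigroup-p2} for terms without a $z$-derivative, Corollary~\ref{cor:semigroup-p2} for those with one), a contribution $\lesssim C\la t\ra^{-\beta}$, since the $(t-s)$-decay is exponential (or $\la t-s\ra^{-\alpha}$-integrable near $s=t$). Concretely: $\ell_1$ is exponentially localized in $z$ with $\|\ell_1(s)\|_X\lesssim\bM_1(T)\la s\ra^{-3/4}+(\bM_1(T)+\bM_2(T))^2\la s\ra^{-1}$ by the modulation estimates of Section~\ref{sec:apriori} and \eqref{eq:cx_t-bound}; $\ell_2$ carries the factor $e^{-a(4s+L)}$ and is negligible; the term $(2\tc-6(\varphi_{c(t,y)}-\varphi))v$ and the $N_2$-type terms are bounded in $X$ by $\lesssim\|\tc(s)\|_{L^\infty}\|v(s)\|_X+\cdots\lesssim\bM_1(T)\bM_2(T)\la s\ra^{-1}$ (plus faster-decaying pieces); and in $N_3=6\partial_y(x_yv)-3x_{yy}v$ the $\partial_y$ costs only a factor $O(M)$, because $e^{(t-s)\mL}P_2(\eta_0,2M)\partial_y=\partial_y\,e^{(t-s)\mL}P_2(\eta_0,2M)$ has $y$-frequency range in $[-2M,2M]$.

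The hard part is the genuinely quadratic term $\partial_z N_1=-3\partial_z(v^2)$: I would apply the smoothing estimate \eqref{eq:corsp22}, which needs $\|e^{ax}v^2\|_{L^1_xL^2_y}$, and control this by splitting $v$ into low and high $y$-frequency parts and combining Bernstein's inequality (on the factors with $y$-frequency $\lesssim M$) with the interpolation $\|v\|_{L^2(\R^2;e^{ax}dxdy)}\lesssim\|v\|_{L^2}^{1/2}\|v\|_X^{1/2}$, which yields $\|e^{ax}v^2\|_{L^1_xL^2_y}\lesssim\bM_2(T)\bM_3(T)\la s\ra^{-3/4}$ (the high--high interaction handled by the same interpolation together with the $L^2$ bound $\bM_3(T)$), so that the corresponding Duhamel contribution is $\lesssim\bM_2(T)\bM_3(T)\la t\ra^{-3/4}$. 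Summing all contributions and adding the bound on $P_0(\eta_0)v(t)$ from the first step gives the claim. I expect the bilinear estimate for $\partial_z(v^2)$ to be the main obstacle, since, unlike the other terms, it is not directly dominated by the weighted $L^2$- and $L^2$-norms of $v$ without exploiting frequency localization and the precise smoothing rate of the semigroup on non-resonant modes.
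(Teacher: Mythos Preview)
Your overall structure is exactly that of the paper: set $w=P_2(\eta_0,2M)v$, note that $v_1-w=P_0(\eta_0)v$ is small by the orthogonality condition \eqref{eq:orth} (this is the paper's bound \eqref{eq:v1-v1n}), and run Duhamel on $w$ with the semigroup estimates of Proposition~\ref{prop:semigroup-p2} and Corollary~\ref{cor:semigroup-p2}. The treatment of $\ell_1$, $\ell_2$, the $N_2$- and $N_2'$-type terms, and $N_3$ (where $\pd_y$ costs only a factor $M$ on the range of $P_2$) also coincides with the paper.

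The one place you diverge is the quadratic term $\pd_z N_1=-3\pd_z(v^2)$, and here your plan is more complicated than needed and leaves a soft gap. You propose to split $v$ into low and high $y$-frequency parts and invoke an interpolation for the high--high piece; but the quantity you would need for that piece, $\|e^{az}(P_{\ge M}v)^2\|_{L^1_zL^2_y}$, is not controlled by $\|v\|_{L^2}$ and $\|v\|_X$ alone, and the interpolation $\|v\|_{L^2(e^{ax}dxdy)}\lesssim\|v\|_{L^2}^{1/2}\|v\|_X^{1/2}$ does not supply the missing $L^4_y$ integrability. The paper sidesteps this entirely: since the outer projector $P_2(\eta_0,2M)$ already restricts to $|\eta|\le 2M$, one may replace $v^2$ by $P_{\le 2M}(v^2)$ for free inside the Duhamel integrand, and then Claim~\ref{cl:pv1} (a Bernstein inequality applied to the \emph{product}, not to the factors) gives
\[
\|e^{az}P_{\le 2M}(v^2)\|_{L^1_zL^2_y}\lesssim \sqrt{M}\,\|e^{az}v^2\|_{L^1(\R^2)}\le \sqrt{M}\,\|v\|_X\|v\|_{L^2}\lesssim \sqrt{M}\,\bM_2(T)\bM_3(T)\la s\ra^{-3/4},
\]
which is precisely the input for \eqref{eq:corsp22}. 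No frequency decomposition of $v$ itself is required.
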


Let $\chi(\eta)$ be a nonnegative smooth function such that
$\chi(\eta)=1$ if $|\eta|\le 1$ and $\chi(\eta)=0$ if $|\eta|\ge2$.
Let $\chi_M(\eta)=\chi(\eta/M)$ and
$$P_{\le M}u:=\frac{1}{2\pi}\int_{\R^2}\chi_M(\eta)
\hat{u}(\xi,\eta)e^{i(x\xi+y\eta)}d\xi d\eta,\quad P_{\ge M}=I-P_{\le M}.$$
To estimate $v_1(t)$, we need the following.
\begin{claim}
\label{cl:pv1}
There exists a positive constant $C$ such that
\begin{equation}
  \label{eq:pv1}
\|P_{\le M}u\|_{L^1_xL^2_y} \le C\sqrt{M}\|u\|_{L^1(\R^2)}\,.
\end{equation}
\end{claim}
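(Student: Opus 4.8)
The plan is to prove the Bernstein-type estimate \eqref{eq:pv1} directly from the definition of $P_{\le M}$ as a Fourier multiplier operator, by writing it as convolution with a rescaled kernel and estimating the kernel in a suitable mixed-norm Lebesgue space. First I would observe that $P_{\le M}u = K_M * u$, where the kernel $K_M$ is the inverse Fourier transform of $\chi_M(\eta) = \chi(\eta/M)$ viewed as a function of $(\xi,\eta)\in\R^2$; since $\chi_M$ depends only on $\eta$ and equals $1$ in $\xi$, we have $K_M(x,y) = c\,\delta(x)\otimes M\check\chi(My)$ up to normalization, i.e. $P_{\le M}$ acts as the identity in $x$ and as a Littlewood--Paley type smoothing at scale $M^{-1}$ in $y$. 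Concretely, $(P_{\le M}u)(x,y) = \int_\R M\check\chi(M(y-y_1))\,u(x,y_1)\,dy_1$.

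The key step is then a straightforward application of Young's inequality in the $y$ variable combined with Minkowski's integral inequality to handle the $L^1_x$ norm. For fixed $x$, writing $u_x(y)=u(x,y)$, we have $\|(P_{\le M}u)(x,\cdot)\|_{L^2_y} = \|M\check\chi(M\,\cdot)*u_x\|_{L^2_y} \le \|M\check\chi(M\,\cdot)\|_{L^2_y}\|u_x\|_{L^1_y}$ by Young's inequality with exponents $(2,2,1)$. A change of variables gives $\|M\check\chi(M\,\cdot)\|_{L^2(\R)} = M^{1/2}\|\check\chi\|_{L^2(\R)} = C\sqrt{M}$, since $\check\chi\in\mathcal{S}(\R)\subset L^2(\R)$. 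Integrating in $x$ then yields
\begin{align*}
\|P_{\le M}u\|_{L^1_xL^2_y} &= \int_\R \|(P_{\le M}u)(x,\cdot)\|_{L^2_y}\,dx
\le C\sqrt{M}\int_\R \|u(x,\cdot)\|_{L^1_y}\,dx
= C\sqrt{M}\|u\|_{L^1(\R^2)}\,,
\end{align*}
which is exactly \eqref{eq:pv1}.

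I do not anticipate a serious obstacle here; the estimate is a soft consequence of Young's inequality and the scaling of the kernel, and the only points requiring a line of care are the precise identification of the kernel (making sure the multiplier $\chi_M(\eta)$, constant in $\xi$, produces a tensor kernel that is the identity in $x$) and the bookkeeping of constants under the rescaling $\eta\mapsto\eta/M$. If one prefers to avoid distributional kernels in the $x$ variable, the same argument runs verbatim after first taking the partial Fourier transform in $x$: Plancherel in $\xi$ reduces $\|P_{\le M}u\|_{L^1_xL^2_y}$ to an estimate on $\chi_M(\eta)\hat u(\xi,\eta)$ in $L^1_x L^2_{\xi,\eta}$... but in fact the cleanest route is simply to note $P_{\le M} = \chi_M(D_y)$ acts only in $y$, so the $x$-variable is a spectator and the whole claim reduces to the one-dimensional statement $\|\chi_M(D_y)f\|_{L^2_y}\le C\sqrt M\|f\|_{L^1_y}$, which is the standard Bernstein inequality.
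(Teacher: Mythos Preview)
Your proof is correct and essentially identical to the paper's: both write $P_{\le M}$ as convolution in $y$ with the kernel $\mF^{-1}(\chi_M)=M\check\chi(M\,\cdot)$, apply Young's inequality for fixed $x$ to get $\|(P_{\le M}u)(x,\cdot)\|_{L^2_y}\lesssim \sqrt{M}\|u(x,\cdot)\|_{L^1_y}$, and then integrate in $x$. Your closing remark that this is just the one-dimensional Bernstein inequality with $x$ as a spectator variable is exactly the point.
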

\begin{proof}
Applying Young's inequality to
\begin{equation}
  \label{eq:Mink}
(P_{\le M}u)(x,y)=\frac{1}{\sqrt{2\pi}}
\int_\R\mF^{-1}(\chi_M)(y-y_1)u(x,y_1)dy_1\,,
\end{equation}
we have 
\begin{equation*}
\|P_{\le M}u\|_{L^2_y}\le  \|\mF^{-1}(\chi_M)\|_{L^2(\R)}
 \|u(x,\cdot)\|_{L^1(\R)} \lesssim  \sqrt{M}\|u(x,\cdot)\|_{L^1(\R)}\,.
\end{equation*}
Integrating the above over $\R$ in $x$, we obtain \eqref{eq:pv1}.
\end{proof}

\begin{proof}[Proof of Lemma~\ref{lem:nonresonant-ylow}]
Let $v_2(t)=P_2(\eta_0,M)v(t)$.
Then 
\begin{equation}
  \label{eq:v1n}
  \pd_tv_2=\mL v_2+P_2(\eta_0,2M)\{\ell+\pd_x(N_1+N_2+N_2')+N_3\}\,,
\end{equation}
where $N_2'=2\tc(t,y)v(t,z,y)+6(\varphi(z)-\varphi_{c(t,y)}(z))v(t,z,y)$.
Hereafter we abbreviate $P_2(\eta_0,2M)$ as $P_2$.
By Proposition~\ref{prop:semigroup-p2} and Corollary 
\ref{cor:semigroup-p2}, we may assume that
\begin{align*}
& \|e^{t\mL}P_2f\|_X\le  Ke^{-2b_1\eta_0^2t}\|f\|_X\,,\\
& \|e^{t\mL}P_2\pd_zf\|_X \le  K(1+t^{-1/2})e^{-2b_1\eta_0^2t}\|_X\,,\\
& \|e^{t\mL}P_2\pd_zf\|_X\le  K(1+t^{-3/4})e^{-2b_1\eta_0^2t}\|e^{az}f\|_{L^1_zL^2_y}\,,
\end{align*}
where $K$ and $b_1$ are positive constants independent of $f\in X$ and $t>0$.
Applying the semigroup estimates Lemma~\ref{prop:semigroup-p2} and
Corollary~\ref{cor:semigroup-p2}
to \eqref{eq:v1n}, we have
\begin{align*}
\|v_2(t)\|_X\lesssim & e^{-2b_1\eta_0^2t}\|v_2(0)\|_X
+\int_0^t e^{-2b_1\eta^2(t-s)}(t-s)^{-3/4}\|e^{az}N_1(s)\|_{L^1_zL^2_y}\,ds
\\ & +\int_0^t e^{-2b_1\eta^2(t-s)}(t-s)^{-1/2}(\|N_2(s)\|_X+\|N_2'(s)\|_X)\,ds
\\ & +\int_0^t e^{-2b_0\eta^2(t-s)}(\|\ell(s)\|_X+\|N_3(s)\|_X)\,ds\,.
\end{align*}
By Claim~\ref{cl:pv1},
\begin{align*}
\|e^{az}P_2N_1\|_{L^1_zL^2_y}\lesssim & \sqrt{M}\|v\|_{L^2}\|v\|_X
\\ \lesssim  & \sqrt{M}\bM_2(T)\bM_3(T)\la t\ra^{-3/4}\quad
\text{for $t\in[0,T]$.}
\end{align*}
By \eqref{eq:cx_t-bound},
we have for $t\in[0,T]$,
\begin{equation}
  \label{eq:ell1-est}
\begin{split}
\|\ell_1\|_X\lesssim & 
\|x_t-2c-3(x_y)^2\|_{L^2}+\|c_t-6c_yx_y\|_{L^2}+\|x_{yy}\|_{L^2}
+\|c_{yy}\|_{L^2}+\|c_y\|_{L^4}^2
\\ \lesssim & (\bM_1(T)+\bM_2(T)^2)\la t\ra^{-3/4}\,,
\end{split}  
\end{equation}
\begin{equation}
  \label{eq:ell2-est}
\begin{split}
\|\ell_2\|_X\lesssim & 
e^{-a(4t+L)}\bigl(\|c_t-6c_yx_y\|_Y+\|x_t-2c-3(x_y)^2\|_Y+\|\tc\|_Y+
\\ & +\|x_{yy}\|_Y +\|c_{yy}\|_Y+\|c_y\|_{L^4}^2\bigr)
\\ \lesssim & e^{-a(4t+L)}(\bM_1(T)+\bM_2(T)^2)\la t\ra^{-1/4}\,,
\end{split}  
\end{equation}
and
\begin{align*}
\|N_2\|_X+\|N_2'\|_X
\lesssim & (\|x_t-2c-3(x_y)^2\|_{L^\infty}+\|\tc\|_{L^\infty})\|v\|_X
\\ \lesssim & (\bM_1(T)+\bM_2(T))\bM_2(T)\la t\ra^{-5/4}\,.
\end{align*}
Here we use $\sup_{y,z}(|\varphi_{c(t,y)}(z)-\varphi(z)|+|\tpsi_{c(t,y)}|)\lesssim 
\|\tc(t)\|_{L^\infty}$.
Since $\|\pd_yP_2\|_{B(X)}\lesssim M$,
\begin{align*}
\|P_2N_3\|_X\lesssim & M(\|x_y\|_{L^\infty}+\|x_{yy}\|_{L^\infty})
\|v\|_X
\\ \lesssim & M\bM_1(T)\bM_2(T)\la t\ra^{-5/4} \quad\text{for $t\in[0,T]$.}
\end{align*}

As long as  $v(t)$ satisfies the orthogonality condition \eqref{eq:orth} and
$\tc(t,y)$ remains small, we have
\begin{equation}
  \label{eq:v1-v1n}
 \|v_1(t)-v_2(t)\|_X
\lesssim \sup_y|\tc(t,y)|\|v_1(t)\|_X\,,
\end{equation}
and $\frac12\|v_2(t)\|_X\le \|v_1(t)\|_X\le 2\|v_2(t)\|_X$.
Thus we have for $t\in[0,T]$,
\begin{align*}
\|v_1(t)\|_X \lesssim & e^{-2b_1\eta_0^2t}\|v(0)\|_X
+\bM_1(T)\int_0^t e^{-2b_1\eta_0^2(t-s)}\la s\ra^{-3/4}ds
\\ & + \bM_2(T) \sum_{i=1}^3\bM_i(T)
\int_0^t e^{-2b_1\eta_0^2(t-s)}\left\{1+(t-s)^{-3/4}\right\}\la s\ra^{-3/4}ds
\\ & \lesssim e^{-2b_1\eta_0^2t}\|v(0)\|_X
+\left\{\bM_1(T)+\bM_2(T) \sum_{i=1}^3\bM_i(T)\right\}\la t\ra^{-3/4}\,.
\end{align*}
Thus we complete the proof.
\end{proof}
\bigskip

\section{Virial estimates}
\label{sec:virial}
If we apply the argument in Section~\ref{sec:lowbound} to $P_{\ge M}v(t)$,
it requires boundedness of $\|v(t)\|_{L^p(\R^2)}$ with $p>2$, which remains
unknown even for small solutions around $0$.
Instead of the semigroup estimate in Section~\ref{sec:semigroup},
we will make use of a virial estimates of $v$ in the exponentially weighted
space.  We remark that the virial estimate for $L^2$-solutions
to the KP-II equation \eqref{KPII_integrated} was shown in \cite{dBM}.

\begin{lemma}
\label{lem:virial}
Let $a\in(0,2)$ and $v$ be a solution to \eqref{eq:v}. 
There exist positive constants $\delta_6$, $M$ and $C$ such that
if $\sum_{i=1}^3\bM_i(T)<\delta_6$, then
\begin{equation*}
\|v(t)\|_X^2 \le  e^{-2at}\|v(0)\|_X^2
+C\int_0^t e^{-2a(t-s)}\left(\|\ell(s)\|_X^2+\|P_{\le M}v(s)\|_X^2\right)\,ds\,.
\end{equation*}
\end{lemma}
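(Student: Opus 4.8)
## Proof proposal for Lemma~\ref{lem:virial}

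The plan is to run a weighted energy (virial) estimate directly on equation \eqref{eq:v}, using the weight $e^{2a z}$ that defines $X$. Set $w(t,z,y) = e^{2az}v(t,z,y)$, so that $\|v(t)\|_X^2 = \int_{\R^2} e^{2az}v^2\,dzdy$, and differentiate this quantity in time using \eqref{eq:v}:
\[
\frac{d}{dt}\|v(t)\|_X^2 = 2\int_{\R^2} e^{2az}v\left(\mL_{c(t,y)}v + \ell + \pd_z(N_1+N_2) + N_3\right)\,dzdy.
\]
The main term to control is the contribution of $\mL_{c(t,y)}v = -\pd_z(\pd_z^2 - 2c + 6\varphi_c)v - 3\pd_z^{-1}\pd_y^2 v$. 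First I would expand $2\int e^{2az}v\,\mL_{c}v$ by integration by parts. The third-order part $-2\int e^{2az}v\,\pd_z^3 v$ produces, after integrating by parts, a term $-6a\int e^{2az}(\pd_z v)^2 + \dots$ together with lower-order terms carrying powers of $a$ (schematically $+ (\text{const})\,a^3\|v\|_X^2$ and cross terms $a\int e^{2az}v\pd_z v = -a^2\|v\|_X^2$); the key sign is that the highest-derivative term $-6a\int e^{2az}(\pd_z v)^2$ is \emph{good} (negative) since $a>0$. The transport term $4c\int e^{2az}v\,\pd_z v = -4ac\|v\|_X^2 + \dots$ and, with $c$ near $2$, this gives $\approx -8a\|v\|_X^2$, which is the source of the damping rate in the statement. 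The potential term $6\int e^{2az}v\,\pd_z(\varphi_c v)$ is controlled: since $\varphi_c$ and $\varphi_c'$ decay like $e^{-2a|z|}$ (cf. \eqref{eq:exp-localized}), $|6\int e^{2az}v\pd_z(\varphi_c v)| \lesssim \int e^{2az}\varphi_c v^2 + \|v\|_X\|e^{2az}\varphi_c' v\|_{L^2}\cdots$, and one uses that $e^{2az}\varphi_c(z)$ is bounded with compact-ish effective support, so this is $\lesssim \|v\|_{L^2}\|v\|_X$, hence harmless after using $\bM_3(T)$ small (the paper explicitly notes "the potential term produced by the linearization is negligible"). The anti-derivative term $-6\int e^{2az}v\,\pd_z^{-1}\pd_y^2 v$ requires care because of the $\pd_z^{-1}$: I would integrate by parts in $y$ once to write it as $-6\int e^{2az}\pd_y v\,\pd_z^{-1}\pd_y v$, then symmetrize; the boundedness of $\pd_z^{-1}$ on $L^2_a$ (stated right after the definition of $\pd_x^{-1}$, with norm $1/a$) lets one bound this, but it costs a $\pd_y$ derivative. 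This is where the high-frequency cutoff $P_{\ge M}$ enters: on frequencies $|\eta|\ge M$ the term is under control, while the low-frequency piece $P_{\le M}v$ is moved to the right-hand side as $\|P_{\le M}v(s)\|_X^2$, as in the statement.

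For the nonlinear and inhomogeneous terms: $2\int e^{2az}v\,\ell \le \|\ell\|_X^2 + \|v\|_X^2$ by Cauchy--Schwarz, giving the $\int e^{-2a(t-s)}\|\ell\|_X^2$ term after Gronwall. The quadratic term $2\int e^{2az}v\,\pd_z N_1 = -6\int e^{2az}v\,\pd_z(v^2)$, integrated by parts, is $\lesssim a\int e^{2az}|v|^3 + \int e^{2az}\varphi\cdots$; using $\sup|v|\lesssim\cdots$ is not available, so instead I would bound $\int e^{2az}|v|^3 \le \|v\|_{L^\infty_z L^2_y(e^{2az})}\|v\|_X^2$ or, more robustly, absorb it using that $\bM_i(T)$ are small so that the cubic term is a small multiple of $\|v\|_X^2 + (\text{lower order})$; the precise mechanism is the same one used in \cite{dBM} for the $L^2$ virial estimate of KP-II, which the paper cites explicitly. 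The terms $N_2 = \{x_t - 2c - 3(x_y)^2\}v + 6\tpsi_c v$ and $N_3 = 6\pd_y(x_y v) - 3x_{yy}v$ are linear in $v$ with coefficients that are small in $L^\infty$ (controlled by $\bM_1(T)$ via \eqref{eq:cx_t-bound} and the smallness of $\tc$), so $2\int e^{2az}v\,(\pd_z N_2 + N_3) \lesssim (\bM_1+\bM_2)\|v\|_X^2 + (\text{commutator with }e^{2az})$; the $\pd_y(x_y v)$ piece is handled by integration by parts in $y$ and does not hit the weight.

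Collecting everything, for $M$ large enough and $\sum_i \bM_i(T)$ small enough one arrives at a differential inequality
\[
\frac{d}{dt}\|v(t)\|_X^2 \le -2a\|v(t)\|_X^2 + C\|\ell(t)\|_X^2 + C\|P_{\le M}v(t)\|_X^2,
\]
and Gronwall's inequality gives the claimed bound. The main obstacle is the $\pd_z^{-1}\pd_y^2$ term: it is the only place where the weighted energy method risks losing a $y$-derivative, and the resolution is precisely the frequency splitting — showing that after one integration by parts in $y$ and using $\|\pd_z^{-1}\|_{B(L^2_a)} = 1/a$, the high-frequency part $P_{\ge M}v$ contributes a term of the form $-c_M\int e^{2az}(\text{something})^2$ plus errors absorbable into $-2a\|v\|_X^2$ when $M$ is large, while everything at frequencies $\le M$ is relegated to the source term $\|P_{\le M}v\|_X^2$. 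A secondary point requiring attention is justifying the differentiation of $\|v(t)\|_X^2$ and all the integrations by parts for merely $X\cap L^2(\R^2)$-valued solutions; as in the proof of Lemma~\ref{lem:L2conserve}, this is done by first assuming $v_0 \in H^3(\R^2)$ with $\pd_x^{-1}v_0 \in H^2(\R^2)$, where \cite{MST} and Proposition~\ref{prop:LWPX} give enough regularity, and then passing to the limit using the continuity of the flow in $L^2$ and the uniform-in-$T$ bound $\sup_{t\in[0,T]}\|\tv(t)\|_X \le C$ from \eqref{eq:virial-fx}.
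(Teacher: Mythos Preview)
Your overall plan (weighted energy identity, frequency split, Gronwall) is right, but you have misidentified the obstruction, and this matters for the argument to close.

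The term $-3\pd_z^{-1}\pd_y^2 v$ is \emph{not} the difficulty. After one integration by parts in $y$ and one in $z$ (writing $\pd_y v=\pd_z(\pd_z^{-1}\pd_y v)$), one gets
\[
-6\int e^{2az}v\,\pd_z^{-1}\pd_y^2 v\,dzdy
= -6a\int e^{2az}(\pd_z^{-1}\pd_y v)^2\,dzdy,
\]
which is a \emph{good} (negative) term. Together with the $-\pd_z^3$ and $2c\,\pd_z$ contributions, the dissipative part of $\mL_c$ produces exactly $\int p'(z)\mathcal{E}(v)$ with $\mathcal{E}(v)=3(\pd_zv)^2+3(\pd_z^{-1}\pd_yv)^2+4v^2$, as in the paper's identity \eqref{eq:vir1}. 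No operator bound on $\pd_z^{-1}$ is needed here, and the frequency cutoff is not used at this step.

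The actual obstacle is the potential term $6\pd_z(\varphi_c v)$. After the same manipulations it produces (with weight $p$) the contribution $6\int p'(z)\varphi_c v^2$, and since $\|\varphi_c\|_{L^\infty}\sim c$ is order one, this is $\sim C\|v\|_X^2$ with a constant that is \emph{not} small; your claim that it is $O(\|v\|_{L^2}\|v\|_X)$ is incorrect. (The sentence from the introduction you quote says the potential is negligible \emph{for the high-frequency part}, not unconditionally.) This is precisely where the frequency split enters: for $v_{>}=P_{\ge M}v$, Plancherel plus the elementary bound $|\pd_z\hat v_>|^2+\eta^2|\pd_z^{-1}\hat v_>|^2\ge 2|\eta|\,|\hat v_>|^2\ge 2M|\hat v_>|^2$ gives
\[
\int p'(z)\bigl[(\pd_z v_>)^2+(\pd_z^{-1}\pd_y v_>)^2\bigr]\ge 2M\int p'(z)v_>^2,
\]
so with $M$ chosen as the sup of the potential coefficient over $p'$, the potential term for $v_>$ is absorbed into the dissipation; only the low-frequency piece $\|P_{\le M}v\|_X^2$ survives on the right-hand side.

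A second point that needs more than a passing reference: the cubic term $\int e^{2az}v^3$ cannot be handled by $L^\infty$ control of $v$ (which you do not have). The paper uses the anisotropic estimate (Claim~\ref{cl:aniso})
\[
\Bigl(\int p(z)v^4\Bigr)^{1/2}\lesssim \int p'(z)\mathcal{E}(v),
\]
so that $\bigl|\int p'v^3\bigr|\lesssim \|v\|_{L^2}\int p'\mathcal{E}(v)$, and smallness of $\bM_3(T)$ absorbs it into the dissipation. This estimate requires the truncated weight $p_n(z)=e^{2an}(1+\tanh a(z-n))$ (so that $p_n$ and $p_n'$ are comparable and bounded for each $n$), which is why the paper works with $p_n$ throughout and passes to the limit $n\to\infty$ only at the end; working directly with $e^{2az}$ as you propose makes both this step and the formal justification of the identity more delicate.
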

To prove Lemma~\ref{lem:virial}, we use the following.
\begin{claim}
\label{cl:aniso}
Let  $a>0$ and $p_n(x)=e^{2an}(1+\tanh a(x-n))$.
There exists a $C>0$ such that for every $n\in\N$
\begin{equation}
  \label{eq:aniso}
\left(\int_{\R^2}p_n(x)u^4(x,y)\,dxdy\right)^{1/2} \le C
\int_{\R^2}p_n'(x)\Big((\pd_x u)^2+(\pd_x^{-1}\pd_y u)^2+u^2\Big)(x,y)\,dxdy
\,.
\end{equation}
\end{claim}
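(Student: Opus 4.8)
## Proof proposal for Claim~\ref{cl:aniso}

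The plan is to reduce the anisotropic weighted Gagliardo--Nirenberg inequality \eqref{eq:aniso} to an interpolation between an $L^\infty_x L^2_y$ estimate in the $x$-direction and an $L^\infty_y L^2_x$ estimate in the $y$-direction, each weighted by the slowly-varying factor $p_n$. First I would record the elementary facts about $p_n$: it is smooth, increasing, bounded above and below by positive multiples of $e^{2an}$ on $\{x\le n\}$ and by positive multiples of $e^{2ax}$ on $\{x\ge n\}$, and most importantly $p_n'(x)=2a\,e^{2an}\sech^2 a(x-n)>0$ with $p_n'\asymp a\, p_n$ uniformly in $n$ and $x$ (indeed $p_n'/p_n = 2a\,\sech^2 a(x-n)/(1+\tanh a(x-n)) = 2a(1-\tanh a(x-n))$, which is comparable to $2a$ times $\min(1, e^{-2a(x-n)})$; one checks $p_n' \le 2a\, p_n$ and, after a change of variables, that the reverse comparison holds after pairing against the quantities appearing on the right). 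This comparability is what lets one freely pass between $p_n$-weighted and $p_n'$-weighted $L^2$ norms on the right-hand side.

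Next I would establish the two one-dimensional building blocks. In the $x$-variable, for fixed $y$, write $p_n(x)u(x,y)^2 = \int_{-\infty}^x \partial_x\big(p_n u^2\big)\,dx_1 = \int_{-\infty}^x \big(p_n' u^2 + 2 p_n u\,\partial_x u\big)\,dx_1$, so that $\sup_x p_n(x) u(x,y)^2 \lesssim \int_\R p_n'(x)\big(u^2 + (\partial_x u)^2\big)(x,y)\,dx$ after using $2p_n |u||\partial_x u| \le p_n'^{-1}p_n^2 (\partial_x u)^2 + p_n' u^2 \lesssim p_n'(\partial_x u)^2 + p_n' u^2$ via $p_n \lesssim p_n'$. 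In the $y$-variable, with $\partial_x^{-1}\partial_y$ understood as in Section~\ref{sec:onmkp2}, a similar fundamental-theorem-of-calculus argument in $y$ combined with Cauchy--Schwarz gives $\sup_y u(x,y)^2 \lesssim \|u(x,\cdot)\|_{L^2_y}\|\partial_y u(x,\cdot)\|_{L^2_y}$, and then integrating against $p_n(x)\,dx$ and using that $\partial_y u = \partial_x(\partial_x^{-1}\partial_y u)$ together with the $x$-estimate applied to $\partial_x^{-1}\partial_y u$ yields $\int_\R p_n(x)\sup_y u(x,y)^2\,dx \lesssim \int_{\R^2} p_n'(x)\big((\partial_x u)^2 + (\partial_x^{-1}\partial_y u)^2 + u^2\big)\,dxdy$ after one more application of Cauchy--Schwarz in $x$.

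Finally I would combine the two: writing $u^4 = u^2\cdot u^2$ and using $p_n u^4 \le \big(\sup_x p_n u^2\big)\cdot u^2$ integrated first in $x$ then in $y$,
$$\int_{\R^2} p_n u^4\,dxdy \le \int_\R \Big(\sup_x p_n(x)u(x,y)^2\Big)\Big(\int_\R u(x,y)^2\,dx\Big)dy\,,$$
but this loses the weight on the second factor; instead the cleaner route is $p_n u^4 \le \big(\sup_y u^2\big)\, p_n u^2$, integrate in $y$ to get $\int p_n u^4 dy \le \big(\int \sup_y u^2\, dy\big)\big(\sup_y p_n u^2\big)$ — wait, this still mismatches. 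The correct balancing is: $\int_{\R^2} p_n u^4 = \int_\R\!\!\int_\R p_n(x)u(x,y)^2\cdot u(x,y)^2\,dy\,dx \le \int_\R \big(\sup_y p_n(x)u(x,y)^2\big)\big(\int_\R u(x,y)^2 dy\big)dx$, then split $p_n u^2 = (p_n^{1/2}u)(p_n^{1/2}u)$ and apply the $x$-sup estimate to one factor of $p_n^{1/2}u$ at fixed $y$ and the $y$-sup estimate to $\int p_n^{1/2}\cdot$; concretely, bound $\int_{\R^2}p_n u^4 \le \big(\int_\R p_n(x)\sup_y u(x,y)^2 dx\big)^{1/2}\big(\int_\R \sup_x p_n(x)u(x,y)^2\, dy\big)^{1/2}\cdot\big(\int_{\R^2}p_n u^2\big)^{0}$ — the homogeneity check is the key bookkeeping step, and it works out because each of the two displayed one-dimensional estimates is linear in $\int p_n'(\cdots)$ while the left side is quadratic in $u^2$, matching the square on the left of \eqref{eq:aniso}. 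The main obstacle is precisely this bookkeeping: arranging the Cauchy--Schwarz splitting so that the product of the two one-dimensional bounds reproduces exactly $\big(\int p_n'(\,(\partial_x u)^2+(\partial_x^{-1}\partial_y u)^2+u^2\,)\big)^2$ with no leftover $L^2$ factor, using the uniform-in-$n$ comparability $p_n \asymp a^{-1}p_n'$ (in the relevant weighted sense) to absorb all the $p_n$'s into $p_n'$'s. Once the exponents are tracked correctly, the constant $C$ is manifestly independent of $n$ since every comparison constant between $p_n$ and $p_n'$ is.
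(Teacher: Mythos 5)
Your proposal has a fatal gap: the pointwise comparison $p_n\lesssim p_n'$ that your whole $x$-direction estimate rests on is false. Computing directly, $p_n'(x)/p_n(x)=a\bigl(1-\tanh a(x-n)\bigr)$, which tends to $0$ as $x\to+\infty$; equivalently $p_n/p_n'\sim e^{2a(x-n)}/(2a)\to\infty$. So only the one-sided bound $p_n'\le 2a\,p_n$ holds, and the reverse comparison you invoke (``via $p_n\lesssim p_n'$'' in the step $2p_n|u||\pd_x u|\le p_n'^{-1}p_n^2(\pd_xu)^2+p_n'u^2\lesssim p_n'(\pd_xu)^2+p_n'u^2$) is unavailable. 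You seem aware of this -- your parenthetical ``after pairing against the quantities appearing on the right'' concedes that the pointwise statement fails -- but the subsequent argument nevertheless uses the pointwise bound. With it gone, the one-dimensional building block $\sup_x p_n(x)u(x,y)^2\lesssim\int p_n'(u^2+(\pd_xu)^2)\,dx$ is unproven, and the rest of the argument, which builds on it, collapses.

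The structural facts about $p_n$ that are actually true (and which the paper itself records at the start of the proof of Lemma~\ref{lem:virial}) go in a different direction: $0<p_n'\le 2a\,p_n$, $|p_n''|\le 2a\,p_n'$ (hence $|\pd_x\sqrt{p_n'}|\le a\sqrt{p_n'}$), and the algebraic identity $a\,p_n(x)^2=e^{2ax}p_n'(x)$. The first of these controls $p_n'$ by $p_n$, not the reverse; the second lets one commute $\sqrt{p_n'}$ with $\pd_x$ up to a lower-order error, which is the correct engine behind inequalities of this type (substitute $v=\sqrt{p_n'}\,u$, apply an unweighted anisotropic Gagliardo--Nirenberg inequality, and absorb the commutators using $|\pd_x\sqrt{p_n'}|\le a\sqrt{p_n'}$); the third ties $p_n$ on the left to $p_n'$ on the right despite the lack of a pointwise comparison. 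None of these appear in your argument. Finally, your closing paragraph acknowledges that the Cauchy--Schwarz bookkeeping needed to recover the exact right-hand side is not actually carried out; as written, it contains an internal contradiction (``wait, this still mismatches'') and ends with an appeal to ``once the exponents are tracked correctly,'' which is precisely the nontrivial step.
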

Claim~\ref{cl:aniso} follows in exactly the same way as
\cite[Lemma~2]{MST_KPI} and \cite[Claim 5.1]{MT}. So we omit the proof.

\begin{proof}[Proof of Lemma~\ref{lem:virial}]
Let $p_n$ be as in Claim~\ref{cl:aniso}. Then
$p_n(z)\uparrow e^{2az}$ and $p_n'(z)\uparrow 2ae^{2az}$
as $n\to\infty$ and $0<p_n'(z)\le ap_n(z)$,
$|p_n'''(z)|\le 4a^2p_n'(z)$ and $ap_n(z)^2=e^{2az}p_n'(z)$ for $z\in\R$.
\par
First, we will derive a virial identity for $v(t)$
assuming $v_0\in H^3(\R^2)$ and $\pd_x^{-1}v_0\in H^2(\R^2)$ so that
$v(t)\in C([0,T];H^3(\R^2))$ and $\pd_x^{-1}v(t)\in C([0,T];H^2(\R^2))$.
Multiplying \eqref{eq:v} by $2e^{2at}p_n(z)v(t,z,y)$ and integrating
the resulting equation by part, we have for $t\in[0,T]$,
\begin{equation}
  \label{eq:vir1}
  \begin{split}
&  \frac{d}{dt}\left(e^{2at}
\int_{\R^2} p_n(z)v(t,z,y)^2\,dzdy\right) 
+e^{2at}\int_{\R^2}p_n'(z)\left(\mathcal{E}(v)-4v^3\right)(t,z,y)\,dzdy
\\ =& e^{2at}\left\{
\int_{\R^2} 2ap_n(z)v(t,z,y)^2\,dzdy+\sum_{k=1}^3III_k(t)\right\}\,,
  \end{split}  
\end{equation}
where $\mathcal{E}(v)=3(\pd_zv)^2+3(\pd_z^{-1}\pd_yv)^2+4v^2$,
\begin{align*}
III_1=&2 \int_{\R^2}p_n(z)\ell v(t,z,y)\,dzdyds\,,\\
III_2=&-\int_{\R^2}p_n'(z)\left((\tx_t(t,y)-3x_y(t,y)^2\right)v(t,z,y)^2\,dzdy\,,
\\  III_3=&
\int_{\R^2}\biggl\{p_n'''(z)+6[\pd_z, p_n(z)]
\left(\varphi_{c(t,y)}(z)-\psi_{c(t,y),L}(z+4t)\right)\biggr\}v(t,z,y)^2\,dzdy\,.
\end{align*}
Integrating \eqref{eq:vir1} over $[0,t]$, we have
\begin{equation}
  \label{eq:vir2}
  \begin{split}
&  e^{2at}\int_{\R^2} p_n(z)v(t,z,y)^2\,dzdy
+\int_0^t e^{2as}\int_{\R^2}p_n'(z)\left(\mathcal{E}(v)-4v^3\right)(s,z,y)\,dzdyds
\\ =&\int_{\R^2} p_n(z)v(0,z,y)^2\,dzdy
+ \int_0^t e^{2as}\int_{\R^2} 2ap_n(z)v(s,z,y)^2\,dzdyds
\\ & +
\int_0^t e^{2as}\int_{\R^2} \left\{III_1(s)+III_2(s)+III_3(s)\right\}\,ds\,,
  \end{split}    
\end{equation}
We can prove \eqref{eq:vir2}
for any $v(t)\in C([0,T];L^2(\R^2))\cap L^\infty([0,T];X)$ satisfying
$\sum_{i=1}^3\bM_i(T)<\delta_6$ in the same way as the proof
of Lemma~\ref{lem:L2conserve}.
\par

By the Schwarz inequality and Claim~\ref{cl:aniso},
\begin{equation}
  \label{eq:vir3}
\begin{split}
\left|\int p_n'(z)v(t,z,y)^3dzdy\right| \le &
\|v(t)\|_{L^2}\left(\int_{\R^2}p_n'(z)^2v(t,z,y)^4dzdy\right)^{1/2}
\\ \lesssim & \|v(t)\|_{L^2}\int_{\R^2}p_n'(z) \mathcal{E}(v(t,z,y))dzdy\,.
\end{split}  
\end{equation}
By the Schwarz inequality,
$$\left|III_1\right| \le 
\int p_n'(z) v^2\,dzdy +\int \frac{p_n(z)^2}{p_n'(z)} \ell^2 \,dzdy\,.$$
Since $Y\subset H^1(\R)$, we have
$\sup_{t\in[0,T]\,,\,y\in\R}|\tx_t(t,y)-3x_y(t,y)^2| \lesssim \bM_1(T)$
from \eqref{eq:cx_t-bound} and
$$\left|III_2\right| \lesssim \bM_1(T)\int_{\R^2} p_n(z) v(t,z,y)^2\,dzdy\,.$$
Let 
$$M=\sup_{n,z}\frac{|p_n'''(z)|}{p_n'(z)}
+6\sup_{n,t,y,z}\frac{\left|[\pd_z, p_n(z)]
\left(\varphi_{c(t,y)}(z)-\psi_{c(t,y),L}(z+4t)\right)\right|}{p_n'(z)}\,.$$
Then
$$\left|III_3\right| \le M\int_{\R^2} p_n(z) v(t,z,y)^2\,dzdy\,.$$
Let $v_{<}=P_{\le M}v$ and $v_{>}=P_{\ge M}v$.
For $y$-high frequencies, the potential term can be absorbed into the
left hand side. Indeed it follows from
Plancherel's theorem and the Schwarz inequality that
\begin{align*}
& \int_{\R^2}
p_n'(z)\left((\pd_zv_{>})^2+(\pd_z^{-1}\pd_yv_{>})^2\right)(t,z,y)\,dzdy
\\ =& \int_{\R^2}p_n'(z)\left(|\pd_z\mF_y(v_{>})|^2
+\eta^2|\pd_z^{-1}\mF_y(v_{>})|^2\right)(t,z,\eta)\,dzd\eta
\\ \ge & 2M\int_{\R^2}p_n'(z)v_{>}(t,z,y)^2\,dzdy\,.
\end{align*}
Combining the above, we have for $t\in[0,T]$,
  \begin{align*}
& e^{2at}\int_\R p_n(z)v(t,z,y)^2dzdy \le \int_\R p_n(z)v(0,z,y)^2\,,dzdy
\\ & +\int_0^t e^{2as}\frac{p_n(z)^2}{p_n'(z)}\ell(s)^2\,dzdyds
+M\int_0^t e^{2as}p_n'(z)v_<(s,z,y)^2\,dzdyds
  \end{align*}
if $\delta_6$ is sufficiently small.
By passing to the limit as $n\to\infty$, we obtain Lemma~\ref{lem:virial}.
Thus we complete the proof.
\end{proof}

Combining Lemmas~\ref{lem:nonresonant-ylow} and \ref{lem:virial},
we obtain the following.
\begin{lemma}
  \label{lem:exp-bound}
Let $a$ and $M$ be as in Lemmas~\ref{lem:nonresonant-ylow} and
\ref{lem:virial}.
There exist positive constants $\delta_7$ and $C$ such that if
$\sum_{i=1}^3\bM_i(T)\le  \delta_7$, then
\begin{equation}
  \label{eq:M4-bound}
\bM_2(T)\le C(\|v_0\|_X+\bM_1(T))\,.
\end{equation}
\end{lemma}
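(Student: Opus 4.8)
The plan is to combine the two estimates just proved. Lemma~\ref{lem:exp-bound} asserts that $\bM_2(T)=\sup_{0\le t\le T}(1+t)^{3/4}\|v(t)\|_X$ is controlled by $\|v_0\|_X+\bM_1(T)$ alone, so the proof must show that the $X$-norm of $v(t)$ decays like $\la t\ra^{-3/4}$. We already have two ingredients: Lemma~\ref{lem:nonresonant-ylow}, which controls the $y$-low-frequency part $v_1(t)=P_1(0,2M)v(t)$ in $X$ by $e^{-2b_1\eta_0^2t}\|v(0)\|_X+\{\bM_1(T)+\bM_2(T)\sum_i\bM_i(T)\}\la t\ra^{-3/4}$, and Lemma~\ref{lem:virial}, which gives a Gronwall-type control of the full $\|v(t)\|_X^2$ by $e^{-2at}\|v(0)\|_X^2$ plus an integral of $\|\ell(s)\|_X^2+\|P_{\le M}v(s)\|_X^2$. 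The key point is that $\|\ell(s)\|_X$ and $\|P_{\le M}v(s)\|_X$ both decay at least like $\la s\ra^{-3/4}$, so the virial integral closes with the right rate.

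First I would record the decay of $\ell$: from \eqref{eq:ell1-est} and \eqref{eq:ell2-est} (equivalently from \eqref{eq:cx_t-bound} and the exponential localization of $\ell_2$) we get $\|\ell(s)\|_X\lesssim (\bM_1(T)+\bM_2(T)^2)\la s\ra^{-3/4}$ for $s\in[0,T]$. Next I would observe that $P_{\le M}v$ is, up to the exponentially small mismatch coming from the weight $e^{2az}$ versus $e^{az}$, essentially a piece of the $y$-low-frequency part $v_1$ (since $P_{\le M}$ localizes $|\eta|\le 2M$ and $P_1(0,2M)$ does the same), so $\|P_{\le M}v(s)\|_X\lesssim \|v_1(s)\|_X+\sup_y|\tc(s,y)|\,\|v(s)\|_X$ by an estimate like \eqref{eq:v1-v1n}; the second term is $O(\bM_1(T)\bM_2(T)\la s\ra^{-5/4})$. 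Feeding Lemma~\ref{lem:nonresonant-ylow} into this bound gives
\begin{equation*}
\|P_{\le M}v(s)\|_X\lesssim e^{-2b_1\eta_0^2s}\|v(0)\|_X+\Bigl\{\bM_1(T)+\bM_2(T)\sum_{i=1}^3\bM_i(T)\Bigr\}\la s\ra^{-3/4}\,.
\end{equation*}
Substituting these two decay estimates into Lemma~\ref{lem:virial} and using $\int_0^t e^{-2a(t-s)}e^{-4b_1\eta_0^2s}\,ds\lesssim \la t\ra^{-3/2}$ and $\int_0^t e^{-2a(t-s)}\la s\ra^{-3/2}\,ds\lesssim \la t\ra^{-3/2}$, I obtain
\begin{equation*}
\|v(t)\|_X^2\lesssim e^{-2at}\|v(0)\|_X^2+\la t\ra^{-3/2}\|v(0)\|_X^2+\Bigl\{\bM_1(T)+\bM_2(T)\sum_{i=1}^3\bM_i(T)\Bigr\}^2\la t\ra^{-3/2}\,,
\end{equation*}
whence $\la t\ra^{3/4}\|v(t)\|_X\lesssim \|v(0)\|_X+\bM_1(T)+\bM_2(T)\sum_{i=1}^3\bM_i(T)$ for all $t\in[0,T]$. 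Taking the supremum over $t\in[0,T]$ yields $\bM_2(T)\le C(\|v_0\|_X+\bM_1(T))+C\bM_2(T)\sum_{i=1}^3\bM_i(T)$.

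The final step is absorption: since by hypothesis $\sum_{i=1}^3\bM_i(T)\le\delta_7$, choosing $\delta_7$ small enough that $C\delta_7\le 1/2$ lets us move the $C\bM_2(T)\sum_i\bM_i(T)$ term to the left-hand side, giving $\bM_2(T)\le 2C(\|v_0\|_X+\bM_1(T))$, which is \eqref{eq:M4-bound} with a relabeled constant. The main obstacle I anticipate is not any single hard estimate but the bookkeeping needed to see that $\|P_{\le M}v(s)\|_X$ really is dominated by $\|v_1(s)\|_X$ (up to the $\tc$-correction and the harmless exponential loss from switching the weight exponent, which is legitimate because $\eta_0$ is chosen so small that $a/2>\nu_0$): one must check that the Fourier cutoffs $\chi_M$ and $\mathbf 1_{[0,2M]}$ are compatible and that the commutator-type error from the orthogonality condition \eqref{eq:orth} is of the stated lower order. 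Everything else is a direct application of Lemmas~\ref{lem:nonresonant-ylow} and~\ref{lem:virial} together with the elementary convolution bounds of Claim~\ref{cl:abc}.
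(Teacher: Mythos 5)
Your proposal is correct and follows essentially the same path as the paper: feed the decay of $\|\ell(s)\|_X$ and $\|P_{\le M}v(s)\|_X$ into Lemma~\ref{lem:virial}, use Lemma~\ref{lem:nonresonant-ylow} for the latter, and absorb the $\bM_2(T)\sum_i\bM_i(T)$ term by smallness of $\delta_7$. The one place you flag as a potential obstacle is in fact immediate: since $\chi_M(\eta)=0$ for $|\eta|>2M$, the Fourier support of $P_{\le M}v$ lies in $[-2M,2M]$, so $P_{\le M}v=P_{\le M}v_1$ and $\|P_{\le M}v(t)\|_X\le\|v_1(t)\|_X$ with no correction — you do not need the $\tc$-term or \eqref{eq:v1-v1n}, which concerns the gap between $v_1$ and $v_2=P_2(\eta_0,M)v$ coming from the orthogonality condition, a different comparison. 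Your extra term is harmless since it is lower order, but the appeal to \eqref{eq:v1-v1n} is a red herring.
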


\begin{proof}
Since $\chi_M(\eta)=0$ for $\eta\in\R\setminus[-2M,2M]$, we have
$\|P_{\le M}v(t)\|_X\le \|v_1(t)\|_X$. 
Combining Lemma~\ref{lem:virial} with Lemma~\ref{lem:nonresonant-ylow},
\eqref{eq:ell1-est} and \eqref{eq:ell2-est},
we have for $t\in[0,T]$,
$$\|v(t)\|_X \lesssim
e^{-b_2t}\|v(0)\|_X+\left\{
\bM_1(t)+\bM_2(T)(\bM_1(T)+\bM_2(T)+\bM_3(T))\right\}\la t\ra^{-3/4}\,.$$
Since $\|v(0)\|_X\lesssim \|v_0\|_X$ by Lemma~\ref{lem:decomp},
we obtain \eqref{eq:M4-bound} if $\delta_7$ is sufficiently small.
Thus we complete the proof.
\end{proof}
\bigskip

\section{Proof of Theorem~\ref{thm:exp-stability}}
\label{sec:thm1}
Now we are in position to complete the proof of Theorem~\ref{thm:exp-stability}.
\begin{proof}
Since the KP-II equation has the scaling invariance,
we may assume that $c_0=2$ without loss of generality.
Let $\delta_*=\min_{0\le i\le 7}\delta_i/2$.
\par
Since $v_0\in H^1(\R^2)\cap X$, 
$$\tv(t,x,y)=u(t,x+4t,y)-\varphi(x)\in C([0,\infty);X\cap H^1(\R^2))$$
(see \cite{MST} and Proposition~\ref{prop:LWPX}).
If $\|v_0\|_X+\|v_0\|_{L^2}$ is sufficiently small,
Lemma~\ref{lem:decomp} and Remark~\ref{rem:decomp} imply
that there exists $T>0$ and $(c(t),x(t))$ satisfying
\eqref{eq:decomp}, \eqref{eq:orth}, \eqref{eq:C1cx} and
$$\|\tc(t)\|_Y+\|\tx(t)\|_Y\lesssim \|\tv(t)\|_X
\quad\text{for $t\in[0,T]$,}$$
and it follows that  $v(t)\in C([0,T]; X\cap L^2(\R^2))$ and
\begin{equation}
  \label{eq:tot-1}
\bM_{tot}(T):=\bM_1(T)+\bM_2(T)+\bM_3(T)\le \frac{\delta_*}{2}\,.
\end{equation}
By Proposition~\ref{prop:continuation}, we can extend the
decomposition \eqref{eq:decomp} satisfying \eqref{eq:orth} beyond
$t=T$. Let $T_1\in(0,\infty]$ be the maximal time such that the
decomposition \eqref{eq:decomp} with \eqref{eq:orth} exists for
$t\in[0,T_1]$ and $\bM_{tot}(T_1)\le \delta_*$.  Suppose $T_1<\infty$.
Then it follows from Lemmas~\ref{lem:M1-bound}, \ref{lem:M4-bound} and
\ref{lem:exp-bound} that
\begin{equation}
  \label{eq:tot-2}
\bM_{tot}(T_1)\lesssim \|v_0\|_{X_1}+\|v_0\|_{L^2(\R^2)}+\bM_{tot}(T_1)^2\,.  
\end{equation}
If $\|v_0\|_{X_1}+\|v_0\|_{L^2(\R^2)}$ is sufficiently small,
then $\bM_{tot}(T_1)\le \delta_*/2$ follows from \eqref{eq:tot-2},
which contradicts to the definition of $T_1$.
Thus we prove $T_1=\infty$ and
\begin{equation}
  \label{eq:tot-3}
  \bM_{tot}(\infty)\lesssim \|v_0\|_{X_1}+\|v_0\|_{L^2}\,.
\end{equation}
\par
Now we will prove \eqref{OS} and \eqref{AS}.
By \eqref{eq:decomp}, \eqref{eq:psinorm} and \eqref{eq:tot-3},
\begin{align*}
\|u(t,x,y)-\varphi_{c(t,y)}(x-x(t,y))\|_{L^2(\R^2)}\le &
\|v(t)\|_{L^2(\R^2)}+\|\tpsi_{c(t,y)}\|_{L^2(\R^2)}
\\ \lesssim &  \bM_3(\infty)+\bM_1(\infty)\,,
\end{align*}
\begin{multline*}
\left\|e^{ax}(u(t,x+x(t,y),y)-\varphi_{c(t,y)}(x))\right\|_{L^2}
\le  \|v(t)\|_X+ \|\tpsi_{c(t,y)}\|_X
\\ \lesssim  
\bM_2(\infty)\la t\ra^{-3/4}+\bM_1(\infty)e^{-a(4t+L)}\la t\ra^{-1/4}\,. 
\end{multline*}
Since $\|f\|_{L^\infty}\lesssim \|f\|_Y^{1/2}\|\pd_yf\|_Y^{1/2}$ 
for any $f\in Y$, we see that \eqref{phase-sup} and \eqref{phase2}
follow immediately from \eqref{eq:tot-3} and \eqref{eq:cx_t-bound}.
Thus we complete the proof of Theorem~\ref{thm:exp-stability}.
\end{proof}
\bigskip

\section{Proof of Theorem~\ref{thm:instability}}
\label{sec:thm2}
In this section, we will prove orbital instability of line solitons.
For the purpose, we will utilize that $(b,x_y)$ is a solution to
the diffusion wave equation \eqref{eq:modeq2} and its profile can be
approximated by the heat kernel in some region.
\par

\begin{proof}[Proof of Theorem~\ref{thm:instability}]
First we remark that if $\|u(t,x)-\varphi_{c_0}(x-x_0)\|_{L^2(\R^2)}<\infty$,
then $x_0=2c_0t$. Indeed, it follow from \cite{MST} that
$u(t,x,y)-\varphi_{c_0}(x-2c_0t)\in L^2(\R^2)$ for every $t\ge0$ and
\begin{align*}
& \|\varphi_{c_0}(x-2c_0t)-\varphi_{c_0}(x-x_0)\|_{L^2(\R^2)}
\\ \le & \|u(t,x,y)-\varphi_{c_0}(x-x_0)\|_{L^2(\R^2)}
+\|u(t,x,y)-\varphi_{c_0}(x-2c_0t)\|_{L^2(\R^2)}<\infty\,,
\end{align*}
whereas $\|\varphi_{c_0}(\cdot-2c_0t)-\varphi(\cdot-x_0)\|_{L^2(\R^2)}=\infty$
if $x_0\ne 2c_0t$.
\par
On the other hand, Theorem~\ref{thm:exp-stability} implies that 
\begin{align*}
& \|u(t,\cdot)-\varphi_{c_0}(x-2c_0t)\|_{L^2(\R^2)} \\
\ge & \|\varphi_{c_0}(x-x(t,y))-\varphi_{c_0}(x-2c_0t)\|_{L^2(\R^2)}
-\|u(t,x,y)-\varphi_{c(t,y)}(x-x(t,y))\|_{L^2(\R^2)}
\\ & -\|\varphi_{c(t,y)}(x-x(t,y))-\varphi_{c_0}(x-x(t,y))\|_{L^2(\R^2)}
\\ \gtrsim & \|x(t,y)-2c_0t\|_{L^2(\R)}-O(\eps)\,.
\end{align*}
Thus to prove orbital instability of line solitons, it suffices to
show that $\|x(t,\cdot)\|_{L^2(\R)}$ grows up as $t\to\infty$.
\par
Now we will construct a solution satisfying $\|x(t,\cdot)\|_Y\gtrsim t^{1/4}$
as $t\to\infty$. We may assume that $c_0=2$ without loss of generality.
If $b(0)$ and $x_y(0)$ are sufficiently small and $\int_{\R}b(0)\,dy$ is nonzero,
then  $e^{tA_0}(b(0),x_y(0))$ is expected to be the main part of
the solution $(b(t),x_y(t))$. To investigate the behavior of
$e^{tA_0}(b(0),x_y(0))$, we represent the semigroup $e^{tA_0}$ by using 
the heat kernel $H_t(y)$. Let
$$A_{0,1}(\eta)=
\begin{pmatrix}-3\eta^2 & 8i\eta \\ i\eta(2+\frac18\eta^2) & -\eta^2
\end{pmatrix}\,,\quad
A_{0,2}(\eta)=\frac{1}{\eta^3}\left(A_0(\eta)-A_{0,1}(\eta)\right)
\,.$$
Note that $A_{0,1}$ is equal to $A_*$ in Lemma~\ref{lem:mu=1/8}
and that $A_{0,2}
=\begin{pmatrix} O(\eta) & O(1) \\ O(1) & O(\eta)\end{pmatrix}$.
\par
Let $U_1(t,s)$ be the $2\times2$ matrix such that 
$$\pd_tU_1(t)=A_1(t,0)U_1(t)\,,\quad \lim_{t\to\infty}U_1(t)=I\,.$$
Since $|A_1(t,0)|\lesssim e^{-a(4t+L)}$ for $t\ge0$, we have
$\sup_{\tau \ge t}|U_1(\tau)-I|\lesssim e^{-a(4t+L)}$.
Now let 
$$\begin{pmatrix}b(t) \\ x_y(t)\end{pmatrix}
=U_1(t)\begin{pmatrix}b_1(t)\\ b_2(t)  \end{pmatrix}\,.$$
Then
\begin{equation}
  \label{eq:b1b2}
\pd_t\begin{pmatrix} b_1(t) \\ b_2(t)\end{pmatrix}=
(A_0(D_y)+D_yA_2(t,D_y))\begin{pmatrix} b_1(t) \\ b_2(t)\end{pmatrix}
+\sum_{i=1}^8 U_1(t)^{-1}\diag(1,\pd_y)\cN_i\,,
\end{equation}
where $A_2(t,\eta)=A_{21}(t,\eta)+A_{22}(t,\eta)$ and
\begin{gather*}
A_{21}(t,\eta)=\frac{U_1(t)^{-1}A_0(\eta)U_1(t)-A_0(\eta)}{\eta}\,,\\  
A_{22}(t,\eta)=U_1(t)^{-1}\frac{A_1(t,\eta)-A_1(t,0)}{\eta}U_1(t)\,.
\end{gather*}
Clearly, we have $\|A_2(t,D_y)\|_{B(Y)}\lesssim e^{-a(4t+L)}$.
By the variation of constants formula,
\begin{equation*}
\begin{pmatrix} b_1(t) \\ b_2(t) \end{pmatrix}
= e^{tA_{0,1}}\begin{pmatrix} b_1(0) \\ b_2(0) \end{pmatrix}
+IV_1+IV_2+IV_3\,,
\end{equation*}
where
\begin{align*}
IV_1=& \int_0^t e^{(t-s)A_{0,1}}D_y^3A_{0,2}
\begin{pmatrix}  b_1(s) \\ b_2(s) \end{pmatrix}\,ds\,,\\
IV_2=& \int_0^t e^{(t-s)A_{0,1}}D_yA_2(s,D_y)
\begin{pmatrix}  b_1(s) \\ b_2(s) \end{pmatrix}\,ds\,,\\
IV_3=& \sum_{i=1}^8 \int_0^t e^{(t-s)A_{0,1}}U_1(s)^{-1}\diag(1,\pd_y)\cN_i(s)\,ds\,.
\end{align*}
Let $h\in C_0^\infty(-\eta_0,\eta_0)$ such that $h(0)=1$ and let
$$u(0,x,y)=\varphi_{2+c_*(y)}(x)-\psi_{2+c_*(y),L}(x)\,,\quad
c_*(y)=2+\eps (\mathcal{F}_\eta^{-1}h)(y)\,.$$
Then it follows from Lemma~\ref{lem:decomp} that
$\tc(0,y)=c_*(y)$, $x(0,y)\equiv0$ and $v(0,\cdot)=0$.
Since $\|b(0)-\tc(0)\|_{Y_1}\lesssim \|\tc(0)\|_Y^2$
by Claim~\ref{cl:b-capprox} and $\|U_1(t)-I\|_{B(Y)}\lesssim e^{-a(4t+L)}$,
\begin{equation}
  \label{eq:bcdiff}
\left\|\begin{pmatrix}b_1(0)\\ b_2(0)\end{pmatrix}
-\begin{pmatrix}c_*\\ 0\end{pmatrix}\right\|_{Y_1}=  
\left\|U_1(0)^{-1}\begin{pmatrix}b(0)\\ 0\end{pmatrix}
-\begin{pmatrix}\tc(0)\\ 0\end{pmatrix}\right\|_{Y_1}
\lesssim \eps(\eps+e^{-aL})\,.
\end{equation}
Since $\|e^{tA_{0,1}}\|_{B(Y_1,Y)}\lesssim (1+t)^{-1/4}$ by
Lemma~\ref{lem:mu=1/8}, it follows from \eqref{eq:bcdiff} that
$$\left\|e^{tA_{0,1}}\begin{pmatrix}b_1(0) \\ b_2(0)\end{pmatrix}
-e^{tA_{0,1}}\begin{pmatrix}c_* \\ 0\end{pmatrix}\right\|_{Y}
\lesssim \eps(\eps+e^{-aL})(1+t)^{-1/4}\,.$$
By Corollary~\ref{cor:mu=1/8},
\begin{align*}
\left\|e^{tA_{0,1}}\begin{pmatrix}c_* \\ 0\end{pmatrix}-
\frac14  \begin{pmatrix}
2\left(e^{4t\pd_y}(H_{2t}*c_*)+e^{-4t\pd_y}(H_{2t}*c_*)\right) \\
e^{4t\pd_y}(H_{2t}*c_*)-e^{-4t\pd_y}(H_{2t}*c_*)
  \end{pmatrix}\right\|_Y\lesssim \eps\la t\ra^{-3/4}\,.
\end{align*}
Since $c_*=\eps\mF^{-1}h$ and $h(0)=1$, it follows from Plancherel's theorem
that
\begin{multline*}
\left\|\wP_1\{e^{\pm 4t\pd_y}(H_{2t}*c_*)-\eps e^{\pm 4t\pd_y}H_{2t}\}\right\|_Y
= \eps\left\|e^{-2t\eta^2}(h(\eta)-h(0))\right\|_{L^2(-\eta_0,\eta_0)}
\\ \lesssim  \eps\sup_\eta|h'(\eta)| \|\eta e^{-2t\eta^2}\|_{L^2(-\eta_0,\eta_0)}
 \lesssim  \eps \la t\ra^{-3/4}\,.
\end{multline*}
Thus we have
\begin{equation}
  \label{eq:IV-0}
  \begin{split}
& \left\|e^{tA_{0,1}}
\begin{pmatrix}b_1(0)\\b_2(0)\end{pmatrix}
-\frac{\eps}{4}\wP_1
\begin{pmatrix}2\left(e^{4t\pd_y}H_{2t}+e^{-4t\pd_y}H_{2t}\right)
\\ e^{4t\pd_y}H_{2t}-e^{-4t\pd_y}H_{2t} \end{pmatrix}
\right\|_Y
\\ \lesssim & 
\eps(\eps+e^{-aL})\la t\ra^{-1/4}+\eps\la t\ra^{-3/4}\,.
  \end{split}
\end{equation}
\par
In view of the proof of Theorem~\ref{thm:exp-stability}, we have
for $k=0$ and $1$,
$$\sup_{t\ge0} \la t\ra^{(2k+1)/4}(\|b_1(t)\|_Y+\|b_2(t)\|_Y)
\lesssim \bM_1(\infty) \lesssim \eps\,.$$
By Lemma~\ref{lem:mu=1/8} and Claim~\ref{cl:abc},
\begin{equation}
  \label{eq:IV-1}
\begin{split}
\|IV_1\|_Y\lesssim & \int_0^t \|\pd_y^2e^{(t-s)A_{0,1}}\|_{B(Y)}
\left(\|\pd_yb_1(s)\|_Y+\|\pd_yb_2(s)\|_Y\right)\,ds
\\ \lesssim & \bM_1(t)\int_0^t\la t-s\ra^{-1}\la s\ra^{-3/4}\,ds
\lesssim  \eps\la t\ra^{-3/4}\log\la t\ra\,,
\end{split}  
\end{equation}
and
\begin{equation}
  \label{eq:IV-2}
\begin{split}
\|IV_2\|_Y\lesssim & \int_0^t \|\pd_ye^{(t-s)A_{0,1}}\|_{B(Y)}
\|A_2(s,D_y)\|_{B(Y)}\left(\|b_1(s)\|_Y+\|b_2(s)\|_Y\right)\,ds
\\ \lesssim & \bM_1(t)\int_0^t\la t-s\ra^{-1/2}e^{-a(4s+L)}\la s\ra^{-1/4}\,ds
\lesssim \eps\la t\ra^{-1/2}\,.
\end{split}  
\end{equation}
Using Lemma~\ref{lem:mu=1/8}, we can prove 
\begin{equation}
  \label{eq:IV-3}
\|IV_3\|_Y \lesssim   \la t\ra^{-1/4}(\bM_1(\infty)^2+\bM_2(\infty)^2)
\lesssim  \eps^2\la t\ra^{-1/4}
\end{equation}
in exactly the same way as the proof of Lemma~\ref{lem:M1-bound}.
Combining \eqref{eq:IV-0}--\eqref{eq:IV-3},  we have
$$\left\| \begin{pmatrix}b_1(t) \\ b_2(t)\end{pmatrix}
-\frac{\eps}{4}\wP_1\begin{pmatrix}
2\left(e^{4t\pd_y}H_{2t}+e^{-4t\pd_y}H_{2t}\right)\\
e^{4t\pd_y}H_{2t}-e^{-4t\pd_y}H_{2t}\end{pmatrix}\right\|_Y
\lesssim \eps(\eps+e^{-aL})\la t\ra^{-1/4}+\eps\la t\ra^{-1/2}\,.$$
Since $|U_1(t)-I|\lesssim e^{-a(4t+L)}$ and $\|(I-\wP_1)H_{2t}\|_{L^2}\lesssim e^{-2t\eta_0^2}$,
\begin{equation}
  \label{eq:x_y}
\left\|x_y(t)-\frac{\eps}{4}\left(e^{4t\pd_y}H_{2t}-e^{-4t\pd_y}H_{2t}\right)
\right\|_{L^2}
\lesssim \eps(\eps+e^{-aL})\la t\ra^{-1/4}+\eps\la t\ra^{-1/2}\,.
\end{equation}
\par
Now let $d_1$ and $d_2$ be constants satisfying $d_2>d_1>1$ and
let $y_1\in [-4t+4(d_1-1)\sqrt{t},-4t+4d_1\sqrt{t}]$, $y_2\in [-4t+4d_2\sqrt{t},-4t+4(d_2+1)\sqrt{t}]$ for $t\ge0$.
By \eqref{eq:x_y}, there exist positive constants $C_1$, $C_2$ and $t_1$ such that
\begin{align*}
x(t,y_2)-x(t,y_1)=& \int_{y_1}^{y_2}x_y(t,\tilde{y})\,d\tilde{y}
\\ \ge & \frac\eps4\int_{y_1}^{y_2}\left(H_{2t}(\tilde{y}+4t)-H_{2t}(\tilde{y}-4t)\right)\,d\tilde{y}
\\ & -(y_2-y_1)^{1/2}\left\|x_y(t)-\frac\eps4
\left(e^{4t\pd_y}H_{2t}-e^{-4t\pd_y}H_{2t}\right)\right\|_{L^2}
\\ \gtrsim & \eps\int_{d_1\sqrt{t}}^{d_2\sqrt{t}}H_{2t}(y)\,dy-C_1\eps(\eps+e^{-aL}+\la t\ra^{-1/4})
\\=& \eps\left(\erf(\sqrt{2}d_2)-\erf(\sqrt{2}d_1)\right)-C_1\eps(\eps+e^{-aL}+\la t\ra^{-1/4})\,,
\\ \ge & C_2\eps\quad\text{for $t\ge t_1$,}
\end{align*}
if $\eps$ and $e^{-aL}$ are sufficiently small.  Recall that
$\erf(x)=\frac{2}{\sqrt{\pi}}\int_0^x e^{-x_1^2}\,dx_1$.  Since $L$ is
an auxiliary parameter introduced in Section~\ref{sec:decomp} which can
be chosen arbitrary large, we see that $|x(t,y)|\ge C_2\eps/2$ either
on $[-4t+4(d_1-1)\sqrt{t},-4t+4d_1\sqrt{t}]$ or on
$[-4t+4d_2\sqrt{t},-4t+4(d_2+1)\sqrt{t}]$.  Therefore
$\|x(t)\|_Y\gtrsim \eps\la t\ra^{1/4}$. Thus we complete the proof.
\end{proof}
\bigskip

\section{Proof of Theorem~\ref{thm:Burgers}}
\label{sec:thm3}
In order to prove Theorem~\ref{thm:Burgers}, we will show that
the first order asymptotics of solutions to \eqref{eq:b1b2} 
around $y=\pm4t+O(\sqrt{t})$ is given by a sum of self-similar
solutions to the Burgers equations.
We apply the scaling argument by Karch (\cite{Karch}) to obtain the asymptotics
of \eqref{eq:b1b2} and use a virial type estimate to show that
interaction between $b_1(t,y)$ and $b_2(t,y)$ tends to $0$ around
$y=\pm4t+O(\sqrt{t})$ as $t\to\infty$.
Since $\sup_{t>0}t^{1/4}(\|b_1(t)\|_{L^2(\R)}+\|b_2(t)\|_{L^2(\R)})\ll1$,
we have the uniqueness of the limiting profile.
\par
Roughly speaking, a solution of \eqref{eq:b1b2} can be decomposed into
two parts that move to the opposite direction.
Now we recenter each component of solutions to \eqref{eq:b1b2}
and diagonalize the equations.  Let $A_*(\eta)$, $\Pi_*(\eta)$
and $\omega(\eta)$ be as \eqref{eq:A*} with $\mu=\mu_3$.
By the change of variables
\begin{gather*}
\bb(t,y)=\begin{pmatrix}b_1(t,y) \\ b_2(t,y) \end{pmatrix}\,,
\quad \Pi_1(t,\eta)=\frac{1}{4i}\Pi_*(\eta)\diag(e^{4it\eta},e^{-4it\eta})\,,\\
\bd(t,y)=\mathcal{F}_\eta^{-1}\Pi_1(t,\eta)^{-1}
(\mathcal{F}_y\bb)(t,\eta)\,,  
\end{gather*}
we have
\begin{equation}
  \label{eq:bd}
\pd_t\bd=\{2\pd_y^2I+\pd_y(A_3(t,D_y)+A_4(t,D_y))\}\bd
+A_5(t,D_y)\sum_{i=1}^8\diag(1,\pd_y)\cN_i\,,
\end{equation}
where
\begin{gather*}
A_3(t,\eta)=(\omega(\eta)-4)\begin{pmatrix}1 & 0 \\ 0 & -1\end{pmatrix}
-i\eta^{-1}\Pi_1(t,\eta)^{-1}(A_0(\eta)-A_*(\eta))\,,\\
A_4(t,\eta)=-i\Pi_1(t,\eta)^{-1}A_2(t,\eta)\Pi_1(t,\eta)\,,
\quad A_5(t,\eta)=\Pi_1(t,\eta)^{-1}U_1(t)^{-1}\,.
\end{gather*}
To detect the dominant part of the equation,
let us consider the rescaled solution
$\bd_\lambda(t,y)=\lambda\mathbf{d}(\lambda^2t,\lambda y)$.
Our aim is to find a self-similar profile $\bd_\infty(t,y)$ such that
\begin{equation}
\label{eq:ss-bd}
\lambda\bd_\infty(\lambda^2t,\lambda y)=\bd_\infty(t,y)\,,  
\end{equation}
and that for any $t_1$ and $t_2$ satisfying $0<t_1\le t_2<\infty$ and 
any $R>0$,
\begin{equation}
\label{eq:bd-c}
\lim_{\lambda\to\infty}\sup_{t\in[t_1,t_2]}
\|\bd_\lambda(t,y)-\bd_\infty(t,y)\|_{L^2(|y|<R)}=0\,.
\end{equation}
If \eqref{eq:ss-bd} and \eqref{eq:bd-c} hold, then letting
$\lambda=t^{1/2}\to\infty$, we have 
\begin{equation}
  \label{eq:conv}
  \begin{split}
t^{1/4} \|\bd(t,\cdot)-\bd_\infty(t,\cdot)\|_{L^2(|y|<R\sqrt{t})}
=& \lambda^{1/2}
\|\bd(\lambda^2,\cdot)-\bd_\infty(\lambda^2,\cdot)\|_{L^2(|y|<\lambda R)}
\\=& \|\bd_\lambda(1,\cdot)-\bd_\infty(1,\cdot)\|_{L^2(|y|<R)}\to0\,.
  \end{split}
\end{equation}
To prove \eqref{eq:bd-c}, we need the upper bounds of
$\bd_\lambda(t)$ for $k=0$, $1$ that do not depend on $\lambda\ge1$.
\begin{lemma}
\label{lem:bd-bound}
Let $\eps$ be as in Theorem~\ref{thm:exp-stability}. 
Then there exists a positive constants $C$ such that
for any $\lambda\ge1$ and $t\in(0,\infty)$,
\begin{gather}
  \label{eq:bd-bound1}
\sum_{k=0,1}\|\pd_y^k\bd_\lambda(t,\cdot)\|_{L^2}\le C\eps t^{-(2k+1)/4} \,,
\quad \|\pd_y^2\bd_\lambda(t,\cdot)\|_{L^2}\le C\eps\lambda^{1/2}t^{-1}\,,
\\  \label{eq:bd-bound2}
\|\pd_t\bd_\lambda(t,\cdot)\|_{H^{-2}} \le C(t^{-1/4}+t^{-3/2})\eps\,.
\end{gather}
\end{lemma}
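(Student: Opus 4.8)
\textbf{Proof proposal for Lemma~\ref{lem:bd-bound}.}

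The plan is to deduce the rescaled bounds from the unrescaled bounds that are already available from the proof of Theorem~\ref{thm:exp-stability}. First I would record the scaling identity: for a function $f$ and $f_\lambda(t,y)=\lambda f(\lambda^2t,\lambda y)$ one has $\|\pd_y^kf_\lambda(t,\cdot)\|_{L^2(\R)}=\lambda^{k+1/2}\|\pd_y^kf(\lambda^2t,\cdot)\|_{L^2(\R)}$ and, similarly, $\|\pd_tf_\lambda(t,\cdot)\|_{H^{-2}}$ scales with a factor $\lambda^{2}\cdot\lambda^{1/2}\cdot\lambda^{\min(0,\cdots)}$ coming from the two time derivatives traded against the $H^{-2}$ weight; I would make this precise by working on the Fourier side, where $\widehat{f_\lambda}(t,\eta)=\widehat f(\lambda^2t,\eta/\lambda)$. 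Then the claimed estimates for $\bd_\lambda$ are just the estimates for $\bd$ itself evaluated at time $\lambda^2t$, multiplied by the appropriate power of $\lambda$.

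So the core step is to establish, for the unrescaled solution, the bounds
\begin{gather*}
\sum_{k=0,1}\|\pd_y^k\bd(t,\cdot)\|_{L^2}\lesssim \eps\la t\ra^{-(2k+1)/4}\,,\quad
\|\pd_y^2\bd(t,\cdot)\|_{L^2}\lesssim \eps\la t\ra^{-1}\,,\\
\|\pd_t\bd(t,\cdot)\|_{H^{-2}}\lesssim \eps\la t\ra^{-1/4}\,,
\end{gather*}
and to check that these become exactly \eqref{eq:bd-bound1}--\eqref{eq:bd-bound2} after rescaling (the factors $\lambda^{1/2}$ in \eqref{eq:bd-bound1} and $\lambda^0$ in \eqref{eq:bd-bound2} come out of matching powers, using $\lambda\ge1$ and $t\le\lambda^2 t$ to replace $\la\lambda^2t\ra$ by $\lambda^2t$). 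The first line follows because $\bd$ is obtained from $\bb={}^t(b_1,b_2)$ by composing with the $Y$-bounded, $\eta$-localized, uniformly bounded multipliers $\Pi_1(t,\eta)^{-1}$ and $\Pi_*(\eta)$ (these are bounded together with all their $\eta$-derivatives on $[-\eta_0,\eta_0]$, since $\omega(\eta)$ is smooth and $\Pi_*(\eta)$ is invertible there), so that $\|\pd_y^k\bd(t)\|_{L^2}\simeq \|\pd_y^k\bb(t)\|_{L^2}$; and the bounds on $\|\pd_y^kb_j(t)\|_Y=\|\pd_y^kb_j(t)\|_{L^2}$ for $k=0,1,2$ with the rates $t^{-1/4}$, $t^{-3/4}$, $t^{-1}$ are precisely what is proved via $\bM_1(\infty)\lesssim\eps$ in Sections~\ref{sec:apriori} and \ref{sec:thm1} (together with Claim~\ref{cl:b-capprox} relating $b$ to $\tc$). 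For the $H^{-2}$ bound on $\pd_t\bd$ I would read off $\pd_t\bd$ from equation \eqref{eq:bd}: it equals $2\pd_y^2\bd+\pd_y(A_3+A_4)\bd+A_5\sum_i\diag(1,\pd_y)\cN_i$; the term $2\pd_y^2\bd$ is bounded in $H^{-2}$ by $\|\bd\|_{L^2}\lesssim\eps\la t\ra^{-1/4}$, the term $\pd_y(A_3+A_4)\bd$ by $\|\bd\|_{L^2}$ times the uniform operator norms of the (smooth, $\eta$-localized, in the $A_4$ case exponentially small) multipliers, and the nonlinear term by putting two derivatives on the $H^{-2}$ side and using the $Y_1\hookrightarrow Y$ and decay bounds on the $\cN_i$ collected in the proof of Lemma~\ref{lem:M1-bound}, all of which are $O(\eps^2)\la t\ra^{-3/4}$ or better, hence $\lesssim\eps\la t\ra^{-1/4}$.

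I expect the only mild obstacle to be bookkeeping: verifying that all the multipliers $A_3(t,D_y)$, $A_4(t,D_y)$, $A_5(t,D_y)$, $\Pi_1(t,\eta)^{\pm1}$ are uniformly bounded on $Y$ (and on $H^{-2}$ restricted to $\eta\in[-\eta_0,\eta_0]$) uniformly in $t$ — this uses that $U_1(t)\to I$ with $|U_1(t)-I|\lesssim e^{-a(4t+L)}$, that $\omega(\eta)-4=O(\eta^2)$ so that $\eta^{-1}(A_0(\eta)-A_*(\eta))$ and $\eta^{-1}(A_0(\eta)-A_{0,1}(\eta))$ stay bounded, and that everything is supported in $\eta\in[-\eta_0,\eta_0]$ so no high-frequency growth enters. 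Once these uniform bounds are in hand, the lemma is a direct consequence of the global-in-time estimates already proven, plus the elementary rescaling identities; no new analytic input is needed.
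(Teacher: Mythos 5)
Your approach to \eqref{eq:bd-bound1} is exactly the paper's: it follows from $\bM_1(\infty)\lesssim\eps$ together with the identity $\|\pd_y^k\bd_\lambda(t,\cdot)\|_{L^2}=\lambda^{(2k+1)/2}\|\pd_y^k\bd(\lambda^2t,\cdot)\|_Y$ and $\la\lambda^2t\ra^{-\alpha}\lesssim\lambda^{-2\alpha}t^{-\alpha}$ for $\lambda\ge1$.

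For \eqref{eq:bd-bound2}, however, the logic ``prove $\|\pd_t\bd(t)\|_{H^{-2}}\lesssim\eps\la t\ra^{-1/4}$ for the unrescaled solution and then rescale'' has a genuine gap. Unlike $L^2$, the $H^{-2}$ norm is not scale-homogeneous, so there is no single power of $\lambda$ by which the unrescaled estimate transfers. Because $\pd_t\bd_\lambda(t,y)=\lambda^3(\pd_t\bd)(\lambda^2t,\lambda y)$ and $\widehat{\bd_\lambda}$ is supported in $|\eta|\le\lambda\eta_0$, a naive transfer of the scalar bound $\eps\la s\ra^{-1/4}$ at $s=\lambda^2t$ produces something like $\lambda^2\eps t^{-1/4}$, which blows up in $\lambda$. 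The uniformity in the paper comes from the fact that the individual terms in $\pd_t\bd$ carry different numbers of $y$-derivatives, hence different compensating powers of $\lambda^{-1}$ after rescaling: $2\pd_y^2\bd$ gives $\|\bd_\lambda\|_{L^2}\lesssim\eps t^{-1/4}$ with no loss, the $A_3$ term gains $\lambda^{-1}$ via $|A_3(\lambda^2t,\lambda^{-1}\eta)|\lesssim\lambda^{-2}\eta^2$ (the paper's \eqref{eq:A3-bound} and \eqref{eq:lim-la1}), and the nonlinear contributions gain $\lambda^{-1/4}$ or $\lambda^{-1/2}$ (\eqref{eq:lim-la3}--\eqref{eq:lim-la5}). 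A single unrescaled $H^{-2}$ bound averages these out and loses the gains. It also hides the $t^{-3/2}$ singularity in \eqref{eq:bd-bound2}, which comes only from the $\cN_\lambda'$ term and has no analogue in $\eps\la t\ra^{-1/4}$. The correct route (the paper's) is to write down the rescaled equation \eqref{eq:bd-l} for $\bd_\lambda$ and estimate each term of $\pd_t\bd_\lambda$ in $H^{-2}$ directly, using the $\lambda$-dependent multiplier bounds. Your term-by-term sketch in the unrescaled variables is essentially the right ingredients, but you need to carry it out in the rescaled variables; ``no new analytic input'' is right, but ``multiply by the appropriate power of $\lambda$'' is not, because no such single power exists.
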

\begin{proof}
Since $\bM_1(\infty)\lesssim \eps$ by \eqref{eq:tot-3}, we have
$$ \sum_{k=0,1}\sup_{t\ge0}\la t\ra^{(2k+1)/4}\|\pd_y^k \bd(t)\|_Y+
\la t\ra\|\pd_y^2\bd(t)\|_Y \lesssim \eps\,.$$
Thus we have
\begin{align*}
\|\pd_y^k\bd_\lambda(t,\cdot)\|_{L^2}=& 
\lambda^{(2k+1)/2}\|\pd_y^k\bd(\lambda^2t,\cdot)\|_Y
\\ \lesssim & \lambda^{(2k+1)/2}(1+\lambda^2t)^{-(2k+1)/4}\eps
\lesssim t^{-(2k+1)/4}\eps \quad\text{for $k=0$, $1$,}
\end{align*}
$$\|\pd_y^2\bd_\lambda(t,\cdot)\|_{L^2} \lesssim  \lambda^{5/2}
\|\pd_y^2\bd(\lambda^2t,\cdot)\|_Y
 \lesssim  \lambda^{5/2}(1+\lambda^2t)^{-1}\eps
\lesssim \lambda^{1/2}t^{-1}\eps\,.$$
Thus we prove \eqref{eq:bd-bound1}.
\par
Next we will show \eqref{eq:bd-bound2}.
Let $\cN'(t,y)+\pd_y\cN''(t,y)=\diag(1,\pd_y)\sum_{i=2}^8\cN_i(t,y)$,
\begin{gather*}
\cN_\lambda'(t,y)=\lambda^3\cN'(\lambda^2t,\lambda y)\,,\quad
\cN_\lambda''(t,y)=\lambda^2\cN''(\lambda^2t,\lambda y)\,,\\
\widetilde{\cN}(t,y)=
\wP_1\begin{pmatrix}n_1(t,y) \\ n_2(t,y) \end{pmatrix}\,,\quad
\widetilde{\cN}_\lambda(t,y)=\lambda^2\widetilde{\cN}(\lambda^2t,\lambda y)\,.
\end{gather*}
Then \eqref{eq:bd} can be rewritten as
\begin{equation}
  \label{eq:bd-l}
  \begin{split}
\pd_t\bd_\lambda=\{2\pd_y^2I&+\lambda\pd_y(A_3(\lambda^2t,\lambda^{-1}D_y)
+A_4(\lambda^2t,\lambda^{-1}D_y))\}\bd_\lambda 
\\ & +A_5(\lambda^2t,\lambda^{-1}D_y)
\{\pd_y(\widetilde{\cN}_\lambda+\cN_\lambda'')+\cN_\lambda'\}\,.
  \end{split}
\end{equation}
By \eqref{eq:bd-l},
\begin{equation}
  \label{eq:bd-ll}
\begin{split}
\|\pd_t\bd_\lambda\|_{H^{-2}}\le & 2\|\bd_\lambda\|_{L^2} 
+\lambda\|A_3(\lambda^2t,\lambda^{-1}D_y)\bd_\lambda\|_{H^{-1}}
+\lambda\|A_4(\lambda^2t,\lambda^{-1}D_y)\bd_\lambda\|_{L^2} \\
& +\|A_5(\lambda^2t,\lambda^{-1}D_y)(\widetilde{\cN}_\lambda+\cN_\lambda'')
\|_{L^2}+\|A_5(\lambda^2t,\lambda^{-1}D_y)\cN_\lambda'\|_{L^2}
\end{split}  
\end{equation}
Now we will estimate each term of the right hand side.
By \eqref{eq:A0A*} and the fact that $\omega(\eta)=4+O(\eta^2)$,
we have
\begin{equation}
  \label{eq:A3-bound}
|A_3(\lambda^2t,\lambda^{-1}\eta)|\lesssim \lambda^{-2}\eta^2\,.  
\end{equation}
Thus by \eqref{eq:bd-bound1} and Plancherel's theorem, 
\begin{equation}
  \label{eq:lim-la1}
\begin{split}
\lambda\|A_3(\lambda^2t,\lambda^{-1}D_y)\bd_\lambda\|_{H^{-1}}\lesssim &
\lambda^{-1}\|\la \eta\ra^{-1}\eta^2(\mF_y\bd_\lambda)(t,\eta)\|_{L^2}
\\ \lesssim & \lambda^{-1}\|\pd_y\bd_\lambda(t,\cdot)\|_{L^2}
\lesssim \lambda^{-1}t^{-3/4}\eps\,.
\end{split}  
\end{equation}
Since $\|A_4(t,D_y)\|_{B(Y)}\lesssim \|A_2(t,D_y)\|_{B(Y)}\lesssim
e^{-a(4t+L)}$, it follows from \eqref{eq:bd-bound1} and the scaling argument
that
\begin{equation}
  \label{eq:lim-la2}
\begin{split}
\lambda\|A_4(\lambda^2t,\lambda^{-1}D_y)\bd_\lambda(t,\cdot)\|_{L^2}
= & \lambda^{3/2}\|A_4(\lambda^2t,D_y)\bd(\lambda^2t,\cdot)\|_Y
\\ \lesssim & \lambda^{3/2}e^{-a(4\lambda^2t+L)}\|\bd(\lambda^2t,\cdot)\|_Y
\\ \lesssim & \lambda t^{-1/4}e^{-a(4\lambda^2t+L)}\eps
 \lesssim \lambda^{-1/4}t^{-7/8}\eps\,.
\end{split}  
\end{equation}
\par
Following the proof of Lemma~\ref{lem:M1-bound},
we have for $t\ge0$,
\begin{equation}
  \label{eq:cN'-est}
\|\widetilde{\cN}\|_Y\lesssim \la t\ra^{-3/4}\eps^2\,\enskip
\|\cN'\|_Y\lesssim \la t\ra^{-3/2}\eps^2\,,\enskip
\|\cN''\|_Y\lesssim \la t\ra^{-5/4}\eps^2\,.
\end{equation}
Nonlinear terms decay $t^{-1/4}$ times faster in \eqref{eq:cN'-est}
than those in \eqref{eq:cN1-est} and \eqref{eq:N'N''} because
$Y$ and $Y_1$ have the same scaling as $L^2(\R)$ and $L^1(\R)$,
respectively.  By \eqref{eq:cN'-est},
\begin{align*}
& \|\widetilde{\cN}_\lambda\|_{L^2}
=\lambda^{3/2}\|\widetilde{\cN}(\lambda^2t,\cdot)\|_Y
\lesssim \lambda^{3/2}(1+\lambda^2t)^{-3/4}\eps^2
\lesssim t^{-3/4}\eps^2\,,
\\  &
\|\cN_\lambda'\|_{L^2}= \lambda^{5/2}\|\cN'(\lambda^2t,\cdot)\|_Y
\lesssim \lambda^{5/2}(1+\lambda^2t)^{-3/2}\eps^2
\lesssim \lambda^{-1/2}t^{-3/2}\eps^2\,,
\\  &
\|\cN_\lambda''\|_{L^2}=\lambda^{3/2}\|\cN''(\lambda^2t,\cdot)\|_{L^2}
\lesssim \lambda^{3/2}(1+\lambda^2t)^{-5/4}\eps^2
\lesssim \lambda^{-1/4}t^{-7/8}\eps^2\,.
\end{align*}
Since $\sup_{\lambda\ge1}\|A_5(\lambda^2t,\lambda^{-1}D_y)\|_{B(L^2)}\lesssim 1$,  
we have
\begin{gather}
\label{eq:lim-la3}
\|A_5(\lambda^2t,\lambda^{-1}D_y)\widetilde{\cN}_\lambda\|_{L^2}
\lesssim t^{-3/4}\eps^2\,,
\\ \label{eq:lim-la4}
\|A_5(\lambda^2t,\lambda^{-1}D_y)\cN_\lambda''\|_{L^2}
\lesssim \lambda^{-1/4}t^{-7/8}\eps^2\,,
\\ \label{eq:lim-la5}
\|A_5(\lambda^2t,\lambda^{-1}D_y)\cN_\lambda'\|_{L^2}
\lesssim  \lambda^{-1/2}t^{-3/2}\eps^2\,.
\end{gather}
Combining \eqref{eq:bd-bound1} and \eqref{eq:bd-ll}--\eqref{eq:lim-la2},
\eqref{eq:lim-la3}--\eqref{eq:lim-la5}, we obtain \eqref{eq:bd-bound2}.
\end{proof}
By Lemma~\ref{lem:bd-bound} and the Arzel\`a-Ascoli theorem,
we have the following.
\begin{corollary}
\label{cor:bd-bound}
There exist a sequence $\{\lambda_n\}_{n\ge1}$ satisfying
$\lim_{n\to\infty}\lambda_n=\infty$ and $\bd_\infty(t,y)$ such that
\begin{align*}
& \bd_{\lambda_n}(t,\cdot)\to \bd_\infty(t,\cdot)
\quad\text{weakly star in $L^\infty_{loc}((0,\infty);H^1(\R))$,}\\
& \pd_t\bd_{\lambda_n}(t,\cdot)\to \pd_t\bd_\infty(t,\cdot)
\quad\text{weakly star in $L^\infty_{loc}((0,\infty);H^{-2}(\R))$,}\\
& \sup_{t>0}t^{1/4}\|\bd_\infty(t)\|_{L^2}\le C\eps\,,
\end{align*}
where $C$ is a constant given in Lemma~\ref{lem:bd-bound}.
Moreover, for any $R>0$ and $t_1$, $t_2$ with $0<t_1\le t_2<\infty$,
$$\lim_{n\to\infty}\sup_{t\in[t_1,t_2]}
\|\bd_{\lambda_n}(t,\cdot)-\bd_\infty(t,\cdot)\|_{L^2(|y|\le R)}=0\,.$$
\end{corollary}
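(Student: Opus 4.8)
\textbf{Proof proposal for Corollary~\ref{cor:bd-bound}.}

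The plan is to extract the limiting profile $\bd_\infty$ as a weak-star limit of the rescaled family $\{\bd_\lambda\}_{\lambda\ge1}$ along a subsequence, using the uniform bounds of Lemma~\ref{lem:bd-bound}, and then upgrade the weak-star convergence to strong $L^2_{loc}$ convergence by a compactness argument. First I would fix a sequence $\lambda_n\to\infty$; the estimate \eqref{eq:bd-bound1} gives $\sup_{\lambda\ge1}\|\bd_\lambda(t,\cdot)\|_{H^1}\le C\eps(t^{-1/4}+t^{-3/4})$, which is bounded uniformly in $n$ and $t$ on any compact subinterval $[t_1,t_2]\subset(0,\infty)$, so $\{\bd_{\lambda_n}\}$ is bounded in $L^\infty_{loc}((0,\infty);H^1(\R))$. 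Similarly \eqref{eq:bd-bound2} gives a uniform bound in $L^\infty_{loc}((0,\infty);H^{-2}(\R))$ for the time derivatives. By the Banach--Alaoglu theorem and a diagonal extraction over an exhaustion of $(0,\infty)$ by compact intervals, I would pass to a subsequence (not relabeled) so that $\bd_{\lambda_n}\rightharpoonup^* \bd_\infty$ in $L^\infty_{loc}((0,\infty);H^1)$ and $\pd_t\bd_{\lambda_n}\rightharpoonup^* \pd_t\bd_\infty$ in $L^\infty_{loc}((0,\infty);H^{-2})$. The bound $\sup_{t>0}t^{1/4}\|\bd_\infty(t)\|_{L^2}\le C\eps$ then follows from $\|\bd_\lambda(t)\|_{L^2}\le C\eps t^{-1/4}$ (the $k=0$ case of \eqref{eq:bd-bound1}) and lower semicontinuity of the norm under weak convergence, applied for each fixed $t$.

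The remaining point is the strong local convergence $\sup_{t\in[t_1,t_2]}\|\bd_{\lambda_n}(t,\cdot)-\bd_\infty(t,\cdot)\|_{L^2(|y|\le R)}\to0$. Here I would invoke an Aubin--Lions--Simon type argument: on the region $|y|\le R$ and $t\in[t_1,t_2]$ we have $\bd_{\lambda_n}$ bounded in $L^\infty([t_1,t_2];H^1(-R,R))$ and $\pd_t\bd_{\lambda_n}$ bounded in $L^\infty([t_1,t_2];H^{-2}(-R,R))$; since $H^1(-R,R)\hookrightarrow\hookrightarrow L^2(-R,R)\hookrightarrow H^{-2}(-R,R)$ with the first embedding compact, the Aubin--Lions--Simon lemma yields precompactness of $\{\bd_{\lambda_n}\}$ in $C([t_1,t_2];L^2(-R,R))$. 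Combined with the weak-star identification of the limit, this gives the stated uniform-in-$t$ strong convergence on $|y|\le R$. Then the Arzel\`a--Ascoli-flavored statement of the corollary is exactly this conclusion.

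The main obstacle is making the equicontinuity in time precise: one needs the time-derivative bound \eqref{eq:bd-bound2} to be genuinely uniform in $\lambda\ge1$ and locally uniform in $t>0$, which is why Lemma~\ref{lem:bd-bound} was formulated with the right-hand side $C(t^{-1/4}+t^{-3/2})\eps$ independent of $\lambda$; the $\lambda^{1/2}t^{-1}$ factor appears only in the $\pd_y^2$ estimate and is deliberately not used in the compactness step. A secondary technical wrinkle is that the compact-interval restriction $[t_1,t_2]\subset(0,\infty)$ is essential, since the bounds degenerate as $t\downarrow0$; I would simply state the convergence on each such interval and diagonalize over a countable exhaustion to get a single limiting $\bd_\infty$ defined on all of $(0,\infty)$. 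No genuinely hard estimate is needed beyond quoting Lemma~\ref{lem:bd-bound} and a standard compactness lemma.
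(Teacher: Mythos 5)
Your proposal is correct and is essentially the argument the paper has in mind: the paper's terse ``By Lemma~\ref{lem:bd-bound} and the Arzel\`a-Ascoli theorem'' is precisely the Aubin--Lions--Simon compactness machinery you spell out, since the standard way to make that Arzel\`a--Ascoli invocation rigorous in a function-space setting is exactly the compact embedding $H^1(-R,R)\hookrightarrow\hookrightarrow L^2(-R,R)\hookrightarrow H^{-2}(-R,R)$ combined with the uniform-in-$\lambda$ time-derivative bound \eqref{eq:bd-bound2}. Your diagonal extraction over compact $[t_1,t_2]$ and the observation that the $\lambda$-dependent $\pd_y^2$ bound is deliberately unused are both the right points to flag.
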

Next we will show that $\bd_\infty(t)$ is a self-similar solution
to a system of Burgers equations. 
To begin with, we will prove the following.
\begin{lemma}
  \label{lem:Burgers}
Let $\bd_\infty(t)={}^t(d_+(t,y),d_-(t,y))$. Then for $t>0$ and $y\in\R$,
\begin{equation}
  \label{eq:bar-bd}
\left\{\begin{aligned}
& \pd_td_+=2\pd_y^2d_++4\pd_y(d_+^2)\,,  \\
& \pd_td_-=2\pd_y^2d_--4\pd_y(d_-)^2\,,
\end{aligned}\right.
\end{equation}
and
\begin{equation}
  \label{eq:B-ini}
\lim_{t\downarrow0}\int_\R \bd_\infty(t,y)h(y)\,dy
=\sqrt{2\pi}(\mF_y\tilde{\bd})(0,0)h(0)
\quad\text{for any $h\in H^2(\R)$,}
\end{equation}
where 
$$\tilde{\bd}(t,y)=\bd(t,y)+\int_t^\infty A_5(s,D_y)\cN'(s,y)\,ds\,.$$
\end{lemma}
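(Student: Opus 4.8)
\textbf{Proof proposal for Lemma~\ref{lem:Burgers}.}

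The plan is to derive \eqref{eq:bar-bd} by passing to the limit in the rescaled equation \eqref{eq:bd-l} along the subsequence $\lambda_n$ supplied by Corollary~\ref{cor:bd-bound}, and to derive the initial condition \eqref{eq:B-ini} by a change of variables that absorbs the slowly-decaying nonlinear term $\cN'$ into a modified unknown whose initial mass is conserved. First I would test \eqref{eq:bd-l} against a fixed $\phi\in C_0^\infty((0,\infty)\times\R;\C^2)$ and examine each term as $\lambda=\lambda_n\to\infty$. The leading term $2\pd_y^2\bd_\lambda$ converges (in the distributional sense) to $2\pd_y^2\bd_\infty$ by the weak-star convergence in Corollary~\ref{cor:bd-bound}. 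The terms $\lambda\pd_y(A_3(\lambda^2t,\lambda^{-1}D_y)\bd_\lambda)$ and $\lambda\pd_y(A_4(\lambda^2t,\lambda^{-1}D_y)\bd_\lambda)$ vanish: for $A_3$ by the symbol bound \eqref{eq:A3-bound}, $|A_3(\lambda^2t,\lambda^{-1}\eta)|\lesssim\lambda^{-2}\eta^2$, giving a factor $\lambda^{-1}$ after one derivative and the $H^1$-bound \eqref{eq:bd-bound1}; for $A_4$ by $\|A_4(\lambda^2t,D_y)\|_{B(Y)}\lesssim e^{-a(4\lambda^2t+L)}$, which decays superpolynomially uniformly on $t\ge t_1>0$. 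The operator $A_5(\lambda^2t,\lambda^{-1}D_y)$ converges strongly (on the low-frequency band, which is all that survives since $\bd$ carries the projection $\wP_1$) to its limit $\Pi_1(\infty,0)^{-1}U_1(\infty)^{-1}=\frac{1}{4i}\Pi_*(0)^{-1}\cdot I$, i.e. to the constant matrix $\frac{1}{4i}\begin{pmatrix}8i & 8i\\ i\omega_0 & -i\omega_0\end{pmatrix}^{-1}$ with $\omega_0=\omega(0)=4$. The genuinely nonlinear contributions come from $\widetilde{\cN}_\lambda=\lambda^2\wP_1(n_1,n_2)^{\!\top}(\lambda^2t,\lambda y)$, since the remainders $\cN_\lambda'$, $\cN_\lambda''$ vanish by the decay rates recorded in \eqref{eq:cN'-est} (they decay $\la t\ra^{-1/4}$ faster than the critical rate and carry positive powers $\lambda^{-1/4}$ or $\lambda^{-1/2}$). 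Writing $n_1=6b\tx_y$, $n_2=2(\tc-b)+3(\tx_y)^2$ and using $\tc-b=O(\tc^2)$ (Claim~\ref{cl:b-capprox}), one checks that under the diagonalizing change of variables $\bd=\mF^{-1}\Pi_1^{-1}\mF\bb$ with $\Pi_1(t,\eta)\to\frac{1}{4i}\Pi_*(0)$ and $b_1\to b$, $b_2\to\tx_y$, the quadratic form $\wP_1(n_1,n_2)$ transforms into $\pd_y$ of $(4d_+^2,-4d_-^2)^{\!\top}$ in the limit; this is exactly the algebraic computation that makes \eqref{eq:approxeq} a coupled Burgers system, carried out at $\eta=0$. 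Collecting these, the distributional limit of \eqref{eq:bd-l} is $\pd_t\bd_\infty=2\pd_y^2\bd_\infty+\pd_y(4d_+^2,-4d_-^2)^{\!\top}$, i.e. \eqref{eq:bar-bd}.

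For \eqref{eq:B-ini} I would argue as follows. Set $\tilde{\bd}(t,y)=\bd(t,y)+\int_t^\infty A_5(s,D_y)\cN'(s,y)\,ds$; the integral converges in $Y$ because $\|A_5(s,D_y)\cN'(s)\|_Y\lesssim\|\cN'(s)\|_Y\lesssim\la s\ra^{-3/2}\eps^2$ by \eqref{eq:cN'-est}, so $\tilde{\bd}(t)-\bd(t)\to0$ in $Y$ as $t\to\infty$ and $\tilde\bd(0)$ is well-defined in $Y$. From \eqref{eq:bd} rewritten in the $\bd$-variable, $\pd_t\tilde\bd=\pd_t\bd-A_5\cN'=\{2\pd_y^2+\pd_y(A_3+A_4)\}\bd+A_5\pd_y(\widetilde{\cN}+\cN'')$, so $\pd_t\tilde\bd$ is a pure $y$-derivative plus the (harmlessly localized) $A_3,A_4$ pieces. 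Pairing with a fixed $h\in H^2(\R)$ (so that $\|h\|_{L^\infty}\lesssim\|h\|_{H^1}$ and the $H^{-2}$ bounds apply), I get $\frac{d}{dt}\la\tilde\bd(t),h\ra=\la\pd_t\tilde\bd(t),h\ra=-\la\widetilde{\cN}(t)+\cN''(t),A_5^*\pd_yh\ra+\la\bd(t),(A_3+A_4)^*\pd_yh\ra$, and each term on the right is integrable in $t$ near $t=0$ uniformly (using $\|\widetilde{\cN}(t)\|_Y\lesssim\la t\ra^{-3/4}$, $\|\cN''(t)\|_Y\lesssim\la t\ra^{-5/4}$ — wait, $\la t\ra^{-5/4}$ is not integrable near $0$, but $\cN''$ is bounded near $t=0$ by the local theory, and likewise $\widetilde\cN$, $\bd$, so the pairing is continuous up to $t=0$). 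Hence $\la\tilde\bd(t),h\ra\to\la\tilde\bd(0),h\ra$ as $t\downarrow0$. It remains to identify $\la\tilde\bd(0),h\ra$ with $\sqrt{2\pi}(\mF_y\tilde\bd)(0,0)h(0)$: since $\tilde\bd(0)$ lies in $Y$, its Fourier transform is supported in $[-\eta_0,\eta_0]$, and the rescaled sequence $\bd_{\lambda_n}(t,\cdot)=\lambda_n\bd(\lambda_n^2t,\lambda_n\cdot)$ has Fourier support shrinking to $\{0\}$; by Corollary~\ref{cor:bd-bound} the weak-star limit $\bd_\infty$ therefore satisfies, as $t\downarrow0$, $\int\bd_\infty(t,y)h(y)\,dy\to(\int\bd_\infty\text{-mass})\,h(0)$, and the total mass is the $\eta=0$ value of $\mF_y\tilde\bd(t)$, which is $t$-independent because $\pd_t\tilde\bd$ is a $y$-derivative up to exponentially/integrably small terms. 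Finally $(\mF_y\tilde\bd)(t,0)=(\mF_y\tilde\bd)(0,0)$ in the limit gives \eqref{eq:B-ini}.

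The main obstacle I anticipate is the passage to the limit in the nonlinear term: one must show that $\widetilde{\cN}_{\lambda_n}=\lambda_n^2\wP_1(n_1,n_2)(\lambda_n^2t,\lambda_n y)$ converges \emph{strongly enough} (in, say, $L^2_{loc}$ in space-time, or as a distribution paired against the strongly convergent $A_5^*\pd_y\phi$) that its limit is $\pd_y$ of the quadratic expression in $d_\pm$ — i.e. one must pass products of weakly convergent sequences to the limit. This is handled by the compactness in Corollary~\ref{cor:bd-bound}: the bound $\|\pd_t\bd_{\lambda_n}\|_{H^{-2}}\lesssim(t^{-1/4}+t^{-3/2})\eps$ together with the uniform $H^1$-bound gives, via Aubin--Lions, strong $L^2_{loc}$ convergence of $\bd_{\lambda_n}$ on every $[t_1,t_2]\times\{|y|\le R\}$, hence strong convergence of the relevant products $b\,\tx_y$ and $(\tx_y)^2$; the nonlocal factors $\wP_1$ and the $\eta$-dependence of $\Pi_1$ contribute only $O(\eta)=O(\lambda_n^{-1})$ corrections by the smoothness of the symbols at $\eta=0$ (Claim~\ref{cl:gk-approx} and the expansions in Section~\ref{sec:modulation}). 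A secondary technical point is bookkeeping the constant: one must verify that the diagonalization $\Pi_*(0)=\begin{pmatrix}8i & 8i\\ 4 & -4\end{pmatrix}$, $\mu=\mu_3$, produces precisely the coefficients $4$ and $-4$ in \eqref{eq:bar-bd} (equivalently $\pd_tu=2\pd_y^2u\pm4\pd_y(u^2)$ in Theorem~\ref{thm:Burgers}); this is the same linear algebra that turns \eqref{eq:approxeq} into coupled Burgers equations and amounts to computing $\Pi_*(0)^{-1}\cdot 6(E_{12}+\tfrac12E_{21})\cdot$(quadratic) and reading off the diagonal.
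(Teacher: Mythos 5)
Your overall strategy matches the paper's: rescale, test against smooth functions, pass to the limit along the subsequence from Corollary~\ref{cor:bd-bound}, and use the modified unknown $\tilde{\bd}$ to recover the initial condition. However, there are two genuine gaps.

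\textbf{Gap 1: the nonlinear term does not reduce to $\pd_y(4d_+^2,-4d_-^2)^{\top}$ by algebra alone.} You assert that under the diagonalization, $\wP_1(n_1,n_2)$ ``transforms into $\pd_y$ of $(4d_+^2,-4d_-^2)$ in the limit.'' This is false as stated. Carrying out the algebra with $n_1=6bx_y$, $n_2=2(\tc-b)+3(x_y)^2$, using $b\simeq 2(d_++d_-)$ and $x_y\simeq d_+-d_-$ (with opposite translations $e^{\pm 4\lambda t\pd_y}$), one finds that $A_5\widetilde{\cN}_\lambda$ is (up to $O(\lambda^{-1/2})$) a quadratic expression whose first component is $4d_{+,\lambda}^2-4d_{+,\lambda}(e^{-8\lambda t\pd_y}d_{-,\lambda})-2(e^{-8\lambda t\pd_y}d_{-,\lambda})^2$, and analogously for the second component (this is (eq:wcNl) in the paper). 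The cross term $d_+\cdot(\text{shifted }d_-)$ and the squared shifted term do \emph{not} vanish by the algebra; they vanish only because $d_{-,\lambda}$ (shifted by $-8\lambda t$) tends to zero locally, which is proved by the localized virial/monotonicity estimate (eq:mono-y)--(eq:mono-y3). Your appeal to Aubin--Lions strong $L^2_{loc}$ convergence of $\bd_{\lambda_n}$ does not handle this: strong convergence of $d_{\pm,\lambda_n}$ on a fixed spatial window does not control a product in which one factor has been translated by the divergent amount $8\lambda_nt$. Without the virial step the two Burgers equations in \eqref{eq:bar-bd} stay coupled, and the proposal does not close.

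\textbf{Gap 2: the role of $\tilde{\bd}$ in establishing \eqref{eq:B-ini}.} You show that the map $t\mapsto\la\tilde{\bd}(t),h\ra$ (for the \emph{unrescaled} $\tilde{\bd}$) is continuous at $t=0$, and then try to identify $\la\tilde{\bd}(0),h\ra$ with $\sqrt{2\pi}(\mF_y\tilde{\bd})(0,0)h(0)$. This is incorrect: those two quantities are not equal. The statement \eqref{eq:B-ini} is about the \emph{limit profile} $\bd_\infty$, whose initial data is a point mass at $y=0$, and the mechanism for proving it is uniform-in-$\lambda$ equicontinuity of $t\mapsto\int\tilde{\bd}_\lambda(t,y)h(y)\,dy$ near $t=0$. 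This is exactly why the modification $\tilde{\bd}$ is needed: $\|\pd_t\bd_\lambda\|_{H^{-2}}$ carries the non-integrable contribution $\lambda^{-1/2}t^{-3/2}$ from $\cN'_\lambda$, whereas $\|\pd_t\tilde{\bd}_\lambda\|_{H^{-2}}\lesssim t^{-1/4}+t^{-7/8}$ is integrable near $t=0$ \emph{uniformly in $\lambda$}, giving $|\int\tilde{\bd}_\lambda(t)h-\int\tilde{\bd}_\lambda(0)h|\le C((t-s)^{3/4}+(t-s)^{1/8})$ with $C$ independent of $\lambda$. One then evaluates $\lim_\lambda\int\tilde{\bd}_\lambda(0,y)h(y)\,dy$ by dominated convergence in Fourier variables. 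Your argument speaks about continuity at $t=0$ for a \emph{fixed} solution, which involves no uniformity in $\lambda$ and hence does not allow you to take the limits $t\downarrow0$ and $\lambda_n\to\infty$ in the right order; the conclusion you want simply does not follow from what you wrote.
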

\begin{proof}
Let $\tilde{\bd}_\lambda(t,y)=\lambda\tilde{\bd}(\lambda t,\lambda^2y)$.
The limiting profile of $\bd_\lambda(t)$ and  $\tilde{\bd}_\lambda(t)$
as $\lambda\to\infty$ are the same for every $t>0$.
Indeed, it follows from \eqref{eq:lim-la5} that
\begin{equation}
  \label{eq:dif-bds}
  \begin{split}
\|\tilde{\bd}_\lambda(t,\cdot)-\bd_\lambda(t,\cdot)\|_{L^2}
\lesssim & \int_t^\infty
\left\|A_5(\lambda^2s,\lambda^{-1}D_y)\cN_\lambda'(s,y)\right\|_{L^2}\,ds
\\ \lesssim & \lambda^{-1/2}\int_t^\infty\tau^{-3/2}\,d\tau
\lesssim \lambda^{-1/2}t^{-1/2}\,.    
  \end{split}
\end{equation}
By \eqref{eq:bd-l},
\begin{equation}
  \label{eq:tbd-l}
  \begin{split}
\pd_t\tilde{\bd}_\lambda= & 2\pd_y^2\bd_\lambda
+\lambda\pd_y\{(A_3(\lambda^2t,\lambda^{-1}D_y)
+A_4(\lambda^2t,\lambda^{-1}D_y))\}\bd_\lambda 
\\ & +\pd_yA_5(\lambda^2t,\lambda^{-1}D_y)(\widetilde{\cN}_\lambda+\cN_\lambda'')\,,
  \end{split}  
\end{equation}
and we have $\sup_{\lambda\ge1}\|\pd_t\tilde{\bd}_\lambda(t,\cdot)\|_{H^{-2}}
\lesssim t^{-1/4}+t^{-7/8}$ from \eqref{eq:bd-bound1},
\eqref{eq:lim-la1}, \eqref{eq:lim-la2}, \eqref{eq:lim-la3}, \eqref{eq:lim-la4}
and \eqref{eq:tbd-l}.
Thus for$t>s>0$ and $h\in H^2(\R)$,
$$\left|\int_\R\tilde{\bd}_\lambda(t,y)h(y)\,dy
-\int_\R\tilde{\bd}_\lambda(s,y)h(y)\,dy\right|\le C
\{(t-s)^{3/4}+(t-s)^{1/8}\}\,,$$
where $C$ is a constant independent of $\lambda$.
Passing to the limit as $s\downarrow0$ in the above, we obtain for $t>0$,
\begin{equation}
  \label{eq:tbd-0}
  \left|\int_\R\tilde{\bd}_\lambda(t,y)h(y)\,dy
-\int_\R\tilde{\bd}_\lambda(0,y)h(y)\,dy\right|\le C(t^{3/4}+t^{1/8})\,.
\end{equation}
Since $\bd(0,\cdot)\in Y_1$ by the definition
and $\|\cN'(\tau,\cdot)\|_{Y_1}\lesssim \la \tau\ra^{-5/4}$, we have
$$\tilde{\bd}_1(0,y)=\bd(0,y)+\int_0^\infty A_5(\tau,D_y)
\cN'(\tau,y)\,d\tau\in Y_1\,,$$
and it follows from Lebesgue's dominated convergence theorem that
\begin{align*}
\int_\R \tilde{\bd}_\lambda(0,y)h(y)\,dy
=&  \int_\R (\mF_y\tilde{\bd}_1)(0,\lambda^{-1}\eta)(\mF_y^{-1}h)(\eta)\,d\eta
\to  \sqrt{2\pi}(\mF_y\tilde{\bd}_1)(0,0)h(0)
\end{align*}
as $\lambda\to\infty$ for any $h\in H^1(\R)$.
Letting $\lambda=\lambda_n$ and passing to the limit as $n\to\infty$
in \eqref{eq:tbd-0}, we see that \eqref{eq:B-ini} follows from
Corollary~\ref{cor:bd-bound}.
\par

Next, we will show \eqref{eq:bar-bd}.
By Corollary~\ref{cor:bd-bound} and \eqref{eq:dif-bds},
\begin{equation}
  \label{eq:wlim1}
\pd_t\tilde{\bd}_{\lambda_n}-2\pd_y^2\bd_{\lambda_n}
\to \pd_t\bd_\infty-2\pd_y^2\bd_\infty\,.  
\end{equation}
By \eqref{eq:lim-la1}, \eqref{eq:lim-la2} and \eqref{eq:lim-la4},
\begin{equation}
  \label{eq:wlim2}
\lambda\{(A_3(\lambda_n^2t,\lambda_n^{-1}D_y)
+A_4(\lambda_n^2t,\lambda_n^{-1}D_y))\}\bd_{\lambda_n}
+A_5(\lambda^2t,\lambda^{-1}D_y)\cN_{\lambda_n}''\to 0\,,
\end{equation}
as $n\to\infty$ in $\mathcal{D}'((0,\infty)\times \R)$.
\par
Now we investigate the limit of $\widetilde{\cN}_\lambda$.
Let $$\bd_\lambda(t,y)=
\begin{pmatrix}d_{+,\lambda}(t,y)\\ d_{-,\lambda}(t,y)\end{pmatrix}\,.$$
By the definition of $\bd(t)$, 
$$\begin{pmatrix}  b(t,\cdot) \\ x_y(t,\cdot)\end{pmatrix}
=U_1(t)\Pi_1(t,D_y)\bd(t,\cdot)\,.$$
Since  $|U_1(t)-I|\lesssim e^{-a(4t+L)}$ and
\begin{equation}
\label{eq:Pi-est}
\left|(4i)^{-1}\Pi_*(\lambda^{-1}\eta)
-\begin{pmatrix}2 & 2\\ 1 & -1 \end{pmatrix}\right|
\lesssim \lambda^{-1}\eta\\quad \text{for $\lambda\ge1$,}
\end{equation}
we have
\begin{equation}
  \label{eq:bx-d} 
\begin{split}
& \left\|\lambda
\begin{pmatrix} b(\lambda^2t,\lambda \cdot)\\ x_y(\lambda^2t,\lambda \cdot)
\end{pmatrix}
+\begin{pmatrix}8 & 8\\ 4 & -4 \end{pmatrix}
\begin{pmatrix}e^{4\lambda t\pd_y}d_{+,\lambda}(t,\cdot)\\
e^{-4\lambda t\pd_y}d_{-,\lambda}(t,\dot)\end{pmatrix}\right\|_{L^2}
\\ \lesssim & (\lambda^{-1}+e^{-4a\lambda^2t})\|\bd_\lambda(t)\|_{H^1}
\lesssim (\lambda^{-1}+e^{-4a\lambda^2t})(t^{-1/4}+t^{-3/4})\,.
\end{split}
\end{equation}
Recall that $(n_1,n_2)=(6bx_y,2(\tc-b)+3(x_y)^2)$.
Since $\|n_1(t)\|_{L^1}+\|n_2\|_{L^1}\lesssim t^{-1/2}$,
\begin{equation}
  \label{eq:pl1}
  \begin{split}
& \|\widetilde{\cN}_\lambda-\lambda^2(n_1,n_2)(\lambda^2t, \lambda\cdot)\|_{H^{-1}}
\\ = &  \lambda \|\la \eta\ra^{-1}(\mF_yn_1,\mF_yn_2)
(\lambda^2t,\lambda^{-1}\eta)\|_{L^2(|\eta|\ge \lambda \eta_0)}
\\ \lesssim & \lambda^{1/2}(\|n_1(\lambda^2t,\cdot)\|_{L^1}
+\|n_2(\lambda^2t,\cdot)\|_{L^1}) \lesssim (\lambda t)^{-1/2}\,.
  \end{split}
\end{equation}
Combining \eqref{eq:pl1} with \eqref{eq:bx-d}, we have
\begin{equation}
\label{eq:n1-est}
\begin{split}
&\left\|\lambda^2(\wP_1n_1)(\lambda^2t,\lambda\cdot)-
12\{(e^{4\lambda t\pd_y}d_{+,\lambda})^2-(e^{-4\lambda t\pd_y}d_{-,\lambda})^2
\}\right\|_{H^{-1}}
\\  \lesssim &  C(t)(\lambda^{-1/2}+e^{-4a\lambda^2t})\,,
\end{split}
\end{equation}
where $C(t)$ is a monotone decreasing function of $t$.
Claim~\ref{cl:b-capprox} implies
$$\left\|b-\tc-\frac18\wP_1(b^2)\right\|_Y \lesssim
\|b\|_{L^\infty}^2\|b\|_{L^2}\lesssim \|\bd\|_{L^2}^2\|\pd_y\bd\|_{L^2}\,,$$
whence 
$$\lambda^2\left\|\left(b-\tc-\frac18\wP_1(b^2)\right)(\lambda^2t,\lambda y)
\right\|_Y \lesssim 
\lambda^{-1}\|\bd_\lambda\|_{L^2}^2\|\pd_y\bd_\lambda\|_{L^2}
\lesssim \lambda^{-1}t^{-5/4}\,.$$
We can obtain
$\lambda^2\left\|(I-\wP_1)b^2(\lambda^2t,\lambda y)\right\|_{H^{-1}}
\lesssim  (\lambda t)^{-1/2}$ in the same as  \eqref{eq:pl1}.
Combining the above with \eqref{eq:bx-d} and \eqref{eq:pl1}, we have 
\begin{equation}
  \label{eq:n2-est}
  \begin{split}
\bigl\| \lambda^2(\wP_1n_2)(\lambda^2t,\lambda \cdot)&
-2\{ (e^{4\lambda t\pd_y}d_{+,\lambda})^2
-4(e^{4\lambda t\pd_y}d_{+,\lambda})(e^{-4\lambda t\pd_y}d_{-,\lambda})
\\ & +(e^{-4\lambda t\pd_y}d_{-,\lambda})^2\}\bigr\|_{H^{-1}}
\lesssim C'(t)(\lambda^{-1/2}+e^{-4a\lambda^2t})\,,
  \end{split}
\end{equation}
where $C'(t)$ is a monotone decreasing function of $t$.
Since
\begin{equation}
\label{eq:inv-Pi}
\left|4i\Pi_*(\lambda^{-1}\eta)^{-1}
-\frac{1}{4} \begin{pmatrix}1 & 2 \\ 1 & -2 \end{pmatrix}\right|
\lesssim \lambda^{-1} \eta \quad \text{for $\lambda\ge1$,}
\end{equation}
it follows from \eqref{eq:n1-est} and \eqref{eq:n2-est} that
\begin{equation}
 \label{eq:wcNl}
 \begin{split}
& \left\|A_5(\lambda^2t, \lambda^{-1}D_y)\widetilde{\cN}_\lambda
-2\begin{pmatrix} 2d_{+,\lambda}^2
-2d_{+,\lambda}(e^{-8\lambda t\pd_y}d_{-,\lambda})-(e^{-8\lambda t\pd_y}d_{-,\lambda})^2
\\
(e^{8\lambda t\pd_y}d_{+,\lambda})^2+2(e^{8\lambda t\pd_y}d_{+,\lambda})d_{-,\lambda}
-2d_{-,\lambda}^2\end{pmatrix}\right\|_{H^{-2}}
\\ \lesssim & C(t)(\lambda^{-1/2}+e^{-4a\lambda^2t}) \,,
   \end{split}
\end{equation}
where $C(t)$ is a monotone decreasing function of $t$.
\par
Next, we will show that $e^{\pm 4\lambda t\pd_y}d_{+,\lambda}$
locally tends to $0$ around $y=\pm 4\lambda t$.
Let $\alpha>0$ and $\zeta_\pm(y)=1\pm\tanh \alpha y$.
Then we have $0\le\zeta_\pm(y)\le 2$, $0<\pm\zeta_\pm'(y)\le\alpha$ and
$0\le |\zeta_\pm''(y)/\zeta_\pm'(y)|\le2\alpha$ for $y\in\R$.
By \eqref{eq:bd-l}, 
\begin{align*}
& \frac12\frac{d}{dt}\int_\R \zeta_+(y-8\lambda t)d_{+,\lambda}^2(t,y)\,dy
+4\lambda\int_\R \zeta_+'(y-8\lambda t)d_{+,\lambda}^2(t,y)\,dy
\\ \le  & 
\int_\R \zeta_+''(y-8\lambda t)(d_{+,\lambda})^2(t,y)\,dy
-2\int_\R \zeta_+(y-8\lambda t)(\pd_yd_{+,\lambda})^2(t,y)\,dy
+V\,,
\end{align*}
where
\begin{align*}
V =&(2+\alpha)\lambda
\|A_3(\lambda^2t,\lambda^{-1}D_y)\bd_\lambda(t)\|_{L^2}\|\bd_\lambda(t)\|_{H^1}
\\ 
& +(2+\alpha)\lambda
\|A_4(\lambda^2t,\lambda^{-1}D_y\bd_\lambda(t))\|_{L^2}\|\bd_\lambda(t)\|_{H^1}
\\ 
&  +(2+\alpha)\|\bd_\lambda(t)\|_{H^1}
\|A_5(\lambda^2t,\lambda^{-1}D_y)(\widetilde{\cN}_\lambda(t)+\cN''_\lambda(t))\|_{L^2}
\\
& +2\|\bd_\lambda(t)\|_{L^2}
\|A_5(\lambda^2t,\lambda^{-1}D_y)\cN'_\lambda(t)\|_{L^2}\,.
\end{align*}
Using Lemma~\ref{lem:bd-bound} and \eqref{eq:A3-bound}, we have
\begin{equation}
  \label{eq:a3-bound}
\lambda \|A_3(\lambda^2t,\lambda^{-1}D_y)\bd_\lambda(t)\|_{L^2}
\lesssim \lambda^{-1/2}t^{-1}\,.
\end{equation}
By Lemma~\ref{lem:bd-bound}, \eqref{eq:lim-la2},
\eqref{eq:lim-la3}--\eqref{eq:lim-la5} and \eqref{eq:a3-bound},
\begin{align*}
V \lesssim  \lambda^{-1/4}(t^{-9/8}+t^{-9/4})+t^{-1}+t^{-3/2}\,.
\end{align*}
If $\alpha$ is sufficiently small, it follows that
\begin{equation}
  \label{eq:mono-y}
  \begin{split}
& \frac{d}{dt}\int_\R \zeta_+(y-8\lambda t)d_{+,\lambda}^2(t,y)\,dy
+4\lambda\int_\R \zeta_+'(y-8\lambda t)d_{+,\lambda}^2(t,y)\,dy
\\ \le  & C\lambda^{-1/4}(t^{-9/8}+t^{-9/4})+C(t^{-1}+t^{-3/2}),,
  \end{split}
\end{equation}
where $C$ is a positive constant independent of $t>0$ and $\lambda\ge1$.
Let $0<t_1<t_2<\infty$.
Integrating \eqref{eq:mono-y} over $[t_1,t_2]$, we obtain
\begin{equation}
  \label{eq:mono-y2}
0<\int_{t_1}^{t_2}\int_\R \zeta_+'(y-8\lambda t)d_{+,\lambda}^2(t,y)\,dydt
\le   C(t_1,t_2)\lambda^{-1}\,,
\end{equation}
where $C(t_1,t_2)$ is a constant independent of $\lambda\ge 1$.
We can prove
\begin{equation}
  \label{eq:mono-y3}
0<-\int_{t_1}^{t_2}\int_\R \zeta_-'(y+8\lambda t)d_{-,\lambda}^2(t,y)\,dydt
\le   C(t_1,t_2)\lambda^{-1}\,,
\end{equation}
in exactly the same way.
By \eqref{eq:mono-y2} and \eqref{eq:mono-y3},
$$\lim_{\lambda\to\infty}\|d_{\pm,\lambda}\|_{L^2([t_1,t_2]\times B_R^\pm)}=0$$
for any $R>0$, where
$B_R^\pm=\{y\in\R\mid |y\mp 8\lambda t|\le R\}$.
Combining the above with Corollary~\ref{cor:bd-bound}, \eqref{eq:bd-l},
\eqref{eq:wlim1}, \eqref{eq:wlim2} and \eqref{eq:wcNl}, we see that $\bd_\infty$
satisfies \eqref{eq:bar-bd}.
\end{proof}

Now we are in position to prove Theorem~\ref{thm:Burgers}.
\begin{proof}[Proof of Theorem~\ref{thm:Burgers}]
By Corollary~\ref{cor:bd-bound},
$$\|\pd_y(d_\pm^2)(t,\cdot)\|_{H^{-2}}
\lesssim \|\bd_\infty(t,\cdot)\|_{L^2}^2\lesssim t^{-1/2}\,,$$
whence 
$\pd_y(d_\pm(t)^2)\in L^1_{loc}((0,\infty);H^{-2}(\R))$.
Combining the above with \eqref{eq:bar-bd} and \eqref{eq:B-ini},
we have $d_\pm(t)\in C([0,\infty);H^{-2}(\R))$ and
\begin{equation}
  \label{eq:int-B}
d_\pm(t)=c_\pm H_{2t}\pm 4\int_0^t e^{2(t-s)\pd_y^2}\pd_y(d_\pm(s)^2)\,ds\,,
\end{equation}
where $(c_+, c_-)=\sqrt{2\pi}(\mF_y\tilde{\bd}_1)(0,0)$.
If we choose $m_\pm\in(-2\sqrt{2},2\sqrt{2})$ so that
$$\int_\R u_B^\pm (t,y)\,dy
=\frac{1}{2}\log\left(\frac{2\sqrt{2}\pm m_\pm }{2\sqrt{2}\mp m_\pm}\right)
=c_\pm\,,$$
then $u_B^\pm(t)$ is also a solution to \eqref{eq:int-B} satisfying
$\sup_{t>0}t^{1/4}\|u_B^\pm(t)\|_{L^2}\lesssim \eps$.
Let $\|u\|_W=\sup_{t>0}t^{1/4}\|u(t,\cdot)\|_{L^2(\R)}$.
Since $\|\pd_ye^{2t\pd_y^2}\|_{B(L^1;L^2)}\lesssim t^{-3/4}$,
\begin{align*}
\|d_\pm-u_B^\pm\|_W\le & 4\sup_{t>0}t^{1/4}
\int_0^t\|\pd_ye^{2(t-s)\pd_y^2}\|_{B(L^1;L^2)}\|d_\pm(s)^2-u_B^\pm(s)^2\|_{L^1}\,ds
\\ \lesssim & (\|d_\pm\|_W+\|u_B^\pm\|_W)\|d_\pm-u_B^\pm\|_W
t^{1/4}\int_0^t(t-s)^{-3/4}s^{-1/2}\,ds
\\ \lesssim & \eps\|d_\pm-u_B^\pm\|_W\,.
\end{align*}
Thus we have $d_\pm(t,y)=u_B^\pm(t,y)$ for small $\eps$ and
\eqref{eq:bd-c} follows from the uniqueness of the limiting profile
$\bd_\infty(t,y)=(d_+(t,y),d_-(t,y))$.
Obviously $\bd_\infty(t,y)=(u_B^+(t,y),u_B^-(t,y))$  satisfies \eqref{eq:ss-bd}.
Now Theorem~\ref{thm:Burgers} follows immediately from \eqref{eq:conv},
\eqref{eq:bx-d} and the definition of $b(t,y)$.
Thus we complete the proof.
\end{proof}
\bigskip

\appendix
\section{Proof of Lemma~\ref{lem:G}}
\label{ap:G}
To prove Lemma~\ref{lem:G}, we need the following.
\begin{claim}
  \label{cl:phi-int}
Let $\varphi_c(x)=c\sech^2(\sqrt{c/2}x)$, $\varphi=\varphi_2$
and  $\pd_c^k\varphi=\pd_c^k\varphi_c|_{c=2}$ for $k\in\N$. Then
  \begin{align}
\label{eq:phi-int1}
& \int_\R \varphi(x)\,dx=4\,,\quad 
\int_\R \varphi(x)^2dx=\frac{16}{3}\,,
\\ \label{eq:phi-int2}  & 
\int_\R \varphi(x)\pd_c\varphi(x)\,dx=
-\int_\R \varphi'(x)\left(\int^x_{-\infty}\pd_c\varphi(z)dz\right)=2\,,\\
\label{eq:phi-int3}
& \int_\R \varphi(x)\left(\int_x^\infty\pd_c\varphi(z)dz\right)\,dx=
\int_\R \varphi(x)\left(\int^x_{-\infty}\pd_c\varphi(z)dz\right)=2\,,\\
\label{eq:phi-int4}
& \int_\R \varphi(x)\left(\int_x^\infty\pd_c^2\varphi(z)dz\right)\,dx=-\frac12\,,
\enskip
\int_\R \pd_c\varphi(x)\left(\int^x_{-\infty}\pd_c\varphi(z)dz\right)\,dx=\frac12
\,,
\\
\label{eq:phi-int5}
& \int_\R \left(\int_x^\infty\pd_c\varphi(z)dz\right)
\left(\int^x_{-\infty}\pd_c\varphi(z)dz\right)\,dx=
\frac{1}{6}-\frac{\pi^2}{36}\,,
\\ &
\label{eq:phi-int6}
\int_\R \left(\int_x^\infty\pd_c^2\varphi(z)dz\right)
\left(\int^x_{-\infty}\pd_c\varphi(z)dz\right)=\frac{\pi^2}{96}-\frac{1}{16}
\,.
\end{align}
\end{claim}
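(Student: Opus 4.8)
The plan is to reduce every identity to a short list of elementary hyperbolic integrals by exploiting the scaling structure of $\varphi_c$. Writing $\varphi_c(x)=c\,w(\sqrt{c/2}\,x)$ with $w=\sech^2$ and differentiating in $c$ gives the pointwise identities $\pd_c\varphi_c=\frac1c\big(\varphi_c+\frac x2\varphi_c'\big)$ and, iterating, $\pd_c^2\varphi_c=\frac1{c^2}\big(\frac{3x}{4}\varphi_c'+\frac{x^2}{4}\varphi_c''\big)$; specializing at $c=2$, $\pd_c\varphi=\frac12\varphi+\frac14 x\varphi'$ and $\pd_c^2\varphi=\frac{3x}{16}\varphi'+\frac{x^2}{16}\varphi''$. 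Using $\int_{-\infty}^x\varphi\,dz=2(1+\tanh x)$ and integrating $z\varphi'$ and $z^2\varphi''$ by parts, I would record the explicit antiderivatives $\int_{-\infty}^x\pd_c\varphi\,dz=\frac12(1+\tanh x)+\frac12 x\sech^2 x$ and $\int_{-\infty}^x\pd_c^2\varphi\,dz=\frac1{16}\big(x^2\varphi'(x)+x\varphi(x)-2(1+\tanh x)\big)$, together with $\int_\R\pd_c^k\varphi\,dx=\pd_c^k(2\sqrt{2c})\big|_{c=2}$ (so $1$ for $k=1$ and $-\frac14$ for $k=2$; differentiation under the integral sign is justified by dominated convergence). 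The $\int_x^\infty$ antiderivatives are then obtained by subtraction, and they coincide with the $\int_{-\infty}^x$ ones reflected in $x\mapsto-x$ because $\varphi$ and $\pd_c^k\varphi$ are even.

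With these formulas, (A.1)--(A.4) are immediate: after expansion, every integrand is a product of $\tanh$, $\sech^2$ and monomials in $x$, and the only inputs are $\int_\R\varphi\,dx=4$, $\int_\R\varphi^2\,dx=\frac{16}{3}$ (from $\frac{d}{dx}(\tanh x-\frac13\tanh^3x)=\sech^4 x$, whence $\int_\R\sech^4 x\,dx=\frac43$), $\int_\R\sech^2 x\,dx=2$, and $\int_\R x\varphi\varphi'\,dx=\frac12\int_\R x(\varphi^2)'\,dx=-\frac12\int_\R\varphi^2\,dx=-\frac83$, all by one integration by parts plus parity of the integrands. The identities with two antiderivatives, and the equalities such as $\int_\R\varphi\int_x^\infty\pd_c\varphi=\int_\R\varphi\int_{-\infty}^x\pd_c\varphi$, follow by combining a single integration by parts with the reflection symmetry just noted, and this also yields internal consistency checks (e.g. $\int_\R\varphi F\,dx=4-\int_\R\varphi\,(1-F)\,dx$ forces the common value $2$).

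The real work is in (A.5) and (A.6), which carry the $\pi^2$ terms. Substituting the explicit antiderivatives and simplifying, $\int_\R\big(\int_x^\infty\pd_c\varphi\big)\big(\int_{-\infty}^x\pd_c\varphi\big)\,dx$ collapses to $\frac14\int_\R\sech^2 x\,dx-\frac12\int_\R x\sech^2 x\tanh x\,dx-\frac14\int_\R x^2\sech^4 x\,dx$, and (A.6) reduces similarly to a combination of $\int_\R x^2\sech^2 x\,dx$, $\int_\R x^2\sech^2 x\tanh^2 x\,dx$ and integrals of the form $\int_\R x^k\sech^{2j}x\tanh^\ell x\,dx$; the odd/weighted pieces are removed by integrating against $\frac{d}{dx}\sech^2$, so that everything reduces to the two master integrals $\int_\R x^2\sech^2 x\,dx=\frac{\pi^2}{6}$ and $\int_\R x^2\sech^4 x\,dx=\frac{\pi^2}{9}-\frac23$. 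These I would obtain from the expansion $\sech^2 x=4\sum_{m\ge1}(-1)^{m-1}m\,e^{-2mx}$ for $x>0$ (and the analogous expansion of $\sech^4 x$ in terms of $m^3-m$), integrating term by term to reach $\eta(2)=\sum_{m\ge1}(-1)^{m-1}m^{-2}=\frac{\pi^2}{12}$; equivalently one works with truncated integrals $\int_{-R}^R$ and lets $R\to\infty$, the divergent boundary terms cancelling against $\int x\tanh^3 x$-type divergences.

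I expect the bookkeeping in this last step to be the main obstacle, since several individual pieces (e.g. $\int_{-R}^R x\tanh^3 x\,dx$) diverge and only their combination is finite; justifying the term-by-term integration of the $\sech^4$ series, or tracking the $R\to\infty$ cancellations, is where care is needed. Once the two master integrals are established, assembling (A.5)--(A.6) — and hence all of Claim~\ref{cl:phi-int} — is routine algebra. Finally I would note that the constants produced here are exactly those appearing in the expressions for $G_1$ and $G_2$ in Lemma~\ref{lem:G}, for which this claim is the computational backbone.
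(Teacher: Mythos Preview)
Your proposal is correct and follows essentially the same route as the paper: both derive the pointwise identities $\pd_c\varphi=\frac12\varphi+\frac{x}{4}\varphi'$ and $\pd_c^2\varphi=\frac{3x}{16}\varphi'+\frac{x^2}{16}\varphi''$, compute the explicit antiderivatives, exploit the evenness of $\varphi$, and for (A.5) recognize the product $\big(\int_x^\infty\pd_c\varphi\big)\big(\int_{-\infty}^x\pd_c\varphi\big)$ as $\frac14-(\frac{x\varphi}{4}+\frac{\tanh x}{2})^2$, reducing everything to $\int_\R x^2\sech^4x\,dx$. The only cosmetic difference is that the paper evaluates this master integral by citing $\int_0^\infty\frac{x}{e^x+1}\,dx=\frac{\pi^2}{12}$ rather than your term-by-term series argument, but these are the same computation; your concern about divergent intermediate pieces is moot once the difference-of-squares form is used, since each resulting term is already absolutely integrable.
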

\begin{proof}
Eq.~\eqref{eq:phi-int1} can be obtained by using the change of variable
$s=\tanh x$.
Since
\begin{equation}
  \label{eq:phi-scaling}
\varphi_c(x)=\frac{c}{2}\varphi(\mathstrut{\sqrt{c/2}}x)\,,  
\end{equation}
$$\int\varphi\pd_c\varphi dx=\frac12\frac{d}{dc}\left(\frac{c}{2}\right)^{3/2}
\bigr|_{c=2}\int\varphi^2dx=2\,.$$
Using \eqref{eq:phi-scaling}, we have
\begin{equation}
  \label{eq:phi-7}
  \begin{split}
& \pd_c\varphi(x)=\frac{x}{4}\varphi'(x)+\frac{1}{2}\varphi(x)
=\frac{1}{4}(x\varphi)'+\frac14\varphi\,,\\
& \pd_c^2\varphi(x)=\frac{x^2}{16}\varphi''(x)+\frac{3x}{16}\varphi'(x)
=\frac{1}{16}(x^2\varphi'+x\varphi)'-\frac{1}{16}\varphi\,,\\
& \int_{\pm\infty}^x\pd_c\varphi=\frac{x\varphi}{4}+\frac{\tanh x\mp1}{2}\,,
\quad
\int_x^\infty\pd_c^2\varphi=-\frac{x^2\varphi'+x\varphi}{16}
+\frac{\tanh x-1}{8}\,.
  \end{split}
\end{equation}
By \eqref{eq:phi-7} and the fact that $\varphi$ is even,
\begin{gather*}
\int \varphi\left(\int_x^\infty\pd_c\varphi\right)=
\int \varphi\left(\int^x_{-\infty}\pd_c\varphi\right)=\frac12\int\varphi=2\,,\\
\int \varphi\int_x^\infty\pd_c^2\varphi=-\frac{1}{8}\int\varphi=-\frac12\,,\\
\int_\R \pd_c\varphi\int^x_{-\infty}\pd_c\varphi=\frac18\int\varphi=\frac12\,.
\end{gather*}
By \eqref{eq:phi-7},
\begin{align*}
& \int_\R \left(\int_x^\infty\pd_c\varphi(z)dz\right)
\left(\int^x_{-\infty}\pd_c\varphi(z)dz\right)\,dx
\\ =&  \int_\R \left\{\frac14-
\left(\frac{x}{4}\varphi+\frac{1}{2}\tanh x\right)^2\right\}\,dx
\\=& \frac14\int\sech^2x-\frac12\int x\sech^3x\sinh x
-\frac14\int x^2\sech^4xdx
\\=& -\frac14\int x^2\sech^4xdx=-\frac{\pi^2-6}{36}\,.
\end{align*}
Here use the fact that $\int_0^\infty \frac{x}{e^x+1}\,dx=\frac{\pi^2}{12}$.
We have \eqref{eq:phi-int6} in the same way.
\end{proof}

Now we are in position to prove Lemma~\ref{lem:G}.

\begin{proof}[Proof of Lemma~\ref{lem:G}]
By Claims~\ref{cl:phi-int} and \ref{cl:gk-approx},
\begin{align*}
G_1=&\int_\R \ell_1 \varphi_c(z)dz
\\ = &3x_{yy}\int\varphi_c^2-(c_t-6c_yx_y)\int\varphi_c\pd_c\varphi_c
\\ & +3c_{yy}\int\varphi_c\int_z^\infty\pd_c\varphi_c
+3(c_y)^2\int\varphi_c\int_z^\infty\pd_c^2\varphi_c
\\=& 16x_{yy}\left(\frac{c}{2}\right)^{3/2}
-2(c_t-6c_yx_y)\left(\frac{c}{2}\right)^{1/2}+6c_{yy}-\frac{3}{2}(c_y)^2
\left(\frac{2}{c}\right)\,,
\end{align*}
and 
\begin{align*}
& \left(\frac{c}{2}\right)^{-3/2}G_2=
 \int_\R \ell_1\left(\int_{-\infty}^x\pd_c\varphi_c(z)dz\right)dz
\\=&(x_t-2c-3(x_y)^2)
\int \varphi_c'\left(\int_{-\infty}^x\pd_c\varphi_c\right)
\\ & +3x_{yy}\int \varphi_c\left(\int_{-\infty}^x\pd_c\varphi_c\right)
-(c_t-6c_yx_y)\int \pd_c\varphi_c\left(\int_{-\infty}^x\pd_c\varphi_c\right)
\\ & +3c_{yy}\int\left(\int_z^\infty\pd_c\varphi_c\right)
\left(\int_{-\infty}^x\pd_c\varphi_c\right)
+3(c_y)^2\int\left(\int_z^\infty \pd_c^2\varphi_c\right)
\left(\int_{-\infty}^x\pd_c\varphi_c\right)
\\=&
-2(x_t-2c-3(x_y)^2)\left(\frac{c}{2}\right)^{1/2}+6x_{yy}
-\frac12(c_t-6c_yx_y)\left(\frac{c}{2}\right)^{-1}
+\mu_1c_{yy}\left(\frac{c}{2}\right)^{-3/2}
\\ & +\mu_2(c_y)^2\left(\frac{c}{2}\right)^{-5/2}\,.
\end{align*}  
\end{proof}

\section{Operator norms of $S^j_k$ and $\wC_k$}
\label{ap:opS}
\begin{claim}
\label{cl:S1}
There exist positive constants $\eta_1$ and  $C$ such that
for $\eta\in (0,\eta_1]$, $j\in\Z_{\ge0}$, $k=1$, $2$ and $f\in L^2(\R)$,
\begin{align}
\label{eq:s1a}
& \|\pd_y^jS_k^1[q_c](f)(t,\cdot)\|_Y \le C\|e^{az}q_2\|_{L^2}
\|\pd_y^j\wP_1f\|_Y \,,\\
\label{eq:s1c}
& \|\pd_y^jS_k^1[q_c](f)(t,\cdot)\|_{Y_1}
\le C\|e^{az}q_2\|_{L^2}\|\pd_y^j\wP_1f\|_{Y_1}\,,\\
\label{eq:S1,pd_y}
& \bigl[\pd_y, S^1_k[q_c]\bigr]=0\,.
\end{align}
\end{claim}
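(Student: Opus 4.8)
The plan is to estimate $S^1_k[q_c](f)$ directly from its definition by using Plancherel's theorem in the $y$-variable, since the kernel in the $\eta$-integral factors completely. First I would unwind the definition: writing $\mF_y$ for the Fourier transform in $y$, the operator $S^1_k[q_c]$ acts by
\[
(\mF_yS^1_k[q_c](f))(t,\eta)=\mathbf{1}_{[-\eta_0,\eta_0]}(\eta)\,
(\mF_yf)(\eta)\int_\R q_2(z)\overline{g^*_{k1}(z,\eta,2)}\,dz\,,
\]
which makes three things transparent at once: (i) $S^1_k[q_c]$ depends only on $q_2=q_c|_{c=2}$ and not on $c(t,y)$; (ii) it is a Fourier multiplier in $y$, so $[\pd_y,S^1_k[q_c]]=0$, giving \eqref{eq:S1,pd_y} immediately and also reducing \eqref{eq:s1a}, \eqref{eq:s1c} for general $j$ to the case $j=0$ after pulling $\pd_y^j$ through both $S^1_k[q_c]$ and $\wP_1$; (iii) the multiplier symbol is $m_k(\eta):=\mathbf{1}_{[-\eta_0,\eta_0]}(\eta)\int_\R q_2(z)\overline{g^*_{k1}(z,\eta,2)}\,dz$, supported in $[-\eta_0,\eta_0]$, so $S^1_k[q_c](f)\in Y$ and $S^1_k[q_c]=S^1_k[q_c]\wP_1$.

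The core estimate is then the uniform bound $\sup_{|\eta|\le\eta_1}|m_k(\eta)|\le C\|e^{az}q_2\|_{L^2}$. For this I would apply Cauchy--Schwarz in $z$ against the weighted measure $e^{2az}dz$:
\[
|m_k(\eta)|\le \|e^{az}q_2\|_{L^2(\R)}\,\|e^{-az}g^*_{k1}(\cdot,\eta,2)\|_{L^2(\R)}
=\|e^{az}q_2\|_{L^2}\,\|g^*_{k1}(\cdot,\eta,2)\|_{L^2_{-a}(\R)}\,,
\]
so everything reduces to showing $\sup_{|\eta|\le\eta_1}\|g^*_{k1}(\cdot,\eta,2)\|_{L^2_{-a}}<\infty$. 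Recall $g^*_{k1}(z,\eta,2)=\eta^{-2}\{g^*_k(z,\eta)-g^*_k(z,0)\}$; since $g^*_k(z,\eta)$ is the (rescaled, real/imaginary-part) adjoint resonant mode and $g^*_k(\cdot,\eta)\in L^2_{-a}(\R)$ for $|\eta|$ small by the explicit formulas and by Claim~\ref{cl:gk-approx}, one knows that $\eta\mapsto g^*_k(\cdot,\eta)$ is real-analytic into $L^2_{-a}(\R)$ near $\eta=0$ with $g^*_k(\cdot,\eta)-g^*_k(\cdot,0)=O(\eta^2)$ (the linear-in-$\eta$ term vanishes because $g^*_1,g^*_2$ are even in $\eta$ by construction). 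Dividing by $\eta^2$ leaves a function bounded in $L^2_{-a}$ uniformly for $|\eta|\le\eta_1$, shrinking $\eta_1$ if necessary so that $\nu(\eta_1)<a$. This gives \eqref{eq:s1a} for $j=0$; for general $j$ multiply through by $(i\eta)^j$, which is bounded on $[-\eta_0,\eta_0]$, and re-read the identity to land $\pd_y^j$ inside $\wP_1f$.

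For the $Y_1$ bound \eqref{eq:s1c} I would use that $Y_1$ is $\mF^{-1}_\eta$ of $L^\infty$ supported in $[-\eta_0,\eta_0]$: since $(\mF_yS^1_k[q_c](f))(\eta)=m_k(\eta)(\mF_y\wP_1f)(\eta)$ pointwise, one has $\|\mF_yS^1_k[q_c](f)\|_{L^\infty}\le \|m_k\|_{L^\infty}\|\mF_y\wP_1f\|_{L^\infty}$, i.e. $\|S^1_k[q_c](f)\|_{Y_1}\le (\sup|m_k|)\,\|\wP_1f\|_{Y_1}$, and the same $\eta^j$ trick handles $\pd_y^j$. The main obstacle — really the only nonroutine point — is the claim that $g^*_k(\cdot,\eta,2)$ depends smoothly on $\eta$ near $0$ with a quadratic vanishing of the difference $g^*_k(\cdot,\eta)-g^*_k(\cdot,0)$ in the \emph{weighted} space $L^2_{-a}$; this is exactly the content of the asymptotic expansions in Claim~\ref{cl:gk-approx}, whose error terms are stated to be $O(\eta^2)$ in $L^2_{-a}(\R)$, so I would invoke Claim~\ref{cl:gk-approx} (together with the explicit formula $\beta(\eta)=\sqrt{1+i\eta}$, which is analytic near $0$) to justify the uniform bound on $\|g^*_{k1}(\cdot,\eta,2)\|_{L^2_{-a}}$. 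Everything else is Cauchy--Schwarz and Plancherel.
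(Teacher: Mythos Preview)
Your proof is correct and follows essentially the same approach as the paper: both identify $S^1_k[q_c]$ as a Fourier multiplier in $y$ with symbol $m_k(\eta)=\mathbf{1}_{[-\eta_0,\eta_0]}(\eta)\int_\R q_2(z)\overline{g^*_{k1}(z,\eta,2)}\,dz$, deduce the commutator identity from this, and bound $|m_k(\eta)|$ via Cauchy--Schwarz in $z$ together with the uniform $L^2_{-a}$ bound on $g^*_{k1}(\cdot,\eta,2)$ supplied by Claim~\ref{cl:gk-approx}. Your write-up is in fact more detailed than the paper's in justifying why the $O(\eta^2)$ error in Claim~\ref{cl:gk-approx} (coming from the evenness of $g^*_k$ in $\eta$) makes $\eta^{-2}(g^*_k(\cdot,\eta)-g^*_k(\cdot,0))$ bounded in $L^2_{-a}$.
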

\begin{proof}
Since the Fourier transform of $S_k^1f$ can be written as
\begin{equation}
  \label{eq:F-S1k}
\begin{split}
\mF_y(S_k^1f)(t,\eta)=
\mathbf{1}_{[-\eta_0,\eta_0]}(\eta)
\hat{f}(\eta)\int dz q_2(z)\overline{g_{k1}^*(z,\eta,2)}\,,
\end{split}  
\end{equation}
we have
$[\pd_y,S_k^1]=i\mF_\eta^{-1}[\eta, \mF_y(S_k^1f)(t,\eta)]=0$.
Since
$$\sup_{\eta\in[-\eta_0,\eta_0]}\left|\int dz\, q_2(z)
\overline{g_{k1}^*(z,\eta,2)}\right| \lesssim \|e^{az}q_2(z)\|_{L^2}$$
by Claim~\ref{cl:gk-approx}, we see that \eqref{eq:s1a} and \eqref{eq:s1c}
follow immediately from \eqref{eq:F-S1k} and \eqref{eq:S1,pd_y}.
\end{proof}
\begin{claim}
  \label{cl:S2}
There exist positive constants $\eta_1$, $\delta$ and $C$ such that
if $\eta\in(0,\eta_1]$ and $\bM_1(T)\le \delta$,
then for $k=1$, $2$, $t\in[0,T]$ and $f\in L^2(\R)$,
\begin{equation}
\label{eq:s2a}
 \|S_k^2[q_c](f)(t,\cdot)\|_{Y_1}  \le C
\sup_{c\in[2-\delta,2+\delta]}\left(\|e^{az}q_c\|_{L^2}+\|e^{az}\pd_cq_c\|_{L^2}\right)
\|\tc\|_Y\|f\|_{L^2}\,,
\end{equation}
\begin{equation}
\label{eq:s21a}
  \begin{split}
& \|\pd_yS_k^2[q_c](f)(t,\cdot)\|_{Y_1} \\ \le & C\sum_{i=1,2}
\sup_{c\in[2-\delta,2+\delta]}
 \|e^{az}\pd_c^iq_c\|_{L^2} (\|c_y\|_Y\|f\|_{L^2}+\|\tc\|_Y\|\pd_yf\|_{L^2})\,,    
  \end{split}
\end{equation}
\begin{align}
\label{eq:s2b}
& \|S_k^2[q_c](f)(t,\cdot)\|_Y  \le C
\sum_{0\le i\le 2}\sup_{c\in[2-\delta,2+\delta]} \|e^{az}\pd_c^iq_c\|_{L^2}\|\tc\|_{L^\infty}\|f\|_{L^2}\,,\\
\label{eq:s2-pdy}
& \|[\pd_y,S_k^2[q_c]]f(t,\cdot)\|_{Y_1}  \le C
\sum_{0\le i\le 3}\sup_{c\in[2-\delta,2+\delta]} \|e^{az}\pd_c^iq_c\|_{L^2}\|c_y\|_Y\|f\|_{L^2}\,.
  \end{align}
\end{claim}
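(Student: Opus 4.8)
The plan is to view each operator $S^2_k[q_c]$ as a Fourier integral operator in $y$ with frequency restricted to $[-\eta_0,\eta_0]$ and amplitude $\tc(t,y_1)\,\overline{h_k(\eta,c(t,y_1))}$, where
\begin{equation*}
h_k(\eta,c):=\int_\R g_{k2}^*(z,\eta,c)\,dz\,,
\end{equation*}
so that $\mF_y\bigl(S^2_k[q_c]f\bigr)(\eta)=\mathbf{1}_{[-\eta_0,\eta_0]}(\eta)\,(2\pi)^{-1/2}\int_\R f(y_1)\tc(t,y_1)\overline{h_k(\eta,c(t,y_1))}e^{-iy_1\eta}\,dy_1$. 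All four estimates will be reduced to a symbol bound of the form
\begin{equation}\label{eq:symbol-bd}
\sup_{|\eta|\le\eta_0}\ \sup_{c\in[2-\delta,2+\delta]}\bigl|\pd_c^ih_k(\eta,c)\bigr|
\lesssim \sum_{m}\ \sup_{c\in[2-\delta,2+\delta]}\|e^{az}\pd_c^mq_c\|_{L^2(\R)}
\end{equation}
(with the range of $m$ dictated by $i$ and the algebraic form of $g_{k2}^*$), together with the elementary Fourier-side inequalities $\|\widehat{g}\|_{L^\infty_\eta}\lesssim\|g\|_{L^1_y}$ and $\|\widehat{g}\|_{L^2_\eta}=\|g\|_{L^2_y}$.

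First I would establish \eqref{eq:symbol-bd}. Using the explicit formulas for $g^*(x,\pm\eta)$ from Lemma~\ref{lem:kpresonances}, the scaling relations $g_1^*(x,\eta,c)=cg_1^*(\sqrt{c/2}\,x,\eta)$ and $g_2^*(x,\eta,c)=\frac{c}{2}g_2^*(\sqrt{c/2}\,x,\eta)$, and the evenness of $g_1^*$ and $g_2^*$ in $\eta$, one checks that $g_{k1}^*(\cdot,\eta,c)$ is smooth in $(\eta,c)$ near $(0,2)$ with $\|\pd_c^jg_{k1}^*(\cdot,\eta,c)\|_{L^2_{-a}(\R)}$ bounded uniformly for $|\eta|\le\eta_0$ and $c$ near $2$; the choice $\eta_1$ with $\nu(\eta_1)<a$ is what guarantees $e^{-az}g_{k1}^*(\cdot,\eta,c)\in L^2$. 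Writing $\delta q_c=\int_0^1\pd_cq_{2+s(c-2)}\,ds$ and $(g_{k1}^*(\cdot,\eta,c)-g_{k1}^*(\cdot,\eta,2))/(c-2)=\int_0^1\pd_cg_{k1}^*(\cdot,\eta,2+s(c-2))\,ds$, and pairing these (which lie in $L^2_{-a}$, uniformly) with the exponential localization $\pd_z^j\pd_c^k\varphi_c(z)\lesssim e^{-2a|z|}$ and $|\pd_z^{-1}\pd_c^k\varphi_c(z)|\lesssim\min(1,e^{2az})$ from \eqref{eq:exp-localized} — so that $e^{az}q_c\in L^2$ for every admissible $q_c$, which uses $a<1$ — the $z$-integral defining $h_k$ and its $c$-derivatives converge absolutely and \eqref{eq:symbol-bd} follows from the Cauchy--Schwarz inequality.

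Granting \eqref{eq:symbol-bd}, \eqref{eq:s2a} follows from $|\mF_y(S^2_k[q_c]f)(\eta)|\le(2\pi)^{-1/2}\sup|h_k|\,\|f\tc\|_{L^1}\lesssim\|f\|_{L^2}\|\tc\|_Y$, and \eqref{eq:s2b} from the splitting $h_k(\eta,c)=h_k(\eta,2)+\tc\,r_k(\eta,c)$ with $r_k$ bounded: the first piece is a genuine $L^\infty$-symbol Fourier multiplier applied to $f\tc$, so its $Y$-norm is $\lesssim\|f\tc\|_{L^2}\le\|\tc\|_{L^\infty}\|f\|_{L^2}$, while the second piece is handled as in \eqref{eq:s2a}, giving $\lesssim\|f\tc^2\|_{L^1}\le\|\tc\|_{L^\infty}\|\tc\|_{L^2}\|f\|_{L^2}$, which is of lower order since $\|\tc\|_Y\le\delta$. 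For the derivative estimates I would use that $\pd_y$ commutes with $\mathbf{1}_{[-\eta_0,\eta_0]}(D_y)$ and that $i\eta\,e^{i(y-y_1)\eta}=-\pd_{y_1}e^{i(y-y_1)\eta}$; integrating by parts in $y_1$ gives
\begin{equation*}
\pd_y\bigl(S^2_k[q_c]f\bigr)=S^2_k[q_c](\pd_yf)+\frac{1}{2\pi}\int_{-\eta_0}^{\eta_0}\!\!\int_\R f(y_1)\,\pd_{y_1}\!\bigl(\tc(t,y_1)\overline{h_k(\eta,c(t,y_1))}\bigr)e^{i(y-y_1)\eta}\,dy_1\,d\eta\,,
\end{equation*}
where $\pd_{y_1}\bigl(\tc\,\overline{h_k(\eta,c)}\bigr)=c_y\,\overline{h_k(\eta,c)}+\tc\,c_y\,\overline{\pd_ch_k(\eta,c)}$ because $\tc_y=c_y$. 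Estimating the first summand by \eqref{eq:s2a} (with $\pd_yf$ in place of $f$) and the remainder by the $L^1_y\to L^\infty_\eta$ bound exactly as above — using $\|f\,c_y\|_{L^1}\le\|f\|_{L^2}\|c_y\|_Y$ and that the extra $\tc$-factor is $O(\delta)$ — yields \eqref{eq:s21a}; and since $[\pd_y,S^2_k[q_c]]f$ is precisely the remainder integral, the same estimate gives \eqref{eq:s2-pdy}.

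The main obstacle is the first step: making the $z$-integral defining $h_k$ (and each of its $c$-derivatives) converge requires pairing the $q_c$ that are only bounded at $z\to+\infty$ — namely the antiderivatives $q_c=\pd_z^{-1}\pd_c^m\varphi_c$ — against the exponential decay of $g_{k1}^*$ as $z\to-\infty$ in the weighted space $L^2_{-a}$, and one must verify the smoothness of $g_{k1}^*(\cdot,\eta,c)$ at $\eta=0$ together with uniform $L^2_{-a}$ bounds on its $c$-derivatives. Once this bookkeeping is done, the four operator-norm bounds are routine Hölder and Young arguments on the Fourier side.
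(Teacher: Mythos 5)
Your proof is correct and follows essentially the same route as the paper's: you pass to the Fourier side, restrict frequencies to $[-\eta_0,\eta_0]$, bound the $\eta$-dependent symbol $h_k(\eta,c)=\int_\R g_{k2}^*(z,\eta,c)\,dz$ and its $c$-derivatives by Cauchy--Schwarz against the uniform $L^2_{-a}$ bounds on $g_{k1}^*$ and its $c$-derivatives (the paper packages these bounds in Claim~\ref{cl:gk-approx}), and then use $L^1_y\to L^\infty_\eta$ for the $Y_1$ estimates and a frozen-coefficient splitting $h_k(\eta,c)=h_k(\eta,2)+\tc\,r_k$ — exactly the $S^2_{k1}+S^2_{k2}$ decomposition of the paper — together with Plancherel for \eqref{eq:s2b}; the commutator is handled by differentiating the amplitude (the paper's integration-by-parts in $y_1$), and \eqref{eq:s21a} then follows from \eqref{eq:s2a} and \eqref{eq:s2-pdy}. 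The only small inaccuracy is the aside that $e^{az}q_c\in L^2$ ``uses $a<1$'': in fact $a\in(0,2)$ suffices for all admissible $q_c$, since $\pd_c^m\varphi_c$ decays like $e^{-2|z|}$ near $c=2$ and the antiderivative $\pd_z^{-1}\pd_c^m\varphi_c=-\int_z^\infty\pd_c^m\varphi_c$ is bounded as $z\to-\infty$ and exponentially small as $z\to+\infty$, so this extra restriction is unnecessary (and also note the paper's \eqref{eq:exp-localized} is for $\int_{-\infty}^z$, which is not the same as the $\pd_z^{-1}$ appearing in the definition of $q_c$).
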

\begin{proof}
By the definition of $g_{k2}^*$,
$$\sup_{c\in[2-\delta,2+\delta]}\sup_{|\eta| \le \eta_0}
\left|\int_\R g_{k2}^*(z,\eta,c)dz\right| \lesssim 
\sup_{c\in[2-\delta,2+\delta]}\left(\|e^{az}q_c\|_{L^2_z}+\|e^{az}\pd_cq_c\|_{L^2_z}\right)
\,.$$
Since
$$\mF_y(S_k^2[q_c]f)(t,\eta)=\int dy e^{-iy\eta} f(y)\tc(t,y)
\frac{\mathbf{1}_{[-\eta_0,\eta_0]}(\eta)}{\sqrt{2\pi}}
\int dz\, \overline{g_{k2}^*(z,\eta,c(t,y))}\,,$$
we have 
\begin{align*}
\|S^2_k[q_c](f)(t,\cdot)\|_{Y_1} =& 
\left\|\mF_y(S_k^2[q_c](f))(t,\eta)\right\|_{L^\infty[-\eta_0,\eta_0]}
\\ \lesssim & 
\sup_{c\in[2-\delta,2+\delta]}\left(\|e^{az}q_c\|_{L^2_z}+\|e^{az}\pd_cq_c\|_{L^2_z}\right)
\int \left|f(y)\tc(t,y)\right|dy
\\ \lesssim & 
\sup_{c\in[2-\delta,2+\delta]}\left(\|e^{az}q_c\|_{L^2_z}+\|e^{az}\pd_cq_c\|_{L^2_z}\right)
\|f\|_{L^2}\|\tc\|_Y\,.
\end{align*}

Next, we will prove \eqref{eq:s2b}. Let
\begin{align*}
& S_{k1}^2[q_c](f)(t,y)=\frac{1}{2\pi} \int_{-\eta_0}^{\eta_0}\int_{\R^2}
f(y_1)\tc(t,y_1)\overline{g_{k2}^*(z,\eta,2)}e^{i(y-y_1)\eta}dy_1dzd\eta\,,\\
& S_{k2}^2[q_c](f)(t,y)=\frac{1}{2\pi} \int_{-\eta_0}^{\eta_0}\int_{\R^2}
f(y_1)\tc(t,y_1)^2\overline{g_{k4}^*(z,\eta,c(t,y_1))}
e^{i(y-y_1)\eta}dy_1dzd\eta\,,
\end{align*}
where 
$$g_{k4}^*(z,\eta,c)=\frac{g_{k2}^*(z,\eta,c)-g_{k2}^*(z,\eta,2)}{c-2}\,.$$
Then $S^2_k[q_c]= S^2_{k1}[q_c]+S^2_{k2}[q_c]$ and we can prove
\begin{equation}
  \label{eq:s21s22}
\begin{split}
& \|S^2_{k1}[q_c]f(t,\cdot)\|_Y\lesssim \sum_{0\le i\le 2}\sup_{c\in[2-\delta,2+\delta]}
\|e^{az}\pd_c^iq_c\|_{L^2_z}\|\tc\|_{L^\infty}\|f\|_{L^2}\,,\\
& \|S^2_{k2}[q_c]f(t,\cdot)\|_{Y_1}\lesssim \sum_{0\le i\le 2}\sup_{c\in[2-\delta,2+\delta]}
\|e^{az}\pd_c^iq_c\|_{L^2_z}\|\tc\|_{L^4}^2\|f\|_{L^2}\,,
\end{split}  
\end{equation}
in exactly the same way as \eqref{eq:s1a} and \eqref{eq:s2a}.
Since 
$$ \|S^2_k[q_c]f(t,\cdot)\|_Y\lesssim  \|S^2_{k1}[q_c]f(t,\cdot)\|_Y+
\|S^2_{k1}[q_c]f(t,\cdot)\|_{Y_1}\,,$$
\eqref{eq:s2b} follows from \eqref{eq:s21s22}.
\par
Now we will show \eqref{eq:s2-pdy}.
Noting that
\begin{align*}
 \mF_y(\left[\pd_y\,,\, S^2_{k1}[q_c]\right]f)(t,\eta)=&
\frac{\mathbf{1}_{[-\eta_0,\eta_0]}(\eta)}
{\sqrt{2\pi}}\int_{\R}f(y)c_y(t,y)e^{-iy\eta}\,dy
\int_{\R}\overline{g_{k2}^*(z,\eta,2)}\,dz\,,\\
\mF_y(\left[\pd_y\,,\,S^2_{k2}[q_c]\right]f)(t,\eta)=&
\frac{\mathbf{1}_{[-\eta_0,\eta_0]}(\eta)}{\sqrt{2\pi}}\int_{\R^2}f(y)\pd_y
\left(\tc(t,y)^2\overline{g_{k4}^*(z,\eta,c(t,y))}\right)
\\ & \hskip6cm \times e^{-iy\eta}\,dzdy\,,
\end{align*}
we can prove \eqref{eq:s2-pdy} in the same way as \eqref{eq:s2a}. 
Eq.~\eqref{eq:s21a} immediately follows from \eqref{eq:s2a} and
\eqref{eq:s2-pdy}.
Thus we complete the proof.
\end{proof}

Next we will estimate the operator norm of $S^3[p](f)$.
\begin{claim}
\label{cl:S3}
There exist positive constants $\eta_1$ and $C$
such that for $\eta_0\in(0,\eta_1]$, $k=1$, $2$, $t\ge0$ and $f\in L^2(\R)$,
\begin{align}
\label{eq:s31a}
& \|S^3_k[p](f)(t,\cdot)\|_Y\le Ce^{-a(4t+L)}\|e^{az}p\|_{L^2}\|\wP_1f\|_{Y}\,,\\
\label{eq:s31b}
& \|S^3_k[p](f)(t,\cdot)\|_{Y_1}\le Ce^{-a(4t+L)}\|e^{az}p\|_{L^2}\|\wP_1f\|_{Y_1}\,.
\end{align}
Moreover,
\begin{equation}
  \label{eq:[pd,s31]}
[\pd_y,S_{k1}^3[p]]=0\,.
\end{equation}
\end{claim}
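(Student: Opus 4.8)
\textbf{Proof proposal for Claim~\ref{cl:S3}.}
The plan is to proceed exactly as in the proof of Claim~\ref{cl:S1}, by writing down the $y$-Fourier transform of $S^3_k[p](f)$ and reading off a pointwise-in-$\eta$ bound for the multiplier. First I would compute, directly from the definition of $S^3_k[p]$,
\begin{equation*}
\mF_y(S^3_k[p](f))(t,\eta)=\mathbf{1}_{[-\eta_0,\eta_0]}(\eta)\,\hat{f}(\eta)
\int_\R p(z+4t+L)\overline{g_k^*(z,\eta)}\,dz\,.
\end{equation*}
Since the multiplier depends on $\eta$ but not on a convolution structure in $y$, the commutator identity $[\pd_y,S^3_{k1}[p]]=i\mF_\eta^{-1}[\eta,\mF_y(S^3_{k1}[p]f)(t,\eta)]=0$ follows immediately, giving \eqref{eq:[pd,s31]}; here I note that $S^3_{k1}[p]$ refers to the component of $S^3_k[p]$ with the weight frozen at $c=2$, in parallel with the decomposition used for $S^2_k$, and since $g^*_k(z,\eta)$ carries no $c$-dependence, $S^3_k[p]=S^3_{k1}[p]$, so the statement is about $S^3_k[p]$ itself.

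Next I would estimate the integral kernel. By Claim~\ref{cl:gk-approx} (and the explicit exponential decay of $g_k^*(z,\eta)$ as $z\to-\infty$ encoded in the formula $g_k^*(x,\eta)=\pd_x(e^{\beta(-\eta)x}\sech x)$, which for $\eta$ small behaves like $e^{\nu(\eta)z-|z|}$, hence like $e^{-|z|}$ up to the small factor $e^{\nu(\eta_0)z_+}$), we have $\|e^{-az}g_k^*(\cdot,\eta)\|_{L^2}\lesssim 1$ uniformly for $\eta\in[-\eta_0,\eta_0]$ once $\eta_0$ is small enough that $\nu(\eta_0)<a$. Therefore, by Cauchy--Schwarz and the substitution $z\mapsto z-4t-L$,
\begin{align*}
\left|\int_\R p(z+4t+L)\overline{g_k^*(z,\eta)}\,dz\right|
&=\left|\int_\R e^{-a(z-4t-L)}p(z)\cdot e^{a(z-4t-L)}\overline{g_k^*(z-4t-L,\eta)}\,dz\right|\\
&\lesssim e^{-a(4t+L)}\|e^{az}p\|_{L^2}\,\|e^{az}g_k^*(\cdot,\eta)\|_{L^2}
\lesssim e^{-a(4t+L)}\|e^{az}p\|_{L^2}\,,
\end{align*}
where in the last step I used that $g_k^*$ is exponentially localized as $z\to-\infty$ so that shifting to the right by $4t+L$ produces the gain $e^{-a(4t+L)}$ and the remaining $L^2_z$-norm of the shifted function with weight $e^{az}$ stays bounded. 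Then \eqref{eq:s31a} follows by Plancherel applied to $\hat{f}(\eta)$ restricted to $[-\eta_0,\eta_0]$, and \eqref{eq:s31b} follows by taking the $L^\infty_\eta$-norm instead.

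The only mildly delicate point is bookkeeping the weight in the shifted integral: $g_k^*(z,\eta)\sim e^{(\beta(-\eta)+1)z}$ and $e^{(\beta(-\eta)-1)z}$ as $z\to-\infty$, so $e^{az}g_k^*(z,\eta)$ is integrable against an $e^{az}$-weighted function only because $p\in C_0^\infty$ and hence $e^{az}p$ is bounded with compact support; one should be careful that after the shift the support of $p(\cdot)$ is fixed while $g_k^*(z-4t-L,\eta)$ is evaluated on that fixed window, where it is of size $e^{(1+\nu(-\eta))(z-4t-L)}$, i.e.\ $\lesssim e^{-(1-\nu(\eta_0))(4t+L)}$, and since $1-\nu(\eta_0)>a$ for $\eta_0$ small this is even stronger than the claimed $e^{-a(4t+L)}$. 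So the main (very minor) obstacle is just choosing $\eta_1$ small enough that $\nu(\eta_1)<a$ and then verifying the uniform bound $\sup_{|\eta|\le\eta_0}\|e^{az}g_k^*(\cdot-4t-L,\eta)\|_{L^2(\supp p)}\lesssim e^{-a(4t+L)}$; everything else is a verbatim repeat of Claim~\ref{cl:S1}.
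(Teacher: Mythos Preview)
Your approach is exactly the paper's: compute the Fourier transform, read off the multiplier, and bound the $z$-integral by Cauchy--Schwarz with an exponential weight. The commutator identity and the identification $S^3_{k1}=S^3_k$ are fine.

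However, your weight split is backwards. You write the integrand as $e^{-a(z-4t-L)}p(z)\cdot e^{a(z-4t-L)}\overline{g_k^*(z-4t-L,\eta)}$ and arrive at a factor $\|e^{az}g_k^*(\cdot,\eta)\|_{L^2}$, which is \emph{infinite}: $g_k^*(z,\eta)\sim e^{\nu(\eta)z}$ as $z\to+\infty$, so $e^{az}g_k^*$ grows. Your last paragraph patches this by invoking the compact support of $p$, but that is not assumed in the claim (only $e^{az}p\in L^2$), and it obscures the point. The clean split---and the one the paper uses---is done \emph{before} substituting: write
\[
\int_\R p(z+4t+L)\overline{g_k^*(z,\eta)}\,dz
=\int_\R \bigl[e^{az}p(z+4t+L)\bigr]\bigl[e^{-az}\overline{g_k^*(z,\eta)}\bigr]\,dz,
\]
so Cauchy--Schwarz gives $\|e^{az}p(\cdot+4t+L)\|_{L^2}\,\|e^{-az}g_k^*(\cdot,\eta)\|_{L^2}=e^{-a(4t+L)}\|e^{az}p\|_{L^2}\,\|e^{-az}g_k^*(\cdot,\eta)\|_{L^2}$, and $\sup_{|\eta|\le\eta_0}\|e^{-az}g_k^*(\cdot,\eta)\|_{L^2}<\infty$ once $\nu(\eta_0)<a$, by Claim~\ref{cl:gk-approx}. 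With this correction your argument is complete and identical to the paper's; the compact-support detour is unnecessary.
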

\begin{proof}
The Fourier transform of $S^3_kf$ is
\begin{equation}
  \label{eq:F-S3k}
\mF_y(S_k^3f)(t,\eta)=\mathbf{1}_{[-\eta_0,\eta_0]}(\eta)
\hat{f}(\eta)\int_\R p(z+4t+L)\overline{g_k^*(z,\eta)}\, dz\,.
\end{equation}
By Claim~\ref{cl:gk-approx}, 
\begin{equation}
\label{eq:s311a}
\begin{split}
  \left|\int p(z+4t+L)\overline{g_k^*(z,\eta)}dz\right|
\le & e^{-a(4t+L)}\|e^{az}p(z)\|_{L^2_z}\sup_{|\eta|\le \eta_0}
\|e^{-az}g_k^*(z,\eta)\|_{L^2_z}
\\ \lesssim & e^{-a(4t+L)}\|e^{az}p(z)\|_{L^2}\,.
\end{split}
\end{equation}
Combining \eqref{eq:F-S3k} and \eqref{eq:s311a}, we immediately have 
\eqref{eq:s31a} and \eqref{eq:s31b}. 
Eq.~\eqref{eq:[pd,s31]} clearly follows from the definition
of $S^3_k$. Thus we complete the proof.
\end{proof}
\begin{claim}
\label{cl:S4}
There exist positive constants $\eta_1$, $\delta$ and $C$ such that
if $\eta_0\in(0,\eta_1]$ and $M_1(T)\le \delta$,
then for $k=1$, $2$, $t\in[0,T]$ and $f\in L^2$,
  \begin{gather}
\label{eq:s33}
\|S^4_k[p](f)(t,\cdot)\|_{Y_1} \le Ce^{-a(4t+L)}\|e^{az}p\|_{L^2}
\|\tc\|_Y\|f\|_{L^2}\,,\\
\label{eq:[pdy,S4]}
\|[\pd_y,S^4_{k2}[p]]f(t,\cdot)\|_{Y_1} \le Ce^{-a(4t+L)}\|e^{az}p\|_{L^2}
\|c_y\|_Y\|f\|_{L^2}\,.
\end{gather}
\end{claim}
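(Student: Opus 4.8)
The plan is to follow the pattern of the proofs of Claims~\ref{cl:S2} and \ref{cl:S3}: pass to the Fourier transform in the transverse variable, estimate the inner integral over $z$ by Cauchy--Schwarz using the uniform exponential localization of the adjoint resonant modes, extract the gain $e^{-a(4t+L)}$ from the translation of $p$, and close with H\"older's inequality in $y_1$. As in the proof of Claim~\ref{cl:S2} I would split $S^4_k[p]=S^4_{k1}[p]+S^4_{k2}[p]$, where $S^4_{k1}[p]$ is obtained from $S^4_k[p]$ by freezing the amplitude at $c=2$ inside $g_{k3}^*$ (so that its $y$-Fourier symbol factors through $\widehat{f\tc}$), and $S^4_{k2}[p]$ carries the remainder, which contains a prefactor $\tc(t,y_1)^2$ and the second-order difference quotient $g_{k5}^*(z,\eta,c):=\{g_{k3}^*(z,\eta,c)-g_{k3}^*(z,\eta,2)\}/(c-2)$, the exact analogue of $g_{k4}^*$. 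The basic input is that, by Claim~\ref{cl:gk-approx} and \eqref{eq:gk-bound} together with the Hadamard integral representation of difference quotients, the functions $g_{k3}^*(\cdot,\eta,c)$, $g_{k5}^*(\cdot,\eta,c)$ and $\pd_cg_{k5}^*(\cdot,\eta,c)$ are bounded in $L^2_{-a}(\R)$ uniformly for $c\in[2-\delta,2+\delta]$ and $|\eta|\le\eta_1$.

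For \eqref{eq:s33}, write
\begin{equation*}
\mF_y\bigl(S^4_k[p](f)\bigr)(t,\eta)=\frac{\mathbf{1}_{[-\eta_0,\eta_0]}(\eta)}{\sqrt{2\pi}}\int_\R f(y_1)\tc(t,y_1)e^{-iy_1\eta}\Bigl(\int_\R p(z+4t+L)\,\overline{g_{k3}^*(z,\eta,c(t,y_1))}\,dz\Bigr)dy_1 .
\end{equation*}
The substitution $w=z+4t+L$ gives $\|e^{az}p(z+4t+L)\|_{L^2_z}=e^{-a(4t+L)}\|e^{az}p\|_{L^2}$, so by Cauchy--Schwarz the inner $z$-integral is $O(e^{-a(4t+L)}\|e^{az}p\|_{L^2})$ uniformly in $\eta\in[-\eta_0,\eta_0]$ and $y_1\in\R$; then $\int_\R|f(y_1)\tc(t,y_1)|\,dy_1\le\|f\|_{L^2}\|\tc(t,\cdot)\|_{L^2}=\|f\|_{L^2}\|\tc\|_Y$. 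Taking the supremum over $|\eta|\le\eta_0$ yields the $Y_1$-estimate; for the $S^4_{k2}[p]$ part the extra factor $\tc(t,y_1)$ in the prefactor only contributes a harmless constant since $\|\tc\|_{L^\infty}\lesssim\|\tc\|_Y\le\delta$.

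For \eqref{eq:[pdy,S4]}, note that the only dependence of $S^4_{k2}[p]$ on the transverse variables is through the oscillatory factor $e^{i(y-y_1)\eta}$ and through $c(t,y_1)$; integrating by parts in $y_1$ (the derivative of $e^{i(y-y_1)\eta}$ contributes $-i\eta$, which cancels the term coming from $\pd_y$ acting on $e^{i(y-y_1)\eta}$ in $\pd_yS^4_{k2}[p]f$) shows that $[\pd_y,S^4_{k2}[p]]f$ is the same triple integral with $\pd_{y_1}$ falling on $\tc(t,y_1)^2\,\overline{g_{k5}^*(z,\eta,c(t,y_1))}$, i.e.\ on $2\tc(t,y_1)c_y(t,y_1)\overline{g_{k5}^*}+\tc(t,y_1)^2c_y(t,y_1)\overline{\pd_cg_{k5}^*}$. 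Repeating the estimate of the previous paragraph and using $\|\tc\|_{L^\infty}\lesssim\delta$ to absorb the powers of $\tc$, one obtains
\begin{equation*}
\bigl|\mF_y\bigl([\pd_y,S^4_{k2}[p]]f\bigr)(t,\eta)\bigr|\lesssim e^{-a(4t+L)}\|e^{az}p\|_{L^2}\int_\R|f(y_1)\,c_y(t,y_1)|\,dy_1\lesssim e^{-a(4t+L)}\|e^{az}p\|_{L^2}\|c_y\|_Y\|f\|_{L^2}
\end{equation*}
uniformly in $|\eta|\le\eta_0$, which is \eqref{eq:[pdy,S4]}. The proof involves no genuine obstacle; the only point worth isolating is the uniform $L^2_{-a}$-control of the difference quotients $g_{k3}^*$, $g_{k5}^*$ and of $\pd_cg_{k5}^*$, which is immediate from \eqref{eq:gk-bound} once these are written as averages of $c$-derivatives of $g_k^*$.
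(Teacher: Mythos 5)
Your argument is correct and the core mechanics---Fourier transform in $y$, Cauchy--Schwarz in $z$ against the uniformly bounded $L^2_{-a}$ kernel, translation $z\mapsto z+4t+L$ to extract the factor $e^{-a(4t+L)}$, Cauchy--Schwarz in $y_1$, and absorption of $\|\tc\|_{L^\infty}$ powers into the constant---are exactly what the paper uses. The one structural difference is your explicit decomposition $S^4_k=S^4_{k1}+S^4_{k2}$ modeled on the proof of Claim~\ref{cl:S2}, with the second-order quotient $g_{k5}^*$; the paper does not perform this split, and instead computes the commutator $[\pd_y,S^4_{k2}[p]]$ directly on what is effectively the full operator $S^4_k[p]$ (the subscript ``$2$'' in the paper's statement appears to be a notational artifact, and the paper's commutator formula has the derivative falling on $c(t,y_1)\overline{g_{k3}^*(z,\eta,c(t,y_1))}$, not on $\tc^2\overline{g_{k5}^*}$). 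Your reading is a reasonable one and your decomposition is harmless: after integration by parts in $y_1$, your commutator produces factors $2\tc c_y\overline{g_{k5}^*}+\tc^2c_y\overline{\pd_cg_{k5}^*}$, and since $\|\tc\|_{L^\infty}\lesssim\delta$ you recover the same bound $C e^{-a(4t+L)}\|e^{az}p\|_{L^2}\|c_y\|_Y\|f\|_{L^2}$; the paper's direct computation produces $c_{y_1}\overline{g_{k3}^*}+\tc c_{y_1}\overline{\pd_cg_{k3}^*}$ and the same bound. Either way the proof closes, and the only genuine input is the uniform $L^2_{-a}$ control of the difference quotients, which you correctly isolate.
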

\begin{proof}
Since
\begin{align*}
&\mF_y(S^4_k[p](f))(t,\eta)\\=&
\frac{\mathbf{1}_{[-\eta_0,\eta_0]}(\eta)}{\sqrt{2\pi}}
\int_{\R^2} f(y)\tc(t,y)p(z+4t+L)\overline{g_{k3}^*(z,\eta,c(t,y)}
e^{-iy\eta}dzdy\,,  
\end{align*}
 we have
\begin{align*}
\|S_{k2}^4[p](f)(t,\cdot)\|_{Y_1} 
\lesssim & \|f\|_{L^2}\|\tc\|_{L^2}e^{-a(4t+L)}\|e^{az}p\|_{L^2}
 \sup_{\substack{c\in [2-\delta,2+\delta],\\ \eta\in[-\eta_0,\eta_0]}}
\|e^{-az}g_{k3}^*(z,\eta,c)\|_{L^2_z}
\\ \lesssim & e^{-a(4t+L)}\|f\|_{L^2}\|\tc(t)\|_Y\,.
\end{align*}
Noting that
\begin{align*}
\left[\pd_y,S^4_{k2}[p]\right](f)=\frac{1}{2\pi}\int_{-\eta_0}^{\eta_0}\int_{\R^2}
 f(y_1)p(z+4t+L)\pd_{y_1}\{c(t,y_1)\overline{g_{k3}^* (z,\eta,c(t,y_1))}\}&
\\ \times  e^{i(y-y_1)\eta}
dy_1dzd\eta\,, &
\end{align*}
we can prove \eqref{eq:[pdy,S4]} in the same way as \eqref{eq:s33}.
Thus we complete the proof.
\end{proof}

\begin{claim}
  \label{cl:S5-S6}
There exist positive constants $\eta_1$, $\delta$ and $C$ such that
if $\eta_0\in(0,\eta_1]$ and $\bM_1(T)\le\delta$,
then for $k=1$, $2$ and $t\in[0,T]$,
\begin{align*}
& \|S^5_kf\|_{Y_1}+\|S^6_kf\|_{Y_1}\le C\|v(t,\cdot)\|_X\|f\|_{L^2}\,.  
\end{align*}
\end{claim}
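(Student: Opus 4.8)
The plan is to bound $S^5_k$ and $S^6_k$ by the same routine estimates already used for $S^1_k,\dots,S^4_k$, the only new feature being that the profile factor is now the dynamical variable $v(t,z,y)$ rather than an exponentially localized soliton-type function $q_c$ or a compactly supported $p$. First I would record the Fourier-side formulas
\begin{align*}
\mF_y(S^5_kf)(t,\eta)&=\frac{\mathbf{1}_{[-\eta_0,\eta_0]}(\eta)}{\sqrt{2\pi}}
\int_{\R^2}v(t,z,y)f(y)\overline{\pd_cg_k^*(z,\eta,c(t,y))}e^{-iy\eta}\,dzdy\,,\\
\mF_y(S^6_kf)(t,\eta)&=-\frac{\mathbf{1}_{[-\eta_0,\eta_0]}(\eta)}{\sqrt{2\pi}}
\int_{\R^2}v(t,z,y)f(y)\overline{\pd_zg_k^*(z,\eta,c(t,y))}e^{-iy\eta}\,dzdy\,,
\end{align*}
which follow by taking the $y$-Fourier transform termwise in the defining double integrals.

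Next I would estimate the inner $z$-integral uniformly in $\eta$ and $y$. Since $c(t,y)=2+\tc(t,y)$ with $\sup_y|\tc(t,y)|\lesssim \|\tc(t)\|_Y\le C\bM_1(T)\le C\delta$ small, Claim~\ref{cl:gk-approx} (together with the scaling relations $g_k^*(z,\eta,c)=c^{j}g_k^*(\sqrt{c/2}\,z,\eta)$ defining the $c$-dependence) gives
$$\sup_{c\in[2-\delta,2+\delta]}\ \sup_{|\eta|\le\eta_0}\ \|e^{-az}\pd_z^m\pd_c^jg_k^*(\cdot,\eta,c)\|_{L^2_{-a}(\R)}<\infty\quad\text{for }j,m\ge0,$$
exactly as in \eqref{eq:gk-bound}. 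Hence by Cauchy--Schwarz in $z$,
$$\left|\int_\R v(t,z,y)\overline{\pd_cg_k^*(z,\eta,c(t,y))}\,dz\right|
\le \|v(t,\cdot,y)\|_{L^2_a(\R_z)}\,\|e^{-az}\pd_cg_k^*(\cdot,\eta,c(t,y))\|_{L^2(\R_z)}
\lesssim \|v(t,\cdot,y)\|_{L^2_a(\R_z)},$$
and the same bound holds with $\pd_cg_k^*$ replaced by $\pd_zg_k^*$. Plugging this into the Fourier formulas and using Cauchy--Schwarz in $y$,
$$\|S^j_kf\|_{Y_1}=\|\mF_y(S^j_kf)(t,\cdot)\|_{L^\infty(-\eta_0,\eta_0)}
\lesssim \int_\R |f(y)|\,\|v(t,\cdot,y)\|_{L^2_a(\R_z)}\,dy
\le \|f\|_{L^2(\R_y)}\,\Big\|\|v(t,\cdot,y)\|_{L^2_a(\R_z)}\Big\|_{L^2(\R_y)}
=\|f\|_{L^2}\|v(t)\|_X$$
for $j=5,6$, which is precisely the asserted inequality (the constant $C$ being the supremum of the $\eta$- and $c$-uniform $L^2_{-a}$-norms of $\pd_cg_k^*$ and $\pd_zg_k^*$). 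One should first establish the inequality for $v(t)\in C_0^\infty(\R^2)$ and then pass to general $v(t)\in X$ by density, exactly as in the proof of Lemma~\ref{lem:F_k}.

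I do not expect a genuine obstacle here: the statement is a boundedness estimate in the $Y_1$ (i.e.\ $\mF^{-1}L^\infty$) norm, for which it suffices to control the inner integral by Cauchy--Schwarz and then the outer integral by Cauchy--Schwarz again, and all the requisite uniform bounds on $\pd_z^m\pd_c^jg_k^*$ in $L^2_{-a}$ are already available from Claim~\ref{cl:gk-approx}/\eqref{eq:gk-bound}. The only point requiring a little care is the smallness hypothesis $\bM_1(T)\le\delta$: it is used exactly to keep $c(t,y)$ inside the interval $[2-\delta,2+\delta]$ on which those supremum bounds are finite. Otherwise the proof is a verbatim repeat of the arguments for $S^2_k$ and $S^4_k$, and I would simply write ``this follows in exactly the same way as Claims~\ref{cl:S2} and \ref{cl:S4}'' after displaying the two Fourier formulas and the single Cauchy--Schwarz chain above.
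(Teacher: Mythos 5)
Your argument is correct and essentially identical to the paper's proof: both apply Cauchy--Schwarz in $z$ against the uniform $L^2_{-a}$-bound on $\pd_zg_k^*$ and $\pd_cg_k^*$ (from Claim~\ref{cl:gk-approx}/\eqref{eq:gk-bound}), then Cauchy--Schwarz in $y$ to produce $\|f\|_{L^2}\|v(t)\|_X$, with the smallness of $\bM_1(T)$ used only to keep $c(t,y)$ in $[2-\delta,2+\delta]$. The only blemish is a notational slip in your first display where $\|e^{-az}\cdot\|_{L^2_{-a}(\R)}$ double-counts the weight (the space $L^2_{-a}$ already carries $e^{-az}$), but you write the correct quantity $\|e^{-az}\pd_cg_k^*\|_{L^2(\R_z)}$ in the next line.
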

  \begin{proof}
Using the Schwarz inequality, we have
\begin{align*}
\|S^6_kf\|_{Y_1}=& \sup_{|\eta|\le \eta_0}\frac{1}{\sqrt{2\pi}} \left|
\int_{\R^2} v(t,z,y)f(y)\overline{\pd_zg_k^*(z,\eta,c(t,y))}e^{-iy\eta}\,dzdy\right|
\\ \lesssim & \sup_{c\in [2-\delta,2+\delta],\,\,\eta\in[-\eta_0,\eta_0]}
\|e^{-az}\pd_zg_k^*(z,\eta,c)\|_{L^2_z}
\|v(t)\|_X\|f\|_{L^2_y}\,.
\end{align*}  
Since $\|e^{-az}\pd_zg_k^*(z,\eta,c)\|_{L^2_z}$ is bounded for
$c\in(1,3)$ and $\eta\in[-\eta_0,\eta_0]$, we have
$$\|S^6_kf\|_{Y_1} \lesssim \|v(t,\cdot)\|_X\|f\|_{L^2}\,.$$
We can estimate $S^5_k$ in exactly the same way.
Thus we complete the proof.
\end{proof}

Next, we will estimate operator norms of $\wC_k$ $(k=1$, $2)$.
\begin{claim}
\label{cl:T1-T2} 
There exist positive constants $\delta$ and $C$ such that
if $\sup_{t\in[0,T]}\|\tc(t)\|_Y \le \delta$, then for
$k=1$, $2$ and $t\in[0,T]$,
\begin{gather*}
\|\cC_kf\|_Y \le C\|\tc\|_{L^\infty}\|\wP_1f\|_Y\,,\\
\|\cC_kf\|_{Y_1}\le C\|\tc\|_Y\|\wP_1f\|_Y\,.
\end{gather*}
\end{claim}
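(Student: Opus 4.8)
The plan is to reduce Claim~\ref{cl:T1-T2} to two elementary facts: that $\wP_1$ is a contraction on $L^2(\R)=Y$ and, by Remark~\ref{rem:smoothness}, maps $L^1(\R)$ into $Y_1$ with $\|\wP_1\phi\|_{Y_1}\le(2\pi)^{-1/2}\|\phi\|_{L^1}$; and that the multipliers in $\cC_1$ and $\cC_2$ each factor as $\tc$ times a uniformly bounded cofactor. Writing $c(t,y)=2+\tc(t,y)$, one has the pointwise identity $\tfrac12(c^2-4)=\tfrac12\tc(4+\tc)$, and, fixing $\delta<2$, a first-order Taylor expansion gives $(c/2)^{1/2}-1=(1+\tc/2)^{1/2}-1=\tc\,r(\tc)$ with $r(s)=\int_0^1\tfrac12(1+\theta s/2)^{-1/2}\,d\theta$ continuous in $|s|\le\delta$. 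Hence for $k=1,2$ I would write $\cC_kf=\wP_1\{(\tc\,\rho_k)\,\wP_1 f\}$ with $\rho_1=\tfrac12(4+\tc)$ and $\rho_2=r(\tc)$; since $\|\tc\|_{L^\infty}\lesssim\|\tc\|_Y\le\delta$ by Remark~\ref{rem:smoothness}, both $\|\rho_k\|_{L^\infty}$ are bounded by a constant depending only on $\delta$.

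With $g=\wP_1 f\in Y$ and $\|g\|_{L^2}=\|\wP_1 f\|_Y$, the first inequality follows from $\wP_1$ being an $L^2$-contraction:
\[
\|\cC_kf\|_Y=\|\cC_kf\|_{L^2}\le\|\tc\,\rho_k\,g\|_{L^2}\le\|\tc\|_{L^\infty}\|\rho_k\|_{L^\infty}\|g\|_{L^2}\le C\|\tc\|_{L^\infty}\|\wP_1 f\|_Y .
\]
For the second inequality I would use the $L^1\to Y_1$ bound of Remark~\ref{rem:smoothness} together with the Cauchy--Schwarz inequality, peeling off the factor $\tc$:
\[
\|\cC_kf\|_{Y_1}\le(2\pi)^{-1/2}\|\tc\,\rho_k\,g\|_{L^1}\le(2\pi)^{-1/2}\|\rho_k\|_{L^\infty}\|\tc\|_{L^2}\|g\|_{L^2}\le C\|\tc\|_Y\|\wP_1 f\|_Y ,
\]
which is exactly the assertion.

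I do not expect a genuine obstacle here; the only points needing care are fixing $\delta<2$ so that the square root defining $\rho_2$ stays smooth with $\|\rho_2\|_{L^\infty}$ controlled uniformly, and checking that all implicit constants are $t$-independent, which is immediate from the hypothesis $\sup_{t\in[0,T]}\|\tc(t)\|_Y\le\delta$ and the embedding $\|\tc(t)\|_{L^\infty}\lesssim\|\tc(t)\|_Y$ valid for $\tc(t)\in Y$. This is the same template already used for $S^1_k$ and $S^3_k$ in Claims~\ref{cl:S1} and \ref{cl:S3}.
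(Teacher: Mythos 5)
Your proof is correct and uses essentially the same argument as the paper: bound the multiplier $c^2-4$ (resp. $(c/2)^{1/2}-1$) pointwise by a constant times $\tc$, then invoke the $L^2$-contraction of $\wP_1$ for the $Y$ estimate and the $L^1\to Y_1$ bound together with Cauchy--Schwarz for the $Y_1$ estimate. The paper phrases the $Y_1$ step on the Fourier side (convolution plus Young's inequality) rather than peeling off $\tc$ as you do, but the two are identical up to Plancherel.
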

\begin{proof}
Since $\|\tc\|_{L^\infty}\lesssim \|\tc\|_Y$
by Remark~\ref{rem:smoothness}, it follows that
$|c^2-4|\le (2+O(\delta))|\tc|$. Thus we have
$$\|\cC_1f\|_Y\le \frac12
\left\|c^2-4\right\|_{L^\infty}\|\wP_1f\|_{L^2}
\lesssim \|\tc\|_{L^\infty}\|\wP_1f\|_Y\,,
$$
\begin{align*}
\|\cC_1f\|_{Y_1}=&\frac12\left\|\mF\left(c^2-4\right)
\ast\mF(\wP_1f)\right\|_{L^\infty([-\eta_0,\eta_0])}
\\ \lesssim & \left\|c^2-4\right\|_{L^2}\|\wP_1f\|_{L^2}
\lesssim \|\tc\|_Y\|\wP_1f\|_Y\,.
\end{align*}
We can estimate $\cC_2$ in exactly the same way.
Thus we complete the proof.
\end{proof}
\begin{claim}
  \label{cl:T-pdy}
There exist positive constants $\delta$ and $C$ such that
if $\sup_{t\in[0,T]}\|\tc(t)\|_Y \le \delta$, then for
$k=1$, $2$ and $t\in[0,T]$,
\begin{gather*}
 \| [\pd_y,\cC_k] f\|_Y\le C\|c_y\|_{L^\infty}\|f\|_{L^2}\,,\\
 \| [\pd_y,\cC_k] f\|_{Y_1}\le C\|c_y\|_Y\|f\|_{L^2}\,.
\end{gather*}
\end{claim}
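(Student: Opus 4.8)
The plan is to reduce Claim~\ref{cl:T-pdy} to Claim~\ref{cl:T1-T2} by computing the commutator $[\pd_y,\cC_k]$ explicitly. First I would note that $\wP_1=\mF_\eta^{-1}\mathbf{1}_{[-\eta_0,\eta_0]}\mF_y$ is a Fourier multiplier in $y$, so $[\pd_y,\wP_1]=0$; hence, writing $\cC_1 f=\frac12\wP_1\bigl\{(c^2-4)\wP_1 f\bigr\}$ and using the Leibniz rule, one obtains the identity
\begin{equation*}
[\pd_y,\cC_1]f=\wP_1\bigl\{(c\,c_y)\,\wP_1 f\bigr\}\,,
\end{equation*}
and similarly, since $\pd_y\{(c/2)^{1/2}-1\}=c_y/(2\sqrt{2c})$,
\begin{equation*}
[\pd_y,\cC_2]f=\wP_1\Bigl\{\frac{c_y}{2\sqrt{2c}}\,\wP_1 f\Bigr\}\,.
\end{equation*}
Thus $[\pd_y,\cC_k]$ is, just like $\cC_k$ itself, a multiplication operator conjugated by $\wP_1$, the multiplier being $c_y$ times a smooth function of $c$.

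Next I would record that the hypothesis $\|\tc(t)\|_Y\le\delta$, together with $\|\tc(t)\|_{L^\infty}\lesssim\|\tc(t)\|_Y$ (Remark~\ref{rem:smoothness}), keeps $c=2+\tc$ in a fixed compact subinterval of $(0,\infty)$, so $\|c\|_{L^\infty}+\bigl\|(2\sqrt{2c})^{-1}\bigr\|_{L^\infty}\le C(\delta)$. Since $c_y=\tc_y$ has the same frequency support as $\tc$, it lies in $Y$ and $\|c_y\|_{L^\infty}\lesssim\|c_y\|_Y=\|c_y\|_{L^2}$, again by Remark~\ref{rem:smoothness}. Consequently the multiplier $m_k$ (with $m_1=c\,c_y$ and $m_2=c_y/(2\sqrt{2c})$) satisfies $\|m_k\|_{L^\infty}\le C\|c_y\|_{L^\infty}$ and $\|m_k\|_{L^2}\le C\|c_y\|_Y$.

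Finally I would run the two estimates exactly as in the proof of Claim~\ref{cl:T1-T2}. Using $\|\wP_1 g\|_{L^2}\le\|g\|_{L^2}$,
\begin{equation*}
\|[\pd_y,\cC_k]f\|_Y\le\|m_k\wP_1 f\|_{L^2}\le\|m_k\|_{L^\infty}\|f\|_{L^2}\le C\|c_y\|_{L^\infty}\|f\|_{L^2}\,;
\end{equation*}
and using $\mF_y(m_k\wP_1 f)=(2\pi)^{-1/2}\,\widehat{m_k}\ast\widehat{\wP_1 f}$ together with the Cauchy--Schwarz bound $\|\widehat{m_k}\ast\widehat{\wP_1 f}\|_{L^\infty}\le\|m_k\|_{L^2}\|f\|_{L^2}$,
\begin{equation*}
\|[\pd_y,\cC_k]f\|_{Y_1}=\bigl\|\mathbf{1}_{[-\eta_0,\eta_0]}\mF_y(m_k\wP_1 f)\bigr\|_{L^\infty}\le C\|m_k\|_{L^2}\|f\|_{L^2}\le C\|c_y\|_Y\|f\|_{L^2}\,.
\end{equation*}
The argument is routine once the commutator identity is in hand; the only point requiring a little care is the algebraic computation of $[\pd_y,\cC_k]$, in particular the use of $[\pd_y,\wP_1]=0$ and the Leibniz rule on the sandwiched multiplication operator, so I do not expect a genuine obstacle here.
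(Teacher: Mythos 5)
Your proposal is correct and follows essentially the same route as the paper: compute $[\pd_y,\cC_k]=\wP_1\,m_k\,\wP_1$ with $m_1=cc_y$ (and $m_2=c_y/(2\sqrt{2c})$) using $[\pd_y,\wP_1]=0$ and the Leibniz rule, then bound the $Y$ norm by $\|m_k\|_{L^\infty}\|f\|_{L^2}$ and the $Y_1$ norm by $\|\widehat{m_k}\ast\widehat{\wP_1f}\|_{L^\infty}\le\|m_k\|_{L^2}\|f\|_{L^2}$. The only difference is presentational: the paper writes out only the $\cC_1$ case and says $\cC_2$ is analogous, while you spell out both commutators and explicitly invoke the lower bound on $c$ from the smallness of $\|\tc\|_Y$.
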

\begin{proof}
Since $[\pd_y,\cC_1]=\wP_1cc_y\wP_1$,
$$\|[\pd_y,\cC_1]f\|_Y\lesssim \|c_y\|_{L^\infty}\|f\|_{L^2}\,,$$
\begin{align*}
  \|[\pd_y,\cC_1]\|_{Y_1} =& 
\left\|\mF\left(cc_y\right)\ast\mF(\wP_1f)
\right\|_{L^\infty([-\eta_0,\eta_0])}
\lesssim  \|c_y\|_Y\|\wP_1f\|_Y\,.
\end{align*}
We can prove the estimate for $[\pd_y,\cC_2]$ in the same way.
Thus we complete the proof.
\end{proof}

\section{Proof of Claims~\ref{cl:[B3,pdy]}, \ref{cl:invB_3} and \ref{cl:invB_4}}
\label{sec:[B3,pdy]}

\begin{proof}[Proof of Claims~\ref{cl:invB_3} and \ref{cl:invB_4}]
Claims~\ref{cl:S1}--\ref{cl:S3} and \ref{cl:(1+wT2)^{-1}} imply that
for $s\in[0,T]$,
\begin{equation}
  \label{eq:bS1-bound}
  \begin{split}
\|\bS_1\|_{B(Y)}\lesssim & \|\wS_1\|_{B(Y)}\|(1+\wC_2)^{-1}\|_{B(Y)}
\\ \lesssim & \sum_{k=1,2}\left(\|S^1_k[\pd_c\varphi_c]\|_{B(Y)}
+\|S^1_k[\varphi_c']\|_{B(Y)}\right)\lesssim 1\,,
  \end{split}
\end{equation}
\begin{equation}
  \label{eq:wS2bound}
  \begin{split}
\|\bS_2\|_{B(Y,Y_1)}\lesssim & \|\wS_2\|_{B(Y,Y_1)}\|(1+\wC_2)^{-1}\|_{B(Y)}
\\ \lesssim & \sum_{k=1,2}\left(\|S^2_k[\pd_c\varphi_c]\|_{B(Y)}
+\|S^2_k[\varphi_c']\|_{B(Y)}\right)
\\ \lesssim & \|\tc\|_Y \lesssim  \bM_1(T)\la s\ra^{-1/4}\,,
  \end{split}
\end{equation}
\begin{equation}
\label{eq:bS3-bound}
  \begin{split}
& \|\bS_3\|_{B(Y)}
\lesssim \sum_{k=1,2}\|S^3_k[\psi]\|_{B(Y)}\lesssim e^{-a(4s+L)}\,,
\\ &
\|\bS_3\|_{B(Y_1)}\lesssim \sum_{k=1,2}\|S^3_k[\psi]\|_{B(Y_1)}\lesssim e^{-a(4s+L)}\,.
  \end{split}  
\end{equation}
By Claims~\ref{cl:S4} and \ref{cl:(1+wT2)^{-1}},
\begin{align*}
\|\bS_4\|_{B(Y,Y_1)}  \lesssim &
\sum_{k=1,2}
\left(\|S^3_k[\psi]((\sqrt{2/c}-1)\cdot)\|_{B(Y,Y_1)}+
\|S^4_k[\psi](c^{-1/2}\cdot)\|_{B(Y,Y_1)}\right)
\\ & +\sum_{k=1,2}
\left\|(S^3_k[\psi']+S^4_k[\psi'])((\sqrt{c}-\sqrt{2})\cdot)\right\|_{B(Y,Y_1)}
\\ \lesssim & 
\sum_{k=1,2}(\|S^3_k[\psi]\|_{B(Y_1)}+\|S^3_k[\psi']\|_{B(Y_1)})\|\tc\|_Y
\\ & +
\sum_{k=1,2}
\left(\|S^4_k[\psi]\|_{B(Y,Y_1)}
+\|S^4_k[\psi']\|_{B(Y,Y_1)}\|\tc\|_{L^\infty}\right)\,.  
\end{align*}
Thus we have
\begin{equation}
 \label{eq:bS4-bound}
\|\bS_4\|_{B(Y,Y_1)}  \lesssim \bM_1(T)\la s\ra^{-1/4}e^{-a(4s+R)}\,.
\end{equation}
By Claims~\ref{cl:S5-S6} and \ref{cl:(1+wT2)^{-1}},
\begin{equation}
 \label{eq:bS5-bound}
\|\bS_5\|_{B(Y,Y_1)}\lesssim\sum_{k=1,2}
\left(\|S^5_k\|_{B(Y,Y_1)}+\|S^6_k\|_{B(Y,Y_1)}\right)
\lesssim  \bM_2(T)\la s\ra^{-3/4}\,.
\end{equation}
Obviously,
\begin{equation}
  \label{eq:pdy-bound}
  \|\pd_y\|_{B(Y)}+\|\pd_y\|_{B(Y_1)}\lesssim \eta_0\,.
\end{equation}
By \eqref{eq:def-B3}, \eqref{eq:T-bound1},
\eqref{eq:bS1-bound}--\eqref{eq:pdy-bound} and the fact that $Y_1\subset Y$,
\begin{align*}
\|B_3-B_1\|_{B(Y)} \le & \|\wC_1\|_{B(Y)}+\eta_0^2\sum_{j=1,2}\|\bS_j\|_{B(Y)}
+\sum_{j=3,4,5}\|\bS_j\|_{B(Y)}
\\  \lesssim & \bM_1(T)+\bM_2(T)+\eta_0^2+e^{-aL}\,.  
\end{align*}
Since $B_1$ is invertible, we see that $\|B_3^{-1}\|_{B(Y)}$ is bounded
for $t\in [0,T]$ if $\|B_3-B_1\|_{B(Y)}$ remains small on $[0,T]$.
We can prove the boundedness of $\|B_3^{-1}\|_{B(Y_1)}$ in the same way.
This completes the proof of Claim~\ref{cl:invB_3}.
\par
Using Claims~\ref{cl:S1} and \ref{cl:S3}, we can prove
\begin{gather}
\label{eq:wS1-new}
\|\wS_1\|_{B(Y)}+\|\wS_1\|_{B(Y_1)}\lesssim 1\,,\\
\label{eq:wS3-new}
\|\wS_3\|_{B(Y)}+\|\wS_3\|_{B(Y_1)}\lesssim e^{-a(4t+L)}
\quad\text{for $t\ge0$,}
\end{gather}
in the same way as \eqref{eq:bS1-bound} and \eqref{eq:bS3-bound}.
Claim~\ref{cl:invB_4} immediately follows from 
\eqref{eq:pdy-bound}, \eqref{eq:wS1-new} and \eqref{eq:wS3-new}.
Thus we complete the proof.
\end{proof}

\begin{proof}[Proof of Claim~\ref{cl:[B3,pdy]}]
In view of \eqref{eq:def-B3},
$$[\pd_y,B_3]=[\pd_y,\wC_1]+\sum_{j=1,2}\pd_y^2[\pd_y,\bS_j]-
\sum_{j=3,4,5}[\pd_y,\bS_j]\,,$$
Now we will estimate each term of the right hand side.
By Claim~\ref{cl:T-pdy} and the definition of $\wC_k$,
\begin{equation}
  \label{eq:[pdy,T]}
\|[\pd_y,\wC_k]\|_{B(Y,Y_1)}\lesssim \bM_1(T)\la s\ra^{-3/4}
\enskip\text{for $k=1$, $2$ and $s\in[0,T]$.}
\end{equation}
\par

Since $[\pd_y,\wS_1]=0$ by \eqref{eq:S1,pd_y}, 
we have $[\pd_y,\bS_1] =\bS_1[\wC_2,\pd_y](1+\wC_2)^{-1}$. Thus by
Claim~\ref{cl:(1+wT2)^{-1}},  \eqref{eq:bS1-bound} and \eqref{eq:[pdy,T]},
\begin{equation}
  \label{eq:[bS1,pdy]}
\begin{split}
\|[\pd_y,\bS_1]\|_{B(Y,Y_1)} 
\lesssim & \|\bS_1\|_{B(Y_1)}\|[\wC_2,\pd_y]\|_{B(Y,Y_1)}\|(1+\wC_2)^{-1}\|_{B(Y)}
\\ \lesssim & 
\bM_1(T)\la s\ra^{-3/4}\quad\text{for $s\in[0,T]$.}
\end{split}  
\end{equation}
Applying Claims~\ref{cl:(1+wT2)^{-1}}, \ref{cl:S2}, \ref{cl:T-pdy}
and \eqref{eq:wS2bound} to
$[\pd_y,\bS_2]=\{[\pd_y,\wS_2]+\bS_2[\wC_2,\pd_y]\}(I+\wC_2)^{-1}$, we obtain
\begin{equation}
  \label{eq:[pdy,wS2]}
  \begin{split}
& \|[\pd_y,\bS_2]\|_{B(Y,Y_1)}\lesssim 
\|[\pd_y,\wS_2]\|_{B(Y,Y_1)}+\|\bS_2\|_{B(Y,Y_1)}\|[\pd_y,\wC_2\|_{B(Y)}
\\ \lesssim &
\sum_{k=1,2}\left(\left\|\left[\pd_y, S^2_k[\pd_c\varphi_c]\right]\right\|_{B(Y,Y_1)}
+\left\|\left[\pd_y, S^2_k[\varphi_c']\right]\right\|_{B(Y,Y_1)}\right)
\\ & + \|\bS_2\|_{B(Y,Y_1)}\|[\pd_y,\cC_2]\|_{B(Y)}
\\\lesssim & \|c_y\|_Y+\|\tc\|_Y\|c_y\|_{L^\infty} \lesssim \bM_1(T)\la s\ra^{-3/4}
\quad\text{for $s\in[0,T]$.}
  \end{split}
\end{equation}
Since $[\pd_y,S^3_k]=0$ by \eqref{eq:[pd,s31]}, we have
$[\pd_y,\bS_3]=\bS_3[\wC_2,\pd_y](I+\wC_2)^{-1}$.
Hence it follows from Claims~\ref{cl:(1+wT2)^{-1}}, \ref{cl:S3},
\eqref{eq:bS3-bound}  and \eqref{eq:[pdy,T]} that
\begin{equation}
  \label{eq:[bS3,pdy]}
\|[\pd_y,\bS_3]\|_{B(Y,Y_1)}\lesssim \|\bS_3\|_{B(Y_1)}\|c_y\|_Y \lesssim
  \bM_1(T)\la s\ra^{-3/4}e^{-a(4s+L)}\,.
\end{equation}
By \eqref{eq:bS4-bound}, \eqref{eq:bS5-bound} and \eqref{eq:pdy-bound},
we have for $s\in[0,T]$, 
\begin{gather}
  \label{eq:[bS4,pdy]}
\|[\pd_y,\bS_4]\|_{B(Y,Y_1)}\lesssim 
\eta_0\bM_1(T)\la s\ra^{-1/4}e^{-a(4s+L)}\,,\\
  \label{eq:[bS5,pdy]}
\|[\pd_y,\bS_5]\|_{B(Y,Y_1)}\lesssim \eta_0\bM_2(T)\la s\ra^{-3/4}\,.
\end{gather}
Combining \eqref{eq:[pdy,T]}--\eqref{eq:[bS5,pdy]},
we obtain Claim~\ref{cl:[B3,pdy]}.
Thus we complete the proof.
\end{proof}

\section{Estimates of $R^k$}
\label{sec:Rk}
\begin{claim}
  \label{cl:R1-R2}
There exist positive constants $\delta$ and $C$ such that if
$\bM_1(T)\le \delta$, then for $t\in[0,T]$,
\begin{equation*}
\|R^2_k(t,\cdot)\|_{Y_1}\le CM_1(T)^2\la t\ra^{-1}\,,\quad
\|\pd_yR^2_k(t,\cdot)\|_{Y_1}\le CM_1(T)^2\la t\ra^{-5/4}\,.
\end{equation*}
\end{claim}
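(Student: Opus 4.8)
The plan is to estimate $R^2_k$ and $\pd_yR^2_k$ termwise, using the mapping properties of $S^1_k$ and $S^2_k$ from Claims~\ref{cl:S1} and \ref{cl:S2} together with the decay encoded in $\bM_1(T)$ and the bandwidth smoothing of $Y$ (Remark~\ref{rem:smoothness}). Recall that
\[
R^2_k=3S^2_k[\varphi_c](x_{yy})-3S^2_k[\pd_z^{-1}\pd_c\varphi_c](c_{yy})
-3\sum_{j=1,2}S^j_k[\pd_z^{-1}\pd_c^2\varphi_c](c_y^2),
\]
and that each symbol $q_c\in\{\varphi_c,\pd_z^{-1}\pd_c\varphi_c,\pd_z^{-1}\pd_c^2\varphi_c\}$ satisfies $\sup_{c\in[2-\delta,2+\delta]}\sum_{0\le i\le 3}\|e^{az}\pd_c^iq_c\|_{L^2}<\infty$ since $a\in(0,2)$: indeed $\varphi_c$ is exponentially localized, while $\pd_z^{-1}\pd_c^m\varphi_c(z)=-\int_z^\infty\pd_c^m\varphi_c$ decays exponentially as $z\to+\infty$ and tends to a constant as $z\to-\infty$, which $e^{az}$ kills. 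Since $\tc$, $c_y$, $\tx$ all lie in $Y$, $\bM_1(T)$ gives, for $s\in[0,T]$,
\[
\|\tc\|_Y\lesssim\bM_1(T)\la s\ra^{-1/4},\quad
\|c_y\|_Y\lesssim\bM_1(T)\la s\ra^{-3/4},\quad
\|x_{yy}\|_{L^2}\lesssim\bM_1(T)\la s\ra^{-3/4},
\]
\[
\|c_{yy}\|_{L^2}+\|x_{yyy}\|_{L^2}\lesssim\bM_1(T)\la s\ra^{-1},
\]
and, by Remark~\ref{rem:smoothness}, $\|c_{yyy}\|_{L^2}\le\eta_0\|c_{yy}\|_{L^2}$ and $\|f\|_{L^\infty}\lesssim\|f\|_Y$ for $f\in Y$. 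Throughout we take $\bM_1(T)$ and $\eta_0$ small enough for Claims~\ref{cl:S1} and \ref{cl:S2} to apply.

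For the first estimate, \eqref{eq:s2a}, \eqref{eq:Y1-L1} and Claim~\ref{cl:S1} yield
\[
\|S^2_k[\varphi_c](x_{yy})\|_{Y_1}\lesssim\|\tc\|_Y\|x_{yy}\|_{L^2}\lesssim\bM_1(T)^2\la t\ra^{-1},
\]
\[
\|S^2_k[\pd_z^{-1}\pd_c\varphi_c](c_{yy})\|_{Y_1}\lesssim\|\tc\|_Y\|c_{yy}\|_{L^2}\lesssim\bM_1(T)^2\la t\ra^{-5/4},
\]
\[
\|S^1_k[\pd_z^{-1}\pd_c^2\varphi_c](c_y^2)\|_{Y_1}\lesssim\|\wP_1(c_y^2)\|_{Y_1}\lesssim\|c_y\|_{L^2}^2\lesssim\bM_1(T)^2\la t\ra^{-3/2},
\]
and $\|S^2_k[\pd_z^{-1}\pd_c^2\varphi_c](c_y^2)\|_{Y_1}\lesssim\|\tc\|_Y\|c_y\|_{L^\infty}\|c_y\|_{L^2}\lesssim\bM_1(T)^2\la t\ra^{-7/4}$. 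The first summand dominates, giving $\|R^2_k\|_{Y_1}\lesssim\bM_1(T)^2\la t\ra^{-1}$.

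For the second estimate we differentiate each summand. For the $S^1_k$-term, $[\pd_y,S^1_k]=0$ by \eqref{eq:S1,pd_y}, so $\pd_yS^1_k[\pd_z^{-1}\pd_c^2\varphi_c](c_y^2)=S^1_k[\pd_z^{-1}\pd_c^2\varphi_c](2c_yc_{yy})$, which by Claim~\ref{cl:S1} and \eqref{eq:Y1-L1} is $\lesssim\|c_yc_{yy}\|_{L^1}\le\|c_y\|_{L^2}\|c_{yy}\|_{L^2}\lesssim\bM_1(T)^2\la t\ra^{-7/4}$. For the three $S^2_k$-terms we use \eqref{eq:s21a}: with $f=x_{yy}$ it gives $\lesssim\|c_y\|_Y\|x_{yy}\|_{L^2}+\|\tc\|_Y\|x_{yyy}\|_{L^2}\lesssim\bM_1(T)^2\la t\ra^{-5/4}$; with $f=c_y^2$ it gives $\lesssim\|c_y\|_Y\|c_y^2\|_{L^2}+\|\tc\|_Y\|c_yc_{yy}\|_{L^2}\lesssim\bM_1(T)^2\la t\ra^{-2}$; and with $f=c_{yy}$ it gives $\lesssim\|c_y\|_Y\|c_{yy}\|_{L^2}+\|\tc\|_Y\|c_{yyy}\|_{L^2}$.

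The delicate point — the place I expect to be the main obstacle — is this last term: differentiating $S^2_k[\pd_z^{-1}\pd_c\varphi_c](c_{yy})$ produces $S^2_k[\pd_z^{-1}\pd_c\varphi_c](c_{yyy})$, and $\|c_{yyy}\|_{L^2}=\|\pd_y^3\tc\|_{L^2}$ is not among the norms controlled by $\bM_1(T)$, so there is an apparent loss of one $y$-derivative. This is recovered from the spectral restriction $\supp\widehat{\tc}\subset[-\eta_0,\eta_0]$: by Remark~\ref{rem:smoothness}, $\|\pd_y^3\tc\|_{L^2}\le\eta_0\|\pd_y^2\tc\|_{L^2}\lesssim\eta_0\bM_1(T)\la t\ra^{-1}$, whence $\|\tc\|_Y\|c_{yyy}\|_{L^2}\lesssim\eta_0\bM_1(T)^2\la t\ra^{-5/4}$. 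Combining this with the estimates above, every summand of $\pd_yR^2_k$ is $\lesssim\bM_1(T)^2\la t\ra^{-5/4}$, which yields $\|\pd_yR^2_k\|_{Y_1}\lesssim\bM_1(T)^2\la t\ra^{-5/4}$ and completes the proof of Claim~\ref{cl:R1-R2}.
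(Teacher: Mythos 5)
Your proof is correct and takes essentially the same route as the paper, which compresses the whole argument into a single display using Claims~\ref{cl:S1}, \ref{cl:S2} and \eqref{eq:Y1-L1} and then dismisses the $\pd_y R^2_k$ estimate with ``we can estimate \dots in the same way.'' Your version usefully makes explicit the one non-routine point — that $\|c_{yyy}\|_{L^2}$ is not directly controlled by $\bM_1(T)$ — and correctly resolves it via the band-limitation $\supp\widehat{\tc}\subset[-\eta_0,\eta_0]$ from Remark~\ref{rem:smoothness}, which is what the paper's terse ``same way'' is tacitly relying on.
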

\begin{proof}
By Claims~\ref{cl:S1}, \ref{cl:S2} and \eqref{eq:Y1-L1},
\begin{align*}
\|R^2_k\|_{Y_1}\lesssim & 
\|\tc\|_Y(\|x_{yy}\|_Y+\|c_{yy}\|_Y)+\|c_y\|_Y^2(1+\|\tc\|_{L^\infty})
\\ \lesssim & \bM_1(T)^2\la s\ra^{-1}\,.
\end{align*}
We can estimate $\|\pd_yR^2_k\|_{Y_1}$ in the same way.
Thus we complete the proof.
\end{proof}

\begin{claim}
  \label{cl:R3}
There exist  positive constants $\delta$ and $C$ such that
if $\bM_1(T)\le \delta$, then for $t\in[0,T]$,
$\|R^3_k(t,\cdot)\|_{Y_1}\le C\la t\ra^{-1/2}e^{-a(4t+L)}\bM_1(T)^2$.
\end{claim}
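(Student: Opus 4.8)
\textbf{Proof proposal for Claim~\ref{cl:R3}.}
The plan is to unwind the definition \eqref{eq:der5} of $R^3_k$ and treat it as a difference of two terms: the ``honest'' contraction of $\ell_{22}+\ell_{23}$ against the $c$-dependent adjoint resonant mode $g_k^*(z,\eta,c(t,y_1))$, minus the linearized surrogate obtained by replacing this mode by $g_k^*(z,\eta)$ and $\ell_{22}+\ell_{23}$ by $\ell_{2,lin}$. Both terms carry the auxiliary function $\psi_{c,L}(z+4t)=2(\sqrt{2c}-2)\psi(z+4t+L)$, which is supported in $|z+4t+L|\le1$ and thus of size $O(e^{-a(4t+L)})$ after multiplication by the weight $e^{-az}$ that the localization of $g_k^*$ provides as $z\to-\infty$. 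So the first thing I would do is record, via \eqref{eq:phi-7}-type bounds and Claim~\ref{cl:gk-approx}, the pointwise estimates $\|e^{-az}g_k^*(\cdot,\eta,c)\|_{L^2_z}\lesssim1$ and $\|e^{-az}\partial_z g_k^*(\cdot,\eta,c)\|_{L^2_z}\lesssim1$ uniformly for $c$ near $2$ and $|\eta|\le\eta_0$, exactly as used in Claims~\ref{cl:S3}--\ref{cl:S4}.

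Next I would split $R^3_k$ into the contributions of $\ell_{22}$ and of $\ell_{23}$ and, within each, into (i) the part where the discrepancy comes from $g_k^*(z,\eta,c)-g_k^*(z,\eta)$ and (ii) the part where it comes from the $\eta$-dependence ($g_k^*(z,\eta)-g_k^*(z,0)$, producing the $\eta^2$ factor that becomes $\partial_y^2$) — these are precisely the mechanisms already packaged into the operators $S^3_k[\psi'], S^4_k[\psi'], S^3_k[\psi], S^4_k[\psi]$ and their $z$-integrated analogues. Concretely, $\ell_{22}=\partial_z^3\tpsi_c-3\partial_z(\tpsi_c^2)+6\partial_z(\varphi_c\tpsi_c)-3x_{yy}\tpsi_c$; the first, third and fourth terms are linear in $\tpsi_c$ and their nonlinear-in-$(\tc,\tx)$ remainders after subtracting $\ell_{2,lin}$ are handled by Claims~\ref{cl:S3} and \ref{cl:S4} (operator norms $\lesssim e^{-a(4t+L)}$, with an extra $\|\tc\|_Y$ from $S^4_k$ or from $\sqrt{2c}-2$), while the genuinely quadratic term $-3\partial_z(\tpsi_c^2)$ is estimated directly: $\|e^{-az}\partial_z(\tpsi_{c(t,\cdot)}^2)\|_{L^2_z}\lesssim e^{-a(4t+L)}\|\tc\|_{L^\infty}$, and then \eqref{eq:Y1-L1} converts the $y$-integral into a product of $Y$-norms, giving $\lesssim e^{-a(4t+L)}\|\tc\|_Y^2\lesssim e^{-a(4t+L)}\bM_1(T)^2\la t\ra^{-1/2}$. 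The term $-3x_{yy}\tpsi_c$ similarly yields $\lesssim e^{-a(4t+L)}\|x_{yy}\|_Y\|\tc\|_Y\lesssim e^{-a(4t+L)}\bM_1(T)^2\la t\ra^{-1}$. For $\ell_{23}=-3c_{yy}\int_z^\infty\partial_c\tpsi_c-3(c_y)^2\int_z^\infty\partial_c^2\tpsi_c$, the antiderivatives $\int_z^\infty\partial_c^m\tpsi_c$ are still supported essentially in $z\lesssim-4t-L$ (they vanish for $z\ge -4t-L+1$), so the same $e^{-a(4t+L)}$ gain applies, producing $\lesssim e^{-a(4t+L)}(\|c_{yy}\|_Y\|\tc\|_Y+\|c_y\|_{L^4}^2)\lesssim e^{-a(4t+L)}\bM_1(T)^2\la t\ra^{-1}$; here one uses $\|c_y\|_{L^4}\lesssim\|c_y\|_Y$ from Remark~\ref{rem:smoothness} and Claim~\ref{cl:b-capprox} to pass between $\tc$ and $b$. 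Collecting all pieces, every summand is $\lesssim e^{-a(4t+L)}\bM_1(T)^2\la t\ra^{-1/2}$, which is the asserted bound.

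The only subtlety — and the step I would be most careful with — is the bookkeeping of the \emph{subtraction} of $\ell_{2,lin}$: one must check that the linear-in-$\tc$ part of $\ell_{22}+\ell_{23}$ against $g_k^*(z,\eta)$ \emph{exactly} reproduces $\ta_k(t,\eta)\tc$ from \eqref{eq:def-ak}, so that $R^3_k$ really is a genuinely quadratic (or $c$-dependent-correction) remainder with no leftover linear term; this is where the algebra $Q'=1-Q^2$, the evenness/oddness of $\varphi$, $\psi$, and the scaling identity \eqref{eq:phi-scaling} are used, together with the expansions $\sqrt{2c}-2=\tc/2+O(\tc^2)$, $g_k^*(z,\eta,c)=g_k^*(z,\eta)+(c-2)g_{k3}^*(z,\eta,c)$. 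Once that cancellation is verified the remaining terms are manifestly at least quadratic and carry the exponential factor from $\supp\psi(\cdot+4t+L)$, so the estimate follows from Claims~\ref{cl:S1}--\ref{cl:S5-S6}, \eqref{eq:Y1-L1}, and the definition of $\bM_1(T)$ with no further analytic input.
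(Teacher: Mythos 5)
Your overall strategy matches the paper's: decompose $R^3_k$ into pieces that are at least quadratic in $(\tc,\tx)$, exploit the $e^{-a(4t+L)}$ gain that comes from $\supp\psi(\cdot+4t+L)$ together with the exponential localization of $g_k^*$ as $z\to-\infty$ (Claims~\ref{cl:S3}--\ref{cl:S4} and the direct Schwarz estimate), pass from $L^1$ of a product to $Y\cdot Y$ via \eqref{eq:Y1-L1}, and observe that the slowest-decaying piece is $\lesssim e^{-a(4t+L)}\|\tc\|_Y^2\lesssim e^{-a(4t+L)}\bM_1(T)^2\la t\ra^{-1/2}$. That is exactly the paper's argument (which splits $R^3_k$ into $R^3_{k1}+R^3_{k2}+R^3_{k3}$ and reads off each piece from $S^3_k$, $S^4_k$, or a direct bound), and the quantitative conclusion is correct.

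However, one part of your structural analysis is a red herring and should be removed. Your step (ii) -- invoking the $\eta$-dependence of the adjoint mode, i.e.\ subtracting $g_k^*(z,0)$ to produce an $\eta^2$ factor that becomes $\partial_y^2$ -- does not belong to $R^3_k$ at all. That mechanism is encoded in $g_{k1}^*$ and the operators $S^1_k$, $S^2_k$, which only enter the estimates of $R^1_k$, $R^2_k$ coming from $\ell_1$. Looking at \eqref{eq:der5}, $R^3_k$ contracts $\ell_{22}+\ell_{23}$ against $g_k^*(z,\eta,c(t,y_1))$ and subtracts $\ell_{2,lin}$ contracted against $g_k^*(z,\eta)$; the only difference of adjoint modes that appears is $g_k^*(z,\eta,c)-g_k^*(z,\eta)=\tc\,g_{k3}^*(z,\eta,c)$, which is precisely what $S^4_k$ is built to handle, with no $\eta^2$ gain (and none is needed, since the decay rate here is driven by $\|\tc\|_Y^2$, not by $\partial_y^2$ acting on something). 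Similarly, the algebraic identities $Q'=1-Q^2$ and the parity of $\varphi$, $\psi$ play no role in the cancellation you worry about at the end; $\ell_{2,lin}$ is by definition the linear-in-$\tc$ truncation of $\ell_{22}+\ell_{23}$ (using $\tpsi_c\approx\tc\,\psi(\cdot+4t+L)$, $\partial_c\tpsi_c\approx\psi(\cdot+4t+L)$), so the cancellation is built in and one only needs the Taylor remainders $2\sqrt{2c}-4-\tc=O(\tc^2)$, $\sqrt{2/c}-1=O(\tc)$ and the $g_{k3}^*$ splitting of $g_k^*(z,\eta,c)$.
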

\begin{proof}
We decompose $R^3_k$ into three terms. Let
\begin{align*}
R^3_{k1}=& \frac{1}{2\pi}\int_{-\eta_0}^{\eta_0}\int_{\R^2}
\left\{\pd_z^3\tpsi_{c(t,y_1)}(z)
-3c_{yy}(t,y_1)\int_z^\infty\pd_c\tpsi_{c(t,y_1)}(z_1)\,dz_1 \right\}
\\ & \qquad\qquad \times
\overline{g_k^*(z,\eta,c(t,y_1))}e^{i(y-y_1)\eta}\,dy_1dzd\eta
 \\ & -\frac{1}{2\pi}\int_{-\eta_0}^{\eta_0}\int_{\R^2}
\left\{\tc(t,y_1)\psi'''(z+4t+L)
-3c_{yy}(t,y_1)\int_{z+4t+L}^\infty \psi(z_1)\,dz_1 \right\}
\\ & \qquad\qquad \times
\overline{g_k^*(z,\eta)}e^{i(y-y_1)\eta}\,dy_1dzd\eta\,,
\end{align*}
\begin{align*}
R^3_{k2}= -\frac{3}{2\pi} \int_{-\eta_0}^{\eta_0}\int_{\R^2}&
\{\pd_z(\tpsi_{c(t,y_1)}^2)+x_{yy}\tpsi_{c(t,y_1)}
+3c_y(t,y_1)^2\int_z^\infty\pd_c^2\tpsi_{c(t,y_1)}\}
\\ & \times 
\overline{g_k^*(z,\eta,c(t,y_1))}e^{i(y-y_1)\eta}\,dy_1dzd\eta\,,
\end{align*}
and
\begin{align*}
R^3_{k3}=& -\frac{3}{\pi} \int_{-\eta_0}^{\eta_0}\int_{\R^2}
\left(\sqrt{2c(t,y_1)}-2-\tc(t,y_1)\right)\psi(z+4t+L)\varphi_{c(t,y_1)}(z)
\\ & \qquad\qquad
 \times\overline{\pd_zg_k^*(z,\eta,c(t,y_1))}e^{i(y-y_1)\eta}dzdyd\eta
\\ &-\frac{3}{\pi}\int_{-\eta_0}^{\eta_0}\int_{\R^2} \tc(t,y_1)^2\psi(z+4t+L)
\overline{g_{k5}^*(z,\eta,c(t,y_1))}e^{i(y-y_1)\eta}dzdyd\eta\,,    
\end{align*}
where
$g_{k5}(z,\eta,c)^*=\{\varphi_c(z)\pd_zg_k^*(z,\eta,c)
-\varphi(z)\pd_zg_k^*(z,\eta)\}/\tc$.
Then $R^3_k=\sum_{i=1}^3R^3_{ki}$.
\par
Let us estimate $R^3_{k1}$ by using Claims~\ref{cl:S3} and \ref{cl:S4}.
Since $\tpsi_c(z)=(2\sqrt{2c}-2)\psi(z+4t+L)$, we have
\begin{align*}
R^3_{k1}=& S_k^3[\psi''']\left(2\sqrt{2c}-4-\tc\right)
+S_k^4[\psi''']\left(2\sqrt{2c}-4\right)
\\ & 
+3S^3_k[\pd_z^{-1}\psi]\left(((2/c)^{1/2}-1)c_{yy}\right)
+3S^4_k[\pd_z^{-1}\psi]\left((2/c)^{1/2}c_{yy}\right)\,,
\end{align*}
\begin{align*}
 R^3_{k2}=& -24(S_k^3+S_k^4)[(\psi^2)']((\sqrt{c}-\sqrt{2})^2)
-6\sqrt{2}(S^3_k+S^4_k)[\psi]((\sqrt{c}-\sqrt{2})x_{yy})
\\ & -\frac{3\sqrt{2}}{2}(S_k^3+S_k^4)[\pd_z^{-1}\psi](c^{-3/2}(c_y)^2)\,.
\end{align*}
Since  $2\sqrt{2c}-4=\tc+O(\tc^2)$ and $(2/c)^{1/2}-1=O(\tc)$
and $\wP_1L^1\subset Y_1$, it follows from Claim~\ref{cl:S3} that
$$\|S^3_k[\psi'''](2\sqrt{2c}-4-\tc)\|_{Y_1}
\lesssim e^{-a(4t+L)}\|\tc\|_Y^2\,,$$
$$\left\|\sqrt{2}S^3_k[\pd_z^{-1}\psi]\left((2/c)^{1/2}-1)c_{yy}\right)\right\|_{Y_1}
\lesssim e^{-a(4t+L)}\|\tc\|_Y\|c_{yy}\|_Y\,.$$
By Claim~\ref{cl:S4},
\begin{align*}
\|S^4_k[\psi'''](\sqrt{c}-\sqrt{2})\|_{Y_1}
+\|S^4_k[\pd_z^{-1}\psi](c^{-1/2}c_{yy})\|_{Y_1}
\lesssim & e^{-a(4t+L)} \|\tc\|_Y(\|\tc\|_Y+\|c_{yy}\|_Y)\,.
\end{align*}
Thus we prove
$\|R^3_{k1}\|_{Y_1}\lesssim e^{-a(4t+L)}  \|\tc\|_Y(\|\tc\|_Y+\|c_{yy}\|_Y)$.
Similarly, we have
$$\|R^3_{k2}\|_{Y_1} \lesssim e^{-a(4t+L)}
(\|\tc\|_Y^2+\|\tc\|_Y\|x_{yy}\|_Y+\|c_y\|_Y^2)\,,\quad
\|R^3_{k3}\|_{Y_1}\lesssim  e^{-a(4t+L)}\|\tc\|_Y^2\,.$$
Thus we complete the proof.
\end{proof}

\begin{claim}
\label{cl:wA1-bound}
There exists a positive constant $C$ such that
\begin{equation*}
\|\widetilde{\mathcal{A}}_1(t)\|_{B(Y)}+\|\widetilde{\mathcal{A}}_1(t)\|_{B(Y_1)}
\le Ce^{-a(4t+L)}
\quad\text{for every $t\ge0$ and $L\ge0$.}  
\end{equation*}
\end{claim}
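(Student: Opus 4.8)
The statement to prove is Claim~\ref{cl:wA1-bound}, which asserts the exponential decay of the operator norm of $\widetilde{\mathcal{A}}_1(t)$ on both $Y$ and $Y_1$.

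The plan is to unwind the definition of $\widetilde{\mathcal{A}}_1(t)$ and reduce everything to the exponential localization of $\psi(\cdot+4t+L)$ as $t\to\infty$. Recall from \eqref{eq:modpre} that $\widetilde{\mathcal{A}}_1(t)=\begin{pmatrix}\ta_1(t,D_y) & 0\\ \ta_2(t,D_y) & 0\end{pmatrix}$, so it suffices to bound $\|\ta_k(t,D_y)\|_{B(Y)}$ and $\|\ta_k(t,D_y)\|_{B(Y_1)}$ for $k=1,2$. By the definition of a Fourier multiplier on $Y$ and $Y_1$ (both of which are built from functions with spectrum in $[-\eta_0,\eta_0]$, cf.\ Remark~\ref{rem:smoothness}), we have
\[
\|\ta_k(t,D_y)\|_{B(Y)}+\|\ta_k(t,D_y)\|_{B(Y_1)}\lesssim
\sup_{\eta\in[-\eta_0,\eta_0]}|\ta_k(t,\eta)|\,.
\]
Thus the whole claim reduces to the pointwise bound $\sup_{|\eta|\le\eta_0}|\ta_k(t,\eta)|\le Ce^{-a(4t+L)}$.

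To obtain this pointwise bound I would go back to the explicit formula \eqref{eq:def-ak}:
\[
\ta_k(t,\eta)=\Bigl[\int_\R\{\psi'''(z+4t+L)+6(\varphi(z)\psi(z+4t+L))_z\}\overline{g_k^*(z,\eta)}\,dz
+3\eta^2\int_\R\Bigl(\int_z^\infty\psi(z_1+4t+L)\,dz_1\Bigr)\overline{g_k^*(z,\eta)}\,dz\Bigr]\mathbf{1}_{[-\eta_0,\eta_0]}(\eta)\,.
\]
Since $\psi$ is supported in $[-1,1]$, the function $z\mapsto\psi^{(j)}(z+4t+L)$ is supported in $z\in[-4t-L-1,-4t-L+1]$, and on this set $e^{-az}\ge e^{a(4t+L-1)}$, hence $|\psi^{(j)}(z+4t+L)|\le e^{a}e^{-a(4t+L)}e^{-az}$. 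Likewise $\varphi(z)\psi(z+4t+L)$ and its $z$-derivative, and the antiderivative $\int_z^\infty\psi(z_1+4t+L)\,dz_1$ restricted to where it is nonconstant, are all bounded by $Ce^{-a(4t+L)}e^{-az}$ on their supports (the antiderivative equals $1$ for $z\le -4t-L-1$ but is multiplied by $g_k^*$, and by Claim~\ref{cl:gk-approx} $g_k^*(\cdot,\eta)\in L^2_{-a}(\R)$ uniformly for $|\eta|\le\eta_0$, so one integrates $e^{az}g_k^*(z,\eta)\cdot e^{-az}\times(\text{the bounded tail})$ and the $z\to-\infty$ tail is controlled by the $L^2_{-a}$ weight against the constant function — more carefully, one uses that on $z\le-4t-L-1$ the product with $\overline{g_k^*}$ is bounded by $e^{-a(4t+L)}$ times $e^{-az}|g_k^*(z,\eta)|$ which is integrable). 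Then by Cauchy--Schwarz each of the three integrals is bounded by
\[
e^{-a(4t+L)}\;\|e^{-az}g_k^*(\cdot,\eta)\|_{L^2(\R)}\;\times(\text{a constant depending on }\psi,\varphi)\,,
\]
and $\sup_{|\eta|\le\eta_0}\|g_k^*(\cdot,\eta)\|_{L^2_{-a}}<\infty$ by Claim~\ref{cl:gk-approx} (or the uniform bound \eqref{eq:gk-bound}). The factor $\eta^2\le\eta_0^2$ in the last term is harmless. Collecting the three pieces gives $\sup_{|\eta|\le\eta_0}|\ta_k(t,\eta)|\le Ce^{-a(4t+L)}$, and feeding this into the multiplier bound above completes the proof.

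There is no real obstacle here; the only point requiring a little care is the term involving $\int_z^\infty\psi(z_1+4t+L)\,dz_1$, which does not itself decay as $z\to-\infty$ (it tends to the constant $1=\int_\R\psi$). The resolution is that it is paired against $\overline{g_k^*(z,\eta)}$, which by Claim~\ref{cl:gk-approx} decays like $e^{-2|z|}$ as $z\to-\infty$ (it is exponentially localized, being a derivative of $e^{\beta(-\eta)x}\sech x$ type expressions), so $\int_{-\infty}^{-4t-L-1}|g_k^*(z,\eta)|\,dz\lesssim e^{-2(4t+L)}\le e^{-a(4t+L)}$ since $a<2$; on the complementary region $z\ge-4t-L-1$ the antiderivative is itself $O(e^{-a(4t+L)}e^{-az})$ pointwise as above. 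Either way the bound $Ce^{-a(4t+L)}$ survives. I would write this out as two or three displayed inequalities and then invoke the multiplier estimate to conclude both operator-norm bounds simultaneously.
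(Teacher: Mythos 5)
Your proof is correct and follows essentially the same route as the paper: reduce to a pointwise bound on the multiplier $\ta_k(t,\eta)$ uniformly over $|\eta|\le\eta_0$, then extract the factor $e^{-a(4t+L)}$ from the exponential localization of $\psi(\cdot+4t+L)$ paired against the $L^2_{-a}$-boundedness of $g_k^*(\cdot,\eta)$. The paper packages this slightly more modularly by rewriting $\ta_k(t,D_y)\tc$ in terms of the $S^3_k$ operators (writing the $3\eta^2$ factor as $-3\pd_y^2$ so that the antiderivative term becomes $3S^3_k[\pd_z^{-1}\psi](c_{yy})$, with $\|e^{az}\pd_z^{-1}\psi\|_{L^2}<\infty$ absorbing the tail) and then cites Claim~\ref{cl:S3} and \eqref{eq:s311a}, but the underlying Schwarz estimate and the subtlety you isolate for the non-decaying antiderivative are the same.
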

\begin{proof}
In view of \eqref{eq:def-ak},
\begin{align*}
\ta_k(t,D_y)\tc=& S^3_k[\psi'''](\tc)+3S^3_k[\pd_z^{-1}\psi](c_{yy})
\\ &-6\mF_\eta^{-1}\left\{\int\varphi(z)\psi(z+4t+L)
\overline{\pd_zg_k^*(z,\eta)}dz (\mF_y\tc)(t,\eta)\right\}\,.
\end{align*}
Hence it follows from Claim~\ref{cl:S3} and \eqref{eq:s311a} that
$$\|\ta_k(t,D_y)\|_{B(Y)}+\|\ta_k(t,D_y)\|_{B(Y_1)}\lesssim e^{-a(4t+L)}\,.$$
Thus we complete the proof of Claim~\ref{cl:wA1-bound}.
\end{proof}

\begin{claim}
\label{cl:akbound}
  There exist  positive constants $C$ and $L_0$ such that
if $L\ge L_0$, then
$$\|A_1(t)\|_{B(Y)}\le Ce^{-a(4t+L)}
\quad\text{for every $t\ge0$.}$$
\end{claim}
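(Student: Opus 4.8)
The plan is to split $A_1(t)$ into its two summands,
$$
A_1(t)=\diag(1,\pd_y)(B_4^{-1}-B_5^{-1})(B_2-\pd_y^2\wS_0)\diag(\pd_y^2,\pd_y)
+\diag(1,\pd_y)B_4^{-1}\widetilde{\mathcal{A}}_1(t)\,,
$$
and estimate each in $B(Y)$ using only the operator bounds already available. First I would record the elementary Fourier-support estimates on $Y$: since $\hat f$ is supported in $[-\eta_0,\eta_0]$, we have $\|\diag(1,\pd_y)\|_{B(Y)}\lesssim 1$ and $\|\diag(\pd_y^2,\pd_y)\|_{B(Y)}\lesssim\eta_0$ by \eqref{eq:pdy-bound}, and $\|B_2-\pd_y^2\wS_0\|_{B(Y)}\lesssim 1$ because $B_2$ is a constant matrix and $\|\pd_y^2\wS_0\|_{B(Y)}\lesssim\eta_0^2$ by Claim~\ref{cl:S1} and \eqref{eq:pdy-bound}. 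Next, since $B_5=B_1+\pd_y^2\wS_1$ and $\|\pd_y^2\wS_1\|_{B(Y)}\lesssim\eta_0^2$ by \eqref{eq:wS1-new} and \eqref{eq:pdy-bound}, the operator $B_5$ is a small perturbation of the invertible constant matrix $B_1$ once $\eta_0$ is small, so $\|B_5^{-1}\|_{B(Y)}\lesssim 1$; and $\|B_4^{-1}\|_{B(Y)}\lesssim 1$ provided $\eta_0^2+e^{-aL}$ is small by Claim~\ref{cl:invB_4}, which is where the threshold $L_0$ enters.

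Then I would use the resolvent identity together with $B_5-B_4=\wS_3$ (recall $B_4=B_1+\pd_y^2\wS_1-\wS_3$) to write $B_4^{-1}-B_5^{-1}=B_4^{-1}\wS_3 B_5^{-1}$. Since $\wS_3=\begin{pmatrix}S^3_1[\psi] & 0\\ S^3_2[\psi] & 0\end{pmatrix}$, Claim~\ref{cl:S3} gives $\|\wS_3\|_{B(Y)}\lesssim e^{-a(4t+L)}$ (this is also recorded in \eqref{eq:wS3-new}), hence $\|B_4^{-1}-B_5^{-1}\|_{B(Y)}\lesssim e^{-a(4t+L)}$. Combining with the bounds of the previous paragraph, the first summand of $A_1(t)$ has $B(Y)$-norm $\lesssim\eta_0\,e^{-a(4t+L)}\lesssim e^{-a(4t+L)}$. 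For the second summand, Claim~\ref{cl:wA1-bound} gives $\|\widetilde{\mathcal{A}}_1(t)\|_{B(Y)}\lesssim e^{-a(4t+L)}$, and together with $\|B_4^{-1}\|_{B(Y)}\lesssim 1$ and $\|\diag(1,\pd_y)\|_{B(Y)}\lesssim 1$ this term is also $O(e^{-a(4t+L)})$ in $B(Y)$. Adding the two estimates yields the claim.

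I do not expect a genuine obstacle here: the statement is bookkeeping once Claims~\ref{cl:S1}, \ref{cl:S3}, \ref{cl:wA1-bound} and \ref{cl:invB_4} and the Fourier-support bound \eqref{eq:pdy-bound} are in place. The only point requiring a little care is to extract the exponentially small factor $e^{-a(4t+L)}$ coming from $\wS_3$ and from $\widetilde{\mathcal{A}}_1(t)$ as a genuine multiplicative prefactor through the resolvent identity, rather than letting it be absorbed into a generic constant; and to fix $L_0$ large enough so that Claim~\ref{cl:invB_4} guarantees the uniform bound on $\|B_4^{-1}\|_{B(Y)}$.
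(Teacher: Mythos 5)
Your proof is correct and follows essentially the same route as the paper's (very terse) proof: the paper simply remarks that $B_1$ is invertible and $\|\wS_3\|_{B(Y)}\lesssim e^{-a(4t+L)}$, which is exactly the resolvent-identity computation $B_4^{-1}-B_5^{-1}=B_4^{-1}\wS_3 B_5^{-1}$ that you spell out, together with Claim~\ref{cl:wA1-bound} for the $\widetilde{\mathcal{A}}_1(t)$ term. Your version is just more explicit about the bookkeeping (the $\diag$ factors, Claim~\ref{cl:invB_4}, the threshold $L_0$), all of which is consistent with the paper's intent.
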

\begin{proof}

Since $B_1$ is invertible and $\|\wS_3\|_{B(Y)}
\lesssim \sum_{k=1,\,2}\|S^3_k[\psi]\|_{B(Y)}\lesssim e^{-a(4t+L)}$,
we have Claim~\ref{cl:akbound}.
\end{proof}

\begin{claim}
  \label{cl:R4-R5}
Suppose $a\in(0,1)$ and $\bM_1(T)\le \delta$ If $\delta$ is sufficiently small,
then there exists a positive constant $C$ such that
\begin{align}
  \label{eq:R4k}
& \|R^4_k(t)\|_{Y_1}\le C(\bM_1(T)+\bM_2(T))\bM_2(T)\la t\ra^{-3/2}\,,
\\ \label{eq:R5k}
& \|R^5_k(t)\|_{Y_1}\le  CM_1(T)\bM_2(T)\la t\ra^{-1}\,,
\\ &  \label{eq:R6'}
\|R^6_k\|_{Y_1}\le Ce^{-a(4t+L)}\la t\ra^{-1}\bM_1(T)\bM_2(T)\,,
\\ & \label{eq:R5Y}
\|R^5_k(t)\|_Y\le  CM_1(T)\bM_2(T)\la t\ra^{-5/4}\,,
\,.
\end{align}
\end{claim}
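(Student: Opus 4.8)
The plan is to take, for each of $R^4_k$, $R^5_k$ and $R^6_k$, the $\eta$-Fourier symbol appearing in its definition and bound it in $L^\infty(-\eta_0,\eta_0)$ for the $Y_1$-bounds \eqref{eq:R4k}, \eqref{eq:R5k}, \eqref{eq:R6'} and in $L^2(-\eta_0,\eta_0)$ for the $Y$-bound \eqref{eq:R5Y}. Throughout I would rely on: the uniform exponential localization of the adjoint resonant modes, $\sup_{c\approx 2,\,|\eta|\le\eta_0}\bigl(\|e^{-az}\pd_z^j\pd_c^l g_k^*(\cdot,\eta,c)\|_{L^2_z}+\|e^{-2az}\pd_z^j\pd_c^l g_k^*(\cdot,\eta,c)\|_{L^\infty_z}\bigr)<\infty$, where the second, doubly weighted bound is precisely where the hypothesis $a\in(0,1)$ enters, since these functions grow like $e^{2|z|}$ as $z\to-\infty$; the decay estimates $\|v(t)\|_X+\|\pd_z^{-1}v(t)\|_X\lesssim \bM_2(T)\la t\ra^{-3/4}$ and $\|\pd_y^k\tc(t)\|_{L^2}+\|\pd_y^{k+1}x(t)\|_{L^2}\lesssim \bM_1(T)\la t\ra^{-(2k+1)/4}$; and Hölder in $z$ and then in $y$, always pairing $e^{az}v$ against $e^{-az}$ times a localized factor and finishing with Cauchy--Schwarz in $y$.

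For the easier pieces: the symbol of $R^5_k$ is $II^3_{k2}(t,\eta)=6\int v\,x_y\,\overline{g_k^*}\,e^{-iy\eta}$, and pairing $e^{az}v$ with $e^{-az}g_k^*$ in $z$ and with $x_y$ in $y$ gives $|II^3_{k2}(\eta)|\lesssim \|e^{-az}g_k^*\|_{L^2_z}\|x_y\|_{L^2}\|v\|_X\lesssim \bM_1(T)\bM_2(T)\la t\ra^{-1}$, which is \eqref{eq:R5k}; for \eqref{eq:R5Y} I would split $g_k^*(z,\eta,c)=g_k^*(z,0,c)+\eta^2 g_{k1}^*(z,\eta,c)$, treat the leading term as an honest $y$-Fourier transform, apply Plancherel, and bound $\bigl\|\int_z v\,x_y\,\overline{g_k^*(z,0,\cdot)}\,dz\bigr\|_{L^2_y}\lesssim \|x_y\|_{L^\infty}\|v\|_X\lesssim \bM_1(T)\la t\ra^{-1/2}\,\bM_2(T)\la t\ra^{-3/4}$ using $\|f\|_{L^\infty}\lesssim \|f\|_Y^{1/2}\|\pd_y f\|_Y^{1/2}$, the $\eta^2$-remainder being of the same shape with an extra $\eta_0^2$. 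For $R^6_k$ the $z$-integrand is supported near $z=-(4t+L)$ through the cutoff $\psi(z+4t+L)$, and there $e^{-az}|\psi_{c,L}(z+4t)\,\pd_z g_k^*|\lesssim |\sqrt{2c}-2|\,e^{a(4t+L)}e^{-2(4t+L)}\la 4t+L\ra\lesssim |\sqrt{2c}-2|\,e^{-a(4t+L)}$ exactly when $a<1$; pairing with $e^{az}v$ and using Cauchy--Schwarz in $y$ yields $\|R^6_k\|_{Y_1}\lesssim e^{-a(4t+L)}\|\tc\|_{L^2}\|v\|_X\lesssim e^{-a(4t+L)}\bM_1(T)\bM_2(T)\la t\ra^{-1}$, i.e. \eqref{eq:R6'}. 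Still within $R^4_k$, the pieces $II^2_k=3\int v^2\,\overline{\pd_z g_k^*}\,e^{-iy\eta}$ and $II^3_{k1}=-3\int v\,x_{yy}\,\overline{g_k^*}\,e^{-iy\eta}$ are handled the same way: splitting $e^{2az}$ evenly over the two copies of $v$ (again needing $e^{-2az}\pd_z g_k^*\in L^\infty_z$, hence $a<1$) gives $|II^2_k(\eta)|\lesssim \|v\|_X^2\lesssim \bM_2(T)^2\la t\ra^{-3/2}$, and $|II^3_{k1}(\eta)|\lesssim \|x_{yy}\|_{L^2}\|v\|_X\lesssim \bM_1(T)\bM_2(T)\la t\ra^{-3/2}$.

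The main obstacle is the term $II^1_k=\int v\,\mL_{c(t,y)}^*\bigl(\overline{g_k^*(z,\eta,c(t,y))\,e^{iy\eta}}\bigr)$. Here I would apply $\mL_c^*$ explicitly and separate the $y$-derivatives falling on $e^{-iy\eta}$ from those falling on $c(t,y)$: the former reconstruct $\mL_c(\eta)^*$ acting on $\overline{g_k^*(z,\eta,c)}$ at $c=c(t,y)$, while the latter produce $-3(\pd_z^{-1})^*$ applied to $e^{-iy\eta}$ times $c_{yy}\overline{\pd_c g_k^*}+c_y^2\overline{\pd_c^2 g_k^*}-2i\eta c_y\overline{\pd_c g_k^*}$. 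In the second group I move $(\pd_z^{-1})^*$ onto $v$ and pair $e^{az}\pd_z^{-1}v$ against $e^{-az}$ times the localized $c$-derivatives of $g_k^*$, picking up $\|c_{yy}\|_{L^2}$, $\|c_y\|_{L^\infty}\|c_y\|_{L^2}$, or $\eta_0\|c_y\|_{L^2}$, each of which, multiplied by $\|\pd_z^{-1}v\|_X$, decays at least like $\la t\ra^{-3/2}$. In the first group I use that $g_k^*(z,\eta,c)$ is, up to a multiple of $\eta^2(\frac c2-1)$ coming from the mismatch between the spectral parameter $\eta$ and its scaled value, an eigenfunction-combination of $\mL_c(\eta)^*$ with $c$-dependent coefficients; writing each coefficient as its value at $c=2$ plus an $O(\tilde c)$ remainder, the $c=2$ values are annihilated once integrated against $v$ by the orthogonality constraints \eqref{eq:orth} imposed on both $g_1^*$ and $g_2^*$, and the remainders produce products of the type $\|\tc\|_{*}\|v\|_X$ handled through the bilinear bound $\|\wP_1(fg)\|_{Y_1}\lesssim \|fg\|_{L^1}$. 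Collecting all the pieces should give $\sup_{\eta}(|II^1_k|+|II^2_k|+|II^3_{k1}|)\lesssim (\bM_1(T)+\bM_2(T))\bM_2(T)\la t\ra^{-3/2}$, i.e. \eqref{eq:R4k}. The delicate point, and the step I would scrutinize most carefully, is exactly this bookkeeping for the first group: verifying that after invoking \eqref{eq:orth}, every surviving contribution carries enough powers of $\eta$, or enough factors of the $\bM_1$-controlled derivative norms, to reach the rate $\la t\ra^{-3/2}$; everything else is of the routine Hölder-in-$z$-and-$y$ type already used repeatedly in Appendix~\ref{sec:Rk}.
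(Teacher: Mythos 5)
Your treatment of the easy pieces ($II^2_k$, $II^3_{k1}$, the $Y_1$ bound for $R^5_k$, and $R^6_k$) is essentially the paper's argument: pair $e^{az}v$ against a weight-localized factor via Cauchy--Schwarz in $z$, then Cauchy--Schwarz in $y$, with the doubly-weighted pointwise bound on $\pd_z g_k^*$ requiring $a<1$ exactly as you say. For the $Y$-bound \eqref{eq:R5Y}, you split $g_k^*(z,\eta,c)$ in $\eta$ (leading part $g_k^*(z,0,c)$ plus an $\eta^2$-remainder), whereas the paper splits in $c$ (writing $g_k^*(z,\eta,c)=g_k^*(z,\eta)+\tc\,g_{k3}^*(z,\eta,c)$, so that the leading piece has no $y$-dependence at all, and only the leading piece needs Plancherel). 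Both closings are correct and yield the rate $\la t\ra^{-5/4}$; the paper's split in $c$ is the cleaner one because it isolates an honest $y$-Fourier transform exactly, while your leading term $g_k^*(z,0,c(t,y))$ still carries $y$-dependence through $c$ and so an extra Cauchy--Schwarz in $z$ is required before Plancherel.

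The gap is in the treatment of $II^1_k$, and it is a genuine one. Your plan is to write $\mL_{c(t,y)}(\eta)^*g_k^*(\cdot,\eta,c(t,y))$ as a $c$-dependent combination of $g_1^*(\cdot,\eta,c)$ and $g_2^*(\cdot,\eta,c)$ plus a mismatch, then split each coefficient as its value at $c=2$ plus $O(\tc)$, killing the $c=2$ piece by \eqref{eq:orth} and estimating the $O(\tc)$ piece by Cauchy--Schwarz. The trouble is the decay rate of that remainder. For $k=2$, Lemma~\ref{cl:L*g} gives $\mL(\eta)^*g_2^*=\eta^{-1}\Im\lambda(\eta)\,g_1^*+\cdots$ and $\eta^{-1}\Im\lambda(\eta)\approx 4$ carries no power of $\eta$; after rescaling, the coefficient multiplying $g_1^*(\cdot,\eta,c)$ is $\propto\sigma^3$ and its deviation from the value at $c=2$ is $O(\tc)$ with no accompanying $\eta^2$. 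Pairing $O(\tc)$ against $g_1^*$ and $v$ by Cauchy--Schwarz yields only $\|\tc\|_{Y}\|v\|_X\lesssim\bM_1(T)\bM_2(T)\la t\ra^{-1}$, short of the $\la t\ra^{-3/2}$ rate claimed in \eqref{eq:R4k}. The paper's identity $II^1_k=i\eta\,II^1_{k1}+II^1_{k2}+II^1_{k3}$ is structurally stronger: all three remainders are already paired with $c_y$, $c_{yy}$, or $c_y^2$ (through the functions $h_{jk}$ built from $\int_{-\infty}^z\overline{\pd_c^j g_k^*}$), so each decays at least at $\la t\ra^{-3/2}$ on its own. Your $c=2$-split does not reproduce this structure — the $O(\tc)$ factor you produce is controlled by $\|\tc\|_Y\sim\la t\ra^{-1/4}$, which is decisively weaker than $\|c_y\|_Y\sim\la t\ra^{-3/4}$. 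You flagged this bookkeeping as the delicate step, and indeed it is: you would need to verify that after invoking the full Lemma~\ref{cl:L*g}/\eqref{eq:orth} machinery the $\mL_c(\eta)^*$-part of $II^1_k$ leaves behind only terms carrying a factor of $c_y$, $c_{yy}$ or $c_y^2$, with no bare-$\tc$ residue, rather than substituting the $c=2$ coefficient split you describe.
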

\begin{proof}
By Lemma~\ref{cl:L*g} and \eqref{eq:orth}, we can rewrite
$II^1_k$ as $II^1_k= i\eta II^1_{k1} +II^1_{k2}+II^1_{k3}$, where
\begin{align*}
& II^1_{k1}(t,\eta)=-6\int_\R c_y(t,y)h_{1k}(t,y,\eta)e^{-iy\eta}dy\,,\\
& II^1_{k2}(t,\eta)=3\int_\R c_{yy}(t,y)h_{1k}(t,y,\eta)e^{-iy\eta}dy\,,\\
& II^1_{k3}(t,\eta)=3\int_\R (c_y(t,y))^2h_{2k}(t,y,\eta)e^{-iy\eta}dy\,,
\\ &
h_{jk}(t,y,\eta)=\int_\R v(t,z,y)
\left(\int_{-\infty}^z\overline{\pd_c^jg_k^*(z_1,\eta,c(t,y))}dz_1\right)dz
\quad\text{for $j=1$, $2$.}
\end{align*}
\par
First, we will estimate $II^1_k(t,\cdot)$. Since 
$$\sup_{-\eta_0\le \eta\le \eta_0\,,\, 2-\delta\le c\le 2+\delta}
\left\|e^{-az} \int_{-\infty}^z g_k^*(z_1,\eta,c)dz_1\right\|_{L^2_z}<\infty\,,$$
there exists a positive constant $C$ such that 
$$\sup_{\eta_0\le \eta\le\eta_0}|h_{jk}(t,y,\eta)|\le C\|e^{az}v(t,z,y)\|_{L^2_z}
\quad\text{for any $y\in\R$ and $t\ge0$.}$$
Thus by the Schwarz inequality,
\begin{equation}
  \label{eq:I1k-1}
\begin{split}
\|II^1_{k1}(t,\cdot)\|_{L_\eta^\infty(-\eta_0,\eta_0)} \le &
\|c_y(t)\|_Y\sup_{\eta\in[-\eta_0,\eta_0]}
\left(\int_\R |h_{1k}(t,y,\eta)|^2dy\right)^{1/2}
\\ \lesssim &  \|c_y(t)\|_Y\|v(t)\|_X\,.
\end{split}  
\end{equation}
We can prove
\begin{gather}
\label{eq:I1k-2}
\|II^1_{k2}(t,\eta)\|_{L_\eta^\infty[-\eta_0,\eta_0]}\lesssim
\|c_{yy}\|_Y\|v(t,\cdot)\|_X\,,
\\ \label{eq:I1k-3}
\|II^1_{k3}(t,\eta)\|_{L_\eta^\infty[-\eta_0,\eta_0]}\lesssim
\|c_y\|_{L^4(\R)}^2\|v(t,\cdot)\|_X\,,
\end{gather}
in exactly the same way.
\par

Next, we will estimate $II^2_k$ and $II^3_k$.
Since 
\begin{gather}
\sup _{c\in[2-\delta,2+\delta]\,,\eta\in[-\eta_0,\eta_0]}
\|e^{-2az}g_k^*(z,\eta,c)\|_{L^\infty_z}<\infty\,,\\
\label{eq:gk*bd}
\sup_{c\in[2-\delta,2+\delta]\,,\,\eta\in [-\eta_0,\eta_0]} \left(
\|e^{-az}g_k^*\|_{L^2}+\|e^{-az}\pd_zg_k^*\|_{L^2}+\|e^{-az}\pd_cg_k^*\|_{L^2}\right)
<\infty\,,    
\end{gather}
we have
\begin{align*}
\|II^2_k\|_{L^\infty[-\eta_0,\eta_0]} =&
3\sup_{\eta\in[-\eta_0,\eta_0]}\left|\int_{\R^2} v(t,z,y)^2 
\pd_z\overline{g_k^*(z,\eta,c(t,y))}e^{-iy\eta}dzdy\right|
\\ \lesssim & \|v\|_X^2\sup _{c\in[2-\delta,2+\delta]\,,\eta\in[-\eta_0,\eta_0]}
\|e^{-2az}g_k^*(z,\eta,c)\|_{L^\infty_z}
\\ \lesssim & \|v\|_X^2\,.
\end{align*}
and
$$\|II^3_{k1}\|_{L^\infty_\eta[-\eta_0,\eta_0]}\lesssim \|v\|_X\|x_{yy}\|_Y\,,
\quad
\|R^5_k\|_{Y_1}\lesssim \|II^3_{k2}\|_{L^\infty_\eta[-\eta_0,\eta_0]}
\lesssim \|v\|_X\|x_y\|_Y\,.$$
Combining the above, we have
\begin{align*}
 \|R^4_k(t)\|_{Y_1} \lesssim &
\sup_{-\eta_0\le \eta\le \eta_0}(|II^1_k(t,\eta)|+|II^2_k(t,\eta)|+|II^3_{k1}(t,\eta)|)
\\ \lesssim & \|v(t,\cdot)\|_X
(\|c_y(t)\|_Y+\|c_{yy}(t)\|_Y+\|c_y(t)\|_{L^4}^2
\\ & \qquad +\|x_{yy}\|_Y)
+\|v(t,\cdot)\|_X^2\,,
\end{align*}    
which implies \eqref{eq:R4k}.
\par
By the Schwarz inequality and \eqref{eq:psinorm},
\begin{align*}
\|R^6_k\|_{Y_1}\lesssim & \sup_{|\eta|\le \eta_0}
\left|\int_{\R^2}v(t,x,y)\tpsi_{c(t,y)}
\overline{\pd_zg_k^*(z,\eta,c(t,y))}e^{-iy\eta}dzdy\right|
\\ \lesssim & \|v(t)\|_X\|\tpsi_{c(t,y)}\|_X
\sup _{c\in[2-\delta,2+\delta]\,,\eta\in[-\eta_0,\eta_0]}
\|e^{-2az}g_k^*(z,\eta,c)\|_{L^\infty_z}
\\ \lesssim & e^{-a(4t+L)}\|\tc(t)\|_{L^2(\R)}\|v(t)\|_X\,.
\end{align*}
\par
Finally, we will estimate  $\|R^5_k\|_Y$.
Let
\begin{align*}
II^3_{k21}= & 6\int_{\R^2}v(t,z,y)x_y(t,y)
\overline{g_k^*(z,\eta)}e^{-iy\eta}\,dzdy
\\ =& 6\sqrt{2\pi}\int_\R 
\mF_y\left(x_y(t,\cdot)v(t,z,\cdot)\right)(\eta)\overline{g_k^*(z,\eta)}\,dz\,,  
\end{align*}
\begin{equation*}
II^3_{k22}= 6\int_{\R^2}v(t,z,y)x_y(t,y)\tc(t,y)
\overline{g_{k3}^*(z,\eta,c(t,y))}e^{-iy\eta}\,dzdy\,.
\end{equation*}
Then $II^3_{k2}=II^3_{k21}+II^3_{k22}$.
By the Schwarz inequality,
\begin{align*}
\|II^3_{k21}| \lesssim &
\|e^{-az}g_k^*(z,\eta)\|_{L^2_z}\left(\int_\R
\left|\mF_y\left(x_y(t,\cdot)v(t,z,\cdot)\right)(\eta)\right|^2\,dz\right)^{1/2}\,.
\end{align*}
By \eqref{eq:gk*bd} and Plancherel's theorem,
\begin{align*}
\|II^3_{k21}\|_{L^2[-\eta_0,\eta_0]} \lesssim &
\left(\int_{-\eta_0}^{\eta_0}\int_\R
\left|\mF_y\left(x_y(t,\cdot)e^{az}v(t,z,\cdot)\right)(\eta)\right|^2
\,dzd\eta\right)^{1/2}
\\ \lesssim & \|x_y(t)v(t)\|_X\,.
\end{align*}
Since $\|x_y\|_{L^\infty}\lesssim \|x_y\|_Y^{1/2}\|x_{yy}\|_Y^{1/2}
\lesssim \bM_1(T)\la t\ra^{-1/2}$, we have
$$\|II^3_{k21}\|_{L^2[-\eta_0,\eta_0]}\lesssim
\bM_1(T)\bM_2(T)\la t\ra^{-5/4}\,,$$
By the Schwarz inequality,
\begin{align*}
\|II^3_{k22}\|_{L^\infty[-\eta_0,\eta_0]}
\lesssim & \|v(t)\|_X\|x_y(t)\tc(t)\|_{L^2_y}
\sup_{\eta\in[-\eta_0,\eta_0]\,,\, c\in[2-\delta,2+\delta]}\|e^{-az}g_{k3}^*(z,\eta,c)\|_{L^2_z}
\\ \lesssim & \|v(t)\|_X\|\tc(t)\|_{L^\infty}\|x_y(t)\|_Y
\lesssim \bM_1(T)^2\bM_2(T)\la t\ra^{-3/2}\,.
\end{align*}
Combining the above, we have for $t\in[0,T]$,
$$\|R^5_k\|_Y\lesssim \|II^3_{k21}\|_{L^2[-\eta_0,\eta_0]}
+\|II^3_{k22}\|_{L^\infty[-\eta_0,\eta_0]}
\lesssim \bM_1(T)\bM_2(T)\la t\ra^{-5/4}\,.$$
Thus we complete the proof.
\end{proof}

To estimate $R^7_k$, we need the following.
\begin{claim}
  \label{cl:b-capprox}
There exist positive constants $\delta$ and $C$ such that
if $\sup_{t\in[0,T]}\|\tc(t)\|_Y \le \delta$, then for $t\in[0,T]$,
\begin{align}
\label{b-cap1}
& \|b-\tilde{c}\|_Y \le C\|\tilde{c}\|_{L^\infty}\|\tilde{c}\|_Y\,,\quad
\|b-\tilde{c}\|_{Y_1}\le C\|\tilde{c}\|_Y^2\,,\\
& \label{b-cap2} 
\|b_y-c_y\|_Y\le C \|\tilde{c}\|_{L^\infty}\|c_y\|_Y\,,\quad
\|b_y-c_y\|_{Y_1}\le C \|\tilde{c}\|_Y\|c_y\|_Y\,,\\
& \label{b-cap2a} 
\|b_t-c_t\|_Y\le C \|\tilde{c}\|_{L^\infty}\|c_t\|_Y\,,\\
\label{b-cap2b}
& \|b_{yy}-c_{yy}\|_Y\le C(\|\tilde{c}\|_{L^\infty}\|c_{yy}\|_Y
+\|c_y\|_{L^\infty}\|c_y\|_Y)\,,\\
\label{b-cap2c}
& \|b_{yy}-c_{yy}\|_{Y_1}\le C(\|\tilde{c}\|_Y\|c_{yy}\|_Y+\|c_y\|_Y^2)\,,  
\\
\label{b-cap3}
& \left\|\left(\frac{c}{2}\right)^{3/2}-1
-\frac{3}{4}b\right\|_{L^2}
\le C\|\tilde{c}\|_{L^\infty}\|\tilde{c}\|_Y\,,
\\
\label{b-cap4}
& \left\|b-\tc-\frac{1}{8}\wP_1\tc^2\right\|_Y
\le C\|\tilde{c}\|_{L^\infty}^2\|\tilde{c}\|_Y\,.
\end{align}
\end{claim}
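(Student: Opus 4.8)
\textbf{Proof proposal for Claim~\ref{cl:b-capprox}.}
The plan is to exploit the explicit algebraic relation $b=\frac13\wP_1\{\sqrt2\,c^{3/2}-4\}$ together with $c=2+\tc$, expand everything in $\tc$, and control remainders using Remark~\ref{rem:smoothness} (in particular $\|f\|_{L^\infty}\lesssim\|f\|_Y$ and the algebra inequality \eqref{eq:Y1-L1} $\|\wP_1(fg)\|_{Y_1}\le (2\pi)^{-1/2}\|f\|_Y\|g\|_Y$). First I would set $\Phi(\tc)=\frac13(\sqrt2\,(2+\tc)^{3/2}-4)$ and record the Taylor expansion $\Phi(\tc)=\tc+\frac18\tc^2+O(\tc^3)$ with $\Phi'(0)=1$; since $b=\wP_1\Phi(\tc)$ and $\wP_1\tc=\tc$ (because $\tc\in Y$), we get $b-\tc=\wP_1(\Phi(\tc)-\tc)$ and the pointwise bound $|\Phi(\tc)-\tc|\le C\tc^2$ for $\|\tc\|_{L^\infty}\le\delta$. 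Estimating the $L^2$-norm by pulling out one factor of $\tc$ in $L^\infty$ gives $\|b-\tc\|_Y\lesssim\|\tc\|_{L^\infty}\|\tc\|_Y$, and estimating the $Y_1$-norm by \eqref{eq:Y1-L1} applied to $\wP_1(\tc\cdot(\tc\,r(\tc)))$ with $r$ bounded gives $\|b-\tc\|_{Y_1}\lesssim\|\tc\|_Y^2$. This establishes \eqref{b-cap1}. The bound \eqref{b-cap4} is the same computation retaining the quadratic term: $b-\tc-\frac18\wP_1\tc^2=\wP_1(\Phi(\tc)-\tc-\frac18\tc^2)$ with integrand $O(\tc^3)$, so pulling out $\|\tc\|_{L^\infty}^2$ yields $\|\cdot\|_Y\lesssim\|\tc\|_{L^\infty}^2\|\tc\|_Y$.

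Next I would differentiate. Since $b=\wP_1\Phi(\tc)$ and $[\pd_y,\wP_1]=0$, we have $b_y=\wP_1(\Phi'(\tc)c_y)$, hence $b_y-c_y=\wP_1((\Phi'(\tc)-1)c_y)$ with $|\Phi'(\tc)-1|\le C|\tc|$; the two norms are handled exactly as before, giving \eqref{b-cap2}. For \eqref{b-cap2a} the same identity in $t$ gives $b_t-c_t=\wP_1((\Phi'(\tc)-1)c_t)$ and $\|b_t-c_t\|_Y\lesssim\|\tc\|_{L^\infty}\|c_t\|_Y$. For the second derivative, $b_{yy}=\wP_1(\Phi'(\tc)c_{yy}+\Phi''(\tc)c_y^2)$, so
\begin{equation*}
b_{yy}-c_{yy}=\wP_1\big((\Phi'(\tc)-1)c_{yy}\big)+\wP_1\big(\Phi''(\tc)c_y^2\big)\,,
\end{equation*}
and using $|\Phi'(\tc)-1|\lesssim|\tc|$ and $|\Phi''(\tc)|\lesssim1$ one gets the $Y$-bound \eqref{b-cap2b} by taking $L^\infty$ on the ``extra'' factor ($\tc$ in the first term, one $c_y$ in the second) and the $Y_1$-bound \eqref{b-cap2c} by \eqref{eq:Y1-L1}. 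Finally \eqref{b-cap3}: write $(c/2)^{3/2}-1=\frac{3}{4}\,\Phi(\tc)+\text{(correction)}$ — more precisely both $(c/2)^{3/2}-1$ and $\frac34 b=\frac34\wP_1\Phi(\tc)$ have the same linear part $\frac38\tc$ on the relevant frequency range, so their difference is $O(\tc^2)$ in $L^\infty$ plus a high-frequency piece $(I-\wP_1)$ which is harmless since it is measured in $L^2$ and $\tc\in Y$ forces $(I-\wP_1)\tc=0$; thus $\|(c/2)^{3/2}-1-\frac34 b\|_{L^2}\lesssim\|\tc\|_{L^\infty}\|\tc\|_Y$.

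The only mild subtlety — and the step I would be most careful with — is the bookkeeping of the projection $\wP_1$ in \eqref{b-cap3} and wherever a nonlinear function of $\tc$ is compared with a linear expression: one must check that no low-frequency linear term is lost and that the leftover $(I-\wP_1)$ contributions vanish or are absorbed, using that $\tc$, $c_y$, $c_{yy}$ all lie in $Y$ (so they are fixed by $\wP_1$) while $\tc^2$, $c_y^2$ need not be. Everything else is a routine Taylor expansion combined with the two structural inequalities from Remark~\ref{rem:smoothness}; no derivative loss occurs because $b$ is an explicit pointwise (Nemytskii-type) function of $c$ composed with a frequency cutoff. Once these estimates are in hand the claim follows by collecting the pieces.
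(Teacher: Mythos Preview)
Your proposal is correct and follows essentially the same route as the paper: both arguments compute $b=\wP_1\Phi(\tc)$ with $\Phi'(\tc)=(c/2)^{1/2}$, then reduce each estimate to a pointwise Taylor remainder controlled via $L^\infty\times L^2$ for the $Y$-bounds and via \eqref{eq:Y1-L1} for the $Y_1$-bounds. Two small remarks: in your discussion of \eqref{b-cap3} the linear coefficient is $\tfrac34\tc$, not $\tfrac38\tc$; more importantly, the ``correction'' you allude to is in fact zero, since $(c/2)^{3/2}-1=\tfrac34\Phi(\tc)$ \emph{exactly}, so the difference $(c/2)^{3/2}-1-\tfrac34 b$ equals $\tfrac34(I-\wP_1)\Phi(\tc)=\tfrac34(I-\wP_1)(\Phi(\tc)-\tc)$ on the nose, which is the clean form the paper uses.
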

\begin{proof}
By \eqref{eq:bdef},
$$b-\tc= \frac{4}{3}\wP_1
\left\{\left(\frac{c}{2}\right)^{3/2}-1-\frac{3}{4}\tc\right\}\,,$$
\begin{align*}
& b_y-c_y=\wP_1\{(c/2)^{1/2}-1\}c_y\,,\quad b_t-c_t=\wP_1\{(c/2)^{1/2}-1\}c_t\,,\\
& b_{yy}-c_{yy}=\wP_1\{(c/2)^{1/2}-1\}c_{yy}+\frac14\wP_1(c/2)^{-1/2}(c_y)^2\,.
\end{align*}
Using the fact that $(c/2)^{3/2}-1-3\tc/4-3\tc^2/32=O(\tc^3)$,
we can prove \eqref{b-cap1}--\eqref{b-cap2c} and \eqref{b-cap4}
in the same way as the proof of Claim~\ref{cl:T1-T2}. 
\par
Finally, we will show \eqref{b-cap3}.
Let $\wP_2=I-\wP_1$. Since $\wP_2\tilde{c}=0$ and
\begin{align*}
\left(\frac{c}{2}\right)^{3/2}-1-\frac{3b}{4}=& \wP_2
\left\{\left(\frac{c}{2}\right)^{3/2}-1\right\}\,, 
\end{align*}
we have
\begin{align*}
\left\|\left(\frac{c}{2}\right)^{3/2}-1-\frac{3b}{4}\right\|_{L^2}
=& \left\|\wP_2\left\{
\left(\frac{c}{2}\right)^{3/2}-1-\frac{3}{4}\tilde{c}\right\}\right\|_{L^2}
\lesssim  \|\tc\|_{L^4}^2\,.
\end{align*}
Thus we complete the proof.
\end{proof}

\begin{claim}
\label{cl:R6}
There exist positive constants $\delta$ and  $C$ such that
if $\sup_{t\in[0,T]}\|\tc(t)\|_Y \le \delta$, then for $t\in[0,T]$,
\begin{align}
\label{eq:R6-est1}
& \|\wP_1R^7_1(s)\|_{Y_1}\le C\bM_1(T)^2\la s \ra^{-5/4}\,,\quad
\|\wP_1R^7_2(s)\|_{Y_1}\le C\bM_1(T)^2\la s \ra^{-1}\,,\\
\label{eq:R6-est2}
& \|\wP_1R^7_1(s)\|_Y\le C\bM_1(T)^2\la s \ra^{-3/2}\,,
\quad \|\wP_1R^7_2(s)\|_Y\le C\bM_1(T)^2\la s \ra^{-5/4}\,,\\
\label{eq:R6-est3}
& \|\wP_1\pd_yR^7_2(s)\|_{Y_1}\le C\bM_1(T)^2\la s \ra^{-5/4}\,,
\quad \|\wP_1\pd_yR^7_2(s)\|_Y\le C\bM_1(T)^2\la s \ra^{-3/2}\,.
\end{align}
\end{claim}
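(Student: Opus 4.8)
\textbf{Proof plan for Claim~\ref{cl:R6}.}
The strategy is to read off from the explicit formulas \eqref{eq:R6-def} for $R^7_1$ and $R^7_2$ that every term is a product of two factors, each of which is controlled by $\bM_1(T)$ with a definite decay rate, and then to sum the rates. Recall that $\bM_1(T)$ bounds $\la s\ra^{(2k+1)/4}(\|\pd_y^k\tc\|_{L^2}+\|\pd_y^{k+1}x\|_{L^2})$ for $k=0,1$ and $\la s\ra(\|\pd_y^2\tc\|_{L^2}+\|\pd_y^3x\|_{L^2})$, and that on $Y$ all Sobolev norms are equivalent so that $\|c_y\|_{L^\infty}\lesssim\|c_y\|_Y\lesssim\bM_1(T)\la s\ra^{-3/4}$, etc. The key algebraic input is Claim~\ref{cl:b-capprox}: the combinations $4\sqrt2 c^{3/2}-16-12b$, $2b_y-(2c)^{1/2}c_y$, $b_{yy}-c_{yy}$, and $(c/2)^{3/2}-1$ that appear in $R^7$ are \emph{quadratically} small in $\tc$, so each of them carries an extra factor $\|\tc\|_{L^\infty}$ or $\|\tc\|_Y$ relative to the naive count.

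First I would treat $R^7_1$. Writing out \eqref{eq:R6-def}, the terms are: $\{4\sqrt2 c^{3/2}-16-12b\}x_{yy}$, which by \eqref{b-cap3} is $O(\|\tc\|_{L^\infty}\|\tc\|_Y\|x_{yy}\|_Y)$, hence $O(\bM_1(T)^3\la s\ra^{-1/2}\la s\ra^{-3/4})$; $6(b_{yy}-c_{yy})$, handled by \eqref{b-cap2b}, which gives $O((\|\tc\|_{L^\infty}\|c_{yy}\|_Y+\|c_y\|_{L^\infty}\|c_y\|_Y))=O(\bM_1(T)^2\la s\ra^{-5/4})$; $6(2b_y-(2c)^{1/2}c_y)x_y$, which since $2b_y-(2c)^{1/2}c_y=2\wP_1\{(c/2)^{1/2}-1\}c_y-O(\tc c_y)$ is $O(\|\tc\|_{L^\infty}\|c_y\|_Y\|x_y\|_Y)=O(\bM_1(T)^3\la s\ra^{-3/2})$; and $3c^{-1}(c_y)^2=O(\|c_y\|_{L^\infty}\|c_y\|_Y)=O(\bM_1(T)^2\la s\ra^{-3/2})$. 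For the $Y_1$-estimates one uses \eqref{eq:Y1-L1}, i.e.\ $\|\wP_1(fg)\|_{Y_1}\le(2\pi)^{-1/2}\|f\|_Y\|g\|_Y$, so products of two $Y$-functions land in $Y_1$ with the same rate; for the $Y$-estimates one pulls out an $L^\infty$ factor (controlled by $\la s\ra^{-3/4}$ or better since it is a $y$-derivative of something decaying, or by $\la s\ra^{-1/2}$ via interpolation $\|f\|_{L^\infty}\lesssim\|f\|_Y^{1/2}\|\pd_yf\|_Y^{1/2}$ when only $\tc$ itself appears) and keeps the remaining $L^2$ factor. Collecting, the slowest-decaying contribution to $\|\wP_1R^7_1\|_{Y_1}$ is $\bM_1(T)^2\la s\ra^{-5/4}$ and to $\|\wP_1R^7_1\|_Y$ is $\bM_1(T)^2\la s\ra^{-3/2}$, as claimed.

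Next, $R^7_2$ is treated identically: $6\{(c/2)^{3/2}-1\}x_{yy}$ is $O(\|\tc\|_{L^\infty}\|x_{yy}\|_Y)=O(\bM_1(T)^2\la s\ra^{-5/4})$ in $Y_1$ (using $\wP_1L^1\subset Y_1$) and gains an extra $\la s\ra^{-1/4}$ in $Y$; $3(c/2)^{1/2}c_yx_y$ and $3(bx_y)_y$ are $O(\bM_1(T)^2\la s\ra^{-1})$ in $Y_1$ (bilinear, both factors $O(\la s\ra^{-3/4})$, via \eqref{eq:Y1-L1} and Claim~\ref{cl:b-capprox}) — note the slightly worse rate $\la s\ra^{-1}$ here is exactly what \eqref{eq:R6-est1} allows for $R^7_2$; $\mu_1(b_{yy}-c_{yy})$ is $O(\bM_1(T)^2\la s\ra^{-5/4})$ by \eqref{b-cap2b}; and $\mu_2\frac2c(c_y)^2$ is $O(\bM_1(T)^2\la s\ra^{-3/2})$. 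For \eqref{eq:R6-est3} one differentiates each term once in $y$ and redoes the count: e.g.\ $\pd_y$ of $3(bx_y)_y$ involves $b_{yy}x_y$, $b_yx_{yy}$, $bx_{yyy}$, each controlled by $\bM_1(T)^2\la s\ra^{-5/4}$ (resp.\ $\la s\ra^{-3/2}$ in $Y$) using the $k=2$ part of $\bM_1(T)$; similarly for the other terms, distributing derivatives so that at most one factor ever carries two $y$-derivatives.

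The only delicate point — and thus the main obstacle — is bookkeeping: one must verify in each term that the derivatives can be distributed so the \emph{sum} of decay exponents reaches the stated threshold ($5/4$ in $Y_1$, $3/2$ in $Y$, and likewise for the once-differentiated versions), rather than falling short on some cross term such as $c_{yy}x_y$ versus $c_yx_{yy}$. This is where the quadratic gains from Claim~\ref{cl:b-capprox} are essential: without them the term $\{4\sqrt2c^{3/2}-16-12b\}x_{yy}$ would only decay like $\la s\ra^{-3/4}$, not fast enough. Since no new analytic tool is needed beyond \eqref{eq:Y1-L1}, the interpolation inequality $\|f\|_{L^\infty}\lesssim\|f\|_Y^{1/2}\|\pd_yf\|_Y^{1/2}$, the norm equivalences on $Y$, and Claim~\ref{cl:b-capprox}, the proof is a finite checklist of such products; I would organize it as a short lemma-free paragraph enumerating the terms of $R^7_1$, then $R^7_2$, then $\pd_yR^7_2$, citing \eqref{eq:R6-def} and Claim~\ref{cl:b-capprox} at each step and taking the minimum of the resulting exponents.
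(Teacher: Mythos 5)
Your proposal is correct and takes essentially the same route as the paper's proof: bound each term of $R^7_k$ bilinearly via \eqref{eq:Y1-L1} for the $Y_1$-norm (and an $L^\infty\times L^2$ split with interpolation for the $Y$-norm), invoke Claim~\ref{cl:b-capprox} for the quadratic gains in $b-\tc$, $(c/2)^{3/2}-1-\tfrac34 b$, $b_{yy}-c_{yy}$, etc., and sum the decay exponents. The only slip --- dropping the $\|\tc\|_Y\sim\la s\ra^{-1/4}$ factor in the count for $\{4\sqrt{2}c^{3/2}-16-12b\}x_{yy}$, which actually decays like $\bM_1(T)^3\la s\ra^{-3/2}$ rather than the $\la s\ra^{-5/4}$ you wrote --- is harmless since it is still faster than what \eqref{eq:R6-est1} requires.
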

\begin{proof}
In view of \eqref{eq:R6-def} and \eqref{eq:Y1-L1},
\begin{align*}
\|\wP_1R^7_1\|_{Y_1}\lesssim &
\|x_{yy}\|_Y\left\|\left(\frac{c}{2}\right)^{3/2}-1-\frac{3b}4\right\|_{L^2}
\\ & + \|b_{yy}-c_{yy}\|_{Y_1}+(\|b_y-c_y\|_Y+\|\tc\|_{L^\infty}\|c_y\|_Y)\|x_y\|_Y
+\|c_y\|_Y^2\,,
\end{align*}
$$\|\wP_1R^7_2\|_{Y_1}\lesssim 
\|\tc\|_Y\|x_{yy}\|_Y+\|c_y\|_Y\|x_y\|_Y+\|b_{yy}-c_{yy}\|_{Y_1}+\|c_y\|_Y^2\,.
$$
Combining the above with Claim~\ref{cl:b-capprox}, we have
\eqref{eq:R6-est1}.
We can obtain \eqref{eq:R6-est2} and \eqref{eq:R6-est3} in the same way.
Thus we complete the proof.
\end{proof}
\section{Local well-posedness in exponentially weighted space}
\label{ap:LWP}
The $L^2$ well-posedness of the KP-II equation around line solitons
has been proved by Molinet, Saut and Tzvetkov (\cite{MST})
by using Bourgain's norm. 
In this section, we will explain well-posedness for exponentially
localized initial data around a line soliton.

\par
Let $u(t,x,y)=\varphi(x-4t)+\tv(t,x-4t,y)$
be a solution to \eqref{KPII_integrated}.
Then
\begin{equation}
  \label{eq:v-fix}
  \pd_t\tv=\mL \tv-3\pd_x(\tv^2)\,.
\end{equation}
\begin{proposition}
\label{prop:LWPX}
Suppose $a>0$ and $v_0\in X\cap L^2(\R^2)$. If $\tv(0)=v_0$,
then there exists a unique solution 
of \eqref{eq:v-fix} such that for any $T>0$, 
\begin{equation}
\label{eq:sup-X}
\tv\in L^\infty(0,T;X)\cap X_T\,,
\end{equation}
where $X_T$ is the auxiliary Banach space used in Theorem 1.1 of \cite{MST}.
If $v_0\in H^1(\R^2)$ in addition, then $\tv(t)\in C([0,\infty);X)$.
\end{proposition}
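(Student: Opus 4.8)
The plan is to obtain existence and uniqueness in $X_T$ essentially for free from the $L^2$-theory of \cite{MST}, and then to propagate the exponential weight by a weighted (virial-type) energy estimate in the spirit of Lemma~\ref{lem:virial}, the anisotropic Gagliardo--Nirenberg inequality of Claim~\ref{cl:aniso} being the tool that absorbs the quadratic nonlinearity in spite of the derivative loss in $\pd_x(\tv^2)$. Uniqueness and the membership $\tv\in X_T$ are immediate: any solution of \eqref{eq:v-fix} in the stated class is in particular an $L^2$-solution, hence by \cite[Theorem~1.1]{MST} it is the unique solution in $X_T$, and that theorem also furnishes $\tv\in C([0,T];L^2(\R^2))$ with $\|\tv(t)\|_{L^2(\R^2)}$ bounded on $[0,T]$, together with persistence of $H^s$-regularity ($s\ge0$). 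So the whole issue is to show that the $X$-norm does not escape in finite time.

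For the weighted bound I would first approximate: choose smooth, rapidly decaying $v_{0n}\to v_0$ in $X\cap L^2(\R^2)$ (with $\pd_x^{-1}v_{0n}$ also regular) so that the solutions $\tv_n$ are smooth enough for the following computation to be rigorous, and then test \eqref{eq:v-fix} against $2p_m(x)\tv_n$, where $p_m(x)=e^{2am}(1+\tanh a(x-m))$ is the truncated weight of Claim~\ref{cl:aniso}; recall $p_m\uparrow e^{2ax}$ as $m\to\infty$, $0<p_m'\le ap_m$, $|p_m'''|\le 4a^2p_m'$, and $\sup_x e^{-2ax}p_m(x)\le 2$ uniformly in $m$. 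Integrating by parts, the skew-adjoint part of $\mL$ produces $\tfrac{d}{dt}\int p_m\tv_n^2\,dxdy$ together with the nonnegative dispersive quantity $\int p_m'\bigl(3(\pd_x\tv_n)^2+3(\pd_x^{-1}\pd_y\tv_n)^2+4\tv_n^2\bigr)\,dxdy$; the remaining linear terms (those with $p_m'''$, and the potential $-6\pd_x(\varphi\tv_n)$, which is harmless because $\varphi$ and $\varphi'$ are exponentially localized) are controlled by $C\int p_m\tv_n^2\,dxdy + C\|\tv_n(t)\|_{L^2(\R^2)}^2$; and the cubic term is handled by Cauchy--Schwarz and Claim~\ref{cl:aniso} exactly as in \eqref{eq:vir3}:
\begin{equation}
\label{eq:vir-nl}
\Bigl|\int p_m'\tv_n^3\,dxdy\Bigr|\lesssim \|\tv_n(t)\|_{L^2(\R^2)}\int p_m'\bigl(3(\pd_x\tv_n)^2+3(\pd_x^{-1}\pd_y\tv_n)^2+4\tv_n^2\bigr)\,dxdy\,.
\end{equation}
On a time interval short enough that $\|\tv_n(t)\|_{L^2(\R^2)}$ stays below the absorption threshold, \eqref{eq:vir-nl} is swallowed by the dispersive quantity, and Gronwall gives $\int p_m\tv_n(t)^2\,dxdy\lesssim e^{Ct}\int p_m\tv_n(0)^2\,dxdy+\int_0^t\|\tv_n(s)\|_{L^2(\R^2)}^2\,ds$, uniformly in $m$; letting $m\to\infty$ by monotone convergence and then $n\to\infty$ using the $L^2$-continuous dependence of \cite{MST} together with weak lower semicontinuity of $\|\cdot\|_X$ yields
\begin{equation}
\label{eq:virial-fx}
\sup_{0\le t\le T}\|\tv(t)\|_X\le C\bigl(T,\|v_0\|_{L^2(\R^2)},\|v_0\|_X\bigr)\,.
\end{equation}
Since $\|\tv(t)\|_{L^2(\R^2)}$ is bounded on all of $[0,T]$, the length of these short intervals is bounded below along $[0,T]$, so iterating finitely many times — keeping the dispersive quantity on the left over each step, which also supplies a uniform-in-$m$ time-integrated bound for it there — extends \eqref{eq:virial-fx} to the whole of $[0,T]$ and proves $\tv\in L^\infty(0,T;X)$. (Alternatively one may build $\tv$ directly by a contraction mapping in $X_T\cap L^\infty(0,T;X)$ on a short interval, using the smoothing estimate $\|e^{t\mL_0}\pd_x f\|_X\lesssim(1+t^{-1/2})\|f\|_X$ of Lemma~\ref{lem:free-semigroup}, inherited by $e^{t\mL}$ since $\mL-\mL_0$ is a bounded perturbation, and then globalise by \eqref{eq:virial-fx}.)

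For the continuity statement, $v_0\in H^1(\R^2)$ upgrades the \cite{MST}-solution to $\tv\in C([0,\infty);H^1(\R^2))$, which (interpreting $\pd_x^{-1}\pd_y^2\tv$ through \eqref{eq:anti-x-yy}, with a further routine density reduction if one wishes) is enough regularity to make the weighted energy identity genuinely valid for $\tv$ itself, not merely for the approximants: $t\mapsto\int p_m\tv(t)^2\,dxdy$ is then $C^1$ with time derivative bounded uniformly in $m$ on compact time sets, so $\int p_m\tv(t)^2\,dxdy\to\|\tv(t)\|_X^2$ uniformly on compacts and $t\mapsto\|\tv(t)\|_X$ is continuous. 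Combined with the weak continuity $t\mapsto\tv(t)$ into $X$ (immediate from \eqref{eq:virial-fx} and density), norm continuity plus weak continuity gives $\tv\in C([0,\infty);X)$. The step I expect to be the main obstacle is precisely the nonlinearity in the weighted space: the $\pd_x$ in $\pd_x(\tv^2)$ costs a derivative, and for data that is not small in $L^2$ there is no a priori control of any weighted $L^4$ norm. The resolution is Claim~\ref{cl:aniso}, which trades the weighted quartic term $\int p_m'\tv^4$ against the nonnegative dispersive quantities generated by the conservative part of $\mL$; making this work requires keeping that dispersive term on the left throughout and feeding in the global-in-time $L^2$-bound from \cite{MST}, and it is the bookkeeping of this interplay, together with the finite iteration over $[0,T]$, that constitutes the real content of \eqref{eq:virial-fx}.
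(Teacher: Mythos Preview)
Your strategy for the weighted bound --- virial identity with the truncated weights $p_m$, Claim~\ref{cl:aniso} for the cubic term, approximation by smooth data, monotone limit $m\to\infty$ --- is exactly the paper's. The gap is in the absorption step. Your estimate \eqref{eq:vir-nl} carries the full power of $\int p_m'\mathcal{E}$, so swallowing it into the dispersive term on the left requires $\|\tv_n(t)\|_{L^2}$ to lie below a fixed threshold; but $\|\tv_n(0)\|_{L^2}=\|v_{0n}\|_{L^2}\approx\|v_0\|_{L^2}$ is prescribed and may be large, and shortening the time interval does nothing to reduce it --- so your iteration never gets off the ground. The paper instead invokes Claim~5.1 of \cite{MT} and records the cubic bound with a square root on the dispersive factor,
\[
\Bigl|\int_{\R^2} p_n'\tv^3\,dxdy\Bigr|\le C_1\|\tv\|_{L^2}\Bigl(\int_{\R^2} p_n'\mathcal{E}(\tv)\,dxdy\Bigr)^{1/2};
\]
Young's inequality then absorbs the cubic term unconditionally and leaves only $C\|\tv(s)\|_{L^2}^2$ on the right, yielding \eqref{eq:virial-fx} directly with no smallness hypothesis and no iteration.

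For continuity in $X$ under the extra hypothesis $v_0\in H^1(\R^2)$, the paper does not go through the energy identity at all: it writes the Duhamel formula
\[
\tv(t)=e^{t\mL_0}v_0-6\int_0^te^{(t-s)\mL_0}\pd_x(\varphi\tv)(s)\,ds-6\int_0^te^{(t-s)\mL_0}\bigl(\tv\pd_x\tv\bigr)(s)\,ds,
\]
observes that $\|e^{ax}\tv(s)\pd_x\tv(s)\|_{L^1(\R^2)}\le\|\tv(s)\|_X\|\tv(s)\|_{H^1}$ is locally bounded in $s$ by what has already been shown, and reads off $\tv\in C([0,\infty);X)$ from the free estimates \eqref{eq:lemfs1} and \eqref{eq:lemfs4} of Lemma~\ref{lem:free-semigroup}. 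Your alternative route via continuity of $t\mapsto\|\tv(t)\|_X$ plus weak continuity is plausible, but the claim that $\frac{d}{dt}\int p_m\tv^2$ is bounded \emph{uniformly in $m$} is not correct as stated: the dispersive contribution $\int p_m'(\pd_x\tv)^2$ involves $\sup_xp_m'=ae^{2am}$, so one must argue instead through the time-integrated dispersive bound coming from the virial inequality.
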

\begin{remark}
\label{rem:LWP}
The Banach space $X_T$ is continuously imbedded into $C([0,T];L^2(\R^2))$.
Moreover \cite[Lemma~4.1]{MST} tells us that
\begin{equation}
\label{eq:MST1} \tv(t)\in C([0,T);L^2(\R^2))\cap L^4(0,T;L^4(\R^2))\,,
\end{equation}
and that $\tv(t)\in C([0,T);H^s(\R^2))$ if $\tv(0)\in H^s(\R^2)$ for an $s\ge0$.
\end{remark}

\begin{proof}[Proof of Proposition~\ref{prop:LWPX}]
To prove $\tv(t)\in L^\infty(0,T;X)$ for any $T>0$, we will use
the virial identity for the KP-II equation (\cite{dBM}).
Let $a>0$, $\bar{p}(x)=1+\tanh ax$  and $p_n(x)=
e^{2an}\bar{p}(x-n)$ for $n\in\N$. Suppose $v_0\in H^3(\R^2)
\cap X$ and $\pd_x^{-1}v_0\in H^2(\R^2)$. Then by \cite{MST}, we have
$\tv(t)\in C([0,\infty);H^3)$ and $\pd_x^{-1}v_0\in C([0,\infty);H^2)$.
Multiplying \eqref{eq:v-fix} by $2p_n(x)v(t,x,y)$
and integrating the resulting equation over $\R^2$, we have
  \begin{align*}
& \frac{d}{dt}\int_{\R^2} p_n(x)\tv(t,x,y)^2\,dxdy+
\int_{\R^2}p_n'(x)\left\{3(\pd_x\tv)^2+3(\pd_x^{-1}\pd_y\tv)^2
-4\tv^3\right\}\,dxdy
\\ =& 3\int_{\R^2}\left\{p_n'(x)\varphi(x)
-p_n(x)\varphi'(x)\right\}\tv(t,x,y)^2\,dxdy\,.
  \end{align*}
By  Claim 5.1 in \cite{MT},
$$\left|\int_{\R^2} \bar{p}'_n(x)\tv^3\,dxdy\right|
\le C_1\left(\int_{\R^2}\tv^2\,dxdy\right)^{1/2}
\left(\int_{\R^2}p_n'(x)\mathcal{E}(\tv)\,dxdy\right)^{1/2}\,,$$
where $C_1$ is a constant independent of $n\in \N$.
Hence there exists a positive constant $C$ such that for every $n\in\N$,
\begin{align*}
& \int_{\R^2} p_n(x)\tv(t,x,y)^2\,dxdy
+2\int_0^t\int_{\R^2}\bar{p}'_n(x)\left\{(\pd_x\tv)^2+(\pd_x^{-1}\pd_y\tv)^2\right\}
(s,x,y)\,dxdyds
\\ \le & \int_{\R^2} p_n(x)v_0(x,y)^2\,dxdy+
C\int_0^t\|\tv(s)\|_{L^2}^2\,ds\,.
\end{align*}
By approximating a solution $\tv(t)$ of \eqref{eq:v-fix} with
$\tv(0)=v_0\in X\cap L^2(\R^2)$ by a sequence solutions $\{\tv_k(t)\}$
of \eqref{eq:v-fix} satisfying 
$$\tv_k(0)\in H^3(\R^2)\,,\quad \pd_x^{-1}\tv_k(0)\in H^2(\R^2)\,,\quad 
\lim_{k\to\infty}\|\tv_k(0)-v_0\|_{L^2(\R^2)}=0\,,$$
we have for any $v_0\in L^2(\R^2)$,
\begin{equation*}
\int_{\R^2} p_n(x)\tv(t,x,y)^2\,dxdy
 \le  \int_{\R^2} p_n(x)v_0(x,y)^2\,dxdy+C\int_0^t\|\tv(s)\|_{L^2}^2\,ds\,.
\end{equation*}
Passing to the limit $n\to\infty$, we obtain
\begin{equation}
  \label{eq:virial-fx}
\|\tv(t)\|_X^2 \le \|v_0\|_X^2+C\int_0^t\|\tv(s)\|_{L^2}^2\,ds\,.
\end{equation}
Since $\sup_{t\in[0,T]}\|\tv(t)\|_{L^2(\R^2)}<\infty$ for any $T>0$,
we have \eqref{eq:sup-X}.
\par
Suppose $v_0\in H^1(\R^2)\cap X$. Then we have \eqref{eq:sup-X} and
$\tv(t)\in C(\R;H^1(\R^2))$.
By the variation of constants formula,
$$\tv(t)=e^{t\mL_0}v_0-6\int_0^t e^{(t-s)\mL_0}\pd_x(\varphi\tv(s))\,ds
-6\int_0^t e^{(t-s)\mL_0}\tv(s)\pd_x\tv(s)\,ds\,.$$
Since $\|e^{ax}\tv(s)\pd_x\tv(s)\|_{L^1}\le \|\tv(s)\|_X\|\tv(s)\|_{H^1(\R^2)}$,
we have $\tv(t)\in C([0,\infty);X)$ by using 
\eqref{eq:lemfs1} and \eqref{eq:lemfs4} in Lemma~\ref{lem:free-semigroup}.
Thus we complete the proof.
\end{proof}

\end{document}